\newcommand\independent{\protect\mathpalette{\protect\independenT}{\perp}}
\def\independenT#1#2{\mathrel{\rlap{$#1#2$}\mkern2mu{#1#2}}}
\newtheorem{theorem}{Theorem}
\newtheorem{lemma}[theorem]{Lemma}
\newtheorem{remark}[theorem]{Remark}
\newtheorem{definition}[theorem]{Definition}
\newtheorem{fact}[theorem]{Fact}
\newtheorem{corollary}[theorem]{Corollary}
\crefname{lemma}{Lemma}{lemma}
\crefname{definition}{Definition}{definition}
\crefname{corollary}{Corollary}{corollary}
\DeclareMathOperator*{\argmax}{arg\,max}
\def\be{\begin{eqnarray}}
\def\ee{\end{eqnarray}}
\newtheorem*{rep@theorem}{\rep@title}
\newcommand{\newreptheorem}[2]{%
\newenvironment{rep#1}[1]{%
 \def\rep@title{\cref{##1}}%
 \begin{rep@theorem}}%
 {\end{rep@theorem}}}
\DeclareRobustCommand{\DE}[2]{#2}
\begin{document}

\title{Beyond Bell sampling: stabilizer state learning and quantum pseudorandomness lower bounds on qudits}

\author[1]{Jonathan Allcock\thanks{\scriptsize jonallcock@tencent.com}}
\author[2]{Joao F. Doriguello\thanks{\scriptsize doriguello@renyi.hu}}
\author[3]{G\'abor Ivanyos\thanks{\scriptsize gabor.ivanyos@sztaki.mta.hu}}
\author[4,5]{Miklos Santha\thanks{\scriptsize cqtms@nus.edu.sg}}
\affil[1]{Tencent Quantum Laboratory, Hong Kong, China}
\affil[2]{HUN-REN Alfr\'ed R\'enyi Institute of Mathematics, Budapest, Hungary}
\affil[3]{HUN-REN Institute for Computer Science and Control, Budapest, Hungary}
\affil[4]{Centre for Quantum Technologies, National University of Singapore, Singapore}
\affil[5]{CNRS, IRIF, Universit\'e Paris Cit\'e}

\date{\today}
\maketitle

\begin{abstract}        
    Bell sampling is a simple yet powerful measurement primitive that has recently attracted a lot of attention, and has proven to be a valuable tool in studying stabiliser states. Unfortunately, however, it is known that Bell sampling fails when used on qu\emph{d}its of dimension $d>2$. In this paper, we explore and quantify the limitations of Bell sampling on qudits, and propose new quantum algorithms to circumvent the use of Bell sampling in solving two important problems: learning stabiliser states and providing pseudorandomness lower bounds on qudits. More specifically, as our first result, we characterise the output distribution corresponding to Bell sampling on copies of a stabiliser state and show that the output can be uniformly random, and hence reveal no information. As our second result, for $d=p$ prime we devise a quantum algorithm to identify an unknown stabiliser state in $(\mathbb{C}^p)^{\otimes n}$ that uses $O(n)$ copies of the input state and runs in time $O(n^4)$. As our third result, we provide a quantum algorithm that efficiently distinguishes a Haar-random state from a state with non-negligible stabiliser fidelity. As a corollary, any Clifford circuit on qudits of dimension $d$ using $O(\log{n}/\log{d})$ auxiliary non-Clifford single-qudit gates cannot prepare computationally pseudorandom quantum states.
\end{abstract}

\section{Introduction}

Bell sampling, introduced by Montanaro~\cite{montanaro2017learning}, is a measurement primitive which has attracted a lot of attention recently due to its simplicity and far reaching applications~\cite{montanaro2017learning,gross2021schur,hangleiter2023bell,grewal2023efficient,grewal2023improved,grewal2024agnostic}. At a high level, Bell sampling consists of measuring two copies $|\psi\rangle^{\otimes 2}$ of some $n$-qubit state $|\psi\rangle\in(\mathbb{C}^2)^{\otimes n}$ in the Bell basis. The power of Bell sampling becomes evident when combined with the stabiliser formalism~\cite{gottesman1996class,gottesman1997stabilizer} in order to learn properties of stabiliser states. A \emph{stabiliser group} on $n$ qubits is a set of $2^n$ commuting Pauli operators, and a \emph{stabiliser state} is the joint $+1$-eigenvector of a stabiliser group. As shown by Montanaro~\cite{montanaro2017learning}, one Bell sample on two copies of a stabiliser state returns information on its associated stabiliser group, which in turn can be used to infer some of its properties~\cite{gs007,Anshu2024}. This procedure was later explored by Gross, Nezami, and Walter~\cite{gross2021schur}, who introduced a measurement primitive called \emph{Bell difference sampling} which, when performed on copies of a stabiliser state, returns the label of a generator of the associated stabiliser group. Bell difference sampling consists of performing Bell sampling twice and subtracting the two outcomes, thus using four copies $|\psi\rangle^{\otimes 4}$ of a state.

Part of Bell (difference) sampling's appeal comes from the importance of stabiliser states, which find applications in quantum error correction~\cite{shor1995scheme,calderbank1996good,gottesman1996class,gottesman1997stabilizer}, efficient classical simulations of quantum circuits~\cite{aaronson2004improved,bravyi2016trading,bravyi2019simulation}, low-rank recovery~\cite{kueng2016low}, randomized benchmarking~\cite{knill2008randomized,magesan2011scalable,helsen2019multiqubit}, measurement-based quantum computation~\cite{raussendorf2000quantum}, tensor networks for holography codes~\cite{hayden2016holographic,nezami2020multipartite}, and quantum learning algorithms~\cite{huang2020predicting}. As mentioned, Bell difference sampling can be used to learn several properties of a stabiliser state. Montanaro~\cite{montanaro2017learning} implicitly used Bell difference sampling to identify an unknown stabiliser state, which simplified or improved upon previous works on the same task~\cite{aaronson2008stabiliser,rotteler2009quantum,zhao2016fast}. Later, Gross, Nezami, and Walter~\cite{gross2021schur} proposed a stabiliserness testing algorithm based on Bell difference sampling to decide whether a given unknown state is a stabiliser state or is far from one. More recently, Hangleiter and Gullans~\cite{hangleiter2023bell} used Bell sampling to detect and learn circuit errors and to extract useful information, e.g.\ the depth and the number of T-gates of a circuit. Around the same time, Grewal, Iyer, Kretschmer, and Liang~\cite{grewal2022low,grewal2023efficient,grewal2023improved,grewal2024pseudoentanglement,grewal2024agnostic} employed Bell difference sampling in several tasks: obtaining the classical description of the output of a Clifford circuit augmented with a few non-Clifford single-qubit gates (called \emph{doped Clifford circuits}~\cite{leone2023learning,leone2024learning}), testing whether a state is Haar-random or the output of a doped Clifford circuit, approximating an arbitrary quantum state with a stabiliser state, and testing stabiliser fidelity of a given state, which is its maximum overlap with any stabiliser state~\cite{bravyi2019simulation}.

Unfortunately, it is known that Bell (difference) sampling cannot be used for qu\emph{d}its of dimension $d$ higher than $2$~\cite{gross2021schur}, or, more precisely, it reveals little information on the stabiliser group it samples from. At a high level, this stems from the well-known fact that the transpose map $\psi\mapsto \bar{\psi} = \psi^\top$ is not completely positive, or from the fact that the involution $x\mapsto -x$ is non-trivial if $x\in\mathbb{Z}_d$ for $d>2$. This means that Bell sampling falls short in solving the aforementioned problems related to stabiliser states on qudits and, as a result, little is known about them for $d>2$. In this work, we propose new algorithms to circumvent the use of Bell sampling and solve two important tasks on qudits: learning an unknown stabiliser state and deciding whether a given state is Haar-random or has non-negligible stabiliser fidelity.

\subsection{Our results}

\subsubsection{Exploring Bell difference sampling}

Bell sampling~\cite{montanaro2017learning} measures two copies of an $n$-qubit state $|\psi\rangle\in(\mathbb{C}^2)^{\otimes n}$ in the Bell basis $|\mathcal{W}_{\mathbf{x}}\rangle \triangleq (\mathcal{W}_{\mathbf{x}}\otimes \mathbb{I})|\Phi^+\rangle$, where $|\Phi^+\rangle \triangleq 2^{-n/2}\sum_{\mathbf{q}\in\mathbb{F}_2^{n}}|\mathbf{q},\mathbf{q}\rangle$ is a maximally entangled state and the operators $\mathcal{W}_{\mathbf{x}}$, called \emph{Weyl operators}, are defined as
\begin{align*}
    \mathcal{W}_{\mathbf{x}} \triangleq i^{\langle \mathbf{v},\mathbf{w}\rangle} (\mathsf{X}^{v_1}\mathsf{Z}^{w_1})\otimes \cdots \otimes (\mathsf{X}^{v_n}\mathsf{Z}^{w_n}) \qquad \forall\mathbf{x} = (\mathbf{v},\mathbf{w})\in\mathbb{F}_2^{2n},
\end{align*}
where $\langle \mathbf{v},\mathbf{w}\rangle = \sum_{i=1}^n v_i w_i$ is the usual scalar product. One can think of the Weyl operators as the Pauli operators labelled by a $2n$-bit string. The outcome of Bell sampling is thus some bit-string $\mathbf{x}\in\mathbb{F}_2^{2n}$. Bell difference sampling, as proposed by~\cite{gross2021schur}, performs Bell sampling twice and computes the difference of the two outcome strings. Gross, Nezami, and Walter~\cite{gross2021schur} proved that Bell difference sampling on an arbitrary state $|\psi\rangle^{\otimes 4}$ has a useful interpretation. More specifically, they proved that a string $\mathbf{x}\in\mathbb{F}_2^{2n}$ is sampled with probability
\begin{align}\label{eq:bell_difference_sampling}
    q_\psi(\mathbf{x}) \triangleq \sum_{\mathbf{y}\in\mathbb{F}_2^{2n}} p_\psi(\mathbf{y}) p_\psi(\mathbf{x}+\mathbf{y}), \qquad\text{where}~ p_\psi(\mathbf{x}) \triangleq 2^{-n}|\langle\psi|\mathcal{W}_{\mathbf{x}}|\psi\rangle|^2
\end{align}
is called the \emph{characteristic distribution} (see \cref{sec:Weyl_operators} for a proof that $p_\psi$ is indeed a probability distribution). In other words, the \emph{Weyl distribution} $q_\psi$ is the convolution of $p_\psi$ with itself (up to normalisation). If $|\psi\rangle$ is a stabiliser state $|\mathcal{S}\rangle$ of some stabiliser group $\mathcal{S}$ (i.e., a commuting group of $2^n$ Weyl operators, with $|\mathcal{S}\rangle$ the joint $+1$-eigenvector of all such Weyl operators), then $p_{\mathcal{S}}(\mathbf{x}) = 2^{-n}$ for all strings $\mathbf{x}\in\mathbb{F}_2^{2n}$ such that $\mathcal{W}_{\mathbf{x}}\in\mathcal{S}$ and $0$ otherwise. \cref{eq:bell_difference_sampling} thus implies that $q_{\mathcal{S}}(\mathbf{x}) = 2^{-n}$ if $\mathcal{W}_{\mathbf{x}} \in \mathcal{S}$ and $0$ otherwise. Bell difference sampling on copies of a stabiliser state thus returns an element of its stabiliser group, and that is where its power lies. By repeating Bell difference sampling $O(n)$ times it is possible to recover a basis of $\mathcal{S}$ and learn $|\mathcal{S}\rangle$~\cite{montanaro2017learning}, or by repeating it $O(1)$ times it is possible to decide if a given state is either a stabiliser state or far from one~\cite{gross2021schur}. Many properties of $q_\psi$, especially Fourier-related ones, were explored in more depth in~\cite{grewal2022low,grewal2023efficient,grewal2023improved,grewal2024agnostic}.

The situation with qudits, on the other hand, is not so well understood. Apart from general arguments provided in~\cite{gross2021schur} based on the fact that the transpose map $\psi\mapsto \bar{\psi} = \psi^\top$ is not completely positive, or that the involution $x\mapsto -x$ is non-trivial if $x\in\mathbb{Z}_d$ for $d>2$, and therefore that Bell difference sampling is no longer useful, little is known about its actual structure. Therefore, we ask, and give answers to, the following questions in \cref{sec:bell_difference_sampling}.

\begin{center}
    \emph{What is the measurement outcome probability distribution corresponding to performing Bell difference sampling on qudits? Is it possible to extract any information on the stabiliser group of a qudit stabiliser state via Bell difference sampling?}
\end{center}

In what follows, $p>2$ is prime and thus $\mathbb{F}_p^n$ is a vector space over $\mathbb{F}_p$. Let $\omega \triangleq e^{2\pi i/p}$ be the $p$-th root of unity. Our results shall be stated for qu\emph{d}its with dimension $d=p$ prime. %We shall make it very clear which results can actually be generalised to arbitrary dimension $d$, and which cannot. 
The Pauli operators on qudits are defined using the shift and clock operators $\mathsf{X}$ and $\mathsf{Z}$, respectively, as
\begin{align*}
    \mathsf{X}|q\rangle = |q+1\rangle, \quad \mathsf{Z}|q\rangle = \omega^q |q\rangle, \quad \forall q\in\mathbb{F}_p.
\end{align*}
The Weyl operators on qudits are then defined to be
\begin{align*}
    \mathcal{W}_{\mathbf{x}} = \mathcal{W}_{\mathbf{v},\mathbf{w}} \triangleq \omega^{2^{-1}\langle \mathbf{v},\mathbf{w}\rangle} (\mathsf{X}^{v_1}\mathsf{Z}^{w_1})\otimes \cdots \otimes (\mathsf{X}^{v_n}\mathsf{Z}^{w_n}) \qquad \forall\mathbf{x} = (\mathbf{v},\mathbf{w})\in\mathbb{F}_p^{2n},
\end{align*}
where $2^{-1} = (p+1)/2$ denotes the multiplicative inverse of $2~\text{mod}~p$. Bell sampling on qudits is defined similarly as it is on qubits: measure two copies of an $n$-qudit state $|\psi\rangle\in(\mathbb{C}^p)^{\otimes n}$ in the generalised Bell basis $|\mathcal{W}_{\mathbf{x}}\rangle \triangleq (\mathcal{W}_{\mathbf{x}}\otimes \mathbb{I})|\Phi^+\rangle$, where $|\Phi^+\rangle \triangleq p^{-n/2}\sum_{\mathbf{q}\in\mathbb{F}_p^{n}}|\mathbf{q},\mathbf{q}\rangle$ is a maximally entangled state. The outcome is a string $\mathbf{x}\in\mathbb{F}_p^{2n}$. Bell difference sampling involves performing Bell sampling twice and taking the difference of the two outcomes. More formally:
\begin{repdefinition}{def:Bell_difference_sampling}
    Bell difference sampling corresponds to performing the projective measurement given by
    \begin{align*}
        \Pi_{\mathbf{x}} = \sum_{\mathbf{y}\in\mathbb{F}_p^{2n}}|\mathcal{W}_{\mathbf{y}}\rangle\langle \mathcal{W}_{\mathbf{y}}|\otimes |\mathcal{W}_{\mathbf{x}+\mathbf{y}}\rangle\langle \mathcal{W}_{\mathbf{x}+\mathbf{y}}|, \quad \mathbf{x}\in\mathbb{F}_p^{2n}.
    \end{align*}
\end{repdefinition}

One of our main results concerning Bell difference sampling is to show that its outcomes are distributed according to an ``involuted'' version of \cref{eq:bell_difference_sampling}. Let $J: \mathbb{F}_p^{2n} \to \mathbb{F}_p^{2n}$ be the involution defined as $J(\mathbf{x}) = J((\mathbf{v},\mathbf{w})) = (-\mathbf{v},\mathbf{w})$ for $\mathbf{x} = (\mathbf{v},\mathbf{w})\in\mathbb{F}_p^{2n}$~\cite{appleby2005symmetric,appleby2009properties,gross2021schur}. 
\begin{reptheorem}{thr:bell_difference}
    Bell difference sampling on $|\psi\rangle^{\otimes 4}$ corresponds to sampling from the distribution
    \begin{align*}
        b_\psi(\mathbf{x}) = \sum_{\mathbf{y}\in\mathbb{F}_p^{2n}} p_\psi(\mathbf{y})p_\psi(J(\mathbf{x} - \mathbf{y})), \qquad\text{where}~ p_\psi(\mathbf{x}) \triangleq p^{-n}|\langle\psi|\mathcal{W}_{\mathbf{x}}|\psi\rangle|^2.
    \end{align*}
\end{reptheorem}
We call the distribution $b_\psi(\mathbf{x})$ the \emph{involuted Weyl distribution} of the pure state $|\psi\rangle$. The above result is a generalisation of \cref{eq:bell_difference_sampling} since the involution is trivial for the case of qubits, i.e., $J(\mathbf{x}) = \mathbf{x}$ for all $\mathbf{x}\in\mathbb{F}_2^{2n}$. More interestingly, though, is to explore $b_\psi$ when $|\psi\rangle = |\mathcal{S}\rangle$ is a stabiliser state. In the following, given a matrix $\mathbf{M}\in\mathbb{F}_p^{m\times n}$, let $\operatorname{col}(\mathbf{M}) \triangleq \{\mathbf{M}\mathbf{v}\in\mathbb{F}_p^m : \mathbf{v}\in\mathbb{F}_p^n \}$ be its column space, and given subspaces $\mathscr{V}, \mathscr{W}$, let $\mathscr{V}+\mathscr{W}\triangleq\{\mathbf{v} + \mathbf{w} : \mathbf{v}\in \mathscr{V}, \mathbf{w}\in \mathscr{W}\}$ and $\mathscr{V}\times \mathscr{W} \triangleq \{(\mathbf{v},\mathbf{w}): \mathbf{v}\in\mathscr{V},\mathbf{w}\in\mathscr{W}\}$.    
\begin{reptheorem}{lem:bell_sampling}
    Let $\mathcal{S}$ be a stabiliser group with generators $\{\omega^{s_i}\mathcal{W}_{\mathbf{x}_i}\}_{i\in[n]}$, and let $|\mathcal{S}\rangle\in(\mathbb{C}^p)^{\otimes n}$ be its stabiliser state. Arrange the labels $\{\mathbf{x}_1,\dots,\mathbf{x}_n\}$ as the columns of the $2n\times n$ matrix $\big[\begin{smallmatrix} \mathbf{V} \\ \mathbf{W} \end{smallmatrix}\big]$, where $\mathbf{V},\mathbf{W}\in\mathbb{F}_p^{n\times n}$. Let $\mathscr{M} \triangleq {\operatorname{col}}\big(\big[\begin{smallmatrix} \mathbf{V} \\ \mathbf{W} \end{smallmatrix}\big]\big)$ and $J(\mathscr{M}) \triangleq \{J(\mathbf{x})\in\mathbb{F}_p^{2n}:\mathbf{x}\in\mathscr{M}\}$. Bell difference sampling on $|\mathcal{S}\rangle^{\otimes 4}$
    returns $\mathbf{x}\in\mathbb{F}_p^{2n}$ with probability
    \begin{align*}
        b_{\mathcal{S}}(\mathbf{x}) &= \begin{cases}
            p^{-2n}|\mathscr{M}\cap J(\mathscr{M})| &\quad\quad\text{if}~\mathbf{x}\in \mathscr{M} + J(\mathscr{M}),\\
            0 &\quad\quad\text{otherwise},
        \end{cases} \\
        &= \begin{cases}
            |\operatorname{col}(\mathbf{V})|^{-1}|\operatorname{col}(\mathbf{W})|^{-1} &\text{if}~\mathbf{x}\in\operatorname{col}(\mathbf{V})\times\operatorname{col}(\mathbf{W}),\\
            0 &\text{otherwise}.
            \end{cases}
    \end{align*}
\end{reptheorem}

This result shows that Bell difference sampling on $|\mathcal{S}\rangle^{\otimes 4}$ allows one to learn $\mathbf{V}$ and $\mathbf{W}$ separately, up to some change of basis, but not $\big[\begin{smallmatrix}\mathbf{V}\\\mathbf{W}\end{smallmatrix}\big]$ and thus $\mathcal{S}$. Equivalently, we only learn the subspace $\mathscr{M} + J(\mathscr{M})$. In a way, the involution function separates and mixes the clock and shift parts of the Weyl operators belonging to a stabiliser group. In the case when both $\mathbf{V}$ and $\mathbf{W}$ are full rank, Bell difference sampling returns a uniformly random $\mathbf{x}\in\mathbb{F}_p^{2n}$. This is in stark contrast to the case of qubits ($p=2$), where Bell difference sampling on $|\mathcal{S}\rangle^{\otimes 4}$ returns $\mathbf{x}\in {\operatorname{col}}\big(\big[\begin{smallmatrix} \mathbf{V} \\ \mathbf{W} \end{smallmatrix}\big]\big)$ with probability $q_{\mathcal{S}}(\mathbf{x}) = 2^{-n}$~\cite{montanaro2017learning,gross2021schur}, which can be seen from the above result since $J(\mathscr{M}) = \mathscr{M}$ for qubits.

\subsubsection{Learning stabiliser states}

As shown by several previous works~\cite{aaronson2008stabiliser,rotteler2009quantum,zhao2016fast,montanaro2017learning}, stabiliser states are among the classes of states that can be efficiently identified, in contrast to arbitrary $n$-qubit states $|\psi\rangle$ which require exponentially many (in $n$) copies of $|\psi\rangle$ to determine, by Holevo's theorem~\cite{holevo1973bounds}. Other examples of efficiently learnable states include matrix product states~\cite{cramer2010efficient,landon2010efficient}, non-interacting fermion states~\cite{aaronson2021efficient}, and low-degree phase states~\cite{arunachalam2022optimal}. 

The simplest and most efficient learning algorithm for stabiliser states on qubits is Montanaro's~\cite{montanaro2017learning} based on Bell (difference) sampling, which identifies an unknown stabiliser state $|\mathcal{S}\rangle\in(\mathbb{C}^2)^{\otimes n}$ in time $O(n^3)$ by using $O(n)$ copies of $|\mathcal{S}\rangle$ and making collective measurements across at most two copies of $|\mathcal{S}\rangle$ at a time. The situation, however, is not as clear in the case of qudits because, as explained above, Bell difference sampling cannot be used to learn stabiliser groups. We propose a new algorithm that circumvents the need for Bell difference sampling in order to learn an unknown stabiliser state. As far as we are aware, there have not previously been any direct results in this direction. Our algorithm is based on the observation that any stabiliser state can be written in a very specific form.
\begin{replemma}{lem:normalisation}
    Let $\mathcal{S}$ be a stabiliser group with generators $\{\omega^{s_i}\mathcal{W}_{\mathbf{x}_i}\}_{i\in[n]}$.
    Let $\mathbf{s} = (s_1,\dots,s_n)\in\mathbb{F}_p^n$ and arrange the labels $\{\mathbf{x}_1,\dots,\mathbf{x}_n\}$ as the columns of the $2n\times n$ matrix $\big[\begin{smallmatrix} \mathbf{V} \\ \mathbf{W} \end{smallmatrix}\big]$, where $\mathbf{V},\mathbf{W}\in\mathbb{F}_p^{n\times n}$. Then there exists $\mathbf{u}\in\mathbb{F}_p^n$ such that $\mathbf{V}^\top \mathbf{u} + \mathbf{s} \in \operatorname{col}(\mathbf{W}^\top)$, and the unique stabiliser state of $\mathcal{S}$ is
    \begin{align*}
        |\mathcal{S}\rangle = \frac{\sqrt{|\operatorname{col}(\mathbf{W})|}}{p^n}\sum_{\mathbf{q}\in\mathbb{F}_p^n}\omega^{\mathbf{s}^\top \mathbf{q} + \mathbf{u}^\top \mathbf{V}\mathbf{q} + 2^{-1}\mathbf{q}^\top \mathbf{V}^\top \mathbf{W}\mathbf{q}}|\mathbf{u}+\mathbf{W}\mathbf{q}\rangle
    \end{align*}
    for any $\mathbf{u}\in\mathbb{F}_p^n$ such that $\mathbf{V}^\top \mathbf{u} + \mathbf{s} \in \operatorname{col}(\mathbf{W}^\top)$.
\end{replemma}
We note that a similar, but somewhat more convoluted, form for stabiliser states was given in~\cite[Theorem~1]{hostens2005stabilizer}. Our proof of \cref{lem:normalisation} is quite simple in comparison. It tells us that any stabiliser state is a superposition of basis states from an affine subspace of $\mathbb{F}_p^n$ with relative phases given by a quadratic polynomial over $\mathbb{F}_p$. Based on this form, we leverage ideas from hidden subgroup/polynomial problems~\cite{bacon2005optimal,ivanyos2007efficient,krovi2008efficient,decker2013polynomial,IVANYOS201773} to develop a quantum algorithm that learns a stabiliser state $|\mathcal{S}\rangle\in(\mathbb{C}^p)^{\otimes n}$ given $O(n)$ copies of $|\mathcal{S}\rangle$.
\begin{reptheorem}{thr:algorithm_method_2}
    Let $\mathcal{S}$ be an unknown stabiliser group generated by $\{\omega^{s_i}\mathcal{W}_{\mathbf{v}_i,\mathbf{w}_i}\}_{i\in[n]}$ with stabiliser state $|\mathcal{S}\rangle\in(\mathbb{C}^{p})^{\otimes n}$. Let $\mathbf{W}\in\mathbb{F}_p^{n\times n}$ be the matrix with column vectors $\mathbf{w}_1,\dots,\mathbf{w}_n$. There is a quantum algorithm that identifies $\mathcal{S}$ by using $O(n)$ copies of $|\mathcal{S}\rangle$. It makes measurements only on the computational basis, runs in time $O(n^3\operatorname{rank}(\mathbf{W}))$, and fails with probability $\leq 2p^{-n}$.
\end{reptheorem}

The idea of our algorithm is to learn both $\mathbf{V}$ and $\mathbf{W}$ which, together, make up a set $\big[\begin{smallmatrix} \mathbf{V} \\ \mathbf{W} \end{smallmatrix}\big]$ of generators for $\mathcal{S}$. The first step is to simply measure several copies of $|\mathcal{S}\rangle$ in the computational basis to learn $\mathbf{W}$. In the second step, we learn the corresponding pair $\mathbf{V}$ of the previously obtained matrix $\mathbf{W}$. To do so, we employ techniques from hidden polynomial problems~\cite{decker2013polynomial,IVANYOS201773} to learn a quadratic function, in our case $\mathbf{s}^\top \mathbf{q} + \mathbf{u}^\top \mathbf{V}\mathbf{q} + 2^{-1}\mathbf{q}^\top \mathbf{V}^\top \mathbf{W}\mathbf{q}$, and, therefore, extract the matrix $\mathbf{V}$. In particular, we use a trick of introducing parameters $\delta_1,\delta_2,\delta_3\in\mathbb{F}_p$, not all $0$, such that $\delta_1^2 + \delta_2^2 + \delta_3^2 = 0$, into the phase one wants to learn. This kills any quadratic term and we are only left with a linear term that can be determined via a quantum Fourier transform. The three parameters $\delta_1,\delta_2,\delta_3\in\mathbb{F}_p$ are inputs to the algorithm and can be found in time $O(\operatorname{poly}\log{p})$ by using the Las Vegas method of~\cite{ivanyos2007efficient} or the deterministic method of~\cite{van2005deterministic}. The final step of the algorithm is to measure copies of $|\mathcal{S}\rangle$ in the eigenbasis of $\mathcal{W}_{\mathbf{v}_i,\mathbf{w}_i}$ to obtain the corresponding phases $\omega^{s_i}$. Our quantum algorithm only measures in the computational basis and, in the worst case, runs in time $O(n^4)$ if $\operatorname{rank}(\mathbf{W}) = \Omega(n)$. However, it makes extensive use of vector spaces, so it is not clear whether it can be extended to qudits of dimension $d$ not prime. 

We remark that, if one also has access to copies of the complex conjugate $|\mathcal{S}^\ast\rangle$ of the stabiliser state $|\mathcal{S}\rangle$, then it is possible to use Bell sampling to learn $|\mathcal{S}\rangle$. This is because Bell sampling on $|\mathcal{S}\rangle|\mathcal{S}^\ast\rangle$ returns $\mathbf{x}\in\mathbb{F}_p^{2n}$ with probability $p_{\mathcal{S}}(\mathbf{x}) = p^{-n}|\langle \mathcal{S}|\mathcal{W}_{\mathbf{x}}|\mathcal{S}\rangle|^2$ which equals $p^{-n}$ if $\mathcal{W}_{\mathbf{x}}\in\mathcal{S}$ and $0$ otherwise. 
\begin{reptheorem}{thr:algorithm_bell_sampling}
    Let $\mathcal{S}$ be an unknown stabiliser group with stabiliser state $|\mathcal{S}\rangle\in(\mathbb{C}^{p})^{\otimes n}$. There is a quantum algorithm that identifies $\mathcal{S}$ using $O(n)$ copies of $|\mathcal{S}\rangle$ and $|\mathcal{S}^\ast\rangle$. Its runtime is $O(n^3)$ and failure probability is at most $p^{-n}$.
\end{reptheorem}
\cref{thr:algorithm_bell_sampling} is valid for any dimension, not necessarily prime. We note that the phenomenon of having considerably more power from the access to the complex conjugate of a quantum state has been observed by King, Wan, and McClean~\cite{king2024exponential} in the context of shadow tomography.

The time complexities of \cref{thr:algorithm_method_2} and \cref{thr:algorithm_bell_sampling} (and throughout the paper) assume a computational model wherein classical operations over the field $\mathbb{F}_p$ (or ring $\mathbb{Z}_d$ for $d$ non-prime) take $O(1)$ time, and measurements in the computational basis and single and two-qudit gates from a universal gate set~\cite{Muthukrishnan2000multivalued,brylinski2002universal,Brennen2005criteria,brennen2006efficient} require $O(1)$ time to be performed. The latter assumption is supported by the fact that there exists quantum hardware that naturally encodes information in qudits~\cite{meth2023simulating}.

\subsubsection{Pseudorandomness lower bounds}

Pseudorandom states were introduced by Ji, Liu, and Song~\cite{ji2018pseudorandom} and have attracted a lot of attention due to their applicability in quantum cryptography and complexity theory~\cite{ji2018pseudorandom,kretschmer2021quantum,ananth2022cryptography,morimae2022quantum,hhan2023hardness,kretschmer2023quantum}. At a high level, a set of quantum states is pseudorandom if they are efficiently preparable by some quantum algorithm and mimic the Haar measure over $n$-qudit states. In the following, a function $\epsilon(\kappa)$ is negligible if $\epsilon(\kappa) = o(\kappa^{-c})$ $\forall c>0$.
\begin{definition}[{\cite[Definition~2]{ji2018pseudorandom}}]\label{def:pseudorandomness}
    Given a security parameter $\kappa$, a keyed family of $n$-qudit quantum states $\{|\phi_k\rangle\in(\mathbb{C}^d)^{\otimes n}\}_{k\in\{0,1\}^\kappa}$ is pseudorandom if (i) there is a polynomial-time quantum algorithm that generates $|\phi_k\rangle$ on input $k$, and (ii) for any $\operatorname{poly}(\kappa)$-time quantum adversary $\mathcal{A}$,
    \begin{align*}
        \left|\operatorname*{\mathbb{P}}_{k\sim\{0,1\}^\kappa}[\mathcal{A}(|\phi_k\rangle^{\otimes \operatorname{poly}(\kappa)}) = 1] - \operatorname*{\mathbb{P}}_{|\psi\rangle\sim\mu_{\rm Haar}}[\mathcal{A}(|\psi\rangle^{\otimes \operatorname{poly}(\kappa)}) = 1]\right| = \operatorname{negl}(\kappa),
    \end{align*}
    where $\mu_{\rm Haar}$ is the $n$-qudit Haar measure and $\operatorname{negl}(\kappa)$ is an arbitrary negligible function of $\kappa$.
\end{definition}
A few works have explored the resources required for constructing pseudorandom states, e.g.\ Ref.~\cite{aaronson2022quantum} explored possible constructions of pseudorandom states using limited entanglement. Grewal, Iyer, Kretschmer, and Liang~\cite{grewal2022low,grewal2023improved}, on the other hand, analysed pseudorandomness from the perspective of stabiliser complexity. More specifically, they proposed a quantum algorithm to test whether a quantum state is Haar-random or has non-negligible \emph{stabiliser fidelity}~\cite{bravyi2019simulation}. As a consequence, they proved that any Clifford+T circuit that uses $O(\log{n})$ T-gates cannot generate a set of $n$-qubit pseudorandom quantum states~\cite{grewal2022low}. This bound was later improved to $n/2$ non-Clifford single-qubit gates~\cite{grewal2023improved} by employing Bell difference sampling to distinguish Haar-random states from doped Clifford circuit outputs. %Here, $t$-doped Clifford circuits are Clifford circuits augmented with $t$ non-Clifford single-qubit gates~\cite{leone2023learning,leone2024learning}.

In this work, we propose a quantum algorithm to distinguish Haar-random states from states with non-negligible stabiliser fidelity on qudits and, as a consequence, we prove similar pseudorandomness lower bounds for doped Clifford circuits on qudits. Here, the stabiliser fidelity of $|\psi\rangle$ is the maximum overlap between $|\psi\rangle$ and any stabiliser state. Moreover, a Clifford circuit on $n$ qudits of dimension $p$ is any $p^n\times p^n$ unitary operator that normalises the generalised Pauli group $\mathscr{P}^n_{p} = \{\omega^s \mathcal{W}_{\mathbf{x}}: s\in\mathbb{F}_p, \mathbf{x}\in\mathbb{F}_p^{2n}\}$, i.e., any element of $\mathscr{C}^n_{p} = \{\mathcal{U}\in\mathbb{C}^{p^n\times p^n}: \mathcal{U}^\dagger\mathcal{U} = \mathbb{I},~ \mathcal{U}\mathscr{P}^n_{p}\mathcal{U}^\dagger = \mathscr{P}^n_{p}\}$. 
\begin{reptheorem}{thr:algorithm_pseudorandomness}
    Let $\delta\in(0,1)$ and $|\psi\rangle\in(\mathbb{C}^p)^{\otimes n}$ be a state promised to be either Haar-random or to have stabiliser fidelity at least $k^{-1}$. There is a quantum algorithm that distinguishes the two cases with probability at least $1-\delta$, uses $ O(k^{8}\log(1/\delta))$ copies of $|\psi\rangle$, and has runtime $O(nk^{8}\log(1/\delta))$. 
\end{reptheorem}

Since Bell difference sampling has limited utility for qudits, we employ a completely different technique from the one used by~\cite{grewal2022low,grewal2023improved}. We instead utilise the binary POVM proposed by Gross, Nezami, and Walter~\cite{gross2021schur} -- originally used to test stabiliserness -- and show that it can be used to successfully distinguish Haar-random states from those with non-negligible stabiliser fidelity. Our analysis uses some tools from~\cite{grewal2022low,grewal2023improved} to bound the stabiliser fidelity of Haar-random states. As a byproduct, our quantum algorithm can also be used to distinguish between Haar-random states and those output by doped Clifford circuits, since the latter have non-negligible stabiliser fidelity. If the doped Clifford circuit uses at most $t$ non-Clifford single-qudit gates, then the sample and time complexities of our algorithm are $O(p^{4t}\log(1/\delta))$ and $O(np^{4t}\log(1/\delta))$, respectively. As a corollary, we have the following.
\begin{repcorollary}{cor:pseudorandomness_lower bound}
    Any doped Clifford circuit in $(\mathbb{C}^p)^{\otimes}$ that uses $O(\log{n}/\log{p})$ non-Clifford single-qudit gates cannot produce an ensemble of pseudorandom quantum states.
\end{repcorollary}
We point out the contrast between the above result and Ref.~\cite{hinsche2023single} which proved that learning the output distribution of a Clifford circuit over qubits with a single T-gate is as hard as the problem of learning parities with noise. The discrepancy with our results stems from the difference in access models: while we assume access to multiple copies of a state, \cite{hinsche2023single} considers only algorithms that deal with computational-basis measurements of the state.

Even though we prove \cref{thr:algorithm_pseudorandomness} and \cref{cor:pseudorandomness_lower bound} for qudits of dimension $p$ prime, they can be generalised to arbitrary dimension $d$ by using the POVM of~\cite{gross2021schur} for arbitrary $d$. If $r$ is any coprime with $d$, then this leads to an algorithm that uses $O(rk^{4r}\log(1/\delta))$ copies of $|\psi\rangle$ and runs in time $O(nrk^{4r}\log(1/\delta))$. The pseudorandomness bound is still $O(\log{n}/\log{d})$.

\subsection{Open problems}

A few open problems remain from our work. The first would be to extend our learning algorithm to qudits of arbitrary dimension $d$ (not necessarily prime). Ideally, a new technique akin to Bell difference sampling for qudits could resolve this problem, or a more clever version of our algorithm which does not rely on vector spaces so much. Another open problem is to improve our pseudorandomness lower bound of $O(\log{n}/\log{p})$. We believe it can be exponentially strengthened to $O(n)$ by designing a better algorithm. Finally, there are a plethora of results on qubits which were proven using Bell (difference) sampling that we have not considered, e.g.\ finding a succinct description of a stabiliser state close in stabiliser fidelity to a given state, learning the output of a doped Clifford circuit, giving pseudoentangled state lower bounds~\cite{grewal2023efficient,grewal2023improved,hangleiter2023bell,grewal2024pseudoentanglement,grewal2024agnostic}. It would be interesting to extend these results to qudits.

\section{Preliminaries}

Given $n\in\mathbb{N} \triangleq \{1,2,\dots\}$, let $[n] \triangleq \{1,\dots, n\}$. Throughout this paper, unless stated otherwise, let $p\in\mathbb{N}$ be prime and $p>2$. Let $\omega \triangleq e^{2\pi i/p}$ be the $p$-th root of unity and let $\tau \triangleq \omega^{2^{-1}}$, where $2^{-1}=(p+1)/2$ denotes the multiplicative inverse of $2\operatorname{mod}p$, so that $\tau^2 = \omega$. For any state $|\psi\rangle = \sum_{\mathbf{w}\in\mathbb{F}_p^n}a_{\mathbf{w}}|\mathbf{w}\rangle$ in $(\mathbb{C}^{p})^{\otimes n}$, let $|\psi^\ast\rangle = \sum_{\mathbf{w}\in\mathbb{F}_p^n}\overline{a_{\mathbf{w}}}|\mathbf{w}\rangle$ denote its complex conjugate with respect to the computational basis. We shall often write $\psi$ to denote $|\psi\rangle\langle\psi|$. By $\mathbb{I}$ we mean the identity matrix or operator, and $\mathbf{1}[\cdot]$ is the indicator function such that $\mathbf{1}[\text{statement}]$ equals $1$ if the statement is true and $0$ if it is false. We assume that the field operations of $\mathbb{F}_p$ (addition and multiplication) require constant time.

\subsection{Scalar product vector spaces and symplectic vector spaces}

Let $n,m\in\mathbb{N}$. We shall work with two types of vectors space over $\mathbb{F}_p$: (i) the vector space $\mathbb{F}_p^n$ equipped with the \emph{scalar product} $\langle \cdot, \cdot \rangle: \mathbb{F}_p^n \times \mathbb{F}_p^n \to \mathbb{F}_p$ defined as $\langle \mathbf{v},\mathbf{w}\rangle = \sum_{i=1}^n v_iw_i$, which we refer to as a scalar product vector space; (ii) the vector space $\mathbb{F}_p^{2n}$ equipped with the \emph{symplectic product} $[\cdot,\cdot] : \mathbb{F}_p^{2n} \times \mathbb{F}_p^{2n} \to \mathbb{F}_p$ defined as $[\mathbf{x},\mathbf{y}] = \sum_{i=1}^n (x_i y_{n+i} - x_{n+i}y_i)$, which we refer to as a symplectic vector space. It should be clear from context whether we are considering a scalar product or a symplectic vector space. Note that both scalar product $\langle \cdot, \cdot \rangle$ and symplectic product $[\cdot,\cdot]$ are \emph{non-degenerate}, meaning that the zero vector $\mathbf{0}$ is the only vector orthogonal to any vector in the vector space.

Given subspaces $\mathscr{V}, \mathscr{W}$ of the scalar product vector space $\mathbb{F}_p^n$ and function $f:\mathbb{F}_p^n \to \mathbb{F}_p^m$, let
\begin{align*}
    \mathscr{V}+\mathscr{W}\triangleq\{\mathbf{v} + \mathbf{w} : \mathbf{v}\in \mathscr{V}, \mathbf{w}\in \mathscr{W}\},  ~f(\mathscr{V}) \triangleq \{f(\mathbf{v}) : \mathbf{v}\in\mathscr{V}\},  ~\mathscr{V}\times \mathscr{W} \triangleq \{(\mathbf{v},\mathbf{w}): \mathbf{v}\in\mathscr{V},\mathbf{w}\in\mathscr{W}\}.
\end{align*}
As subcases, $\mathbf{u}+\mathscr{V}\triangleq \{\mathbf{u}+\mathbf{v}:\mathbf{v}\in \mathscr{V}\}$ and $\mathbf{M}\mathscr{V}\triangleq \{\mathbf{M}\mathbf{v} : \mathbf{v}\in \mathscr{V}\}$ for $\mathbf{u}\in\mathbb{F}_p^n$ and $\mathbf{M}\in\mathbb{F}_p^{m\times n}$. Similarly for the symplectic vector space $\mathbb{F}_p^{2n}$, given subspaces $\mathscr{X}, \mathscr{Y}\subseteq\mathbb{F}_p^{2n}$ and a function $f:\mathbb{F}_p^{2n} \to \mathbb{F}_p^{2m}$, define
\begin{align*}
    \mathscr{X}+\mathscr{Y}\triangleq\{\mathbf{x} + \mathbf{y} : \mathbf{x}\in \mathscr{X}, \mathbf{y}\in \mathscr{Y}\},  ~f(\mathscr{X}) \triangleq \{f(\mathbf{x}) : \mathbf{x}\in\mathscr{X}\},  ~\mathscr{X}\times \mathscr{Y} \triangleq \{(\mathbf{x},\mathbf{y}): \mathbf{x}\in\mathscr{X},\mathbf{y}\in\mathscr{Y}\}.
\end{align*}
As subcases, $\mathbf{z}+\mathscr{X}\triangleq \{\mathbf{z}+\mathbf{x}:\mathbf{x}\in \mathscr{X}\}$ and $\mathbf{N}\mathscr{X}\triangleq \{\mathbf{N}\mathbf{x} : \mathbf{x}\in \mathscr{X}\}$ for $\mathbf{z}\in\mathbb{F}_p^{2n}$ and $\mathbf{N}\in\mathbb{F}_p^{2m\times 2n}$.

Let $\mathbf{e}_i\in\mathbb{F}^n_p$ be the vector whose $i$-th component is $1$ and the remaining components are $0$. Let $\operatorname{vec}$ be the vectorisation operator defined by $\operatorname{vec}(|\mathbf{v}\rangle\langle \mathbf{w}|) = |\mathbf{v}\rangle|\mathbf{w}\rangle$ for computational basis states $\mathbf{v},\mathbf{w}\in\mathbb{F}_p^n$. Note that $\operatorname{vec}$ preserves inner products, i.e., $\langle \operatorname{vec}(\mathbf{A})|\operatorname{vec}(\mathbf{B})\rangle = \operatorname{Tr}[\mathbf{A}^\dagger \mathbf{B}]$ for all $\mathbf{A},\mathbf{B}\in\mathbb{F}_p^{m\times n}$. Given a matrix $\mathbf{M}\in\mathbb{F}_p^{m\times n}$, let 
\begin{align*}
    \operatorname{col}(\mathbf{M}) &\triangleq \{\mathbf{M}\mathbf{v}\in\mathbb{F}_p^m : \mathbf{v}\in\mathbb{F}_p^n \} &&\text{be its column space}, \\
    \operatorname{row}(\mathbf{M}) &\triangleq \operatorname{col}(\mathbf{M}^\top) &&\text{be its row space,}\\
    \operatorname{null}(\mathbf{M}) &\triangleq \{\mathbf{v}\in\mathbb{F}_p^n: \mathbf{M}\mathbf{v} = \mathbf{0}\} &&\text{be its null space}.
\end{align*}

Given a subspace $\mathscr{V}\subseteq\mathbb{F}_p^{n}$ of the scalar product vector space $\mathbb{F}_p^n$, its \emph{orthogonal complement} is $\mathscr{V}^\perp \triangleq \{\mathbf{w}\in\mathbb{F}_p^n: \langle\mathbf{v}, \mathbf{w}\rangle = 0, \forall \mathbf{v}\in\mathscr{V}\}$. Given a subspace $\mathscr{X}\subseteq\mathbb{F}_p^{2n}$ of the symplectic vector space $\mathbb{F}_p^{2n}$, its \emph{symplectic complement} is $\mathscr{X}^{\independent} \triangleq \{\mathbf{y}\in\mathbb{F}_p^{2n}: [\mathbf{x}, \mathbf{y}] = 0, \forall \mathbf{x}\in\mathscr{X}\}$. We shall use the following facts about orthogonal and symplectic complements throughout.
\begin{fact}
    Let $\mathscr{V},\mathscr{W}\subseteq \mathbb{F}_p^n$ and $\mathscr{X},\mathscr{Y}\subseteq \mathbb{F}_p^{2n}$ be subspaces of the scalar product and symplectic vector spaces $\mathbb{F}_p^n$ and $\mathbb{F}_p^{2n}$, respectively. Then
    \begin{multicols}{2}
    \begin{enumerate}[label=\alph*.]
        \item $\mathscr{V}^\perp$ is a subspace;
        \item $(\mathscr{V}^\perp)^\perp = \mathscr{V}$;
        \item $\operatorname{dim}(\mathscr{V}) + \operatorname{dim}(\mathscr{V}^\perp) = n$;
        \item $|\mathscr{V}||\mathscr{V}^\perp| = p^n$;
        \item $\mathscr{V}\subseteq \mathscr{W} \iff \mathscr{W}^\perp \subseteq \mathscr{V}^\perp$;
        \item $(\mathscr{V} + \mathscr{W})^\perp = \mathscr{V}^\perp \cap \mathscr{W}^\perp$;
        \item $\mathscr{X}^{\independent}$ is a subspace;
        \item $(\mathscr{X}^{\independent})^{\independent} = \mathscr{X}$;
        \item $\operatorname{dim}(\mathscr{X}) + \operatorname{dim}(\mathscr{X}^{\independent}) = 2n$;
        \item $|\mathscr{X}||\mathscr{X}^{\independent}| = p^{2n}$;
        \item $\mathscr{X}\subseteq \mathscr{Y} \iff \mathscr{Y}^{\independent} \subseteq \mathscr{X}^{\independent}$;
        \item $(\mathscr{X} + \mathscr{Y})^{\independent} = \mathscr{X}^{\independent} \cap \mathscr{Y}^{\independent}$.
    \end{enumerate}
    \end{multicols}
\end{fact}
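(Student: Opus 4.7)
The twelve items split into two halves that are formally identical: (a)–(f) concern a non-degenerate symmetric bilinear form on $\mathbb{F}_p^n$, while (g)–(l) concern a non-degenerate alternating bilinear form on $\mathbb{F}_p^{2n}$. My plan is to prove everything from a single principle: for any non-degenerate bilinear form $B:V\times V\to \mathbb{F}_p$ on a finite-dimensional $\mathbb{F}_p$-vector space $V$, the map $V\to V^\ast$ sending $w\mapsto B(\cdot,w)$ is an isomorphism. Once this principle and the non-degeneracy of $\langle\cdot,\cdot\rangle$ and $[\cdot,\cdot]$ are in hand, each item reduces to a short manipulation that I will carry out in a fixed order, and the symplectic half will then follow word-for-word from the scalar-product half.

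First I would handle the scalar-product case. Item (a) is immediate from bilinearity of $\langle\cdot,\cdot\rangle$ applied to arbitrary $\mathbb{F}_p$-linear combinations. For (c), the key construction is the restriction map $\Phi:\mathbb{F}_p^n\to \operatorname{Hom}(\mathscr{V},\mathbb{F}_p)$ given by $\Phi(w)(v)=\langle v,w\rangle$; its kernel is exactly $\mathscr{V}^\perp$ by definition, and its surjectivity follows by extending a basis of $\mathscr{V}$ to a basis of $\mathbb{F}_p^n$ and using the non-degeneracy of $\langle\cdot,\cdot\rangle$ (witnessed most concretely by $\langle \mathbf{e}_i,\mathbf{e}_j\rangle=\delta_{ij}$) to realize any chosen linear functional on $\mathscr{V}$. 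The rank–nullity theorem then gives (c), from which (d) follows by $|\mathscr{V}|=p^{\dim\mathscr{V}}$. For (b), the containment $\mathscr{V}\subseteq (\mathscr{V}^\perp)^\perp$ is tautological, and two applications of (c) force equality of dimensions, hence equality of subspaces.

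Next I would dispatch (e) and (f), which use only the definitions. The forward direction of (e) is immediate: if $\mathscr{V}\subseteq\mathscr{W}$ and $w\in \mathscr{W}^\perp$, then $w$ is orthogonal to $\mathscr{V}$ as well. The converse follows by applying the forward direction to $\mathscr{W}^\perp\subseteq\mathscr{V}^\perp$ and then invoking (b) twice. For (f), a vector is orthogonal to every $\mathbf{v}+\mathbf{w}\in\mathscr{V}+\mathscr{W}$ iff it is orthogonal to $\mathscr{V}$ and to $\mathscr{W}$ separately (using $\mathbf{0}\in\mathscr{V},\mathscr{W}$ for one direction and bilinearity for the other), which is precisely $\mathscr{V}^\perp\cap\mathscr{W}^\perp$.

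Finally, for the symplectic half (g)–(l), the only thing to check separately is that $[\cdot,\cdot]$ is non-degenerate; this is a one-line verification by testing against the standard basis vectors $\mathbf{e}_i$ and $\mathbf{e}_{n+i}$ to force each coordinate of a supposed universal annihilator to vanish. With non-degeneracy in hand, the arguments above transfer verbatim with $\langle\cdot,\cdot\rangle$ replaced by $[\cdot,\cdot]$, $\mathbb{F}_p^n$ by $\mathbb{F}_p^{2n}$, and $n$ by $2n$. There is no real obstacle: the only point that requires care is making sure the surjectivity step inside (c)/(i) invokes non-degeneracy rather than some stronger property such as positive definiteness (which is not available over $\mathbb{F}_p$), and that for the alternating form one does not inadvertently conflate the left and right radicals—both coincide here because $[\mathbf{x},\mathbf{y}]=-[\mathbf{y},\mathbf{x}]$.
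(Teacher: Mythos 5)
Your proposal is correct. The paper states this Fact without proof, treating it as standard linear algebra, so there is no paper argument to compare against; your route --- deducing everything from the isomorphism $\mathbf{w}\mapsto B(\cdot,\mathbf{w})$ induced by a non-degenerate bilinear form, establishing the dimension formula via rank--nullity applied to the restriction map onto $\operatorname{Hom}(\mathscr{V},\mathbb{F}_p)$, and then getting (b), (d), (e), (f) and their symplectic analogues by dimension counting and the definitions --- is exactly the canonical argument, and the two points you flag (non-degeneracy rather than definiteness, and the coincidence of left and right radicals for the alternating form since $[\mathbf{x},\mathbf{y}]=-[\mathbf{y},\mathbf{x}]$) are indeed the only places where care is needed.
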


The concept of isotropic and Lagrangian subspaces will be fundamental to our analysis.
\begin{definition}
    A subspace $\mathscr{X}\subset \mathbb{F}_p^{2n}$ of a symplectic vector space $\mathbb{F}_p^{2n}$ is \emph{isotropic} if $[\mathbf{x},\mathbf{y}] = 0$ for all $\mathbf{x},\mathbf{y}\in\mathscr{X}$, and is \emph{Lagrangian} if $\mathscr{X}^{\independent} = \mathscr{X}$. 
\end{definition}
\begin{fact}
    Every Lagrangian subspace of a $2n$-dimension vector space is isotropic and has dimension $n$. Every isotropic subspace can be extended to a Lagrangian one.
\end{fact}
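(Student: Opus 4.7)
The plan is to handle the two sentences of the fact separately, both as short consequences of the preceding dimension and duality facts about the symplectic complement. For the first sentence, note that Lagrangian means $\mathscr{X}^{\independent} = \mathscr{X}$, which in particular gives $\mathscr{X} \subseteq \mathscr{X}^{\independent}$; this is exactly the definition of isotropic. For the dimension, I invoke $\operatorname{dim}(\mathscr{X}) + \operatorname{dim}(\mathscr{X}^{\independent}) = 2n$ from the fact just above; substituting $\mathscr{X}^{\independent} = \mathscr{X}$ gives $2\operatorname{dim}(\mathscr{X}) = 2n$, so $\operatorname{dim}(\mathscr{X}) = n$.

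For the second sentence, I would argue by iterated extension. Start with an isotropic $\mathscr{X}$, so $\mathscr{X}\subseteq \mathscr{X}^{\independent}$. If equality already holds, $\mathscr{X}$ is Lagrangian and there is nothing to do. Otherwise pick any $\mathbf{y}\in \mathscr{X}^{\independent}\setminus \mathscr{X}$ and set $\mathscr{X}' \triangleq \mathscr{X} + \operatorname{span}(\mathbf{y})$. I would then check that $\mathscr{X}'$ remains isotropic by a direct bilinearity computation: for $\mathbf{x}_1, \mathbf{x}_2\in \mathscr{X}$ and $c_1,c_2\in\mathbb{F}_p$,
\begin{align*}
[\mathbf{x}_1 + c_1 \mathbf{y}, \mathbf{x}_2 + c_2 \mathbf{y}] = [\mathbf{x}_1,\mathbf{x}_2] + c_2[\mathbf{x}_1,\mathbf{y}] + c_1[\mathbf{y},\mathbf{x}_2] + c_1 c_2 [\mathbf{y},\mathbf{y}],
\end{align*}
and every term vanishes: the first because $\mathscr{X}$ is isotropic, the middle two because $\mathbf{y}\in \mathscr{X}^{\independent}$, and the last because $[\mathbf{z},\mathbf{z}] = \sum_{i} (z_i z_{n+i} - z_{n+i} z_i) = 0$ for every $\mathbf{z}\in\mathbb{F}_p^{2n}$ (so we do not even need $p$ to be odd for this step).

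Since $\operatorname{dim}(\mathscr{X}') = \operatorname{dim}(\mathscr{X}) + 1$, the procedure strictly increases the dimension of the isotropic subspace at each step. The key bound is that any isotropic subspace has dimension at most $n$: from $\mathscr{X}\subseteq \mathscr{X}^{\independent}$ and $\operatorname{dim}(\mathscr{X}) + \operatorname{dim}(\mathscr{X}^{\independent}) = 2n$ we get $\operatorname{dim}(\mathscr{X})\leq n$, with equality exactly when $\mathscr{X} = \mathscr{X}^{\independent}$. Hence the process terminates after at most $n - \operatorname{dim}(\mathscr{X})$ iterations at a subspace that is its own symplectic complement, i.e.\ a Lagrangian extension of the original isotropic subspace.

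There is no real obstacle here; the only point worth flagging is to avoid accidentally invoking the characteristic of the field when showing $[\mathbf{y},\mathbf{y}]=0$, which the explicit formula for $[\cdot,\cdot]$ handles directly.
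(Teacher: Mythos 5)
Your proof is correct. The paper states this as a standard fact without proof, and your argument is the standard one: the first sentence follows immediately from the definition of Lagrangian together with the stated identity $\operatorname{dim}(\mathscr{X}) + \operatorname{dim}(\mathscr{X}^{\independent}) = 2n$, and the second follows by repeatedly adjoining a vector of $\mathscr{X}^{\independent}\setminus\mathscr{X}$, with isotropy preserved by bilinearity and the alternating property $[\mathbf{z},\mathbf{z}]=0$ (which you rightly check from the explicit formula rather than from oddness of $p$), terminating because isotropic subspaces have dimension at most $n$.
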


\begin{definition}[Involution]
    The involution $J:\mathbb{F}_p^{2n}\to\mathbb{F}_p^{2n}$ on the symplectic vector space $\mathbb{F}_p^{2n}$ is defined as $J(\mathbf{x}) = J((\mathbf{v},\mathbf{w})) = (-\mathbf{v},\mathbf{w})$ where $\mathbf{x} = (\mathbf{v},\mathbf{w})$ and $\mathbf{v},\mathbf{w}\in\mathbb{F}_p^n$. 
\end{definition}
\begin{lemma}
    Let $J:\mathbb{F}_p^{2n}\to\mathbb{F}_p^{2n}$ be the involution. For any $\mathbf{x},\mathbf{y}\in\mathbb{F}_p^{2n}$ and subspace $\mathscr{X}\subseteq\mathbb{F}_p^{2n}$, it holds that $J(J(\mathbf{x})) = \mathbf{x}$, $[J(\mathbf{x}),J(\mathbf{y})] = -[\mathbf{x},\mathbf{y}]$, and $J(\mathscr{X}^{\independent}) = J(\mathscr{X})^{\independent}$.
\end{lemma}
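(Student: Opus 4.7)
The plan is to verify the three claims in turn, all of which reduce to direct unpacking of definitions.

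First I would prove $J(J(\mathbf{x})) = \mathbf{x}$ by writing $\mathbf{x} = (\mathbf{v},\mathbf{w})$ with $\mathbf{v},\mathbf{w}\in\mathbb{F}_p^n$ and applying $J$ twice: $J(J(\mathbf{v},\mathbf{w})) = J(-\mathbf{v},\mathbf{w}) = (\mathbf{v},\mathbf{w})$. This is immediate from the definition and uses no structure beyond $-(-\mathbf{v}) = \mathbf{v}$.

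Next I would verify the antisymplectic property $[J(\mathbf{x}),J(\mathbf{y})] = -[\mathbf{x},\mathbf{y}]$. Writing $\mathbf{x} = (\mathbf{v}_1,\mathbf{w}_1)$ and $\mathbf{y} = (\mathbf{v}_2,\mathbf{w}_2)$, the symplectic form from the Preliminaries gives $[\mathbf{x},\mathbf{y}] = \langle \mathbf{v}_1,\mathbf{w}_2\rangle - \langle \mathbf{w}_1,\mathbf{v}_2\rangle$. Substituting $J(\mathbf{x}) = (-\mathbf{v}_1,\mathbf{w}_1)$ and $J(\mathbf{y}) = (-\mathbf{v}_2,\mathbf{w}_2)$, each of the two scalar products in the symplectic form acquires exactly one sign flip (since only the first half of each vector is negated), so the whole expression flips sign.

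Finally I would derive $J(\mathscr{X}^{\independent}) = J(\mathscr{X})^{\independent}$ as a formal consequence of the two facts just established. For the inclusion $J(\mathscr{X}^{\independent}) \subseteq J(\mathscr{X})^{\independent}$, take $\mathbf{z} = J(\mathbf{y})$ with $\mathbf{y}\in\mathscr{X}^{\independent}$; then for any $\mathbf{x}\in\mathscr{X}$ we have $[J(\mathbf{x}),\mathbf{z}] = [J(\mathbf{x}),J(\mathbf{y})] = -[\mathbf{x},\mathbf{y}] = 0$, so $\mathbf{z}\in J(\mathscr{X})^{\independent}$. The reverse inclusion follows by the same chain of equalities read backwards, using $J^2 = \operatorname{id}$ to write any $\mathbf{z}\in J(\mathscr{X})^{\independent}$ as $J(J(\mathbf{z}))$ and check that $J(\mathbf{z})\in\mathscr{X}^{\independent}$.

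There is no real obstacle: each step is a two-line identity. The only thing to watch is that the symplectic form is written consistently with the Preliminaries convention, so that the sign flip in the second step is traced through correctly; once that is fixed, the third step is purely formal.
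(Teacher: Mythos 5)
Your proof is correct: all three claims follow from direct unpacking of the definitions exactly as you describe, with the sign flip in $[J(\mathbf{x}),J(\mathbf{y})]=-[\mathbf{x},\mathbf{y}]$ traced correctly through the convention $[\mathbf{x},\mathbf{y}]=\langle\mathbf{v}_1,\mathbf{w}_2\rangle-\langle\mathbf{w}_1,\mathbf{v}_2\rangle$, and the set identity deduced formally from $J^2=\operatorname{id}$ and the antisymplectic property. The paper states this lemma without proof precisely because it is this routine, so your argument is exactly the intended one.
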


The following fact will be useful. A proof is provided in \cref{app:proofs}.

\begin{lemma}\label{lem:sum}
    Let $\mathscr{V}\subseteq\mathbb{F}_p^n$ and $\mathscr{X}\subseteq\mathbb{F}_p^{2n}$ be subspaces and $\mathbf{w} \in\mathbb{F}_p^n$ and $\mathbf{y} \in\mathbb{F}_p^{2n}$. Then
    \begin{align*}
        \sum_{\mathbf{v}\in\mathscr{V}} \omega^{\langle\mathbf{v}, \mathbf{w}\rangle} = \begin{cases}
            |\mathscr{V}| &\text{if}~\mathbf{w} \in \mathscr{V}^\perp,\\
            0 &\text{if}~ \mathbf{w}\notin \mathscr{V}^\perp,
        \end{cases}\qquad\qquad
        \sum_{\mathbf{x}\in\mathscr{X}} \omega^{[\mathbf{x}, \mathbf{y}]} = \begin{cases}
            |\mathscr{X}| &\text{if}~\mathbf{y} \in \mathscr{X}^{\independent},\\
            0 &\text{if}~ \mathbf{y}\notin \mathscr{X}^{\independent}.
        \end{cases}
    \end{align*}
\end{lemma}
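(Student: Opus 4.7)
The plan is to prove both identities simultaneously via the standard character-orthogonality argument, since the two cases differ only in replacing the scalar product $\langle\cdot,\cdot\rangle$ with the symplectic product $[\cdot,\cdot]$ and $\mathscr{V}^\perp$ with $\mathscr{X}^{\independent}$. Both bilinear forms are $\mathbb{F}_p$-linear in each argument, which is all that the argument uses. I focus on the scalar product case; the symplectic case is identical.

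First I would dispatch the easy case. If $\mathbf{w}\in\mathscr{V}^\perp$, then by definition $\langle\mathbf{v},\mathbf{w}\rangle = 0$ for every $\mathbf{v}\in\mathscr{V}$, so every term in the sum equals $\omega^0 = 1$ and the sum is $|\mathscr{V}|$.

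For the nontrivial case $\mathbf{w}\notin\mathscr{V}^\perp$, consider the map $\phi:\mathscr{V}\to\mathbb{F}_p$ defined by $\phi(\mathbf{v}) = \langle\mathbf{v},\mathbf{w}\rangle$. This is an $\mathbb{F}_p$-linear functional on $\mathscr{V}$, and it is nonzero by the hypothesis $\mathbf{w}\notin\mathscr{V}^\perp$. Since $\mathbb{F}_p$ has no nontrivial proper subspace over itself, $\phi$ is surjective, so its kernel $\ker\phi\subseteq\mathscr{V}$ has size $|\mathscr{V}|/p$, and every value $k\in\mathbb{F}_p$ is attained on exactly $|\mathscr{V}|/p$ elements of $\mathscr{V}$. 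Grouping the sum by the value of $\phi$ gives
\begin{align*}
    \sum_{\mathbf{v}\in\mathscr{V}} \omega^{\langle\mathbf{v},\mathbf{w}\rangle} = \frac{|\mathscr{V}|}{p}\sum_{k\in\mathbb{F}_p}\omega^k = 0,
\end{align*}
where the last equality uses $\sum_{k=0}^{p-1}\omega^k = 0$ since $\omega$ is a primitive $p$-th root of unity (it is the value at $1$ of the polynomial $1+z+\cdots+z^{p-1} = (z^p-1)/(z-1)$ evaluated via $\omega^p = 1$ and $\omega\neq 1$).

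The symplectic identity follows by exactly the same argument: the map $\mathbf{x}\mapsto [\mathbf{x},\mathbf{y}]$ is $\mathbb{F}_p$-linear on $\mathscr{X}$, it vanishes identically iff $\mathbf{y}\in\mathscr{X}^{\independent}$, and otherwise is surjective onto $\mathbb{F}_p$ with fibers of size $|\mathscr{X}|/p$, giving $0$ by the same roots-of-unity cancellation. There is no real obstacle here; the only thing to be careful about is invoking that $\mathbb{F}_p$ is a simple $\mathbb{F}_p$-vector space so that a nonzero linear functional is automatically surjective.
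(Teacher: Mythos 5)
Your proof is correct and follows essentially the same route as the paper's: dispose of the trivial case, then show that when the form does not vanish identically on the subspace every value of $\mathbb{F}_p$ is attained on exactly $|\mathscr{V}|/p$ (resp.\ $|\mathscr{X}|/p$) elements, and conclude by $\sum_{k\in\mathbb{F}_p}\omega^k=0$. The only cosmetic difference is how the equal fiber sizes are obtained: the paper uses explicit scaling/translation bijections (its auxiliary cardinality lemma), whereas you get them immediately from the observation that a nonzero $\mathbb{F}_p$-linear functional is surjective with fibers equal to cosets of its kernel.
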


\subsection{Symplectic Fourier analysis}

Boolean Fourier analysis is a well-established field wherein a Boolean function is studied through its Fourier expansion using the characters $\omega^{\langle \cdot, \cdot \rangle}$ defined with respect to the scalar product. See e.g.~\cite{de2008brief,o2014analysis} for an introduction. A character of $\mathbb{F}_p^{2n}$ is a group homomorphism $\chi:\mathbb{F}_p^{2n} \to \mathbb{C}$ such that $\chi(\mathbf{x} + \mathbf{y}) = \chi(\mathbf{x})\chi(\mathbf{y})$ for all $\mathbf{x},\mathbf{y}\in\mathbb{F}_p^{2n}$. In this paper, we shall work with \emph{symplectic} Fourier analysis, which is similar to the usual Boolean Fourier analysis but the Fourier characters are instead defined with respect to the symplectic product as $\omega^{[\cdot, \cdot]}$. It is a standard result that every character of the symplectic vector space $\mathbb{F}_p^{2n}$ can be written as $\omega^{[\mathbf{x},\cdot]}$ for some $\mathbf{x}\in\mathbb{F}_p^{2n}$. Refs.~\cite{gross2021schur,grewal2023improved} studied some aspects of symplectic Fourier analysis, especially over the domain $\mathbb{F}_2^{2n}$ ($p=2$).
\begin{definition}
    Let $f:\mathbb{F}_p^{2n}\to\mathbb{C}$. The symplectic Fourier transform of $f$, $\widehat{f}:\mathbb{F}_p^{2n}\to\mathbb{C}$, is
    \begin{align*}
        \widehat{f}(\mathbf{y}) \triangleq \frac{1}{p^{2n}}\sum_{\mathbf{x}\in\mathbb{F}_p^{2n}}\omega^{[\mathbf{y},\mathbf{x}]}f(\mathbf{x}).
    \end{align*}
\end{definition}
It is possible to expand any function $f:\mathbb{F}_p^{2n}\to\mathbb{C}$ using its symplectic Fourier transform as
\begin{align*}
    f(\mathbf{x}) = \sum_{\mathbf{y}\in\mathbb{F}_p^{2n}}\omega^{[\mathbf{x},\mathbf{y}]}\widehat{f}(\mathbf{y}). 
\end{align*}
Indeed, by \cref{lem:sum}, 
\begin{align*}
    \sum_{\mathbf{y}\in\mathbb{F}_p^{2n}}\omega^{[\mathbf{x},\mathbf{y}]}\widehat{f}(\mathbf{y}) = \frac{1}{p^{2n}}\sum_{\mathbf{y},\mathbf{z}\in\mathbb{F}_p^{2n}}\omega^{[\mathbf{x},\mathbf{y}]+[\mathbf{y},\mathbf{z}]}f(\mathbf{z}) = \frac{1}{p^{2n}}\sum_{\mathbf{y},\mathbf{z}\in\mathbb{F}_p^{2n}}\omega^{[\mathbf{x}-\mathbf{z},\mathbf{y}]}f(\mathbf{z}) = \sum_{\mathbf{z}\in\mathbb{F}_p^{2n}} f(\mathbf{z}) \mathbf{1}[\mathbf{z} = \mathbf{x}] = f(\mathbf{x}).
\end{align*}

The transformation $f\mapsto \widehat{f}$ is unitary, meaning that (a generalised) Parseval's identity holds.
\begin{lemma}[Parseval's identity]\label{lem:parseval}
    Given $f,g:\mathbb{F}_p^{2n}\to\mathbb{C}$ and $\mathbf{t}\in\mathbb{F}_p^{2n}$, then
    \begin{align*}
        \frac{1}{p^{2n}}\sum_{\mathbf{x}\in\mathbb{F}_p^{2n}} \omega^{[\mathbf{t},\mathbf{x}]} f(\mathbf{x})g(\mathbf{x}) &= \sum_{\mathbf{y}\in\mathbb{F}_p^{2n}}\widehat{f}(\mathbf{y})\widehat{g}(\mathbf{t}-\mathbf{y}).
    \end{align*}
\end{lemma}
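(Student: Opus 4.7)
The plan is to expand $f$ and $g$ via the inverse symplectic Fourier expansion already derived just above the lemma, substitute into the left-hand side, and then collapse the resulting triple sum using the orthogonality relation from \cref{lem:sum}. This is essentially a direct bookkeeping calculation — the only ingredient beyond expansion is the antisymmetry and bilinearity of the symplectic product $[\cdot,\cdot]$.

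\textbf{Step 1: Expand $f$ and $g$.} Using $f(\mathbf{x}) = \sum_{\mathbf{a}\in\mathbb{F}_p^{2n}} \omega^{[\mathbf{x},\mathbf{a}]}\widehat{f}(\mathbf{a})$ and $g(\mathbf{x}) = \sum_{\mathbf{b}\in\mathbb{F}_p^{2n}} \omega^{[\mathbf{x},\mathbf{b}]}\widehat{g}(\mathbf{b})$, I would rewrite the LHS as
\begin{align*}
    \frac{1}{p^{2n}}\sum_{\mathbf{x}\in\mathbb{F}_p^{2n}} \omega^{[\mathbf{t},\mathbf{x}]} f(\mathbf{x})g(\mathbf{x})
    = \frac{1}{p^{2n}}\sum_{\mathbf{a},\mathbf{b}\in\mathbb{F}_p^{2n}}\widehat{f}(\mathbf{a})\widehat{g}(\mathbf{b}) \sum_{\mathbf{x}\in\mathbb{F}_p^{2n}}\omega^{[\mathbf{t},\mathbf{x}] + [\mathbf{x},\mathbf{a}] + [\mathbf{x},\mathbf{b}]},
\end{align*}
after swapping the order of summation.

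\textbf{Step 2: Simplify the exponent.} Using bilinearity and the antisymmetry $[\mathbf{t},\mathbf{x}] = -[\mathbf{x},\mathbf{t}]$ of the symplectic product, combine the three terms in the exponent into $[\mathbf{x},\mathbf{a}+\mathbf{b}-\mathbf{t}]$.

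\textbf{Step 3: Apply orthogonality.} By \cref{lem:sum} applied to the subspace $\mathscr{X} = \mathbb{F}_p^{2n}$, whose symplectic complement is $\{\mathbf{0}\}$ by non-degeneracy, the inner sum $\sum_{\mathbf{x}\in\mathbb{F}_p^{2n}}\omega^{[\mathbf{x},\mathbf{a}+\mathbf{b}-\mathbf{t}]}$ equals $p^{2n}$ when $\mathbf{a}+\mathbf{b}=\mathbf{t}$ and vanishes otherwise. This forces $\mathbf{b} = \mathbf{t} - \mathbf{a}$, cancels the $p^{-2n}$ prefactor, and collapses the double sum to $\sum_{\mathbf{a}\in\mathbb{F}_p^{2n}} \widehat{f}(\mathbf{a})\widehat{g}(\mathbf{t}-\mathbf{a})$; relabelling $\mathbf{a} \mapsto \mathbf{y}$ yields the RHS.

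There is no real obstacle here; the mild subtlety is just tracking signs carefully when applying antisymmetry of $[\cdot,\cdot]$ before invoking \cref{lem:sum}. The specialisation $\mathbf{t}=\mathbf{0}$ (with $g$ replaced by $\bar{g}$, noting that $\widehat{\bar{g}}(\mathbf{y}) = \overline{\widehat{g}(-\mathbf{y})}$) recovers the usual Parseval identity as a sanity check.
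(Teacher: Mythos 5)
Your proposal is correct and follows essentially the same route as the paper: expand $f$ and $g$ via their symplectic Fourier expansions, combine the exponent into $[\mathbf{x},\mathbf{a}+\mathbf{b}-\mathbf{t}]$ using antisymmetry, and collapse the inner sum with the orthogonality relation of \cref{lem:sum}. The sign bookkeeping you flag is handled identically in the paper's one-line computation, so there is nothing to add.
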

\begin{proof}
    $\begin{aligned}[t]
    \sum_{\mathbf{x}\in\mathbb{F}_p^{2n}} \omega^{[\mathbf{t},\mathbf{x}]}f(\mathbf{x}) g(\mathbf{x}) &= \sum_{\mathbf{x},\mathbf{y},\mathbf{z}\in\mathbb{F}_p^{2n}}\omega^{[\mathbf{x},\mathbf{y}+\mathbf{z}-\mathbf{t}]}\widehat{f}(\mathbf{y})\widehat{g}(\mathbf{z}) \\
    &= p^{2n}\sum_{\mathbf{y},\mathbf{z}\in\mathbb{F}_p^{2n}}\widehat{f}(\mathbf{y})\widehat{g}(\mathbf{z})\cdot\mathbf{1}[\mathbf{z} = \mathbf{t}-\mathbf{y}] = p^{2n}\sum_{\mathbf{y}\in\mathbb{F}_p^{2n}}\widehat{f}(\mathbf{y})\widehat{g}(\mathbf{t}-\mathbf{y}).
    \hspace{0.52cm}\qedhere
    \end{aligned}$
\end{proof}
The usual Parseval's identity follows from the above lemma by taking $\mathbf{t} = \mathbf{0}$ and $g(\mathbf{x}) = \overline{f(\mathbf{x})}$. We also define the convolution operation.
\begin{definition}[Convolution]
    Let $f,g:\mathbb{F}_p^{2n}\to\mathbb{C}$. The convolution between $f$ and $g$ is the function $f\ast g:\mathbb{F}_p^{2n}\to\mathbb{C}$ defined by
    \begin{align*}
        (f\ast g)(\mathbf{x}) \triangleq \frac{1}{p^{2n}}\sum_{\mathbf{y}\in\mathbb{F}_p^{2n}}f(\mathbf{y})g(\mathbf{x} - \mathbf{y}) = \frac{1}{p^{2n}}\sum_{\mathbf{y}\in\mathbb{F}_p^{2n}}f(\mathbf{x}-\mathbf{y})g(\mathbf{y}).
    \end{align*}
\end{definition}
\begin{lemma}\label{lem:convolution_Fourier}
    Let $f,g:\mathbb{F}_p^{2n}\to\mathbb{C}$. Then, for all $\mathbf{x}\in\mathbb{F}_p^{2n}$, $\widehat{f\ast g}(\mathbf{x}) = \widehat{f}(\mathbf{x})\widehat{g}(\mathbf{x})$.
\end{lemma}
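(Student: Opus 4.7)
The plan is a direct computation mirroring the classical convolution-multiplication theorem, with the only twist being that the characters are $\omega^{[\cdot,\cdot]}$ rather than $\omega^{\langle\cdot,\cdot\rangle}$. Since the symplectic product $[\cdot,\cdot]$ is $\mathbb{F}_p$-bilinear, the relevant algebraic identity $[\mathbf{x},\mathbf{u}+\mathbf{y}] = [\mathbf{x},\mathbf{u}] + [\mathbf{x},\mathbf{y}]$ still holds, and this is essentially all that is needed.

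First, I would unfold the definition of $\widehat{f \ast g}(\mathbf{x})$, writing
\[
\widehat{f \ast g}(\mathbf{x}) \;=\; \frac{1}{p^{2n}}\sum_{\mathbf{z}\in\mathbb{F}_p^{2n}} \omega^{[\mathbf{x},\mathbf{z}]}(f\ast g)(\mathbf{z}) \;=\; \frac{1}{p^{4n}}\sum_{\mathbf{y},\mathbf{z}\in\mathbb{F}_p^{2n}} \omega^{[\mathbf{x},\mathbf{z}]} f(\mathbf{y})\,g(\mathbf{z}-\mathbf{y}).
\]
Then I would perform the change of variables $\mathbf{u} = \mathbf{z} - \mathbf{y}$ (which is a bijection of $\mathbb{F}_p^{2n}$ for each fixed $\mathbf{y}$), turning the double sum into an independent sum over $\mathbf{y}$ and $\mathbf{u}$.

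Next, I would use bilinearity of the symplectic product to split $\omega^{[\mathbf{x},\mathbf{u}+\mathbf{y}]} = \omega^{[\mathbf{x},\mathbf{y}]}\omega^{[\mathbf{x},\mathbf{u}]}$. This decouples the double sum into a product of two single sums, and each single sum is, by definition, $p^{2n}\widehat{f}(\mathbf{x})$ and $p^{2n}\widehat{g}(\mathbf{x})$ respectively. The factor of $p^{4n}$ in the denominator cancels exactly, yielding $\widehat{f\ast g}(\mathbf{x}) = \widehat{f}(\mathbf{x})\widehat{g}(\mathbf{x})$.

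There is no real obstacle here; the only thing to be careful about is a consistent sign/variable convention in the convolution and Fourier transform (so that the change of variable $\mathbf{u}=\mathbf{z}-\mathbf{y}$ lands on $g$ and produces $\widehat{g}(\mathbf{x})$ rather than $\widehat{g}(-\mathbf{x})$). The second equivalent form of the convolution given in the definition confirms this is fine, so the computation closes in a few lines.
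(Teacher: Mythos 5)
Your proposal is correct and follows essentially the same route as the paper's proof: unfold the definitions, change variables to decouple the double sum, and use bilinearity of the symplectic product to split the character, giving $\widehat{f}(\mathbf{x})\widehat{g}(\mathbf{x})$ directly. The sign/convention check you mention is handled correctly, so there is nothing to add.
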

\begin{proof}
    $\begin{aligned}[t]
        \widehat{f\ast g}(\mathbf{x}) &= \frac{1}{p^{2n}} \sum_{\mathbf{y}\in\mathbb{F}_p^{2n}}\omega^{[\mathbf{x},\mathbf{y}]}(f\ast g)(\mathbf{y}) = \frac{1}{p^{4n}}\sum_{\mathbf{y},\mathbf{z}\in\mathbb{F}_p^{2n}}\omega^{[\mathbf{x},\mathbf{y}]} f(\mathbf{z})g(\mathbf{y} - \mathbf{z}) \\
        &= \frac{1}{p^{4n}}\sum_{\mathbf{z}\in\mathbb{F}_p^{2n}} \omega^{[\mathbf{x},\mathbf{z}]}f(\mathbf{z})\sum_{\mathbf{y}\in\mathbb{F}_p^{2n}}\omega^{[\mathbf{x},\mathbf{y}-\mathbf{z}]}g(\mathbf{y}-\mathbf{z}) = \widehat{f}(\mathbf{x})\widehat{g}(\mathbf{x}). \hspace{3.3cm}\qedhere
    \end{aligned}$
\end{proof}

\subsection{Weyl operators}
\label{sec:Weyl_operators}

Define the unitary shift and clock operators $\mathsf{X}$ and $\mathsf{Z}$, respectively, as
\begin{align*}
    \mathsf{X}|q\rangle = |q+1\rangle, \quad \mathsf{Z}|q\rangle = \omega^q |q\rangle, \quad \forall q\in\mathbb{F}_p.
\end{align*}
The clock and shift operators generate the $n$-qudit Pauli group $\mathscr{P}_p^n$ over $\mathbb{F}_p$, also known as Weyl-Heisenberg group, i.e.,
\begin{align*}
    \mathscr{P}^n_{p} \triangleq \langle \tau\mathbb{I}, \mathsf{X}, \mathsf{Z}\rangle^{\otimes n}. 
\end{align*}
The corresponding $n$-qudit Clifford group $\mathscr{C}^n_{p}$ is defined as the normaliser of the Pauli group in the unitary group, modulo phases, i.e.,
\begin{align*}
    \mathscr{C}^n_{p} \triangleq \{\mathcal{U}\in\mathbb{C}^{p^n\times p^n}: \mathcal{U}^\dagger\mathcal{U} = \mathbb{I},~ \mathcal{U}\mathscr{P}^n_{p}\mathcal{U}^\dagger = \mathscr{P}^n_{p}\}.
    \end{align*}
Another way to characterise the Pauli group $\mathscr{P}^n_{p}$ is through the
\emph{Weyl operators}, also known as the generalised Pauli operators, defined as
\begin{align*}
    \mathcal{W}_{\mathbf{x}} = \mathcal{W}_{\mathbf{v},\mathbf{w}} = \tau^{\langle \mathbf{v},\mathbf{w}\rangle}(\mathsf{X}^{w_1}\mathsf{Z}^{v_1})\otimes \cdots \otimes (\mathsf{X}^{w_n}\mathsf{Z}^{v_n}), \qquad \forall\mathbf{x} = (\mathbf{v},\mathbf{w})\in\mathbb{Z}^{2n}.
\end{align*}
It is not hard to see that the Weyl operators only depend on $\mathbf{x}$ modulo $p$, since, for any $\mathbf{z} = (\mathbf{v}',\mathbf{w}')\in\mathbb{Z}^{2n}$,
\begin{align*}
    \mathcal{W}_{\mathbf{x} + p\mathbf{z}} = \tau^{p\langle \mathbf{v},\mathbf{w}'\rangle + p\langle \mathbf{v}',\mathbf{w}\rangle + p^2\langle \mathbf{v}',\mathbf{w}'\rangle}\mathcal{W}_{\mathbf{x}} = \mathcal{W}_{\mathbf{x}}.
\end{align*}
It is thus without loss of generality that $\mathbf{x}$ can be restricted to $\mathbb{F}_p^{2n}$. The action of $\mathcal{W}_{\mathbf{x}}$ on the Hilbert space $(\mathbb{C}^p)^{\otimes n}$ is
\begin{align*}
    \mathcal{W}_{\mathbf{x}}|\mathbf{q}\rangle = \mathcal{W}_{\mathbf{v},\mathbf{w}}|\mathbf{q}\rangle = \omega^{\langle \mathbf{q},\mathbf{v}\rangle+2^{-1}\langle \mathbf{v},\mathbf{w}\rangle}|\mathbf{q} + \mathbf{w}\rangle, \qquad \forall \mathbf{q}\in\mathbb{F}_p^n, \quad \forall\mathbf{x} = (\mathbf{v},\mathbf{w})\in\mathbb{F}_p^{2n}.
\end{align*}
The Weyl operators are fundamental in the phase space picture of finite-dimensional quantum mechanics~\cite{wootters1987wigner,appleby2005symmetric,gross2006hudson,de2011linearized}. It is clear that every Weyl operator is an element of the Pauli group $\mathscr{P}_p^n$. Conversely, every Pauli operator is equal to some Weyl operator up to a phase that is a power of $\tau$ or, equivalently since $p$ is odd, of $\omega$. This means that
\begin{align*}
    \mathscr{P}_p^n = \{\omega^s \mathcal{W}_{\mathbf{x}}: \mathbf{x}\in\mathbb{F}_p^{2n}, ~s\in\mathbb{F}_p\}.
\end{align*}
A Pauli gate is any element from $\mathscr{P}^n_p$ and a Clifford gate is any element from $\mathscr{C}^n_p$. More generally, an $n$-qudit quantum gate is any unitary in $\mathbb{C}^{p^n \times p^n}$. We assume that measuring a qudit on the computational basis and single and two-qudit quantum gates from a universal gate set~\cite{Muthukrishnan2000multivalued,brylinski2002universal,Brennen2005criteria,brennen2006efficient} all require constant time to be performed. An example of a universal gate set is the set of all single-qudit gates plus the controlled shift operator $\mathbb{I}_{p^2-p}\otimes \mathsf{X}$~\cite{brennen2006efficient}.

We say that $\{\omega^{s_i}\mathcal{W}_{\mathbf{x}_i}\}_{i\in[\ell]}\subset\mathscr{P}^n_p$, $\ell\in\mathbb{N}$, are dependent if there are $m_1,\dots,m_\ell\in\mathbb{F}_p$, not all $0$, such that $\prod_{i=1}^\ell (\omega^{s_i}\mathcal{W}_{\mathbf{x}_i})^{m_i} = \mathbb{I}$. The order of any $\sigma\in\mathscr{P}^n_{p}\setminus\{\mathbb{I}\}$ is $\operatorname{ord}(\sigma) = p$ since $p$ is prime. This means that $\sigma^p = \mathbb{I}$ and the eigenvalues of $\sigma$ are $\{1,\omega,\dots,\omega^{p-1}\}$. The center of $\mathscr{P}^n_{p}$ is $\mathrm{Z}(\mathscr{P}^n_{p}) \triangleq \{\phi\in\mathscr{P}^n_{p}: \phi\sigma = \sigma\phi, \forall \sigma\in\mathscr{P}^n_{p}\} = \{\omega^{s}\mathbb{I}\in\mathscr{P}^n_{p}:s\in\mathbb{F}_p\}$. 

The following properties hold for the Weyl operators.
\begin{lemma}\label{lem:weyl_properties}
    For any $\mathbf{x},\mathbf{y}\in\mathbb{F}_p^{2n}$,
    \begin{multicols}{2}
    \begin{enumerate}[label=\alph*.]
        \item $\mathcal{W}_{\mathbf{x}}\mathcal{W}_{\mathbf{y}} = \tau^{[ \mathbf{x},\mathbf{y}]}\mathcal{W}_{\mathbf{x}+\mathbf{y}} = \omega^{[\mathbf{x},\mathbf{y}]}\mathcal{W}_{\mathbf{y}}\mathcal{W}_{\mathbf{x}}$;
        \item $\mathcal{W}_{\mathbf{x}}^\dagger = \mathcal{W}_{-\mathbf{x}}$;
        \item $\mathcal{W}_{\mathbf{x}}^m = \mathcal{W}_{m\mathbf{x}} \quad\forall m\in\mathbb{F}_p$;
        \item $\operatorname{Tr}[\mathcal{W}_{\mathbf{y}}^\dagger\mathcal{W}_{\mathbf{x}}] = \operatorname{Tr}[\mathcal{W}_{\mathbf{y}}\mathcal{W}_{\mathbf{x}}^\dagger] = p^{n}\cdot \mathbf{1}[\mathbf{x} = \mathbf{y}]$.
    \end{enumerate}
    \end{multicols}
\end{lemma}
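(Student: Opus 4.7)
The plan is to prove part (a) directly from the action of Weyl operators on the computational basis, and then deduce parts (b), (c), (d) as corollaries.

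For part (a), write $\mathbf{x} = (\mathbf{v},\mathbf{w})$ and $\mathbf{y} = (\mathbf{v}',\mathbf{w}')$. Using the action formula $\mathcal{W}_{\mathbf{v},\mathbf{w}}|\mathbf{q}\rangle = \omega^{\langle \mathbf{q},\mathbf{v}\rangle+2^{-1}\langle \mathbf{v},\mathbf{w}\rangle}|\mathbf{q}+\mathbf{w}\rangle$, I would compute $\mathcal{W}_{\mathbf{x}}\mathcal{W}_{\mathbf{y}}|\mathbf{q}\rangle$ as a single phase times $|\mathbf{q}+\mathbf{w}+\mathbf{w}'\rangle$; the accumulated phase is $\omega^{\langle\mathbf{q},\mathbf{v}+\mathbf{v}'\rangle}$ multiplied by $\omega^{2^{-1}\langle \mathbf{v}',\mathbf{w}'\rangle+2^{-1}\langle \mathbf{v},\mathbf{w}\rangle+\langle \mathbf{w}',\mathbf{v}\rangle}$. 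On the other hand, $\tau^{[\mathbf{x},\mathbf{y}]}\mathcal{W}_{\mathbf{x}+\mathbf{y}}|\mathbf{q}\rangle$ expands as $\omega^{\langle\mathbf{q},\mathbf{v}+\mathbf{v}'\rangle}$ times $\omega^{2^{-1}[\mathbf{x},\mathbf{y}]+2^{-1}\langle\mathbf{v}+\mathbf{v}',\mathbf{w}+\mathbf{w}'\rangle}$. Matching the two requires showing $2^{-1}\langle \mathbf{v}',\mathbf{w}'\rangle + 2^{-1}\langle \mathbf{v},\mathbf{w}\rangle + \langle \mathbf{w}',\mathbf{v}\rangle = 2^{-1}(\langle\mathbf{v},\mathbf{w}'\rangle-\langle\mathbf{v}',\mathbf{w}\rangle) + 2^{-1}\langle\mathbf{v}+\mathbf{v}',\mathbf{w}+\mathbf{w}'\rangle$, which is a routine identity in $\mathbb{F}_p$ once one expands the right-hand side. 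This establishes $\mathcal{W}_{\mathbf{x}}\mathcal{W}_{\mathbf{y}} = \tau^{[\mathbf{x},\mathbf{y}]}\mathcal{W}_{\mathbf{x}+\mathbf{y}}$. Swapping $\mathbf{x}$ and $\mathbf{y}$ uses antisymmetry $[\mathbf{y},\mathbf{x}] = -[\mathbf{x},\mathbf{y}]$ to give $\mathcal{W}_{\mathbf{y}}\mathcal{W}_{\mathbf{x}} = \tau^{-[\mathbf{x},\mathbf{y}]}\mathcal{W}_{\mathbf{x}+\mathbf{y}}$, and dividing yields $\mathcal{W}_{\mathbf{x}}\mathcal{W}_{\mathbf{y}} = \tau^{2[\mathbf{x},\mathbf{y}]}\mathcal{W}_{\mathbf{y}}\mathcal{W}_{\mathbf{x}} = \omega^{[\mathbf{x},\mathbf{y}]}\mathcal{W}_{\mathbf{y}}\mathcal{W}_{\mathbf{x}}$.

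Part (b) follows by taking $\mathbf{y} = -\mathbf{x}$ in (a): since $[\mathbf{x},-\mathbf{x}] = 0$, we get $\mathcal{W}_{\mathbf{x}}\mathcal{W}_{-\mathbf{x}} = \mathcal{W}_{\mathbf{0}} = \mathbb{I}$, so $\mathcal{W}_{\mathbf{x}}^{-1} = \mathcal{W}_{-\mathbf{x}}$. Since $\mathcal{W}_{\mathbf{x}}$ is unitary, $\mathcal{W}_{\mathbf{x}}^\dagger = \mathcal{W}_{-\mathbf{x}}$. Part (c) is an immediate induction on $m$: assuming $\mathcal{W}_{\mathbf{x}}^{m-1} = \mathcal{W}_{(m-1)\mathbf{x}}$, part (a) gives $\mathcal{W}_{\mathbf{x}}^m = \mathcal{W}_{(m-1)\mathbf{x}}\mathcal{W}_{\mathbf{x}} = \tau^{[(m-1)\mathbf{x},\mathbf{x}]}\mathcal{W}_{m\mathbf{x}} = \mathcal{W}_{m\mathbf{x}}$, where the phase vanishes because $[\cdot,\cdot]$ is alternating.

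For part (d), I would first reduce $\operatorname{Tr}[\mathcal{W}_{\mathbf{y}}^\dagger \mathcal{W}_{\mathbf{x}}]$ using (a) and (b): $\mathcal{W}_{\mathbf{y}}^\dagger \mathcal{W}_{\mathbf{x}} = \mathcal{W}_{-\mathbf{y}}\mathcal{W}_{\mathbf{x}} = \tau^{[-\mathbf{y},\mathbf{x}]}\mathcal{W}_{\mathbf{x}-\mathbf{y}}$, so it suffices to show $\operatorname{Tr}[\mathcal{W}_{\mathbf{z}}] = p^n \mathbf{1}[\mathbf{z}=\mathbf{0}]$. Writing $\mathbf{z} = (\mathbf{v},\mathbf{w})$ and using the action formula, $\operatorname{Tr}[\mathcal{W}_{\mathbf{z}}] = \sum_{\mathbf{q}\in\mathbb{F}_p^n}\langle\mathbf{q}|\mathcal{W}_{\mathbf{z}}|\mathbf{q}\rangle = \tau^{\langle\mathbf{v},\mathbf{w}\rangle}\sum_{\mathbf{q}\in\mathbb{F}_p^n}\omega^{\langle\mathbf{q},\mathbf{v}\rangle}\mathbf{1}[\mathbf{w}=\mathbf{0}]$. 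If $\mathbf{w} \neq \mathbf{0}$ the Kronecker factor kills everything; if $\mathbf{w} = \mathbf{0}$, \cref{lem:sum} (applied with $\mathscr{V} = \mathbb{F}_p^n$, whose orthogonal complement is $\{\mathbf{0}\}$) yields $\sum_{\mathbf{q}}\omega^{\langle\mathbf{q},\mathbf{v}\rangle} = p^n \mathbf{1}[\mathbf{v}=\mathbf{0}]$. Combining, $\operatorname{Tr}[\mathcal{W}_{\mathbf{z}}] = p^n\mathbf{1}[\mathbf{z}=\mathbf{0}]$, and therefore $\operatorname{Tr}[\mathcal{W}_{\mathbf{y}}^\dagger\mathcal{W}_{\mathbf{x}}] = p^n \mathbf{1}[\mathbf{x}=\mathbf{y}]$ (with the phase prefactor irrelevant since it multiplies the indicator). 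The other equality follows identically by noting that $\operatorname{Tr}[\mathcal{W}_{\mathbf{y}}\mathcal{W}_{\mathbf{x}}^\dagger]$ reduces to $\operatorname{Tr}[\mathcal{W}_{\mathbf{y}-\mathbf{x}}]$ up to a phase.

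No step presents a real obstacle; the only mild nuisance is bookkeeping the $2^{-1}$ prefactors and the cross-terms $\langle \mathbf{w}',\mathbf{v}\rangle$ vs.\ $\langle\mathbf{v},\mathbf{w}'\rangle$ when verifying (a). Everything else is a direct consequence of (a) plus the action formula.
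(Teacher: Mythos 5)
Your proof is correct; the paper in fact states \cref{lem:weyl_properties} without proof (these are standard Weyl-operator facts), and your direct verification from the action formula $\mathcal{W}_{\mathbf{v},\mathbf{w}}|\mathbf{q}\rangle = \omega^{\langle \mathbf{q},\mathbf{v}\rangle+2^{-1}\langle \mathbf{v},\mathbf{w}\rangle}|\mathbf{q}+\mathbf{w}\rangle$ is exactly the standard argument, with the phase identity in (a) checking out under the convention $[\mathbf{x},\mathbf{y}]=\langle\mathbf{v},\mathbf{w}'\rangle-\langle\mathbf{v}',\mathbf{w}\rangle$. The only cosmetic point is in (d): rather than saying the prefactor $\tau^{[-\mathbf{y},\mathbf{x}]}$ is ``irrelevant,'' note that when the indicator is nonzero one has $\mathbf{x}=\mathbf{y}$, so $[-\mathbf{y},\mathbf{x}]=0$ and the prefactor is exactly $1$, which is what makes the stated equality hold with no stray phase.
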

\begin{lemma}
    Let $\ell\in\mathbb{N}$ and $\{\mathcal{W}_{\mathbf{v}_i,\mathbf{w}_i}\}_{i\in[\ell]}$. Let $\mathbf{V},\mathbf{W}\in\mathbb{F}_p^{n\times \ell}$ be the matrices with column vectors $\mathbf{v}_1,\dots,\mathbf{v}_\ell$ and $\mathbf{w}_1,\dots,\mathbf{w}_\ell$, respectively. Then $\mathbf{V}^\top \mathbf{W} = \mathbf{W}^\top \mathbf{V}$ if and only if $\{\mathcal{W}_{\mathbf{v}_i,\mathbf{w}_i}\}_{i\in[\ell]}$ pair-wise commute.
\end{lemma}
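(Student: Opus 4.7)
The plan is to reduce the pair-wise commutation condition to a matrix identity by means of the symplectic form and a direct entry-wise calculation, using only \cref{lem:weyl_properties} and the definition of $[\cdot,\cdot]$.

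First I would invoke part (a) of \cref{lem:weyl_properties}, which gives $\mathcal{W}_{\mathbf{x}}\mathcal{W}_{\mathbf{y}} = \omega^{[\mathbf{x},\mathbf{y}]}\mathcal{W}_{\mathbf{y}}\mathcal{W}_{\mathbf{x}}$. Since $\omega$ is a primitive $p$-th root of unity, $\mathcal{W}_{\mathbf{x}_i}$ and $\mathcal{W}_{\mathbf{x}_j}$ commute if and only if $[\mathbf{x}_i,\mathbf{x}_j] = 0$ in $\mathbb{F}_p$, where $\mathbf{x}_k = (\mathbf{v}_k,\mathbf{w}_k)$. Hence pair-wise commutation of the family $\{\mathcal{W}_{\mathbf{v}_i,\mathbf{w}_i}\}_{i\in[\ell]}$ is equivalent to the symplectic vanishing condition $[\mathbf{x}_i,\mathbf{x}_j] = 0$ for all $i,j\in[\ell]$.

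Next I would unpack the symplectic product using its definition $[\mathbf{x},\mathbf{y}] = \sum_{k=1}^n(x_k y_{n+k} - x_{n+k}y_k)$: writing $\mathbf{x}_i = (\mathbf{v}_i,\mathbf{w}_i)$ and $\mathbf{x}_j = (\mathbf{v}_j,\mathbf{w}_j)$, a direct substitution gives
\begin{align*}
    [\mathbf{x}_i,\mathbf{x}_j] = \langle \mathbf{v}_i,\mathbf{w}_j\rangle - \langle \mathbf{w}_i,\mathbf{v}_j\rangle = (\mathbf{V}^\top \mathbf{W})_{ij} - (\mathbf{W}^\top \mathbf{V})_{ij},
\end{align*}
since the $(i,j)$ entry of $\mathbf{V}^\top \mathbf{W}$ is precisely $\mathbf{v}_i^\top \mathbf{w}_j = \langle\mathbf{v}_i,\mathbf{w}_j\rangle$, and analogously for $\mathbf{W}^\top \mathbf{V}$. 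Therefore the condition $[\mathbf{x}_i,\mathbf{x}_j]=0$ for every pair $(i,j)\in[\ell]\times[\ell]$ is exactly the entry-wise identity $\mathbf{V}^\top \mathbf{W} = \mathbf{W}^\top \mathbf{V}$, which proves both implications simultaneously.

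There is essentially no obstacle here: the statement is a bookkeeping translation between the symplectic form encoding the commutator phase of Weyl operators and the matrix obtained by stacking the labels as columns. The only subtlety worth mentioning explicitly is that one must remember that commutation in the Pauli group is detected on the nose (not merely up to a central phase), which is ensured by the fact that $\omega^{[\mathbf{x}_i,\mathbf{x}_j]} = 1$ in $\mathbb{C}$ if and only if $[\mathbf{x}_i,\mathbf{x}_j] = 0$ in $\mathbb{F}_p$, valid because $p$ is prime and $\omega$ has exact order $p$.
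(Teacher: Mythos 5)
Your proof is correct and follows essentially the same route as the paper's: reduce pair-wise commutation to the vanishing of the symplectic products $[(\mathbf{v}_i,\mathbf{w}_i),(\mathbf{v}_j,\mathbf{w}_j)]$ via the commutation relation in \cref{lem:weyl_properties}, then identify $\langle\mathbf{v}_i,\mathbf{w}_j\rangle - \langle\mathbf{w}_i,\mathbf{v}_j\rangle$ with the $(i,j)$ entry of $\mathbf{V}^\top\mathbf{W} - \mathbf{W}^\top\mathbf{V}$. Your extra remark that $\omega^{[\mathbf{x}_i,\mathbf{x}_j]}=1$ iff $[\mathbf{x}_i,\mathbf{x}_j]=0$ (since $\omega$ has exact order $p$) is a correct and welcome clarification of a point the paper leaves implicit.
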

\begin{proof}
    The elements $\{\mathcal{W}_{\mathbf{v}_i,\mathbf{w}_i}\}_{i\in[\ell]}$ pair-wise commute if and only if $[(\mathbf{v}_i,\mathbf{w}_i), (\mathbf{v}_j,\mathbf{w}_j)] = 0$ $\forall i,j\in[\ell]$ if and only if $\langle \mathbf{v}_i, \mathbf{w}_j\rangle = \langle \mathbf{v}_j, \mathbf{w}_i\rangle$ $\forall i,j\in[\ell]$, which is equivalent to $\mathbf{V}^\top \mathbf{W} = \mathbf{W}^\top \mathbf{V}$.
\end{proof}

From \cref{lem:weyl_properties}, $\mathcal{W}_{\mathbf{x}}$ and $\mathcal{W}_{\mathbf{y}}$ commute if and only if $[\mathbf{x},\mathbf{y}] = 0$. Moreover, the re-scaled Weyl operators $\{p^{-n/2}\mathcal{W}_{\mathbf{x}}:\mathbf{x}\in\mathbb{F}_p^{2n}\}$ form an orthonormal basis with respect to the Hilbert-Schmidt inner product $\langle \mathcal{A},\mathcal{B}\rangle = \operatorname{Tr}[\mathcal{A}^\dagger \mathcal{B}]$. Therefore, any operator $\mathcal{B}$ on $(\mathbb{C}^p)^{\otimes n}$ can be expanded as
\begin{align*}
    \mathcal{B} = p^{-n/2}\sum_{\mathbf{x}\in\mathbb{F}_p^{2n}} c_{\mathcal{B}}(\mathbf{x})\mathcal{W}_{\mathbf{x}} \quad \text{where}~ c_{\mathcal{B}}:\mathbb{F}_p^{2n}\to \mathbb{C} ~\text{given by}~c_{\mathcal{B}}(\mathbf{x}) = p^{-n/2}\operatorname{Tr}[\mathcal{W}_{\mathbf{x}}^\dagger \mathcal{B}]
\end{align*}
is the \emph{characteristic function} of the operator $\mathcal{B}$. We note that $\operatorname{Tr}[\mathcal{A}^\dagger \mathcal{B}] = \sum_{\mathbf{x}\in\mathbb{F}_p^{2n}}\overline{c_{\mathcal{A}}(\mathbf{x})}c_{\mathcal{B}}(\mathbf{x})$. In particular, for any quantum state $|\psi\rangle\in(\mathbb{C}^p)^{\otimes n}$,
\begin{align*}
    \psi \triangleq |\psi\rangle\langle\psi| = p^{-n/2}\sum_{\mathbf{x}\in\mathbb{F}_p^{2n}} c_{\psi}(\mathbf{x})\mathcal{W}_{\mathbf{x}} \quad\text{where}~c_\psi(\mathbf{x}) = p^{-n/2}\operatorname{Tr}[\mathcal{W}_{\mathbf{x}}^\dagger \psi] = p^{-n/2} \langle \psi|\mathcal{W}_{\mathbf{x}}^\dagger|\psi\rangle.
\end{align*}
Note that $\overline{c_\psi(\mathbf{x})} = c_\psi(-\mathbf{x})$ since $\mathcal{W}_{\mathbf{x}}^\dagger = \mathcal{W}_{-\mathbf{x}}$. From the characteristic function of the pure state $\psi$ we define its \emph{characteristic distribution}
\begin{align*}
    p_{\psi}:\mathbb{F}_p^{2n}\to[0,p^{-n}], \quad p_{\psi}(\mathbf{x}) = |c_{\psi}(\mathbf{x})|^2 = p^{-n} |\langle \psi|\mathcal{W}_{\mathbf{x}}|\psi\rangle|^2 = p^{-n}\operatorname{Tr}[\psi\mathcal{W}_{\mathbf{x}}\psi\mathcal{W}_{\mathbf{x}}^\dagger].
\end{align*}
Note that $p_\psi(-\mathbf{x}) = p_\psi(\mathbf{x})$ and that $p_\psi(\mathbf{x}) \leq p^{-n}$ since $|\langle\psi|\mathcal{W}_{\mathbf{x}}|\psi\rangle| \leq 1$ for all $\mathbf{x}\in\mathbb{F}_p^{2n}$. To check that $p_\psi$ is indeed a probability distribution, note that $\sum_{\mathbf{x}\in\mathbb{F}_p^{2n}}p_{\psi}(\mathbf{x}) = \sum_{\mathbf{x}\in\mathbb{F}_p^{2n}}\overline{c_{\psi}(\mathbf{x})}c_{\psi}(\mathbf{x}) = \operatorname{Tr}[|\psi\rangle\langle\psi|\psi\rangle\langle\psi|] = 1$. 

It is possible to combine the characteristic distribution and the involution into a new function 
\begin{align*}
    j_\psi:\mathbb{F}_p^{2n}\to[0,p^{-n}], \quad j_\psi(\mathbf{x}) = p_\psi(J(\mathbf{x})),
\end{align*}
called the \emph{involution characteristic distribution}. Note that $j_\psi$ is also a probability distribution since $\sum_{\mathbf{x}\in\mathbb{F}_p^{2n}}j_\psi(\mathbf{x}) = \sum_{\mathbf{x}\in\mathbb{F}_p^{2n}}p_\psi(\mathbf{x})$ by a simple change of variable. The next result, which is a slight generalisation of~\cite[Eq.~(3.5)]{gross2021schur} for $p>2$, shows that both $p_\psi$ and $j_\psi$ are invariant (up to renormalisation) under the symplectic Fourier transform.
\begin{lemma}\label{lem:invariant}
    For any $|\psi\rangle\in(\mathbb{C}^p)^{\otimes n}$ and $\mathbf{x}\in\mathbb{F}_p^{2n}$, $\widehat{p_\psi}(\mathbf{x)} = p^{-n}p_\psi(\mathbf{x})$, $\widehat{j_\psi}(\mathbf{x)} = p^{-n}j_\psi(\mathbf{x})$, and $\widehat{p_\psi}(J(\mathbf{x})) = \widehat{j_\psi}(\mathbf{x})$.
\end{lemma}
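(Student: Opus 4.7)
The plan is to compute $\widehat{p_\psi}$ directly from its definition by expanding $\psi$ in the Weyl basis and exploiting the conjugation action $\mathcal{W}_{\mathbf{x}}\mathcal{W}_{\mathbf{z}}\mathcal{W}_{\mathbf{x}}^\dagger = \omega^{[\mathbf{x},\mathbf{z}]}\mathcal{W}_{\mathbf{z}}$ (from \cref{lem:weyl_properties}); the first identity will pop out after one application of \cref{lem:sum}. The remaining two identities will then follow by symmetry considerations based on the involution.

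\textbf{Step 1: the first identity.} Writing $p_\psi(\mathbf{x}) = p^{-n}\operatorname{Tr}[\psi\mathcal{W}_{\mathbf{x}}\psi\mathcal{W}_{\mathbf{x}}^\dagger]$ and expanding the inner $\psi$ in the Weyl basis as $\psi = p^{-n/2}\sum_{\mathbf{z}}c_\psi(\mathbf{z})\mathcal{W}_{\mathbf{z}}$, I would compute
\begin{align*}
\widehat{p_\psi}(\mathbf{y}) &= \frac{1}{p^{3n}}\sum_{\mathbf{x}\in\mathbb{F}_p^{2n}}\omega^{[\mathbf{y},\mathbf{x}]}\operatorname{Tr}\bigl[\psi\,\mathcal{W}_{\mathbf{x}}\psi\mathcal{W}_{\mathbf{x}}^\dagger\bigr] \\
&= \frac{1}{p^{7n/2}}\sum_{\mathbf{z}\in\mathbb{F}_p^{2n}} c_\psi(\mathbf{z})\operatorname{Tr}[\psi\,\mathcal{W}_{\mathbf{z}}]\sum_{\mathbf{x}\in\mathbb{F}_p^{2n}}\omega^{[\mathbf{y},\mathbf{x}]+[\mathbf{x},\mathbf{z}]}.
\end{align*}
The inner sum equals $\sum_\mathbf{x}\omega^{[\mathbf{x},\mathbf{z}-\mathbf{y}]}$ by antisymmetry of $[\cdot,\cdot]$, which by \cref{lem:sum} (applied with $\mathscr{X}=\mathbb{F}_p^{2n}$, whose symplectic complement is $\{\mathbf{0}\}$ by non-degeneracy) equals $p^{2n}\mathbf{1}[\mathbf{z}=\mathbf{y}]$. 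Combining with $\operatorname{Tr}[\psi\mathcal{W}_{\mathbf{y}}] = p^{n/2}c_\psi(-\mathbf{y}) = p^{n/2}\overline{c_\psi(\mathbf{y})}$ yields $\widehat{p_\psi}(\mathbf{y}) = p^{-n}|c_\psi(\mathbf{y})|^2 = p^{-n}p_\psi(\mathbf{y})$.

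\textbf{Step 2: the third identity.} Starting from the definition and changing variables $\mathbf{y}\mapsto J(\mathbf{y})$ (legitimate because $J$ is an involution on $\mathbb{F}_p^{2n}$), I would write
\begin{align*}
\widehat{j_\psi}(\mathbf{x}) = \frac{1}{p^{2n}}\sum_{\mathbf{y}\in\mathbb{F}_p^{2n}} \omega^{[\mathbf{x},\mathbf{y}]}p_\psi(J(\mathbf{y})) = \frac{1}{p^{2n}}\sum_{\mathbf{z}\in\mathbb{F}_p^{2n}} \omega^{[\mathbf{x},J(\mathbf{z})]}p_\psi(\mathbf{z}).
\end{align*}
Using the identity $[\mathbf{x},J(\mathbf{z})]=-[J(\mathbf{x}),\mathbf{z}]$ (a consequence of $[J(\mathbf{a}),J(\mathbf{b})]=-[\mathbf{a},\mathbf{b}]$ applied with $\mathbf{a}=J(\mathbf{x})$) and the evenness $p_\psi(-\mathbf{z})=p_\psi(\mathbf{z})$, I substitute $\mathbf{z}\mapsto -\mathbf{z}$ to obtain $\widehat{j_\psi}(\mathbf{x}) = \widehat{p_\psi}(J(\mathbf{x}))$.

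\textbf{Step 3: the second identity.} This follows immediately by combining Steps 1 and 2: $\widehat{j_\psi}(\mathbf{x}) = \widehat{p_\psi}(J(\mathbf{x})) = p^{-n}p_\psi(J(\mathbf{x})) = p^{-n}j_\psi(\mathbf{x})$.

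There is no real obstacle here; the proof is a chain of manipulations with Weyl operator identities plus one application of the orthogonality lemma. The only minor subtlety is bookkeeping signs/conjugates when relating $\operatorname{Tr}[\psi\mathcal{W}_{\mathbf{y}}]$ to $c_\psi$, and correctly using the non-trivial fact that $J$ flips the sign of the symplectic form so that the Fourier transform intertwines $p_\psi$ and $j_\psi$ via $J$.
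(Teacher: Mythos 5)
Your proof is correct, and it assembles the same ingredients as the paper in a slightly different way. For the first identity, the paper absorbs the Fourier phase into a conjugation, writing $\omega^{[\mathbf{x},\mathbf{y}]}\langle\psi|\mathcal{W}_{\mathbf{y}}^\dagger|\psi\rangle = \langle\psi|\mathcal{W}_{\mathbf{x}}^\dagger\mathcal{W}_{\mathbf{y}}^\dagger\mathcal{W}_{\mathbf{x}}|\psi\rangle$, and then recognises the sum over $\mathbf{y}$ as the Hilbert--Schmidt inner product $\operatorname{Tr}[\psi\,\mathcal{W}_{\mathbf{x}}\psi\mathcal{W}_{\mathbf{x}}^\dagger]$ via the expansion in the Weyl basis (a Parseval-type step); you instead expand the inner $\psi$ in the Weyl basis and collapse the sum over the Fourier variable with \cref{lem:sum}. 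These are dual orders of the same computation, and your bookkeeping checks out: the prefactor $p^{2n}\cdot p^{n/2}/p^{7n/2}=p^{-n}$ and the identification $\operatorname{Tr}[\psi\mathcal{W}_{\mathbf{y}}]=p^{n/2}c_\psi(-\mathbf{y})=p^{n/2}\overline{c_\psi(\mathbf{y})}$ are exactly right. For $\widehat{p_\psi}(J(\mathbf{x}))=\widehat{j_\psi}(\mathbf{x})$, the paper first derives $\widehat{p_\psi}(J(\mathbf{x}))=\overline{\widehat{j_\psi}(\mathbf{x})}$ by re-indexing with $J$ and then appeals to real-valuedness of the transforms (which it gets from the first identity); your substitution $\mathbf{z}\mapsto-\mathbf{z}$ together with the evenness $p_\psi(-\mathbf{z})=p_\psi(\mathbf{z})$ avoids the conjugate and the realness argument altogether, which is a bit cleaner and makes the third identity independent of the first. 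The deduction of the second identity from the other two is identical to the paper's.
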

\begin{proof}
    With the aid of Parseval's identity,
    \begin{align*}
        \widehat{p_\psi}(\mathbf{x)} &= \frac{1}{p^{2n}}\sum_{\mathbf{y}\in\mathbb{F}_p^{2n}}\omega^{[\mathbf{x},\mathbf{y}]}p_\psi(\mathbf{y}) = \frac{1}{p^{3n}}\sum_{\mathbf{y}\in\mathbb{F}_p^{2n}}\omega^{[\mathbf{x},\mathbf{y}]}\langle \psi|\mathcal{W}_{\mathbf{y}}|\psi\rangle \langle \psi|\mathcal{W}_{\mathbf{y}}^\dagger|\psi\rangle\\
        &= \frac{1}{p^{3n}}\sum_{\mathbf{y}\in\mathbb{F}_p^{2n}} \langle \psi|\mathcal{W}_{\mathbf{y}}|\psi\rangle \langle \psi|\mathcal{W}_{\mathbf{x}}^\dagger\mathcal{W}_{\mathbf{y}}^\dagger\mathcal{W}_{\mathbf{x}}|\psi\rangle = \frac{1}{p^{2n}}\sum_{\mathbf{y}\in\mathbb{F}_p^{2n}} \overline{c_\psi(\mathbf{y})} c_{\mathcal{W}_{\mathbf{x}} \psi \mathcal{W}_{\mathbf{x}}^\dagger}(\mathbf{y})\\
        &= \frac{1}{p^{2n}}\operatorname{Tr}[\psi\mathcal{W}_{\mathbf{x}} \psi \mathcal{W}_{\mathbf{x}}^\dagger] = p^{-n}p_\psi(\mathbf{x}). 
    \end{align*}
    From the above we can also see that
    \begin{align*}
        p^{2n}\widehat{p_\psi}(J(\mathbf{x))} = \sum_{\mathbf{y}\in\mathbb{F}_p^{2n}}\omega^{[J(\mathbf{x}),\mathbf{y}]}p_\psi(\mathbf{y}) = \sum_{\mathbf{y}\in\mathbb{F}_p^{2n}}\omega^{[J(\mathbf{x}),J(\mathbf{y})]}p_\psi(J(\mathbf{y})) = \sum_{\mathbf{y}\in\mathbb{F}_p^{2n}}\omega^{-[\mathbf{x},\mathbf{y}]}j_\psi(\mathbf{y}) = p^{2n}\overline{\widehat{j_\psi}(\mathbf{x})}.
    \end{align*}
    Since $p_\psi$ and $j_\psi$ are real-valued, the above implies that $\widehat{p_\psi}$ and $\widehat{j_\psi}$ are also real-valued. Finally, $\widehat{j_\psi}(\mathbf{x)} = \widehat{p_\psi}(J(\mathbf{x})) = p^{-n}p_\psi(J(\mathbf{x})) = p^{-n}j_\psi(\mathbf{x})$.
\end{proof}

\subsection{Stabiliser groups}

In this section, we review \emph{stabiliser groups}, which are among the most important subgroups of $\mathscr{P}_p^n$. For completeness and the reader's convenience (and at the risk of being wordy), we provide proofs for a number of important claims. Longer proofs can be found in \cref{app:proofs}. We point the reader to~\cite{appleby2005symmetric,gross2006hudson,appleby2009properties,de2011linearized,gross2021schur} for more information. 
\begin{definition}[Stabiliser group]
    A stabiliser group $\mathcal{S}$ is a maximal subgroup of $\mathscr{P}^n_p$ which contains only the identity from the center of $\mathscr{P}^n_p$, i.e., $\mathcal{S}\cap\mathrm{Z}(\mathscr{P}_p^n) = \{\mathbb{I}\}$.
\end{definition}

Stabiliser groups are well studied and several properties are known, even in the case when $p$ is not prime (see e.g.~\cite{gross2006hudson,de2011linearized}). The next lemmas cover some of these properties. %We provide proofs in Appendix~\ref{app:proofs} for the reader's convenience.
\begin{lemma}\label{lem:stabiliser_group_form}
    Any stabiliser group $\mathcal{S}\subset\mathscr{P}^n_p$ can be written as $\mathcal{S} = \{\omega^{[\mathbf{a},\mathbf{x}]} \mathcal{W}_{\mathbf{x}} :  \mathbf{x}\in \mathscr{M}\}$, where $\mathbf{a}\in\mathbb{F}_p^{2n}$ and $\mathscr{M}\subset\mathbb{F}_p^{2n}$ is a Lagrangian subspace such that $\operatorname{dim}(\mathscr{M}) = n$. As a consequence,
    $\mathcal{S}$ is commutative and $|\mathcal{S}| = p^n$.
\end{lemma}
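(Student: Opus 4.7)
The plan is to extract a Lagrangian subspace $\mathscr{M}\subseteq\mathbb{F}_p^{2n}$ from $\mathcal{S}$, build a linear "phase function" on $\mathscr{M}$, and represent it via the symplectic form. First, I would prove commutativity directly from the center-avoidance hypothesis. Given $\omega^{s_1}\mathcal{W}_{\mathbf{x}_1},\omega^{s_2}\mathcal{W}_{\mathbf{x}_2}\in\mathcal{S}$, their group commutator equals $\omega^{[\mathbf{x}_1,\mathbf{x}_2]}\mathbb{I}$ by \cref{lem:weyl_properties}(a), and this element lies both in $\mathcal{S}$ (a subgroup) and in $\mathrm{Z}(\mathscr{P}^n_p)$. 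The assumption $\mathcal{S}\cap\mathrm{Z}(\mathscr{P}^n_p)=\{\mathbb{I}\}$ forces $[\mathbf{x}_1,\mathbf{x}_2]=0$, proving $\mathcal{S}$ is abelian.

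Next I would define $\mathscr{M}\triangleq\{\mathbf{x}\in\mathbb{F}_p^{2n}:\exists s\in\mathbb{F}_p,\ \omega^s\mathcal{W}_{\mathbf{x}}\in\mathcal{S}\}$. Center-avoidance again yields uniqueness: if both $\omega^{s}\mathcal{W}_{\mathbf{x}}$ and $\omega^{s'}\mathcal{W}_{\mathbf{x}}$ are in $\mathcal{S}$, their ratio $\omega^{s-s'}\mathbb{I}\in\mathcal{S}\cap\mathrm{Z}$ forces $s=s'$. This defines a function $f:\mathscr{M}\to\mathbb{F}_p$. Using \cref{lem:weyl_properties}(a) and (c), closure of $\mathcal{S}$ under multiplication and powers translates into closure of $\mathscr{M}$ under addition and scalar multiplication, so $\mathscr{M}$ is an $\mathbb{F}_p$-subspace; and because $\mathscr{M}$ is isotropic (by commutativity), the phase identity $f(\mathbf{x}+\mathbf{y})=f(\mathbf{x})+f(\mathbf{y})+2^{-1}[\mathbf{x},\mathbf{y}]$ reduces to pure additivity. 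Thus $f$ is $\mathbb{F}_p$-linear. Extending $f$ to a linear functional on all of $\mathbb{F}_p^{2n}$ and invoking non-degeneracy of the symplectic form yields $\mathbf{a}\in\mathbb{F}_p^{2n}$ with $f(\mathbf{x})=[\mathbf{a},\mathbf{x}]$ on $\mathscr{M}$.

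The main obstacle is upgrading "isotropic" to "Lagrangian", and this is where maximality is essential. I would argue by contradiction: suppose $\mathbf{y}\in\mathscr{M}^{\independent}\setminus\mathscr{M}$. Then $\mathcal{W}_{\mathbf{y}}$ commutes with every element of $\mathcal{S}$, so $\mathcal{S}'\triangleq\langle\mathcal{S},\mathcal{W}_{\mathbf{y}}\rangle$ is an abelian subgroup of $\mathscr{P}^n_p$ strictly containing $\mathcal{S}$. A general element has the form $\omega^{[\mathbf{a},\mathbf{z}]+k\alpha}\mathcal{W}_{\mathbf{z}+k\mathbf{y}}$ for some $\mathbf{z}\in\mathscr{M}$, $k\in\mathbb{F}_p$, and fixed phase $\alpha$ accumulated from Weyl multiplications; such an element lies in $\mathrm{Z}(\mathscr{P}^n_p)$ only if $\mathbf{z}+k\mathbf{y}=\mathbf{0}$. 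If $k\neq 0$ this would force $\mathbf{y}=-k^{-1}\mathbf{z}\in\mathscr{M}$, contradicting the choice of $\mathbf{y}$; if $k=0$ the element reduces to one of $\mathcal{S}\cap\mathrm{Z}=\{\mathbb{I}\}$. Hence $\mathcal{S}'\cap\mathrm{Z}=\{\mathbb{I}\}$, contradicting the maximality of $\mathcal{S}$. Therefore $\mathscr{M}=\mathscr{M}^{\independent}$ is Lagrangian, so $\dim\mathscr{M}=n$ and $|\mathcal{S}|=|\mathscr{M}|=p^n$, completing the proof.
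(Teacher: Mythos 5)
Your proof is correct, and its first half (uniqueness of the phase per label from centre-avoidance, closure of $\mathscr{M}$ under the group operations, isotropy from the commutator $\omega^{[\mathbf{x},\mathbf{y}]}\mathbb{I}\in\mathcal{S}\cap\mathrm{Z}$, additivity of the phase function since $\tau^{[\mathbf{x},\mathbf{y}]}=1$ on an isotropic subspace, and representation of the resulting linear functional as $[\mathbf{a},\cdot]$ via non-degeneracy) coincides with the paper's argument almost step for step. Where you diverge is the final upgrade from isotropic to Lagrangian: you argue coordinate-freely that any $\mathbf{y}\in\mathscr{M}^{\independent}\setminus\mathscr{M}$ lets you adjoin $\mathcal{W}_{\mathbf{y}}$ to form a strictly larger subgroup still meeting the centre only in $\mathbb{I}$ (every element being $\omega^{s}\mathcal{W}_{\mathbf{z}+k\mathbf{y}}$, central only if $\mathbf{z}+k\mathbf{y}=\mathbf{0}$, which forces $\mathbf{y}\in\mathscr{M}$ when $k\neq 0$), contradicting maximality; then $\mathscr{M}=\mathscr{M}^{\independent}$ and $\dim\mathscr{M}=n$ follow from the dimension formula $\dim\mathscr{X}+\dim\mathscr{X}^{\independent}=2n$. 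The paper instead writes $\mathscr{M}={\operatorname{col}}\big(\big[\begin{smallmatrix}\mathbf{V}\\\mathbf{W}\end{smallmatrix}\big]\big)$ and runs a rank--nullity computation to show $\dim\mathscr{M}\le n$, invoking maximality only to rule out $\dim\mathscr{M}<n$, and then concludes Lagrangianity from ``isotropic plus maximal.'' Your route is cleaner and avoids the matrix bookkeeping (it buys a shorter, basis-independent argument), at the mild cost of leaning on the abstract complement-dimension fact and on the characterisation of the centre as the phase multiples of $\mathbb{I}$; one cosmetic remark is that no extra phase $\alpha$ actually accumulates in your general element, since $[\mathbf{z},\mathbf{y}]=0$ gives $\mathcal{W}_{\mathbf{z}}\mathcal{W}_{k\mathbf{y}}=\mathcal{W}_{\mathbf{z}+k\mathbf{y}}$ exactly, though this does not affect the argument.
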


Given a basis $\{\mathbf{x}_1,\dots,\mathbf{x}_n\}$ of $\mathscr{M}$, we can write $\mathcal{S} = \langle \{\omega^{[\mathbf{a},\mathbf{x}_i]}\mathcal{W}_{\mathbf{x}_i}\}_{i\in[n]}\rangle$ since $\{\mathbf{x}_i\}_{i\in[n]}$ are linearly independent if and only if $\{\omega^{[\mathbf{a},\mathbf{x}_i]}\mathcal{W}_{\mathbf{x}_i}\}_{i\in[n]}$ are independent as proven below.
\begin{lemma}\label{lem:linearly_independency}
    Given $\ell\in\mathbb{N}$ and a stabiliser group $\mathcal{S}\subset \mathscr{P}^n_p$, then $\{\omega^{s_i}\mathcal{W}_{\mathbf{x}_i}\}_{i\in[\ell]}\subset\mathcal{S}$ are dependent if and only if $\{\mathbf{x}_i\}_{i\in[\ell]}\subset\mathbb{F}_p^{2n}$ are linearly dependent.
\end{lemma}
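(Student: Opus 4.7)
The plan is to reduce both directions of the biconditional to a single clean identity for products in $\mathcal{S}$. By \cref{lem:stabiliser_group_form}, $\mathcal{S}$ is commutative, so $[\mathbf{x}_i,\mathbf{x}_j] = 0$ for all $i,j\in[\ell]$. Consequently, by parts (a) and (c) of \cref{lem:weyl_properties}, we have $\mathcal{W}_{\mathbf{x}_i}\mathcal{W}_{\mathbf{x}_j} = \mathcal{W}_{\mathbf{x}_i+\mathbf{x}_j}$ (the symplectic phase vanishes) and $\mathcal{W}_{\mathbf{x}_i}^{m_i} = \mathcal{W}_{m_i\mathbf{x}_i}$. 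Iterating these two rules and pulling out the scalar phases $\omega^{s_i}$ (which commute with everything) yields, for any $m_1,\dots,m_\ell\in\mathbb{F}_p$, the identity
\begin{align*}
\prod_{i=1}^\ell (\omega^{s_i}\mathcal{W}_{\mathbf{x}_i})^{m_i} \;=\; \omega^{\sum_{i=1}^\ell m_i s_i}\,\mathcal{W}_{\sum_{i=1}^\ell m_i \mathbf{x}_i}.
\end{align*}

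For the $(\Leftarrow)$ direction, assume $\{\mathbf{x}_i\}_{i\in[\ell]}$ is linearly dependent, so there exist $m_1,\dots,m_\ell\in\mathbb{F}_p$, not all zero, with $\sum_i m_i\mathbf{x}_i = \mathbf{0}$. The identity above then gives $\prod_i (\omega^{s_i}\mathcal{W}_{\mathbf{x}_i})^{m_i} = \omega^{\sum_i m_i s_i}\mathbb{I}$, which is a scalar multiple of $\mathbb{I}$ and therefore lies in $\mathrm{Z}(\mathscr{P}^n_p)$. Since $\mathcal{S}$ is a subgroup of $\mathscr{P}^n_p$, this product also lies in $\mathcal{S}$, hence in $\mathcal{S}\cap\mathrm{Z}(\mathscr{P}^n_p) = \{\mathbb{I}\}$ by the definition of a stabiliser group. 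Therefore $\omega^{\sum_i m_i s_i} = 1$ and the product equals $\mathbb{I}$, exhibiting the required non-trivial relation among $\{\omega^{s_i}\mathcal{W}_{\mathbf{x}_i}\}_{i\in[\ell]}$.

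For the $(\Rightarrow)$ direction, suppose $\prod_i (\omega^{s_i}\mathcal{W}_{\mathbf{x}_i})^{m_i} = \mathbb{I}$ for some $(m_1,\dots,m_\ell)\neq \mathbf{0}$. Rearranging the identity gives $\mathcal{W}_{\sum_i m_i \mathbf{x}_i} = \omega^{-\sum_i m_i s_i}\,\mathbb{I}$. From the explicit action $\mathcal{W}_{\mathbf{v},\mathbf{w}}|\mathbf{q}\rangle = \omega^{\langle \mathbf{q},\mathbf{v}\rangle + 2^{-1}\langle\mathbf{v},\mathbf{w}\rangle}|\mathbf{q}+\mathbf{w}\rangle$ recorded in \cref{sec:Weyl_operators}, the operator $\mathcal{W}_{\mathbf{y}}$ can only be a scalar multiple of $\mathbb{I}$ when $\mathbf{y} = \mathbf{0}$: a non-zero shift component moves some basis state off itself, and a zero shift with non-zero phase component produces a non-constant diagonal. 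Hence $\sum_i m_i \mathbf{x}_i = \mathbf{0}$, giving linear dependence of $\{\mathbf{x}_i\}_{i\in[\ell]}$.

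There is no real obstacle here; the argument is essentially bookkeeping with the commutation and composition rules, with the maximality condition $\mathcal{S}\cap\mathrm{Z}(\mathscr{P}^n_p) = \{\mathbb{I}\}$ doing the crucial work in the $(\Leftarrow)$ direction to upgrade a ``scalar multiple of $\mathbb{I}$'' to ``exactly $\mathbb{I}$''.
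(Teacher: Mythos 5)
Your proof is correct and follows essentially the same route as the paper's (much terser) argument: use commutativity of $\mathcal{S}$ to collapse the product to $\omega^{\sum_i m_i s_i}\mathcal{W}_{\sum_i m_i\mathbf{x}_i}$, invoke $\mathcal{S}\cap\mathrm{Z}(\mathscr{P}^n_p)=\{\mathbb{I}\}$ to upgrade a central element to the identity in one direction, and use the fact that $\mathcal{W}_{\mathbf{y}}$ is a scalar multiple of $\mathbb{I}$ only when $\mathbf{y}=\mathbf{0}$ in the other. You have merely spelled out the details the paper leaves implicit; no gaps.
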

\begin{proof}
    It follows from $\prod_{i=1}^\ell(\omega^{s_i}\mathcal{W}_{\mathbf{x}_i})^{m_i} = \omega^{\sum_{i=1}^\ell m_i s_i} \mathbb{I} = \mathbb{I} \iff \sum_{i=1}^\ell m_i\mathbf{x}_i = \mathbf{0}$, for some $m_1,\dots,m_\ell\in\mathbb{F}_p$, not all $0$, where the equality $\omega^{\sum_{i=1}^\ell m_i s_i} \mathbb{I} = \mathbb{I}$ holds since $\mathcal{S}\cap\mathrm{Z}(\mathscr{P}^n_p) = \{\mathbb{I}\}$.
\end{proof}
Note the subspace $\mathscr{M}\subseteq\mathbb{F}_p^{2n}$ can be written as $\mathscr{M} = {\operatorname{col}}\big(\big[\begin{smallmatrix} \mathbf{V} \\ \mathbf{W} \end{smallmatrix}\big]\big)$ for matrices $\mathbf{V},\mathbf{W}\in\mathbb{F}_p^{n\times n}$. In the next lemma, we explore several properties of the matrices $\mathbf{V},\mathbf{W}$ for when $\mathscr{M}$ is Lagrangian.
\begin{lemma}\label{lem:properties}
    Let $\mathcal{S} = \{\omega^{[\mathbf{a},\mathbf{x}]}\mathcal{W}_{\mathbf{x}}:\mathbf{x}\in\mathscr{M}\}$ be a stabiliser group. Let $\mathbf{V},\mathbf{W}\in\mathbb{F}_p^{n\times n}$ be matrices such that $\mathscr{M} = {\operatorname{col}}\big(\big[\begin{smallmatrix} \mathbf{V} \\ \mathbf{W} \end{smallmatrix}\big]\big)$. Then
    \begin{multicols}{2}
    \begin{enumerate}[label=\alph*.]
        \item $\mathbf{V}^\top \mathbf{W} = \mathbf{W}^\top \mathbf{V}$;
        \item ${\operatorname{rank}}\big(\bigl[\begin{smallmatrix} \mathbf{V} \\ \mathbf{W} \end{smallmatrix} \bigr]\big) = \operatorname{dim}(\mathscr{M}) = n$;
        \item $\operatorname{row}(\mathbf{V}) + \operatorname{row}(\mathbf{W}) = \mathbb{F}_p^n$;
        \item $\operatorname{null}(\mathbf{V})\cap\operatorname{null}(\mathbf{W}) = \{\mathbf{0}\}$;
        \item $\operatorname{null}([\begin{smallmatrix} \mathbf{V}^\top & \mathbf{W}^\top \end{smallmatrix} ]) = {\operatorname{col}}\big(\big[\begin{smallmatrix} \mathbf{W} \\ -\mathbf{V} \end{smallmatrix}\big]\big)$;
        \item $\operatorname{null}(\mathbf{V}^\top \mathbf{W}) = \operatorname{null}(\mathbf{V}) + \operatorname{null}(\mathbf{W})$;
        \item $\operatorname{null}(\mathbf{W}^\top) \subseteq \operatorname{col}(\mathbf{V})$;
        \item $\operatorname{null}(\mathbf{V}^\top) \subseteq \operatorname{col}(\mathbf{W})$.
    \end{enumerate}
    \end{multicols}
\end{lemma}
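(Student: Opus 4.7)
The plan is to unpack all eight items from two basic facts: that $\mathscr{M}$ is Lagrangian (so $\mathscr{M}^{\independent}=\mathscr{M}$ with $\dim\mathscr{M}=n$ by \cref{lem:stabiliser_group_form}), together with the explicit characterization
\begin{align*}
(\mathbf{a},\mathbf{b})\in \mathscr{M}^{\independent} \iff \mathbf{W}^\top\mathbf{a}=\mathbf{V}^\top\mathbf{b},
\end{align*}
which follows from $[(\mathbf{a},\mathbf{b}),(\mathbf{V}\mathbf{t},\mathbf{W}\mathbf{t})]=\langle\mathbf{a},\mathbf{W}\mathbf{t}\rangle-\langle\mathbf{b},\mathbf{V}\mathbf{t}\rangle=0$ for all $\mathbf{t}\in\mathbb{F}_p^n$. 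With this in hand, I would prove the items in a slightly different logical order than stated, and then reorder for the write-up. First, (a) is immediate: isotropy of $\mathscr{M}$ says $[(\mathbf{V}\mathbf{e}_i,\mathbf{W}\mathbf{e}_i),(\mathbf{V}\mathbf{e}_j,\mathbf{W}\mathbf{e}_j)]=0$, which is exactly the $(i,j)$ entry of $\mathbf{V}^\top\mathbf{W}-\mathbf{W}^\top\mathbf{V}$ (equivalently, invoke the commuting-Weyl lemma just above the statement). Then (b) is the observation that the rank of $\bigl[\begin{smallmatrix}\mathbf{V}\\ \mathbf{W}\end{smallmatrix}\bigr]$ equals $\dim\mathscr{M}=n$, and (d) follows because $\operatorname{null}(\mathbf{V})\cap\operatorname{null}(\mathbf{W})=\operatorname{null}\bigl(\bigl[\begin{smallmatrix}\mathbf{V}\\ \mathbf{W}\end{smallmatrix}\bigr]\bigr)$, which is trivial by (b).

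Next, (c) is dual to (d): the standard identity $\bigl(\operatorname{row}(\mathbf{V})+\operatorname{row}(\mathbf{W})\bigr)^{\perp}=\operatorname{null}(\mathbf{V})\cap\operatorname{null}(\mathbf{W})=\{\mathbf{0}\}$ gives $\operatorname{row}(\mathbf{V})+\operatorname{row}(\mathbf{W})=\mathbb{F}_p^n$. For (h), I would use the Lagrangian characterization above: given $\mathbf{u}\in\operatorname{null}(\mathbf{V}^\top)$, set $(\mathbf{a},\mathbf{b})=(\mathbf{0},\mathbf{u})$, so $\mathbf{W}^\top\mathbf{0}=\mathbf{V}^\top\mathbf{u}=\mathbf{0}$ puts $(\mathbf{0},\mathbf{u})\in\mathscr{M}^{\independent}=\mathscr{M}$, which forces $\mathbf{u}=\mathbf{W}\mathbf{t}\in\operatorname{col}(\mathbf{W})$ for some $\mathbf{t}$. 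Item (g) is the mirror statement obtained by swapping the roles of $\mathbf{V}$ and $\mathbf{W}$ in the same argument (or, equivalently, by taking orthogonal complements: $\operatorname{null}(\mathbf{W}^\top)\subseteq\operatorname{col}(\mathbf{V})$ is the perp of (h) via $\operatorname{col}(\mathbf{M})^{\perp}=\operatorname{null}(\mathbf{M}^\top)$).

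For (e), one containment is routine: any vector of the form $(\mathbf{W}\mathbf{u},-\mathbf{V}\mathbf{u})$ is annihilated by $[\mathbf{V}^\top\ \mathbf{W}^\top]$ because of (a). For the reverse containment I would use dimension counting: by (d) the map $\mathbf{u}\mapsto(\mathbf{W}\mathbf{u},-\mathbf{V}\mathbf{u})$ is injective, so the right-hand side has dimension $n$, while by (c), $\operatorname{rank}[\mathbf{V}^\top\ \mathbf{W}^\top]=\dim(\operatorname{row}(\mathbf{V})+\operatorname{row}(\mathbf{W}))=n$, so the null space on the left has dimension $2n-n=n$ as well, forcing equality. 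Finally, for (f), the inclusions $\operatorname{null}(\mathbf{W})\subseteq\operatorname{null}(\mathbf{V}^\top\mathbf{W})$ and (via (a)) $\operatorname{null}(\mathbf{V})=\operatorname{null}(\mathbf{W}^\top\mathbf{V})\cdot\ldots$ actually the cleanest route is: $\mathbf{u}\in\operatorname{null}(\mathbf{V}^\top\mathbf{W})$ iff $\mathbf{W}\mathbf{u}\in\operatorname{null}(\mathbf{V}^\top)\cap\operatorname{col}(\mathbf{W})=\operatorname{null}(\mathbf{V}^\top)$ by (h); then a rank count $\operatorname{rank}(\mathbf{V}^\top\mathbf{W})=\operatorname{rank}(\mathbf{V})+\operatorname{rank}(\mathbf{W})-n$ matches $\dim(\operatorname{null}(\mathbf{V})+\operatorname{null}(\mathbf{W}))=2n-\operatorname{rank}(\mathbf{V})-\operatorname{rank}(\mathbf{W})$ using (d), and combined with the easy inclusion $\operatorname{null}(\mathbf{V})+\operatorname{null}(\mathbf{W})\subseteq\operatorname{null}(\mathbf{V}^\top\mathbf{W})$ we get equality.

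The only mildly delicate step is (h) (and its twin (g)); everything else is either a direct application of $\mathscr{M}$ being an $n$-dimensional column space or a dimension count. I expect (h) to be the conceptual anchor of the proof because it is the place where the Lagrangian condition $\mathscr{M}^{\independent}=\mathscr{M}$ (as opposed to mere isotropy) is genuinely used, and (e) and (f) will then follow smoothly from it by linear-algebraic bookkeeping.
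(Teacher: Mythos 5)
Your proof is correct and follows essentially the same linear-algebraic route as the paper: isotropy for (a), the full-rank condition for (b)--(d), and the symplectic/orthogonal complement bookkeeping for the rest; indeed your characterisation $\mathscr{M}^{\independent}=\{(\mathbf{a},\mathbf{b}):\mathbf{W}^\top\mathbf{a}=\mathbf{V}^\top\mathbf{b}\}$ is just item (e) in different coordinates. The only organisational divergences are that you obtain (g) and (h) directly from $\mathscr{M}^{\independent}=\mathscr{M}$ where the paper reads them off from (e) via $\operatorname{null}(\mathbf{V}^\top)\times\operatorname{null}(\mathbf{W}^\top)\subseteq\operatorname{col}\big(\big[\begin{smallmatrix}\mathbf{W}\\-\mathbf{V}\end{smallmatrix}\big]\big)$, and that you prove (f) by a dimension count whereas the paper exhibits the decomposition of $\mathbf{q}$ explicitly; both variants are sound.
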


Together with the concept of stabiliser group is the concept of stabiliser state, which is the joint $+1$-eigenstate of all elements of a stabiliser group.
\begin{definition}[Stabiliser state]
    A non-zero state $|\psi\rangle\in(\mathbb{C}^{p})^{\otimes n}$ is \emph{stabilised} by $\sigma\in\mathscr{P}^n_p$ if $\sigma |\psi\rangle = |\psi\rangle$. A non-zero state $|\Psi\rangle\in(\mathbb{C}^{p})^{\otimes n}$ is a \emph{stabiliser state} of a stabiliser group $\mathcal{S}$ if $\sigma |\Psi\rangle = |\Psi\rangle$ for all $\sigma\in\mathcal{S}$. Let $\EuScript{S}_p^n$ be the set of all stabiliser states in $(\mathbb{C}^p)^{\otimes n}$.
\end{definition}

Consider the group $\mathcal{G} = \{\omega^{[\mathbf{a},\mathbf{x}]}\mathcal{W}_{\mathbf{x}}:\mathbf{x}\in\mathscr{X}\}$ where $\mathscr{X}\subset\mathbb{F}_p^{2n}$ is an isotropic subspace and let $V_{\mathcal{G}} \triangleq \{|\Psi\rangle\in(\mathbb{C}^{p})^{\otimes n}: \sigma|\Psi\rangle = |\Psi\rangle, \forall \sigma\in\mathcal{G}\}$. The next two lemmas construct an projector onto $V_{\mathcal{G}}$ and show that $\operatorname{dim}(V_{\mathcal{G}}) = p^{n-\operatorname{dim}(\mathscr{X})}$. As an immediate consequence, any stabiliser group has a unique stabiliser state. This result is also true for $p$ non-prime~\cite{gross2006hudson,gross2013stabilizer,gross2021schur}.
\begin{lemma}\label{lem:projection}
    Let $\mathcal{G} = \{\omega^{[\mathbf{a},\mathbf{x}]}\mathcal{W}_{\mathbf{x}}:\mathbf{x}\in\mathscr{X}\}$ be a group where $\mathscr{X}\subset\mathbb{F}_p^{2n}$ is an isotropic subspace, and let $V_{\mathcal{G}} \triangleq \{|\Psi\rangle\in(\mathbb{C}^{p})^{\otimes n}: \omega^{[\mathbf{a},\mathbf{x}]}\mathcal{W}_{\mathbf{x}}|\Psi\rangle = |\Psi\rangle, \forall \mathbf{x}\in\mathscr{X}\}$. Then the projector onto $V_{\mathcal{G}}$ is
    \begin{align*}
        \mathcal{P}_{\mathcal{G}} = \frac{1}{|\mathscr{X}|}\sum_{\mathbf{x}\in\mathscr{X}}\omega^{[\mathbf{a},\mathbf{x}]}\mathcal{W}_{\mathbf{x}}.
    \end{align*}
\end{lemma}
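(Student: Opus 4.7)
The plan is to verify the three defining properties of an orthogonal projector onto $V_{\mathcal{G}}$, namely that the given operator $\mathcal{P}_{\mathcal{G}}$ is Hermitian, idempotent, and has image exactly $V_{\mathcal{G}}$. Throughout, the main tool is \cref{lem:weyl_properties}(a), which gives $\mathcal{W}_{\mathbf{x}}\mathcal{W}_{\mathbf{y}} = \tau^{[\mathbf{x},\mathbf{y}]}\mathcal{W}_{\mathbf{x}+\mathbf{y}}$, together with the fact that $\mathscr{X}$ is isotropic, so that $[\mathbf{x},\mathbf{y}] = 0$ for all $\mathbf{x},\mathbf{y}\in\mathscr{X}$; in combination these collapse $\mathcal{W}_{\mathbf{x}}\mathcal{W}_{\mathbf{y}}$ to $\mathcal{W}_{\mathbf{x}+\mathbf{y}}$ with no phase correction whenever both labels lie in $\mathscr{X}$.

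First I would check Hermiticity by using $\mathcal{W}_{\mathbf{x}}^\dagger = \mathcal{W}_{-\mathbf{x}}$ from \cref{lem:weyl_properties}(b), changing summation variable $\mathbf{x} \to -\mathbf{x}$ in $\mathcal{P}_{\mathcal{G}}^\dagger$, and noting that $\mathscr{X}$ is closed under negation to recover $\mathcal{P}_{\mathcal{G}}$. Next I would verify idempotence by expanding
\begin{align*}
\mathcal{P}_{\mathcal{G}}^2 = \frac{1}{|\mathscr{X}|^2}\sum_{\mathbf{x},\mathbf{y}\in\mathscr{X}}\omega^{[\mathbf{a},\mathbf{x}+\mathbf{y}]}\tau^{[\mathbf{x},\mathbf{y}]}\mathcal{W}_{\mathbf{x}+\mathbf{y}},
\end{align*}
discarding the phase $\tau^{[\mathbf{x},\mathbf{y}]}$ by isotropy of $\mathscr{X}$, substituting $\mathbf{z} = \mathbf{x}+\mathbf{y}$, and using that each $\mathbf{z}\in\mathscr{X}$ is hit exactly $|\mathscr{X}|$ times.

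For the image, the easy inclusion is that any $|\Psi\rangle\in V_{\mathcal{G}}$ satisfies $\omega^{[\mathbf{a},\mathbf{x}]}\mathcal{W}_{\mathbf{x}}|\Psi\rangle=|\Psi\rangle$ for each $\mathbf{x}\in\mathscr{X}$, so averaging gives $\mathcal{P}_{\mathcal{G}}|\Psi\rangle = |\Psi\rangle$. For the reverse inclusion, I would show that for every $|\phi\rangle$ the vector $\mathcal{P}_{\mathcal{G}}|\phi\rangle$ is stabilised by each element of $\mathcal{G}$; concretely, using isotropy once more,
\begin{align*}
\omega^{[\mathbf{a},\mathbf{y}]}\mathcal{W}_{\mathbf{y}}\mathcal{P}_{\mathcal{G}} = \frac{1}{|\mathscr{X}|}\sum_{\mathbf{x}\in\mathscr{X}}\omega^{[\mathbf{a},\mathbf{x}+\mathbf{y}]}\tau^{[\mathbf{y},\mathbf{x}]}\mathcal{W}_{\mathbf{x}+\mathbf{y}} = \frac{1}{|\mathscr{X}|}\sum_{\mathbf{z}\in\mathscr{X}}\omega^{[\mathbf{a},\mathbf{z}]}\mathcal{W}_{\mathbf{z}} = \mathcal{P}_{\mathcal{G}},
\end{align*}
for all $\mathbf{y}\in\mathscr{X}$, so $\operatorname{Im}(\mathcal{P}_{\mathcal{G}})\subseteq V_{\mathcal{G}}$. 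Combined with idempotency and Hermiticity, this pins down $\mathcal{P}_{\mathcal{G}}$ as the orthogonal projector onto $V_{\mathcal{G}}$.

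There is no real obstacle here; the only subtlety is keeping track of the phases $\tau^{[\mathbf{x},\mathbf{y}]}$, which is precisely what forces the requirement that $\mathscr{X}$ be isotropic (otherwise the product of two generators would not lie back in $\mathcal{G}$ with the same affine phase $\omega^{[\mathbf{a},\cdot]}$, and $\mathcal{P}_{\mathcal{G}}$ would fail to be idempotent). One should also note in passing that the fact $\{\omega^{[\mathbf{a},\mathbf{x}]}\mathcal{W}_{\mathbf{x}}\}_{\mathbf{x}\in\mathscr{X}}$ forms a group already implicitly uses isotropy of $\mathscr{X}$, which is consistent with the hypothesis of the lemma.
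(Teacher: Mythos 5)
Your proposal is correct and follows essentially the same route as the paper: verify Hermiticity via $\mathbf{x}\mapsto-\mathbf{x}$, verify idempotence by collapsing $\mathcal{W}_{\mathbf{x}}\mathcal{W}_{\mathbf{y}}$ to $\mathcal{W}_{\mathbf{x}+\mathbf{y}}$ using isotropy, and identify the image by showing both that $V_{\mathcal{G}}$ is fixed pointwise and that every vector in the range is stabilised by all elements of $\mathcal{G}$. The only cosmetic difference is that you carry the phase $\tau^{[\mathbf{x},\mathbf{y}]}$ explicitly before discarding it, whereas the paper absorbs it silently via commutativity.
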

\begin{proof}
    It is easy to see that $\mathcal{P}_{\mathcal{G}}$ is Hermitian by the change of index $\mathbf{x} \mapsto -\mathbf{x}$. It is also idempotent,
    \begin{align*}
        \mathcal{P}_{\mathcal{G}}^2 = \frac{1}{|\mathscr{X}|^2}\sum_{\mathbf{x},\mathbf{y}\in\mathscr{X}}\omega^{[\mathbf{a},\mathbf{x}+\mathbf{y}]}\mathcal{W}_{\mathbf{x}}\mathcal{W}_{\mathbf{y}} = \frac{1}{|\mathscr{X}|^2}\sum_{\mathbf{x},\mathbf{y}\in\mathscr{X}}\omega^{[\mathbf{a},\mathbf{x}+\mathbf{y}]}\mathcal{W}_{\mathbf{x}+\mathbf{y}} = \frac{1}{|\mathscr{X}|^2}\sum_{\mathbf{x},\mathbf{z}\in\mathscr{X}}\omega^{[\mathbf{a},\mathbf{z}]}\mathcal{W}_{\mathbf{z}} = \mathcal{P}_{\mathcal{G}}.
    \end{align*}
    Therefore, $\mathcal{P}_{\mathcal{G}}$ is a projector. Moreover, $\mathcal{P}_{\mathcal{G}}|\Psi\rangle = |\Psi\rangle$ for any $|\Psi\rangle\in V_{\mathcal{G}}$. On the other hand, given $\mathcal{P}_{\mathcal{G}}|\psi\rangle$ for any $|\psi\rangle$, $\omega^{[\mathbf{a},\mathbf{x}]}\mathcal{W}_{\mathbf{x}}(\mathcal{P}_{\mathcal{G}}|\psi\rangle) = \mathcal{P}_{\mathcal{G}}|\psi\rangle$ for all $\mathbf{x}\in\mathscr{X}$, and thus $\mathcal{P}_{\mathcal{G}}|\psi\rangle\in V_{\mathcal{G}}$. This proves that $\mathcal{P}_{\mathcal{G}}$ is the projector onto $V_{\mathcal{G}}$.
\end{proof}

\begin{lemma}\label{lem:uniqueness}
    Let $\mathcal{G} = \{\omega^{[\mathbf{a},\mathbf{x}]}\mathcal{W}_{\mathbf{x}}:\mathbf{x}\in\mathscr{X}\}$ be a group where $\mathscr{X}\subset\mathbb{F}_p^{2n}$ is an isotropic subspace. Then $\operatorname{dim}(V_{\mathcal{G}}) = p^{n-\operatorname{dim}(\mathscr{X})}$, where $V_{\mathcal{G}} \triangleq \{|\Psi\rangle\in(\mathbb{C}^{p})^{\otimes n}: \omega^{[\mathbf{a},\mathbf{x}]}\mathcal{W}_{\mathbf{x}}|\Psi\rangle = |\Psi\rangle, \forall \mathbf{x}\in\mathscr{X}\}$. It then follows that any stabiliser group $\mathcal{S}$ has a unique stabiliser state, i.e., $\operatorname{dim}(V_{\mathcal{S}}) = 1$.
\end{lemma}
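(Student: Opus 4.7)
The plan is to compute $\operatorname{dim}(V_{\mathcal{G}})$ by taking the trace of the projector $\mathcal{P}_{\mathcal{G}}$ constructed in \cref{lem:projection}, since the trace of a projector equals the dimension of its image.

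First I would write out
\begin{align*}
    \operatorname{dim}(V_{\mathcal{G}}) = \operatorname{Tr}[\mathcal{P}_{\mathcal{G}}] = \frac{1}{|\mathscr{X}|}\sum_{\mathbf{x}\in\mathscr{X}}\omega^{[\mathbf{a},\mathbf{x}]}\operatorname{Tr}[\mathcal{W}_{\mathbf{x}}].
\end{align*}
By \cref{lem:weyl_properties}(d) with $\mathbf{y} = \mathbf{0}$ (noting $\mathcal{W}_{\mathbf{0}} = \mathbb{I}$), one has $\operatorname{Tr}[\mathcal{W}_{\mathbf{x}}] = p^n\cdot \mathbf{1}[\mathbf{x} = \mathbf{0}]$, so only the $\mathbf{x} = \mathbf{0}$ term survives and the phase $\omega^{[\mathbf{a},\mathbf{0}]} = 1$. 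This yields
\begin{align*}
    \operatorname{Tr}[\mathcal{P}_{\mathcal{G}}] = \frac{p^n}{|\mathscr{X}|} = p^{n - \operatorname{dim}(\mathscr{X})},
\end{align*}
where in the last step I use that $\mathscr{X}$ is an $\mathbb{F}_p$-subspace and hence $|\mathscr{X}| = p^{\operatorname{dim}(\mathscr{X})}$.

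For the consequence, I would specialise to a stabiliser group $\mathcal{S} = \{\omega^{[\mathbf{a},\mathbf{x}]}\mathcal{W}_{\mathbf{x}}:\mathbf{x}\in\mathscr{M}\}$, where by \cref{lem:stabiliser_group_form} the subspace $\mathscr{M}$ is Lagrangian with $\operatorname{dim}(\mathscr{M}) = n$. Plugging into the formula just derived gives $\operatorname{dim}(V_{\mathcal{S}}) = p^{n-n} = 1$, so $\mathcal{S}$ has a unique stabiliser state up to a global phase.

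There is no real obstacle here: the argument is essentially a one-line trace computation once the projector formula is in hand. The only small subtlety to flag in the write-up is the justification that $\mathcal{P}_{\mathcal{G}}$ is well-defined as written, i.e.\ that the map $\mathbf{x}\mapsto \omega^{[\mathbf{a},\mathbf{x}]}\mathcal{W}_{\mathbf{x}}$ is a genuine group homomorphism on $\mathscr{X}$ (used implicitly in \cref{lem:projection}), which follows from $[\mathbf{x},\mathbf{y}] = 0$ for $\mathbf{x},\mathbf{y}\in\mathscr{X}$ by the isotropy hypothesis together with \cref{lem:weyl_properties}(a).
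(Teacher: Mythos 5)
Your proof is correct, but it takes a genuinely different route from the paper's. You compute $\operatorname{dim}(V_{\mathcal{G}})$ directly as $\operatorname{Tr}[\mathcal{P}_{\mathcal{G}}]$, using the projector from \cref{lem:projection} and the fact that $\operatorname{Tr}[\mathcal{W}_{\mathbf{x}}] = p^{n}\cdot\mathbf{1}[\mathbf{x}=\mathbf{0}]$ (\cref{lem:weyl_properties}), so only the identity term of the sum survives and $\operatorname{Tr}[\mathcal{P}_{\mathcal{G}}] = p^{n}/|\mathscr{X}| = p^{n-\operatorname{dim}(\mathscr{X})}$; the consequence for stabiliser groups then follows from $\operatorname{dim}(\mathscr{M}) = n$ via \cref{lem:stabiliser_group_form}, exactly as you say. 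The paper instead argues via Pontryagin duality: the $p^{\operatorname{dim}(\mathscr{X})}$ characters of $\mathscr{X}$ (equivalently, the cosets in $\mathbb{F}_p^{2n}/\mathscr{X}^{\independent}$) index a family of mutually orthogonal projectors $\mathcal{P}_{\mathbf{b}}$ of equal rank, forcing each rank to be a $p^{\operatorname{dim}(\mathscr{X})}$-fraction of $p^{n}$. Your trace computation is shorter and more elementary, and it suffices for the statement as written; what the paper's character argument buys in addition is the full orthogonal decomposition of the Hilbert space into simultaneous eigenspaces of $\{\mathcal{W}_{\mathbf{x}}\}_{\mathbf{x}\in\mathscr{X}}$, which is what justifies the remark after the lemma that the eigenvectors determine a stabiliser basis $\{|\mathscr{M},\mathcal{C}\rangle\}_{\mathcal{C}\in\mathbb{F}_p^{2n}/\mathscr{M}}$ (used later, e.g.\ in \cref{lem:stabiliser_fidelity_inequality}). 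Your closing remark about well-definedness of $\mathcal{P}_{\mathcal{G}}$ resting on isotropy is also the right subtlety to flag, though it is already handled inside \cref{lem:projection}.
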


Following~\cite{gross2006hudson,gross2021schur}, we can denote the unique stabiliser state $|\mathcal{S}\rangle$ of a stabiliser group $\mathcal{S} = \{\omega^{[\mathbf{a},\mathbf{x}]}\mathcal{W}_{\mathbf{x}}:\mathbf{x}\in\mathscr{M}\}$ as $|\mathscr{M},\mathbf{a}\rangle$, where $\mathscr{M}\subset\mathbb{F}_p^{2n}$ and $\mathbf{a}\in\mathbb{F}_p^{2n}$ are the Lagrangian subspace and phase string defining $\mathcal{S}$, respectively. We note that the choice of $\mathbf{a}$ given $\mathscr{M}$ is not unique, but can be replaced with $\mathbf{a} + \mathbf{z}$ for any $\mathbf{z}\in\mathscr{M}$ and still define the same stabiliser group $\mathcal{S}$. In other words, any string from a coset $\mathcal{C}\in \mathbb{F}_p^{2n}/\mathscr{M}$ yields the same stabiliser state. We can then also write a stabiliser state as $|\mathscr{M},\mathcal{C}\rangle$ where $\mathcal{C}\in \mathbb{F}_p^{2n}/\mathscr{M}$. Conversely, given a Lagrangian subspace $\mathscr{M}$, any state that is a simultaneous eigenvector of $\{\mathcal{W}_{\mathbf{x}}\}_{\mathbf{x}\in\mathscr{M}}$ is a stabiliser state of some stabiliser group. The eigenvectors of $\{\mathcal{W}_{\mathbf{x}}\}_{\mathbf{x}\in\mathscr{M}}$ thus determine a stabiliser basis $\{|\mathscr{M},\mathcal{C}\rangle\}_{\mathcal{C}\in\mathbb{F}_p^{2n}/\mathscr{M}}$.

Consider a stabiliser group $\mathcal{S} = \{\omega^{[\mathbf{a},\mathbf{x}]}\mathcal{W}_{\mathbf{x}}:\mathbf{x}\in\mathscr{M}\}$. According to \cref{lem:projection}, the operator $p^{-n}\sum_{\mathbf{x}\in\mathscr{M}}\omega^{[\mathbf{a},\mathbf{x}]}\mathcal{W}_{\mathbf{x}}$ is the orthogonal projection onto the set $V_{\mathcal{S}}$ spanned by the stabiliser state $|\mathcal{S}\rangle$ of $\mathcal{S}$. Therefore, 
\begin{align}\label{eq:stabiliser_form}
    |\mathcal{S}\rangle\langle\mathcal{S}| = \frac{1}{p^{n}}\sum_{\mathbf{x}\in\mathscr{M}}\omega^{[\mathbf{a},\mathbf{x}]}\mathcal{W}_{\mathbf{x}}
\end{align}
and the characteristic function and probability of $|\mathcal{S}\rangle\langle\mathcal{S}|$ are
\begin{align}\label{eq:stabiliser_distribution}
    c_{\mathcal{S}}(\mathbf{x}) = \begin{cases}
        p^{-n/2}\omega^{[\mathbf{a},\mathbf{x}]} &\text{if}~\mathbf{x}\in\mathscr{M},\\
        0 &\text{otherwise},
    \end{cases} \qquad \qquad\text{and} \qquad \qquad 
    p_{\mathcal{S}}(\mathbf{x}) = \begin{cases}
        p^{-n} &\text{if}~\mathbf{x}\in\mathscr{M},\\
        0 &\text{otherwise}.
    \end{cases}
\end{align}
Thus $p_{\mathcal{S}}$ is the uniform distribution on the subspace $\mathscr{M}$, a fact which will be of paramount importance when studying Bell (difference) sampling in the next section. While we have obtained an expression for $|\mathcal{S}\rangle\langle\mathcal{S}|$, in the following lemma, we obtain an expression for $|\mathcal{S}\rangle$, which will be the basis of our quantum algorithm for learning stabiliser states in \cref{sec:learning_stabiliser_states}. We note that a similar form for stabiliser states was given in~\cite[Theorem~1]{hostens2005stabilizer}.

\begin{lemma}\label{lem:normalisation}
    Let $\mathcal{S}=\langle\{\omega^{s_i}\mathcal{W}_{\mathbf{v}_i,\mathbf{w}_i}\}_{i\in[n]}\rangle$ be a stabiliser group.
    Let $\mathbf{s} = (s_1,\dots,s_n)\in\mathbb{F}_p^n$ and let $\mathbf{V},\mathbf{W}\in\mathbb{F}_p^{n\times n}$ be the matrices with column vectors $\mathbf{v}_1,\dots,\mathbf{v}_n$ and $\mathbf{w}_1,\dots,\mathbf{w}_n$, respectively. Then there exists $\mathbf{u}\in\mathbb{F}_p^n$ such that $\mathbf{V}^\top \mathbf{u} + \mathbf{s} \in \operatorname{row}(\mathbf{W})$, and the unique stabiliser state of $\mathcal{S}$ is
    \begin{align*}
        |\mathcal{S}\rangle = \frac{\sqrt{|\operatorname{col}(\mathbf{W})|}}{p^n}\sum_{\mathbf{q}\in\mathbb{F}_p^n}\omega^{\mathbf{s}^\top \mathbf{q} + \mathbf{u}^\top \mathbf{V}\mathbf{q} + 2^{-1}\mathbf{q}^\top \mathbf{V}^\top \mathbf{W}\mathbf{q}}|\mathbf{u}+\mathbf{W}\mathbf{q}\rangle
    \end{align*}
    for any $\mathbf{u}\in\mathbb{F}_p^n$ such that $\mathbf{V}^\top \mathbf{u} + \mathbf{s} \in \operatorname{row}(\mathbf{W})$.
\end{lemma}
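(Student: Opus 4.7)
The plan is to construct an $\mathbf{u}$ satisfying the stated condition, define $|\psi\rangle$ to be the candidate on the right-hand side of the claim, and then verify that $|\psi\rangle$ is a unit vector stabilised by every generator $\omega^{s_k}\mathcal{W}_{\mathbf{v}_k,\mathbf{w}_k}$. Uniqueness of the stabiliser state (\cref{lem:uniqueness}) then forces $|\psi\rangle=|\mathcal{S}\rangle$ and automatically gives independence from the particular choice of $\mathbf{u}$. For existence, $\mathbf{V}^\top\mathbf{u}+\mathbf{s}\in\operatorname{row}(\mathbf{W})$ is equivalent to $\mathbf{s}\in\operatorname{row}(\mathbf{V})+\operatorname{row}(\mathbf{W})$, and by \cref{lem:properties}(c) this sum equals all of $\mathbb{F}_p^n$, so a valid $\mathbf{u}$ always exists.

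For normalisation I would abbreviate $\phi(\mathbf{q})\triangleq \mathbf{s}^\top\mathbf{q}+\mathbf{u}^\top\mathbf{V}\mathbf{q}+2^{-1}\mathbf{q}^\top\mathbf{V}^\top\mathbf{W}\mathbf{q}$ and observe that $|\mathbf{u}+\mathbf{W}\mathbf{q}\rangle$ depends on $\mathbf{q}$ only through its coset modulo $\operatorname{null}(\mathbf{W})$. For $\mathbf{r}\in\operatorname{null}(\mathbf{W})$, the identity $\mathbf{W}\mathbf{r}=\mathbf{0}$ together with the symmetry $\mathbf{V}^\top\mathbf{W}=\mathbf{W}^\top\mathbf{V}$ from \cref{lem:properties}(a) kills both the cross-term $\mathbf{q}^\top\mathbf{V}^\top\mathbf{W}\mathbf{r}$ and the pure quadratic $\mathbf{r}^\top\mathbf{V}^\top\mathbf{W}\mathbf{r}$, leaving $\phi(\mathbf{q}+\mathbf{r})-\phi(\mathbf{q})=(\mathbf{s}+\mathbf{V}^\top\mathbf{u})^\top\mathbf{r}$. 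This vanishes by the hypothesis $\mathbf{s}+\mathbf{V}^\top\mathbf{u}\in\operatorname{row}(\mathbf{W})=\operatorname{null}(\mathbf{W})^\perp$, so each distinct basis state appears in the sum with a common coefficient and multiplicity $|\operatorname{null}(\mathbf{W})|$. Rank-nullity $|\operatorname{col}(\mathbf{W})|\cdot|\operatorname{null}(\mathbf{W})|=p^n$ then makes the norm squared equal to $(|\operatorname{col}(\mathbf{W})|/p^{2n})\cdot|\operatorname{col}(\mathbf{W})|\cdot|\operatorname{null}(\mathbf{W})|^2=1$.

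The core step is verifying invariance under each generator. Using $\mathbf{w}_k=\mathbf{W}\mathbf{e}_k$, applying $\mathcal{W}_{\mathbf{v}_k,\mathbf{w}_k}$ to a summand $|\mathbf{u}+\mathbf{W}\mathbf{q}\rangle$ produces $|\mathbf{u}+\mathbf{W}(\mathbf{q}+\mathbf{e}_k)\rangle$ with explicit displacement phase $\omega^{\langle\mathbf{u}+\mathbf{W}\mathbf{q},\mathbf{v}_k\rangle+2^{-1}\langle\mathbf{v}_k,\mathbf{w}_k\rangle}$. After reindexing $\mathbf{q}\mapsto\mathbf{q}-\mathbf{e}_k$, I would compare the new phase to $\phi(\mathbf{q})$: expanding $\phi(\mathbf{q}-\mathbf{e}_k)$ and using the symmetry $\mathbf{V}^\top\mathbf{W}=\mathbf{W}^\top\mathbf{V}$ to merge the two mixed cross-terms, they combine into $-\langle\mathbf{v}_k,\mathbf{W}\mathbf{q}\rangle$, which exactly cancels $\langle\mathbf{W}\mathbf{q},\mathbf{v}_k\rangle$ from the displacement. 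The constant contributions $s_k-\mathbf{s}^\top\mathbf{e}_k$ and $-\mathbf{u}^\top\mathbf{v}_k+\langle\mathbf{u},\mathbf{v}_k\rangle$ are manifestly zero, and the two copies of $2^{-1}\langle\mathbf{v}_k,\mathbf{w}_k\rangle$ (one from the Weyl operator, one from the quadratic at $\mathbf{e}_k$) combine with $-\langle\mathbf{w}_k,\mathbf{v}_k\rangle$ from the displacement into $0$. Hence $\omega^{s_k}\mathcal{W}_{\mathbf{v}_k,\mathbf{w}_k}|\psi\rangle=|\psi\rangle$ for every $k$, and \cref{lem:uniqueness} closes the argument.

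The only obstacle is bookkeeping. The crucial conceptual point is that the $2^{-1}$ prefactor in the quadratic term of $\phi$ is exactly what is needed to pair the contribution $-\langle\mathbf{v}_k,\mathbf{W}\mathbf{q}\rangle$ from $\phi(\mathbf{q}-\mathbf{e}_k)$ with the linear-in-$\mathbf{q}$ piece of the Weyl displacement phase; without this normalisation, or without the commutation relation $\mathbf{V}^\top\mathbf{W}=\mathbf{W}^\top\mathbf{V}$ symmetrising the two cross-terms, the cancellation would fail.
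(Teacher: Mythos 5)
Your proposal is correct: the existence argument for $\mathbf{u}$ (via $\operatorname{row}(\mathbf{V})+\operatorname{row}(\mathbf{W})=\mathbb{F}_p^n$), the coset-invariance of the phase modulo $\operatorname{null}(\mathbf{W})$ giving normalisation through $|\operatorname{col}(\mathbf{W})|\,|\operatorname{null}(\mathbf{W})|=p^n$, and the generator-by-generator phase cancellation (which indeed hinges on $\mathbf{V}^\top\mathbf{W}=\mathbf{W}^\top\mathbf{V}$ and the $2^{-1}$ prefactor) all check out, and \cref{lem:uniqueness} legitimately closes the argument. Your route is, however, genuinely different from the paper's. The paper does not guess-and-verify: it applies the group average $\sum_{\sigma\in\mathcal{S}}\sigma$ (i.e.\ $p^n$ times the projector of \cref{lem:projection}) to the computational basis state $|\mathbf{u}\rangle$, expands the product of generator powers to \emph{derive} the quadratic phase $f(\mathbf{q})=\mathbf{s}^\top\mathbf{q}+\mathbf{u}^\top\mathbf{V}\mathbf{q}+2^{-1}\mathbf{q}^\top\mathbf{V}^\top\mathbf{W}\mathbf{q}$, and then computes $\langle\mathcal{S}_{\mathbf{u}}|\mathcal{S}_{\mathbf{u}}\rangle$, which is nonzero precisely when $\mathbf{V}^\top\mathbf{u}+\mathbf{s}\in\operatorname{row}(\mathbf{W})$; normalising yields the stated formula. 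That derivation explains where both the formula and the condition on $\mathbf{u}$ come from, and it gets stabilisation for free since the output of the projector lies in $V_{\mathcal{S}}$ by construction; the price is the inner-product computation via \cref{lem:sum}. Your verification-based argument is a legitimate alternative precisely because the formula is supplied in the statement: it trades the derivation for the explicit eigenvector bookkeeping, uses the same supporting facts (\cref{lem:properties}, uniqueness), and the Weyl-phase manipulations you perform are essentially the same calculation as the paper's expansion of $\mathcal{W}_{q_n\mathbf{v}_n,q_n\mathbf{w}_n}\cdots\mathcal{W}_{q_1\mathbf{v}_1,q_1\mathbf{w}_1}|\mathbf{u}\rangle$, just run in the opposite direction.
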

\begin{proof}
    First note that the unormalised state $|\mathcal{S}_{\mathbf{u}}\rangle \triangleq \sum_{\sigma\in \mathcal{S}}\sigma|\mathbf{u}\rangle$, for any $\mathbf{u}\in\mathbb{F}_p^n$, is either the stabiliser state $|\mathcal{S}\rangle$ of $\mathcal{S}$ (up to normalisation) or the zero state, since $p^{-n}\sum_{\sigma\in\mathcal{S}}\sigma$ is the orthogonal projection onto $|\mathcal{S}\rangle$. Because $\{\omega^{s_i}\mathcal{W}_{\mathbf{v}_i,\mathbf{w}_i}\}_{i\in[n]}$ are independent,
    \begin{align*}
         |\mathcal{S}_{\mathbf{u}}\rangle = \sum_{\sigma\in \mathcal{S}}\sigma|\mathbf{u}\rangle &= \sum_{\mathbf{q}\in\mathbb{F}_p^n} (\omega^{s_n}\mathcal{W}_{\mathbf{v}_n,\mathbf{w}_n})^{q_n}\cdots (\omega^{s_1}\mathcal{W}_{\mathbf{v}_1,\mathbf{w}_1})^{q_1} |\mathbf{u}\rangle\\
         &= \sum_{\mathbf{q}\in\mathbb{F}_p^n} \omega^{\mathbf{s}^\top \mathbf{q}}\mathcal{W}_{q_n\mathbf{v}_n,q_n\mathbf{w}_n}\cdots \mathcal{W}_{q_1\mathbf{v}_1,q_1\mathbf{w}_1} |\mathbf{u}\rangle\\
         &= \sum_{\mathbf{q}\in\mathbb{F}_p^n}\omega^{\mathbf{s}^\top \mathbf{q}}\mathcal{W}_{\mathbf{V}\mathbf{q},\mathbf{W}\mathbf{q}}|\mathbf{u}\rangle \tag{$[(\mathbf{v}_i,\mathbf{w}_i),(\mathbf{v}_j,\mathbf{w}_j)] = 0$ $\forall i,j\in[n]$}\\
         &= \sum_{\mathbf{q}\in\mathbb{F}_p^n}\omega^{\mathbf{s}^\top \mathbf{q} + \mathbf{u}^\top \mathbf{V}\mathbf{q} + 2^{-1}\mathbf{q}^\top \mathbf{V}^\top \mathbf{W}\mathbf{q}}|\mathbf{u}+\mathbf{W}\mathbf{q}\rangle\\
         &= \sum_{\mathbf{q}\in\mathbb{F}_p^n}\omega^{f(\mathbf{q})}|\mathbf{u}+\mathbf{W}\mathbf{q}\rangle,
    \end{align*}
    where we defined $f:\mathbb{F}_p^n\to\mathbb{F}_p$ as $f(\mathbf{q}) = \mathbf{s}^\top \mathbf{q} + \mathbf{u}^\top \mathbf{V}\mathbf{q} + 2^{-1}\mathbf{q}^\top \mathbf{V}^\top \mathbf{W}\mathbf{q}$. To analyse the case when $|\mathcal{S}_{\mathbf{u}}\rangle$ is the stabiliser state, we compute $|\langle \mathcal{S}_{\mathbf{u}}|\mathcal{S}_{\mathbf{u}}\rangle|$ as follows:
    \begin{align*}
        |\langle \mathcal{S}_{\mathbf{u}}|\mathcal{S}_{\mathbf{u}}\rangle|  &= \left|\sum_{\mathbf{q},\mathbf{q}'\in\mathbb{F}_p^n}\omega^{f(\mathbf{q}') - f(\mathbf{q})}\langle \mathbf{u}+\mathbf{W}\mathbf{q}|\mathbf{u}+\mathbf{W}\mathbf{q}'\rangle \right| = \left|\sum_{\mathbf{q}\in\mathbb{F}_p^n}\sum_{\mathbf{q}'\in \operatorname{null}(\mathbf{W})}\omega^{f(\mathbf{q}+\mathbf{q}') - f(\mathbf{q})}\right|.
    \end{align*}
    Since $f(\mathbf{q}+\mathbf{q}') - f(\mathbf{q}) = (\mathbf{s}^\top + \mathbf{u}^\top \mathbf{V} + \mathbf{q}^\top \mathbf{V}^\top \mathbf{W}) \mathbf{q}' + 2^{-1}\mathbf{q}'^\top \mathbf{V}^\top \mathbf{W} \mathbf{q}'$ (using that $\mathbf{V}^\top \mathbf{W} = \mathbf{W}^\top \mathbf{V}$), then
    \begin{align*}
        |\langle \mathcal{S}_{\mathbf{u}}|\mathcal{S}_{\mathbf{u}}\rangle| = \left|\sum_{\mathbf{q}\in\mathbb{F}_p^n}\sum_{\mathbf{q}'\in \operatorname{null}(\mathbf{W})}\omega^{(\mathbf{s} + \mathbf{V}^\top \mathbf{u})^\top \mathbf{q}'}\right| = p^n \left|\sum_{\mathbf{q}'\in \operatorname{null}(\mathbf{W})}\omega^{(\mathbf{s} + \mathbf{V}^\top \mathbf{u})^\top \mathbf{q}'}\right|.
    \end{align*}
    According to \cref{lem:sum},
    \begin{align*}
        \sum_{\mathbf{q}'\in \operatorname{null}(\mathbf{W})} \omega^{(\mathbf{s} + \mathbf{V}^\top\mathbf{u})^\top\mathbf{q}'} = \begin{cases}
            |\operatorname{null}(\mathbf{W})| &\text{if}~ \mathbf{s}+\mathbf{V}^\top \mathbf{u}\in \operatorname{null}(\mathbf{W})^\perp = \operatorname{row}(\mathbf{W}),\\
            0 &\text{if}~ \mathbf{s}+\mathbf{V}^\top \mathbf{u}\notin \operatorname{null}(\mathbf{W})^\perp = \operatorname{row}(\mathbf{W}).
            \end{cases}
    \end{align*}
    Thus $|\langle \mathcal{S}_{\mathbf{u}}|\mathcal{S}_{\mathbf{u}}\rangle| = p^n |\operatorname{null}(\mathbf{W})| = p^{2n}/|\operatorname{col}(\mathbf{W})|$ if $\mathbf{V}^\top \mathbf{u} + \mathbf{s}\in \operatorname{row}(\mathbf{W})$, and $|\langle \mathcal{S}_{\mathbf{u}}|\mathcal{S}_{\mathbf{u}}\rangle| = 0$ otherwise. It only remains to prove that one can always find $\mathbf{u}\in\mathbb{F}_p^n$ such that $\mathbf{V}^\top \mathbf{u} + \mathbf{s}\in \operatorname{row}(\mathbf{W})$, from which the corresponding $|\mathcal{S}_{\mathbf{u}}\rangle$ must be the stabiliser state of $\mathcal{S}$. But this is straightforward since $\{\mathbf{V}^\top \mathbf{u}:\mathbf{u}\in\mathbb{F}_p^n\} \triangleq \operatorname{row}(\mathbf{V})$ and, according to \cref{lem:properties}, $\operatorname{row}(\mathbf{V}) + \operatorname{row}(\mathbf{W}) = \mathbb{F}_p^n$. Thus, any $\mathbf{s}$ can be expressed as $\mathbf{V}^\top \mathbf{q}_1 + \mathbf{W}^\top \mathbf{q}_2$ for some $\mathbf{q}_1, \mathbf{q}_2\in\mathbb{F}_p^n$. Choosing $\mathbf{u} = -\mathbf{q}_1$ then gives $\mathbf{V}^\top \mathbf{u} + \mathbf{s} = \mathbf{W}^\top \mathbf{q}_2 \in \operatorname{row}(\mathbf{W})$.
\end{proof}

Sometimes we are interested in the Pauli operators that stabilise a given pure state $|\psi\rangle$ up to a phase. This is captured by the \emph{unsigned stabiliser group} of $|\psi\rangle$.
\begin{definition}[Unsigned stabiliser group and stabiliser dimension]
    Given $|\psi\rangle\in(\mathbb{C}^p)^{\otimes n}$, its \emph{unsigned stabiliser group} is $\operatorname{Weyl}(|\psi\rangle) \triangleq \{\mathbf{x}\in\mathbb{F}_p^{2n}:\mathcal{W}_{\mathbf{x}}|\psi\rangle = \omega^s |\psi\rangle ~\text{for some}~ s\in\mathbb{F}_p\}$ and its \emph{stabiliser dimension} is $\operatorname{dim}(\operatorname{Weyl}(|\psi\rangle))$.
\end{definition}
\begin{lemma}
    For any non-zero $|\psi\rangle\in(\mathbb{C}^p)^{\otimes n}$, $\operatorname{Weyl}(|\psi\rangle)$ is isotropic and $\operatorname{Weyl}(|\psi\rangle) \subseteq \operatorname{Weyl}(|\psi\rangle)^{\independent}$.
\end{lemma}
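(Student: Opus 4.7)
The plan is to first verify that $\operatorname{Weyl}(|\psi\rangle)$ is actually a linear subspace of $\mathbb{F}_p^{2n}$ (so that talking about isotropy makes sense), and then extract the isotropy condition from the commutation relation between Weyl operators stabilising $|\psi\rangle$ up to a phase. The inclusion $\operatorname{Weyl}(|\psi\rangle) \subseteq \operatorname{Weyl}(|\psi\rangle)^{\independent}$ is then just a restatement of isotropy, so both assertions will follow from the same calculation.

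First I would show $\operatorname{Weyl}(|\psi\rangle)$ is a subspace. For closure under scalar multiplication, if $\mathbf{x}\in\operatorname{Weyl}(|\psi\rangle)$ with $\mathcal{W}_{\mathbf{x}}|\psi\rangle = \omega^{s}|\psi\rangle$, then by \cref{lem:weyl_properties}(c), $\mathcal{W}_{m\mathbf{x}}|\psi\rangle = \mathcal{W}_{\mathbf{x}}^{m}|\psi\rangle = \omega^{ms}|\psi\rangle$ for every $m\in\mathbb{F}_p$. For closure under addition, given $\mathbf{x},\mathbf{y}\in\operatorname{Weyl}(|\psi\rangle)$ with $\mathcal{W}_{\mathbf{x}}|\psi\rangle = \omega^{s_{\mathbf{x}}}|\psi\rangle$ and $\mathcal{W}_{\mathbf{y}}|\psi\rangle = \omega^{s_{\mathbf{y}}}|\psi\rangle$, I use \cref{lem:weyl_properties}(a) in the form $\mathcal{W}_{\mathbf{x}+\mathbf{y}} = \tau^{-[\mathbf{x},\mathbf{y}]}\mathcal{W}_{\mathbf{x}}\mathcal{W}_{\mathbf{y}}$, so that $\mathcal{W}_{\mathbf{x}+\mathbf{y}}|\psi\rangle = \tau^{-[\mathbf{x},\mathbf{y}]}\omega^{s_{\mathbf{x}}+s_{\mathbf{y}}}|\psi\rangle$, which is a scalar multiple of $|\psi\rangle$, hence $\mathbf{x}+\mathbf{y}\in\operatorname{Weyl}(|\psi\rangle)$.

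Next, for isotropy, take any $\mathbf{x},\mathbf{y}\in\operatorname{Weyl}(|\psi\rangle)$ and apply both sides of the identity $\mathcal{W}_{\mathbf{x}}\mathcal{W}_{\mathbf{y}} = \omega^{[\mathbf{x},\mathbf{y}]}\mathcal{W}_{\mathbf{y}}\mathcal{W}_{\mathbf{x}}$ from \cref{lem:weyl_properties}(a) to $|\psi\rangle$. The left-hand side gives $\omega^{s_{\mathbf{x}}+s_{\mathbf{y}}}|\psi\rangle$, while the right-hand side gives $\omega^{[\mathbf{x},\mathbf{y}]+s_{\mathbf{x}}+s_{\mathbf{y}}}|\psi\rangle$. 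Since $|\psi\rangle$ is non-zero, we deduce $\omega^{[\mathbf{x},\mathbf{y}]}=1$, which forces $[\mathbf{x},\mathbf{y}] = 0 \bmod p$. Thus $\operatorname{Weyl}(|\psi\rangle)$ is isotropic, and by definition of the symplectic complement this is exactly $\operatorname{Weyl}(|\psi\rangle)\subseteq\operatorname{Weyl}(|\psi\rangle)^{\independent}$.

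There is no real obstacle here; the only subtle point is that the calculation uses $p$ prime (so that $\omega$ has order $p$ and $\omega^{[\mathbf{x},\mathbf{y}]}=1$ genuinely forces $[\mathbf{x},\mathbf{y}]\equiv 0\pmod p$), and that the subspace argument requires the identity $\mathcal{W}_{\mathbf{x}}^m = \mathcal{W}_{m\mathbf{x}}$, both of which are already recorded in \cref{lem:weyl_properties}.
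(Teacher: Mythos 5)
Your proof is correct and follows essentially the same route as the paper: applying the commutation relation $\mathcal{W}_{\mathbf{x}}\mathcal{W}_{\mathbf{y}} = \omega^{[\mathbf{x},\mathbf{y}]}\mathcal{W}_{\mathbf{y}}\mathcal{W}_{\mathbf{x}}$ to $|\psi\rangle$ to force $[\mathbf{x},\mathbf{y}]=0$, with the inclusion in the symplectic complement being a restatement of isotropy. Your additional verification that $\operatorname{Weyl}(|\psi\rangle)$ is a subspace (using $\mathcal{W}_{\mathbf{x}}^m=\mathcal{W}_{m\mathbf{x}}$ and $\mathcal{W}_{\mathbf{x}+\mathbf{y}}=\tau^{-[\mathbf{x},\mathbf{y}]}\mathcal{W}_{\mathbf{x}}\mathcal{W}_{\mathbf{y}}$, where the phase is indeed a power of $\omega$ since $p$ is odd) is a harmless extra step the paper leaves implicit.
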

\begin{proof}
    Suppose there are $\mathbf{x},\mathbf{y}\in\operatorname{Weyl}(|\psi\rangle)$ such that $[\mathbf{x},\mathbf{y}] \neq 0$. On the one hand, $\mathcal{W}_{\mathbf{x}}\mathcal{W}_{\mathbf{y}}|\psi\rangle = \omega^{s(\mathbf{x}) + s(\mathbf{y})}|\psi\rangle$, while on the other hand, $\mathcal{W}_{\mathbf{x}}\mathcal{W}_{\mathbf{y}}|\psi\rangle = \omega^{[\mathbf{x},\mathbf{y}]}\mathcal{W}_{\mathbf{y}}\mathcal{W}_{\mathbf{x}}|\psi\rangle = \omega^{[\mathbf{x},\mathbf{y}] + s(\mathbf{x}) + s(\mathbf{y})}|\psi\rangle$, which is a contradiction. The inclusion $\operatorname{Weyl}(|\psi\rangle) \subseteq \operatorname{Weyl}(|\psi\rangle)^{\independent}$ follows from the isotropy property.
\end{proof}
\noindent The stabiliser dimension of a state is invariant under Clifford transformations as a corollary of the next result.
\begin{lemma}[{\cite[Lemma~2.1]{gross2021schur}}]
    For each $\mathcal{U}\in \mathscr{C}_p^n$, there is a symplectic matrix $\Gamma\in\mathbb{F}_p^{2n\times 2n}$ and a function $f:\mathbb{F}_p^{2n}\to\mathbb{F}_p$ such that $\mathcal{U}\mathcal{W}_{\mathbf{x}}\mathcal{U}^\dagger = \omega^{f(\mathbf{x})}\mathcal{W}_{\Gamma\mathbf{x}}$ for all $\mathbf{x}\in\mathbb{F}_p^{2n}$.
\end{lemma}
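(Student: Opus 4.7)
The plan is to exploit the defining property of the Clifford group, namely that $\mathcal{U}\mathscr{P}_p^n\mathcal{U}^\dagger = \mathscr{P}_p^n$. First I would observe that for any $\mathbf{x}\in\mathbb{F}_p^{2n}$ the operator $\mathcal{U}\mathcal{W}_{\mathbf{x}}\mathcal{U}^\dagger$ lies in $\mathscr{P}_p^n$, so by the description $\mathscr{P}_p^n = \{\omega^s\mathcal{W}_{\mathbf{y}} : s\in\mathbb{F}_p,\,\mathbf{y}\in\mathbb{F}_p^{2n}\}$ given earlier in the preliminaries, it can be written uniquely as $\omega^{f(\mathbf{x})}\mathcal{W}_{\Gamma(\mathbf{x})}$ for a well-defined map $\Gamma:\mathbb{F}_p^{2n}\to\mathbb{F}_p^{2n}$ and phase function $f:\mathbb{F}_p^{2n}\to\mathbb{F}_p$. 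Uniqueness of the decomposition is guaranteed by the orthogonality relation $\operatorname{Tr}[\mathcal{W}_{\mathbf{y}}^\dagger\mathcal{W}_{\mathbf{x}}] = p^n\cdot\mathbf{1}[\mathbf{x}=\mathbf{y}]$ from \cref{lem:weyl_properties} together with $p$ being odd.

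Next I would establish additivity of $\Gamma$. Applying the conjugation homomorphism $\sigma\mapsto\mathcal{U}\sigma\mathcal{U}^\dagger$ to $\mathcal{W}_{\mathbf{x}}\mathcal{W}_{\mathbf{y}} = \tau^{[\mathbf{x},\mathbf{y}]}\mathcal{W}_{\mathbf{x}+\mathbf{y}}$ (part (a) of \cref{lem:weyl_properties}) and substituting the definition of $\Gamma$ and $f$ yields
\begin{align*}
    \omega^{f(\mathbf{x})+f(\mathbf{y})}\,\tau^{[\Gamma(\mathbf{x}),\Gamma(\mathbf{y})]}\,\mathcal{W}_{\Gamma(\mathbf{x})+\Gamma(\mathbf{y})} \;=\; \tau^{[\mathbf{x},\mathbf{y}]}\,\omega^{f(\mathbf{x}+\mathbf{y})}\,\mathcal{W}_{\Gamma(\mathbf{x}+\mathbf{y})}.
\end{align*}
By uniqueness of the Weyl labelling, the bottom indices must agree, giving $\Gamma(\mathbf{x}+\mathbf{y})=\Gamma(\mathbf{x})+\Gamma(\mathbf{y})$. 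Since $\mathbb{F}_p^{2n}$ is an $\mathbb{F}_p$-vector space with $p$ prime, additivity is automatically $\mathbb{F}_p$-linearity (any integer scalar acts as repeated addition, and we only need scalars in $\{0,1,\dots,p-1\}$), so $\Gamma$ can be represented by a matrix in $\mathbb{F}_p^{2n\times 2n}$.

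Finally, to prove $\Gamma$ is symplectic I would use the commutation identity $\mathcal{W}_{\mathbf{x}}\mathcal{W}_{\mathbf{y}} = \omega^{[\mathbf{x},\mathbf{y}]}\mathcal{W}_{\mathbf{y}}\mathcal{W}_{\mathbf{x}}$ from \cref{lem:weyl_properties}. Conjugation by $\mathcal{U}$ is a group homomorphism, so
\begin{align*}
    (\mathcal{U}\mathcal{W}_{\mathbf{x}}\mathcal{U}^\dagger)(\mathcal{U}\mathcal{W}_{\mathbf{y}}\mathcal{U}^\dagger) \;=\; \omega^{[\mathbf{x},\mathbf{y}]}\,(\mathcal{U}\mathcal{W}_{\mathbf{y}}\mathcal{U}^\dagger)(\mathcal{U}\mathcal{W}_{\mathbf{x}}\mathcal{U}^\dagger).
\end{align*}
Substituting $\mathcal{U}\mathcal{W}_{\mathbf{x}}\mathcal{U}^\dagger = \omega^{f(\mathbf{x})}\mathcal{W}_{\Gamma\mathbf{x}}$ (and likewise for $\mathbf{y}$), the $\omega^{f(\cdot)}$ phases cancel on both sides, and applying the same commutation identity on the left gives $\omega^{[\Gamma\mathbf{x},\Gamma\mathbf{y}]} = \omega^{[\mathbf{x},\mathbf{y}]}$. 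Since $\omega$ has order $p$, this forces $[\Gamma\mathbf{x},\Gamma\mathbf{y}] = [\mathbf{x},\mathbf{y}]$ in $\mathbb{F}_p$, i.e., $\Gamma$ preserves the symplectic form and is therefore a symplectic matrix.

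The only real bookkeeping obstacle is keeping the $\tau$ versus $\omega$ factors straight (recall $\tau^2=\omega$) when comparing the two sides of the additivity identity, but this is purely mechanical; the symplectic property itself falls out cleanly from the commutator calculation because the $f$-phases contribute only to the center $\mathrm{Z}(\mathscr{P}^n_p)$ and cancel automatically.
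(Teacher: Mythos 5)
Your proof is correct and is essentially the standard argument behind the cited result (the paper itself offers no proof, deferring to~\cite[Lemma~2.1]{gross2021schur}): conjugation by $\mathcal{U}$ is an automorphism of $\mathscr{P}_p^n$ fixing the center, the unique decomposition $\omega^{f(\mathbf{x})}\mathcal{W}_{\Gamma(\mathbf{x})}$ follows from the orthogonality of Weyl operators, additivity plus $p$ prime gives linearity, and the commutation relation gives $[\Gamma\mathbf{x},\Gamma\mathbf{y}]=[\mathbf{x},\mathbf{y}]$. The only point worth adding explicitly is that $\Gamma$ is invertible, which follows immediately from form-preservation and the non-degeneracy of $[\cdot,\cdot]$, so calling it a symplectic matrix is justified.
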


\begin{corollary}\label{cor:invariant}
    The stabiliser dimension is invariant under Clifford transformations. In other words, $\operatorname{dim}(\operatorname{Weyl}(\mathcal{U}|\psi\rangle)) = \operatorname{dim}(\operatorname{Weyl}(|\psi\rangle))$ for all $\mathcal{U}\in\mathscr{C}_p^n$ and $|\psi\rangle\in(\mathbb{C}^p)^{\otimes n}$.
\end{corollary}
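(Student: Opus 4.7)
The plan is to use the preceding lemma, which supplies, for every Clifford $\mathcal{U}\in\mathscr{C}_p^n$, a symplectic matrix $\Gamma\in\mathbb{F}_p^{2n\times 2n}$ and a function $f:\mathbb{F}_p^{2n}\to\mathbb{F}_p$ with $\mathcal{U}\mathcal{W}_{\mathbf{x}}\mathcal{U}^\dagger = \omega^{f(\mathbf{x})}\mathcal{W}_{\Gamma\mathbf{x}}$. The goal is to show the stronger statement that $\operatorname{Weyl}(\mathcal{U}|\psi\rangle) = \Gamma\cdot\operatorname{Weyl}(|\psi\rangle)$ as subspaces of $\mathbb{F}_p^{2n}$; since $\Gamma$ is symplectic, it is invertible, and hence the two subspaces have equal dimension.

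First I would prove the forward inclusion $\Gamma\cdot\operatorname{Weyl}(|\psi\rangle)\subseteq\operatorname{Weyl}(\mathcal{U}|\psi\rangle)$. If $\mathbf{x}\in\operatorname{Weyl}(|\psi\rangle)$, then $\mathcal{W}_{\mathbf{x}}|\psi\rangle = \omega^s|\psi\rangle$ for some $s\in\mathbb{F}_p$. Acting with $\mathcal{U}$ and inserting $\mathcal{U}^\dagger\mathcal{U} = \mathbb{I}$, one obtains
\begin{align*}
\omega^{f(\mathbf{x})}\mathcal{W}_{\Gamma\mathbf{x}}(\mathcal{U}|\psi\rangle) = \mathcal{U}\mathcal{W}_{\mathbf{x}}\mathcal{U}^\dagger(\mathcal{U}|\psi\rangle) = \mathcal{U}\mathcal{W}_{\mathbf{x}}|\psi\rangle = \omega^s \mathcal{U}|\psi\rangle,
\end{align*}
so $\mathcal{W}_{\Gamma\mathbf{x}}(\mathcal{U}|\psi\rangle) = \omega^{s-f(\mathbf{x})}(\mathcal{U}|\psi\rangle)$, which witnesses $\Gamma\mathbf{x}\in\operatorname{Weyl}(\mathcal{U}|\psi\rangle)$.

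For the reverse inclusion, I would apply the same argument to the Clifford unitary $\mathcal{U}^\dagger\in\mathscr{C}_p^n$ and the state $\mathcal{U}|\psi\rangle$. The lemma applied to $\mathcal{U}^\dagger$ yields a symplectic matrix $\Gamma'$, and by using the identity $\mathcal{U}^\dagger\mathcal{U}\mathcal{W}_{\mathbf{x}}\mathcal{U}^\dagger\mathcal{U} = \mathcal{W}_{\mathbf{x}}$ one checks that $\Gamma'\Gamma = \mathbb{I}$, so $\Gamma' = \Gamma^{-1}$. The same computation as above, with $\mathcal{U}^\dagger$ in place of $\mathcal{U}$, gives $\Gamma^{-1}\cdot\operatorname{Weyl}(\mathcal{U}|\psi\rangle)\subseteq\operatorname{Weyl}(\mathcal{U}^\dagger\mathcal{U}|\psi\rangle) = \operatorname{Weyl}(|\psi\rangle)$, which upon applying $\Gamma$ yields the desired inclusion $\operatorname{Weyl}(\mathcal{U}|\psi\rangle)\subseteq\Gamma\cdot\operatorname{Weyl}(|\psi\rangle)$.

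Combining the two inclusions gives $\operatorname{Weyl}(\mathcal{U}|\psi\rangle) = \Gamma\cdot\operatorname{Weyl}(|\psi\rangle)$. Since $\Gamma$ is an invertible linear map on $\mathbb{F}_p^{2n}$, the image of a subspace has the same dimension as the subspace itself, so $\operatorname{dim}(\operatorname{Weyl}(\mathcal{U}|\psi\rangle)) = \operatorname{dim}(\operatorname{Weyl}(|\psi\rangle))$. There is no real obstacle here; the only point to be careful about is checking that $\operatorname{Weyl}(|\psi\rangle)$ really is an $\mathbb{F}_p$-subspace (so that the statement about dimension makes sense), which follows immediately from $\mathcal{W}_{\mathbf{x}+\mathbf{y}} = \tau^{-[\mathbf{x},\mathbf{y}]}\mathcal{W}_{\mathbf{x}}\mathcal{W}_{\mathbf{y}}$ and $\mathcal{W}_{m\mathbf{x}} = \mathcal{W}_{\mathbf{x}}^m$ from \cref{lem:weyl_properties}.
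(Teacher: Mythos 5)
Your proof is correct and is exactly the argument the paper intends: the corollary is stated as an immediate consequence of the preceding lemma ($\mathcal{U}\mathcal{W}_{\mathbf{x}}\mathcal{U}^\dagger = \omega^{f(\mathbf{x})}\mathcal{W}_{\Gamma\mathbf{x}}$ with $\Gamma$ symplectic, hence invertible), and your two inclusions plus the subspace check spell out precisely that implicit reasoning. Nothing is missing.
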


The next lemma covers another important property of unsigned stabiliser groups: the stabiliser dimension is at most $n$ and is maximised for stabiliser states. See \cref{app:proofs} for an extended version of \cref{lem:stabiliser_dimension}.
\begin{lemma}\label{lem:stabiliser_dimension}
    Let $\mathcal{G} = \{\omega^{[\mathbf{a},\mathbf{x}]}\mathcal{W}_{\mathbf{x}}:\mathbf{x}\in\mathscr{X}\}$ be a group where $\mathscr{X}\subset\mathbb{F}_p^{2n}$ is an isotropic subspace. Let $V_{\mathcal{G}} \triangleq \{|\Psi\rangle\in(\mathbb{C}^{p})^{\otimes n}: \omega^{[\mathbf{a},\mathbf{x}]}\mathcal{W}_{\mathbf{x}}|\Psi\rangle = |\Psi\rangle, \forall \mathbf{x}\in\mathscr{X}\}$. Then 
    \begin{align*}
        \mathscr{X} \subseteq \operatorname{Weyl}(|\Psi\rangle) \subseteq \operatorname{Weyl}(|\Psi\rangle)^{\independent} \subseteq \mathscr{X}^{\independent}, \qquad \forall|\Psi\rangle\in V_{\mathcal{G}}.
    \end{align*}
    As a consequence, if $\mathcal{S} = \{\omega^{[\mathbf{a},\mathbf{x}]}\mathcal{W}_{\mathbf{x}}:\mathbf{x}\in\mathscr{M}\}$ is a stabiliser group with stabiliser state $|\mathcal{S}\rangle$, then $\operatorname{Weyl}(|\mathcal{S}\rangle) = \mathscr{M}$. 
\end{lemma}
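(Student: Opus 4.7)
The plan is to verify the three inclusions in order, each of which reduces to a one-line argument, and then specialise to the Lagrangian case to collapse the chain into equalities.

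For the first inclusion $\mathscr{X}\subseteq\operatorname{Weyl}(|\Psi\rangle)$, I would just unpack the definition of $V_{\mathcal{G}}$: for every $\mathbf{x}\in\mathscr{X}$ the equation $\omega^{[\mathbf{a},\mathbf{x}]}\mathcal{W}_{\mathbf{x}}|\Psi\rangle = |\Psi\rangle$ rearranges to $\mathcal{W}_{\mathbf{x}}|\Psi\rangle = \omega^{-[\mathbf{a},\mathbf{x}]}|\Psi\rangle$, so $\mathbf{x}\in\operatorname{Weyl}(|\Psi\rangle)$ with phase $s(\mathbf{x})=-[\mathbf{a},\mathbf{x}]$. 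The middle inclusion $\operatorname{Weyl}(|\Psi\rangle)\subseteq\operatorname{Weyl}(|\Psi\rangle)^{\independent}$ is precisely the preceding lemma asserting that $\operatorname{Weyl}(|\Psi\rangle)$ is isotropic. For the third inclusion I would apply symplectic complementation to the first: since $\mathscr{X}\subseteq\operatorname{Weyl}(|\Psi\rangle)$, by the monotonicity property of $(\cdot)^{\independent}$ (item k of the earlier Fact on complements) we get $\operatorname{Weyl}(|\Psi\rangle)^{\independent}\subseteq\mathscr{X}^{\independent}$.

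For the consequence, I would specialise to $\mathscr{X}=\mathscr{M}$, which is Lagrangian by \cref{lem:stabiliser_group_form}, so $\mathscr{M}^{\independent}=\mathscr{M}$. The three-step chain becomes
\begin{align*}
\mathscr{M} \;\subseteq\; \operatorname{Weyl}(|\mathcal{S}\rangle) \;\subseteq\; \operatorname{Weyl}(|\mathcal{S}\rangle)^{\independent} \;\subseteq\; \mathscr{M}^{\independent} \;=\; \mathscr{M},
\end{align*}
which forces equality throughout and in particular gives $\operatorname{Weyl}(|\mathcal{S}\rangle)=\mathscr{M}$.

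I do not anticipate any genuine obstacle: everything is bookkeeping with symplectic complements, combined with the isotropy of unsigned stabiliser groups (already proven) and the fact that a stabiliser group is defined through a Lagrangian subspace. The only subtlety worth flagging is to keep the phase $-[\mathbf{a},\mathbf{x}]$ straight when reading off $\mathcal{W}_{\mathbf{x}}|\Psi\rangle\propto|\Psi\rangle$; this is only needed to confirm $\mathbf{x}\in\operatorname{Weyl}(|\Psi\rangle)$ and does not enter the subspace arithmetic.
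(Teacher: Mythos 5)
Your proposal is correct and follows essentially the same route as the paper: the first inclusion from the definition of $V_{\mathcal{G}}$, the second from the isotropy of $\operatorname{Weyl}(|\Psi\rangle)$, the third by monotonicity of the symplectic complement, and the Lagrangian condition $\mathscr{M}^{\independent}=\mathscr{M}$ collapsing the chain to give $\operatorname{Weyl}(|\mathcal{S}\rangle)=\mathscr{M}$. If anything, you are slightly more explicit than the paper's proof, which leaves the third inclusion and the final equality argument implicit.
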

\begin{proof}
    For any $|\Psi\rangle\in V_{\mathcal{G}}$, the inclusion $\mathscr{X} \subseteq \operatorname{Weyl}(|\Psi\rangle)$ follows from the definition of $\operatorname{Weyl}(|\Psi\rangle)$, while the inclusion $\operatorname{Weyl}(|\Psi\rangle) \subseteq \operatorname{Weyl}(|\Psi\rangle)^{\independent}$ follows from $\operatorname{Weyl}(|\Psi\rangle)$ being isotropic. 
\end{proof}

\begin{corollary}
    For any $|\psi\rangle\in (\mathbb{C}^p)^{\otimes n}$, $\operatorname{dim}(\operatorname{Weyl}(|\psi\rangle)) \leq n$, with equality if and only if $|\psi\rangle$ is a stabiliser state.
\end{corollary}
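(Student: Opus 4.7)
My plan is to prove the two parts of the corollary separately, using the inclusions $\operatorname{Weyl}(|\psi\rangle) \subseteq \operatorname{Weyl}(|\psi\rangle)^{\independent}$ established above, together with the standard symplectic dimension identity $\operatorname{dim}(\mathscr{X}) + \operatorname{dim}(\mathscr{X}^{\independent}) = 2n$.

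For the bound $\operatorname{dim}(\operatorname{Weyl}(|\psi\rangle)) \leq n$, I would simply combine the isotropy of $\operatorname{Weyl}(|\psi\rangle)$ with the symplectic dimension formula: from $\operatorname{Weyl}(|\psi\rangle) \subseteq \operatorname{Weyl}(|\psi\rangle)^{\independent}$ we get $\operatorname{dim}(\operatorname{Weyl}(|\psi\rangle)) \leq \operatorname{dim}(\operatorname{Weyl}(|\psi\rangle)^{\independent}) = 2n - \operatorname{dim}(\operatorname{Weyl}(|\psi\rangle))$, hence $\operatorname{dim}(\operatorname{Weyl}(|\psi\rangle)) \leq n$, with equality precisely when $\operatorname{Weyl}(|\psi\rangle)$ is Lagrangian.

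The forward direction of the equivalence is immediate from \cref{lem:stabiliser_dimension}: if $|\psi\rangle = |\mathcal{S}\rangle$ is a stabiliser state of $\mathcal{S} = \{\omega^{[\mathbf{a},\mathbf{x}]}\mathcal{W}_{\mathbf{x}}:\mathbf{x}\in\mathscr{M}\}$, then $\operatorname{Weyl}(|\mathcal{S}\rangle) = \mathscr{M}$, which is Lagrangian and thus of dimension $n$.

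The converse is the part that requires a small argument. Suppose $\operatorname{dim}(\operatorname{Weyl}(|\psi\rangle)) = n$, so that $\mathscr{L} \triangleq \operatorname{Weyl}(|\psi\rangle)$ is Lagrangian. For each $\mathbf{x}\in\mathscr{L}$, let $s(\mathbf{x})\in\mathbb{F}_p$ be the unique phase such that $\mathcal{W}_{\mathbf{x}}|\psi\rangle = \omega^{s(\mathbf{x})}|\psi\rangle$. Using $\mathcal{W}_{\mathbf{x}}\mathcal{W}_{\mathbf{y}} = \tau^{[\mathbf{x},\mathbf{y}]}\mathcal{W}_{\mathbf{x}+\mathbf{y}}$ from \cref{lem:weyl_properties} and the fact that $[\mathbf{x},\mathbf{y}] = 0$ for all $\mathbf{x},\mathbf{y}\in\mathscr{L}$ (by isotropy), one computes $\omega^{s(\mathbf{x}+\mathbf{y})}|\psi\rangle = \mathcal{W}_{\mathbf{x}+\mathbf{y}}|\psi\rangle = \mathcal{W}_{\mathbf{x}}\mathcal{W}_{\mathbf{y}}|\psi\rangle = \omega^{s(\mathbf{x})+s(\mathbf{y})}|\psi\rangle$, so $s:\mathscr{L}\to\mathbb{F}_p$ is $\mathbb{F}_p$-linear. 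By non-degeneracy of the symplectic form, $s$ extends to a linear functional on all of $\mathbb{F}_p^{2n}$ and can thus be written as $s(\mathbf{x}) = -[\mathbf{a},\mathbf{x}]$ for some $\mathbf{a}\in\mathbb{F}_p^{2n}$. Hence $\omega^{[\mathbf{a},\mathbf{x}]}\mathcal{W}_{\mathbf{x}}|\psi\rangle = |\psi\rangle$ for every $\mathbf{x}\in\mathscr{L}$, so $|\psi\rangle\in V_{\mathcal{S}}$ where $\mathcal{S} = \{\omega^{[\mathbf{a},\mathbf{x}]}\mathcal{W}_{\mathbf{x}}:\mathbf{x}\in\mathscr{L}\}$ is a stabiliser group (since $\mathscr{L}$ is Lagrangian). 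By \cref{lem:uniqueness}, $\operatorname{dim}(V_{\mathcal{S}}) = 1$, and therefore $|\psi\rangle$ is the unique stabiliser state of $\mathcal{S}$.

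The only mildly delicate step is verifying that $s$ is linear and hence representable as $-[\mathbf{a},\cdot]$; once this is in hand, everything else reduces to invoking previously established facts. No significant obstacle is expected.
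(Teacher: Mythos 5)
Your proof is correct, and it follows exactly the route the paper intends (the corollary is stated without an explicit proof): the bound comes from isotropy of $\operatorname{Weyl}(|\psi\rangle)$ plus the identity $\operatorname{dim}(\mathscr{X})+\operatorname{dim}(\mathscr{X}^{\independent})=2n$, the forward direction from \cref{lem:stabiliser_dimension}, and the converse from upgrading the unsigned stabiliser group to a signed one. Your handling of the only nontrivial step — checking that the phase function $s$ is linear on the Lagrangian $\mathscr{L}$ (using $\tau^{[\mathbf{x},\mathbf{y}]}=1$ there) and writing it as $-[\mathbf{a},\cdot]$ by non-degeneracy, then invoking \cref{lem:uniqueness} — is exactly right.
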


Alongside its stabiliser dimension, another important measure of the ``stabiliser complexity'' of a quantum state  is its \emph{stabiliser fidelity}~\cite{bravyi2019simulation}.
\begin{definition}[Stabiliser fidelity]
    Given an $n$-qudit quantum state $|\psi\rangle$, its \emph{stabiliser fidelity} is $F_{\mathcal{S}}(|\psi\rangle) \triangleq \max_{|\mathcal{S}\rangle \in \EuScript{S}_p^n} |\langle \mathcal{S}|\psi\rangle|^2$, where $\EuScript{S}_p^n$ is the set of all stabiliser states in $(\mathbb{C}^p)^{\otimes n}$.
\end{definition}

It is possible to relate the stabiliser fidelity $F_{\mathcal{S}}(|\psi\rangle)$ of a quantum state $|\psi\rangle$ to its characteristic distribution $p_{\psi}$ as shown next, which generalises~\cite[Lemma~5.2 and Corollary~7.4]{grewal2023improved}.
\begin{lemma}\label{lem:stabiliser_fidelity_inequality}
    Given a quantum state $|\psi\rangle\in(\mathbb{C}^p)^{\otimes n}$, let $|\mathscr{M},\mathbf{a}\rangle = \argmax_{|\mathcal{S}\rangle \in \EuScript{S}_p^n}|\langle \mathcal{S}|\psi\rangle|^2$ be the stabiliser state that maximises the stabiliser fidelity $F_{\mathcal{S}}(|\psi\rangle)$. Then
    \begin{align*}
        \sum_{\mathbf{x}\in \mathscr{M}} p_\psi(\mathbf{x}) \leq F_{\mathcal{S}}(|\psi\rangle) \leq \sqrt{\sum_{\mathbf{x}\in \mathscr{M}} p_\psi(\mathbf{x})}.
    \end{align*}
\end{lemma}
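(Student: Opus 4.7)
The plan is to expand $|\psi\rangle$ in the stabiliser basis determined by the Lagrangian subspace $\mathscr{M}$, reduce both sides of the inequality to elementary expressions in the resulting coefficients, and then finish with two lines of classical manipulation.

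Write $|\psi\rangle = \sum_{\mathcal{C}\in\mathbb{F}_p^{2n}/\mathscr{M}}\alpha_{\mathcal{C}}|\mathscr{M},\mathcal{C}\rangle$ in the stabiliser basis attached to $\mathscr{M}$. Because every $|\mathscr{M},\mathcal{C}\rangle$ is itself a stabiliser state and $|\mathscr{M},\mathbf{a}\rangle$ is by assumption the stabiliser state of maximum overlap with $|\psi\rangle$, it follows that
\begin{align*}
    F_{\mathcal{S}}(|\psi\rangle) = |\langle \mathscr{M},\mathbf{a}|\psi\rangle|^2 = |\alpha_\mathbf{a}|^2 = \max_{\mathcal{C}}|\alpha_{\mathcal{C}}|^2.
\end{align*}
Setting $q_\mathcal{C} \triangleq |\alpha_\mathcal{C}|^2$ gives a probability distribution on cosets with $\sum_\mathcal{C} q_\mathcal{C} = 1$ and $F_{\mathcal{S}}(|\psi\rangle) = \max_\mathcal{C} q_\mathcal{C}$.

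Next I would compute $\sum_{\mathbf{x}\in\mathscr{M}}p_\psi(\mathbf{x})$ in the same coefficients. Pick any section $\sigma:\mathbb{F}_p^{2n}/\mathscr{M}\to\mathbb{F}_p^{2n}$. For $\mathbf{x}\in\mathscr{M}$, the stabiliser state $|\mathscr{M},\mathcal{C}\rangle$ is an eigenvector of $\mathcal{W}_{\mathbf{x}}$ with eigenvalue $\omega^{-[\sigma(\mathcal{C}),\mathbf{x}]}$, so
\begin{align*}
    \langle\psi|\mathcal{W}_{\mathbf{x}}|\psi\rangle = \sum_{\mathcal{C}}q_{\mathcal{C}}\,\omega^{-[\sigma(\mathcal{C}),\mathbf{x}]}.
\end{align*}
Squaring, summing over $\mathbf{x}\in\mathscr{M}$, and using \cref{lem:sum} together with the Lagrangian identity $\mathscr{M}^{\independent}=\mathscr{M}$ (so that $\sum_{\mathbf{x}\in\mathscr{M}}\omega^{-[\sigma(\mathcal{C})-\sigma(\mathcal{C}'),\mathbf{x}]} = p^n\mathbf{1}[\mathcal{C}=\mathcal{C}']$) yields
\begin{align*}
    \sum_{\mathbf{x}\in\mathscr{M}}p_\psi(\mathbf{x}) = p^{-n}\sum_{\mathbf{x}\in\mathscr{M}}|\langle\psi|\mathcal{W}_{\mathbf{x}}|\psi\rangle|^2 = \sum_{\mathcal{C}}q_{\mathcal{C}}^2.
\end{align*}

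Both inequalities now follow by elementary arguments on the distribution $\{q_\mathcal{C}\}$. For the upper bound, $(\max_\mathcal{C} q_\mathcal{C})^2 \leq \sum_\mathcal{C} q_\mathcal{C}^2$, hence $F_{\mathcal{S}}(|\psi\rangle) \leq \sqrt{\sum_{\mathbf{x}\in\mathscr{M}}p_\psi(\mathbf{x})}$. For the lower bound, since $\sum_\mathcal{C} q_\mathcal{C} = 1$,
\begin{align*}
    \sum_{\mathbf{x}\in\mathscr{M}}p_\psi(\mathbf{x}) = \sum_{\mathcal{C}}q_{\mathcal{C}}^2 \leq \Big(\max_{\mathcal{C}}q_{\mathcal{C}}\Big)\sum_{\mathcal{C}}q_{\mathcal{C}} = F_{\mathcal{S}}(|\psi\rangle).
\end{align*}

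The only nontrivial ingredient is the eigenvalue identity $\sum_{\mathbf{x}\in\mathscr{M}}|\langle\psi|\mathcal{W}_{\mathbf{x}}|\psi\rangle|^2 = p^n\sum_\mathcal{C} q_\mathcal{C}^2$, which is where Lagrangianness of $\mathscr{M}$ (needed to invoke $\mathscr{M}^{\independent}=\mathscr{M}$ inside \cref{lem:sum}) carries the argument; the remaining steps are an orthonormal-basis expansion and a one-line application of the inequality $q_\mathcal{C} \leq \max_{\mathcal{C}'}q_{\mathcal{C}'}$.
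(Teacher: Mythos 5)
Your proof is correct, and it is organised somewhat differently from the paper's. You prove the single exact identity $\sum_{\mathbf{x}\in\mathscr{M}}p_\psi(\mathbf{x})=\sum_{\mathcal{C}}q_{\mathcal{C}}^2$, where $q_{\mathcal{C}}=|\langle\mathscr{M},\mathcal{C}|\psi\rangle|^2$, by expanding $\langle\psi|\mathcal{W}_{\mathbf{x}}|\psi\rangle$ in the eigenbasis $\{|\mathscr{M},\mathcal{C}\rangle\}$ and invoking \cref{lem:sum} with $\mathscr{M}^{\independent}=\mathscr{M}$; both bounds then drop out of elementary inequalities for the distribution $\{q_{\mathcal{C}}\}$, namely $(\max_{\mathcal{C}}q_{\mathcal{C}})^2\le\sum_{\mathcal{C}}q_{\mathcal{C}}^2\le\max_{\mathcal{C}}q_{\mathcal{C}}$. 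The paper instead argues the two bounds separately: the upper bound comes from writing $F_{\mathcal{S}}(|\psi\rangle)=p^{-n}\big|\sum_{\mathbf{x}\in\mathscr{M}}\omega^{[\mathbf{a},\mathbf{x}]}\langle\psi|\mathcal{W}_{\mathbf{x}}|\psi\rangle\big|$ via the projector form of $|\mathscr{M},\mathbf{a}\rangle\langle\mathscr{M},\mathbf{a}|$ and applying Cauchy--Schwarz, while the lower bound uses $\max_{\mathcal{C}}q_{\mathcal{C}}\ge\sum_{\mathcal{C}}q_{\mathcal{C}}^2$ and then evaluates $\sum_{\mathcal{C}}q_{\mathcal{C}}^2$ by expanding each $q_{\mathcal{C}}$ in Weyl operators and averaging over coset representatives --- which is the same identity you prove, reached from the dual direction. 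The ingredients (the stabiliser basis attached to $\mathscr{M}$, \cref{lem:sum}, Lagrangianness) are identical, but your version is more unified and makes transparent that both inequalities are statements about the induced distribution over cosets; the paper's upper-bound route is marginally lighter in that it only needs the projector formula for the single maximising state rather than the full basis expansion. One small point worth stating explicitly in your write-up: the cross terms $\langle\mathscr{M},\mathcal{C}'|\mathcal{W}_{\mathbf{x}}|\mathscr{M},\mathcal{C}\rangle$ vanish for $\mathcal{C}\neq\mathcal{C}'$ because each basis state is an eigenvector of $\mathcal{W}_{\mathbf{x}}$ and the basis is orthonormal, and the eigenvalue $\omega^{-[\sigma(\mathcal{C}),\mathbf{x}]}$ is independent of the choice of section precisely because $\mathscr{M}$ is isotropic; you use both facts implicitly and they are what make your Fourier-type computation legitimate.
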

\begin{proof}
    For the upper bound,
    \begin{align*}
        F_{\mathcal{S}}(|\psi\rangle) &= |\langle \psi|\mathscr{M},\mathbf{a}\rangle\langle \mathscr{M},\mathbf{a}|\psi\rangle| \\
        &= \frac{1}{p^{n}}\left|\sum_{\mathbf{x}\in\mathscr{M}}\omega^{[\mathbf{a},\mathbf{x}]}\langle \psi|\mathcal{W}_{\mathbf{x}}|\psi\rangle \right|  \tag{\cref{eq:stabiliser_form}}\\
        &\leq \sqrt{\frac{1}{p^n}\sum_{\mathbf{x}\in \mathscr{M}} |\langle \psi|\mathcal{W}_{\mathbf{x}}|\psi\rangle|^2} \tag{Cauchy-Schwarz}\\
        &= \sqrt{\sum_{\mathbf{x}\in \mathscr{M}} p_\psi(\mathbf{x})}.
    \end{align*}
    For the lower bound, recall that $\{|\mathscr{M},\mathcal{C}\rangle\}_{\mathcal{C}\in\mathbb{F}_p^{2n}/\mathscr{M}}$ forms a basis of stabiliser states. Then
    \begin{align*}
        F_{\mathcal{S}}(|\psi\rangle) &= \max_{\mathcal{C}\in\mathbb{F}_p^{2n}/\mathscr{M}}\langle \mathscr{M},\mathcal{C}|\psi |\mathscr{M},\mathcal{C}\rangle \\
        &\geq \sum_{\mathcal{C}\in\mathbb{F}_p^{2n}/\mathscr{M}}\langle \mathscr{M},\mathcal{C}|\psi |\mathscr{M},\mathcal{C}\rangle^2 \tag{$\sum_{\mathcal{C}\in\mathbb{F}_p^{2n}/\mathscr{M}}\langle \mathscr{M},\mathcal{C}|\psi |\mathscr{M},\mathcal{C}\rangle = 1$}\\
        &= \frac{1}{p^{2n}}\sum_{\mathcal{C}\in\mathbb{F}_p^{2n}/\mathscr{M}} \sum_{\mathbf{x},\mathbf{y}\in\mathscr{M}}\omega^{[\mathbf{b}_{\mathcal{C}},\mathbf{x}+\mathbf{y}]}\langle \psi|\mathcal{W}_{\mathbf{x}}|\psi\rangle\langle\psi|\mathcal{W}_{\mathbf{y}}|\psi\rangle,
    \end{align*}
    where $\mathbf{b}_{\mathcal{C}}$ is any element of the coset $\mathcal{C}$. By taking the average over all elements $\mathbf{b}\in\mathcal{C}$, then
    \begin{align*}
        F_{\mathcal{S}}(|\psi\rangle) &\geq \frac{1}{p^{3n}}\sum_{\mathcal{C}\in\mathbb{F}_p^{2n}/\mathscr{M}} \sum_{\mathbf{b}\in\mathcal{C}}\sum_{\mathbf{x},\mathbf{y}\in\mathscr{M}}\omega^{[\mathbf{b},\mathbf{x}+\mathbf{y}]}\langle \psi|\mathcal{W}_{\mathbf{x}}|\psi\rangle\langle\psi|\mathcal{W}_{\mathbf{y}}|\psi\rangle\\
        &= \frac{1}{p^{3n}} \sum_{\mathbf{b}\in\mathbb{F}_p^{2n}}\sum_{\mathbf{x},\mathbf{y}\in\mathscr{M}}\omega^{[\mathbf{b},\mathbf{x}+\mathbf{y}]}\langle \psi|\mathcal{W}_{\mathbf{x}}|\psi\rangle\langle\psi|\mathcal{W}_{\mathbf{y}}|\psi\rangle \\
        &= \frac{1}{p^{n}} \sum_{\mathbf{x}\in\mathscr{M}}\langle \psi|\mathcal{W}_{\mathbf{x}}|\psi\rangle\langle\psi|\mathcal{W}_{-\mathbf{x}}|\psi\rangle \tag{\cref{lem:sum}}\\
        &= \sum_{\mathbf{x}\in \mathscr{M}} p_\psi(\mathbf{x}). \hspace{8cm}\qedhere
    \end{align*}
\end{proof}

\section{Exploring Bell difference sampling}
\label{sec:bell_difference_sampling}

In this section, we generalise and explore Bell (difference) sampling as proposed by~\cite{montanaro2017learning,gross2021schur} to qudits, ultimately uncovering its limitations. We start with defining the generalised Bell states. Let $|\Phi^+\rangle \triangleq p^{-n/2}\sum_{\mathbf{q}\in\mathbb{F}_p^n} |\mathbf{q}\rangle^{\otimes 2}$ be a maximally entangled state. The generalised Bell states over $\mathbb{F}_p^n$ are defined as
\begin{align*}
    |\mathcal{W}_{\mathbf{x}}\rangle \triangleq (\mathcal{W}_{\mathbf{x}}\otimes\mathbb{I}) |\Phi^+\rangle = \frac{1}{\sqrt{p^n}}\sum_{\mathbf{q}\in\mathbb{F}_p^n} \omega^{\langle \mathbf{q}, \mathbf{v}\rangle + 2^{-1}\langle \mathbf{v},\mathbf{w}\rangle}|\mathbf{q}+\mathbf{w}\rangle|\mathbf{q}\rangle, \quad\forall\mathbf{x}=(\mathbf{v},\mathbf{w})\in\mathbb{F}_p^{2n}.
\end{align*}
Notice that $|\mathcal{W}_{\mathbf{x}}\rangle = p^{-n/2}\operatorname{vec}(\mathcal{W}_{\mathbf{x}})$ since $\mathcal{W}_{\mathbf{x}} = \sum_{\mathbf{q}\in\mathbb{F}_p^n} \omega^{\langle \mathbf{q}, \mathbf{v}\rangle + 2^{-1}\langle \mathbf{v},\mathbf{w}\rangle}|\mathbf{q}+\mathbf{w}\rangle\langle\mathbf{q}|$. As previously mentioned, the re-scaled Weyl operators $\{p^{-n/2}\mathcal{W}_{\mathbf{x}}:\mathbf{x}\in\mathbb{F}_p^{2n}\}$ form an orthonormal basis in the space of operators, therefore the states $\{|\mathcal{W}_{\mathbf{x}}\rangle:\mathbf{x}\in\mathbb{F}_p^{2n}\}$ form an orthonormal basis in the Hilbert space $(\mathbb{C}^p)^{\otimes n}\otimes (\mathbb{C}^p)^{\otimes n}$. Indeed, $\langle \mathcal{W}_{\mathbf{x}}|\mathcal{W}_{\mathbf{y}}\rangle = p^{-n}\operatorname{Tr}[\mathcal{W}_{\mathbf{x}}^\dagger \mathcal{W}_{\mathbf{y}}] = \mathbf{1}[\mathbf{x} = \mathbf{y}]$. In addition, for all $\mathbf{x}=(\mathbf{v},\mathbf{w})\in\mathbb{F}_p^{2n}$,
\begin{align*}
    (\mathcal{W}_{\mathbf{x}}\otimes\mathcal{W}_{J(\mathbf{x})}) |\Phi^+\rangle = (\mathcal{W}_{\mathbf{v},\mathbf{w}}\otimes\mathcal{W}_{-\mathbf{v},\mathbf{w}}) |\Phi^+\rangle = \frac{1}{\sqrt{p^n}}\sum_{\mathbf{q}\in\mathbb{F}_p^n} |\mathbf{q}+\mathbf{w}\rangle|\mathbf{q}+\mathbf{w}\rangle
    = |\Phi^+\rangle.
\end{align*}
The following representation of $|\mathcal{W}_{\mathbf{x}}\rangle\langle \mathcal{W}_{\mathbf{x}}|$ will be useful.
\begin{lemma}\label{lem:bell_basis}
    For all $\mathbf{x}\in\mathbb{F}_p^{2n}$,
    \begin{align*}
        |\mathcal{W}_{\mathbf{x}}\rangle\langle \mathcal{W}_{\mathbf{x}}| = \frac{1}{p^{2n}}\sum_{\mathbf{y}\in\mathbb{F}_p^{2n}} \omega^{[\mathbf{x},\mathbf{y}]} \mathcal{W}_{\mathbf{y}}\otimes \mathcal{W}_{J(\mathbf{y})}.
    \end{align*}
\end{lemma}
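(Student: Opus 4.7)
The plan is to reduce the general statement to the special case $\mathbf{x}=\mathbf{0}$, i.e., to an expansion of the maximally entangled projector $|\Phi^+\rangle\langle \Phi^+|$ in the basis of Weyl operators, and then transfer that expansion by Clifford conjugation. Write $|\mathcal{W}_{\mathbf{x}}\rangle = (\mathcal{W}_{\mathbf{x}}\otimes \mathbb{I})|\Phi^+\rangle$, so that
\[
|\mathcal{W}_{\mathbf{x}}\rangle\langle \mathcal{W}_{\mathbf{x}}| = (\mathcal{W}_{\mathbf{x}}\otimes \mathbb{I})\,|\Phi^+\rangle\langle \Phi^+|\,(\mathcal{W}_{\mathbf{x}}^\dagger\otimes \mathbb{I}).
\]
Once the $\mathbf{x}=\mathbf{0}$ identity is established, the commutation relation $\mathcal{W}_{\mathbf{x}}\mathcal{W}_{\mathbf{y}}\mathcal{W}_{\mathbf{x}}^\dagger = \omega^{[\mathbf{x},\mathbf{y}]}\mathcal{W}_{\mathbf{y}}$ from \cref{lem:weyl_properties} inserts the factor $\omega^{[\mathbf{x},\mathbf{y}]}$ into each term, yielding the desired identity.

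The main step is therefore to show
\[
|\Phi^+\rangle\langle \Phi^+| = \frac{1}{p^{2n}}\sum_{\mathbf{y}\in\mathbb{F}_p^{2n}} \mathcal{W}_{\mathbf{y}}\otimes \mathcal{W}_{J(\mathbf{y})}.
\]
I would do this by expanding $|\Phi^+\rangle\langle\Phi^+|$ in the orthonormal Weyl basis $\{p^{-n}\mathcal{W}_{\mathbf{a}}\otimes \mathcal{W}_{\mathbf{b}}\}_{\mathbf{a},\mathbf{b}\in\mathbb{F}_p^{2n}}$ of $\operatorname{End}((\mathbb{C}^p)^{\otimes 2n})$, which gives coefficients $c_{\mathbf{a},\mathbf{b}} = p^{-2n}\langle \Phi^+|\mathcal{W}_{\mathbf{a}}^\dagger\otimes \mathcal{W}_{\mathbf{b}}^\dagger|\Phi^+\rangle$. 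The key input is the identity $(\mathcal{W}_{\mathbf{y}}\otimes \mathcal{W}_{J(\mathbf{y})})|\Phi^+\rangle = |\Phi^+\rangle$ already verified in the excerpt; in dagger form it gives $c_{\mathbf{a},J(\mathbf{a})} = p^{-2n}$ immediately. To see that all other coefficients vanish, I would apply the invariance $(\mathcal{W}_{\mathbf{z}}\otimes \mathcal{W}_{J(\mathbf{z})})|\Phi^+\rangle\langle \Phi^+|(\mathcal{W}_{\mathbf{z}}\otimes \mathcal{W}_{J(\mathbf{z})})^\dagger = |\Phi^+\rangle\langle \Phi^+|$ term-by-term in the expansion; the commutation relation then forces $c_{\mathbf{a},\mathbf{b}}\big(1 - \omega^{[\mathbf{z},\mathbf{a}] + [J(\mathbf{z}),\mathbf{b}]}\big)=0$ for every $\mathbf{z}$. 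Using $[J(\mathbf{z}),\mathbf{b}] = -[\mathbf{z},J(\mathbf{b})]$ from the involution lemma, non-degeneracy of the symplectic product forces $\mathbf{b}=J(\mathbf{a})$ whenever $c_{\mathbf{a},\mathbf{b}}\neq 0$.

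Substituting this into the displayed equation for $|\mathcal{W}_{\mathbf{x}}\rangle\langle \mathcal{W}_{\mathbf{x}}|$ and applying the commutation relation to $\mathcal{W}_{\mathbf{x}}\mathcal{W}_{\mathbf{y}}\mathcal{W}_{\mathbf{x}}^\dagger$ completes the proof. I do not expect a real obstacle: the only subtlety is the algebraic manipulation with $J$ to identify which pairs $(\mathbf{a},\mathbf{b})$ survive, and this is handled cleanly by the identity $[J(\mathbf{z}),\mathbf{b}] = -[\mathbf{z},J(\mathbf{b})]$ together with non-degeneracy. A shorter alternative — simply computing $(\mathcal{W}_{\mathbf{y}}\otimes \mathcal{W}_{J(\mathbf{y})})|\mathbf{q}\rangle|\mathbf{q}'\rangle$ in the computational basis and summing — would also work but is more tedious; I prefer the basis-expansion route because it makes the role of the involution transparent.
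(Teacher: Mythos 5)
Your proposal is correct: the reduction to $\mathbf{x}=\mathbf{0}$ via $|\mathcal{W}_{\mathbf{x}}\rangle\langle\mathcal{W}_{\mathbf{x}}| = (\mathcal{W}_{\mathbf{x}}\otimes\mathbb{I})\,\Phi^+\,(\mathcal{W}_{\mathbf{x}}^\dagger\otimes\mathbb{I})$ is valid, the coefficient computation $c_{\mathbf{a},J(\mathbf{a})}=p^{-2n}$ follows from $(\mathcal{W}_{\mathbf{a}}\otimes\mathcal{W}_{J(\mathbf{a})})|\Phi^+\rangle=|\Phi^+\rangle$, the vanishing of all other coefficients follows from conjugation-invariance of $\Phi^+$ under $\mathcal{W}_{\mathbf{z}}\otimes\mathcal{W}_{J(\mathbf{z})}$, the identity $[J(\mathbf{z}),\mathbf{b}]=-[\mathbf{z},J(\mathbf{b})]$ (a consequence of $[J(\mathbf{x}),J(\mathbf{y})]=-[\mathbf{x},\mathbf{y}]$ and $J\circ J=\mathrm{id}$) and non-degeneracy, and the final conjugation $\mathcal{W}_{\mathbf{x}}\mathcal{W}_{\mathbf{y}}\mathcal{W}_{\mathbf{x}}^\dagger=\omega^{[\mathbf{x},\mathbf{y}]}\mathcal{W}_{\mathbf{y}}$ inserts exactly the claimed phase. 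However, your route differs from the paper's. The paper proves the identity for all $\mathbf{x}$ at once by computing the matrix elements of the right-hand side in the orthonormal Bell basis, showing $\sum_{\mathbf{y}}\omega^{[\mathbf{x},\mathbf{y}]}\langle\mathcal{W}_{\mathbf{z}}|\mathcal{W}_{\mathbf{y}}\otimes\mathcal{W}_{J(\mathbf{y})}|\mathcal{W}_{\mathbf{z}'}\rangle = p^{2n}\mathbf{1}[\mathbf{z}=\mathbf{z}'=\mathbf{x}]$; after pulling the Weyl operators through $|\Phi^+\rangle$ and invoking orthonormality of the Bell states, the character sum is collapsed with \cref{lem:sum} and non-degeneracy. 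You instead work in operator space: you expand the rank-one projector $|\Phi^+\rangle\langle\Phi^+|$ in the Hilbert--Schmidt-orthonormal Weyl basis, use a commutant/stabiliser argument to show only the $J$-paired terms $\mathcal{W}_{\mathbf{a}}\otimes\mathcal{W}_{J(\mathbf{a})}$ survive, and then obtain general $\mathbf{x}$ by covariance. Both arguments ultimately rest on the same two inputs (invariance of $|\Phi^+\rangle$ under $\mathcal{W}_{\mathbf{y}}\otimes\mathcal{W}_{J(\mathbf{y})}$ and non-degeneracy of the symplectic product), but the paper's version is a single self-contained computation, while yours makes the structural role of the involution more transparent — it exhibits $\{\mathcal{W}_{\mathbf{y}}\otimes\mathcal{W}_{J(\mathbf{y})}\}$ as the stabiliser of $|\Phi^+\rangle$ and derives the general case as a Pauli-covariance statement — at the cost of an extra reduction step and an appeal to uniqueness of the operator-basis expansion.
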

\begin{proof}
    For all $\mathbf{z}, \mathbf{z'} \in\mathbb{F}_p^{2n}$,
    \begin{align*}
        \sum_{\mathbf{y}\in\mathbb{F}_p^{2n}} \omega^{[\mathbf{x},\mathbf{y}]}\langle\mathcal{W}_{\mathbf{z}}|\mathcal{W}_{\mathbf{y}}\otimes \mathcal{W}_{J(\mathbf{y})}|\mathcal{W}_{\mathbf{z}'}\rangle &= \sum_{\mathbf{y}\in\mathbb{F}_p^{2n}} \omega^{[\mathbf{x},\mathbf{y}]}\langle\Phi^+|(\mathcal{W}_{\mathbf{z}}^\dagger\otimes\mathbb{I})(\mathcal{W}_{\mathbf{y}}\otimes \mathcal{W}_{J(\mathbf{y})})(\mathcal{W}_{\mathbf{z}'}\otimes\mathbb{I})|\Phi^+\rangle\\
        &= \sum_{\mathbf{y}\in\mathbb{F}_p^{2n}} \omega^{[\mathbf{y},\mathbf{z}'-\mathbf{x}]} \langle\Phi^+|(\mathcal{W}_{\mathbf{z}}^\dagger \mathcal{W}_{\mathbf{z}'}\otimes\mathbb{I})(\mathcal{W}_{\mathbf{y}}\otimes \mathcal{W}_{J(\mathbf{y})})|\Phi^+\rangle\\
        &= \sum_{\mathbf{y}\in\mathbb{F}_p^{2n}} \omega^{[\mathbf{y},\mathbf{z}'-\mathbf{x}]} \langle\Phi^+|(\mathcal{W}_{\mathbf{z}}^\dagger \mathcal{W}_{\mathbf{z}'}\otimes\mathbb{I})|\Phi^+\rangle\\
        &= \mathbf{1}[\mathbf{z} = \mathbf{z}']\sum_{\mathbf{y}\in\mathbb{F}_p^{2n}} \omega^{[\mathbf{y},\mathbf{z}'-\mathbf{x}]}\\
        &= p^{2n}\mathbf{1}[\mathbf{z} = \mathbf{z}' = \mathbf{x}],
    \end{align*}
    using \cref{lem:sum} together with $(\mathbb{F}_p^{2n})^{\independent} = \{\mathbf{0}\}$ (the symplectic product is non-degenerate).
\end{proof}

Given a pure state of $2n$ qudits divided into systems $A_1,\dots,A_n$ and $B_1,\dots,B_n$, we call \emph{Bell sampling} the operation of measuring each pair $A_i B_i$ of qudits in the generalised Bell basis, which returns a vector in $\mathbb{F}_p^{2n}$. A similar operation, called \emph{Bell difference sampling}, performs Bell sampling twice and subtracts the results from each other.
\begin{definition}[Bell sampling]
    Bell sampling is the projective measurement given by $ |\mathcal{W}_{\mathbf{x}}\rangle\langle \mathcal{W}_{\mathbf{x}}|$ for $\mathbf{x}\in\mathbb{F}_p^{2n}$.
\end{definition}

\begin{definition}[Bell difference sampling]\label{def:Bell_difference_sampling}
    Bell difference sampling is defined as performing Bell sampling twice and subtracting the results from each other (modulo $p$). More precisely, it is the projective measurement given by
    \begin{align*}
        \Pi_{\mathbf{x}} = \sum_{\mathbf{y}\in\mathbb{F}_p^{2n}}|\mathcal{W}_{\mathbf{y}}\rangle\langle \mathcal{W}_{\mathbf{y}}|\otimes |\mathcal{W}_{\mathbf{x}+\mathbf{y}}\rangle\langle \mathcal{W}_{\mathbf{x}+\mathbf{y}}|, \quad \forall\mathbf{x}\in\mathbb{F}_p^{2n}.
    \end{align*}
\end{definition}
While Bell (difference) sampling was originally considered for copies of stabiliser states in $(\mathbb{C}^2)^{\otimes n}$~\cite{montanaro2017learning}, Gross, Nezami, and Walter~\cite{gross2021schur} proved that it is a meaningful procedure for copies of any pure state $|\psi\rangle \in (\mathbb{C}^2)^{\otimes n}$ and corresponds to sampling from the convolution of the characteristic distribution $p_\psi$ with itself (up to normalisation). Here we extend their results for $p>2$ prime. 
\begin{lemma}\label{lem:lem2}
    Given two pure states $|\psi_1\rangle,|\psi_2\rangle\in(\mathbb{C}^{p})^{\otimes n}$, Bell sampling on $|\psi_1\rangle|\psi_2\rangle$ returns $\mathbf{x}\in\mathbb{F}_p^{2n}$ with probability $p^{-n}|\langle \psi_1|\mathcal{W}_{\mathbf{x}}|\psi_2^\ast\rangle|^2$. In particular, if $|\psi_1\rangle = |\psi_2^\ast\rangle = |\psi\rangle$, Bell sampling on $|\psi\rangle|\psi^\ast\rangle$ corresponds to sampling from the distribution $p_\psi(\mathbf{x})$.
\end{lemma}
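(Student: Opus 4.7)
The plan is to unpack the Bell measurement using the vectorisation identity $|\mathcal{W}_{\mathbf{x}}\rangle = p^{-n/2}\operatorname{vec}(\mathcal{W}_{\mathbf{x}})$ and the fact that any product state $|\psi_1\rangle|\psi_2\rangle$ can itself be written as the vectorisation of a rank-one operator. Specifically, I would first observe that writing $|\psi_2\rangle = \sum_{\mathbf{w}} b_{\mathbf{w}}|\mathbf{w}\rangle$ gives $\langle\psi_2^\ast| = \sum_{\mathbf{w}} b_{\mathbf{w}}\langle\mathbf{w}|$, so that $|\psi_1\rangle|\psi_2\rangle = \operatorname{vec}(|\psi_1\rangle\langle\psi_2^\ast|)$ by direct expansion in the computational basis. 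This is the key rewriting that converts the inner product with a Bell state into a trace.

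Next I would use the inner-product-preserving property of $\operatorname{vec}$ stated in the Preliminaries, namely $\langle \operatorname{vec}(\mathbf{A})|\operatorname{vec}(\mathbf{B})\rangle = \operatorname{Tr}[\mathbf{A}^\dagger \mathbf{B}]$, to compute
\begin{align*}
\langle \mathcal{W}_{\mathbf{x}}|\psi_1\rangle|\psi_2\rangle
= p^{-n/2}\langle \operatorname{vec}(\mathcal{W}_{\mathbf{x}})|\operatorname{vec}(|\psi_1\rangle\langle\psi_2^\ast|)\rangle
= p^{-n/2}\operatorname{Tr}\bigl[\mathcal{W}_{\mathbf{x}}^\dagger |\psi_1\rangle\langle\psi_2^\ast|\bigr]
= p^{-n/2}\langle \psi_2^\ast|\mathcal{W}_{\mathbf{x}}^\dagger|\psi_1\rangle.
\end{align*}
Taking the modulus squared and using $|\langle \psi_2^\ast|\mathcal{W}_{\mathbf{x}}^\dagger|\psi_1\rangle|^2 = |\langle \psi_1|\mathcal{W}_{\mathbf{x}}|\psi_2^\ast\rangle|^2$ yields the claimed probability $p^{-n}|\langle \psi_1|\mathcal{W}_{\mathbf{x}}|\psi_2^\ast\rangle|^2$ of the outcome $\mathbf{x}$ in the Bell basis measurement.

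For the specialisation, I would set $|\psi_2\rangle = |\psi^\ast\rangle$, so that $|\psi_2^\ast\rangle = |\psi\rangle$ (complex conjugation is an involution on coefficients), and the probability becomes $p^{-n}|\langle \psi|\mathcal{W}_{\mathbf{x}}|\psi\rangle|^2$, which is precisely $p_\psi(\mathbf{x})$ by the definition of the characteristic distribution given in \cref{sec:Weyl_operators}.

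There is no real obstacle; the only minor care needed is tracking the conjugate in the identification $|\psi_1\rangle|\psi_2\rangle = \operatorname{vec}(|\psi_1\rangle\langle\psi_2^\ast|)$, which is why the complex-conjugate state $|\psi_2^\ast\rangle$ rather than $|\psi_2\rangle$ appears inside the matrix element in the general formula. The appearance of $|\psi^\ast\rangle$ in the second copy for the simpler special case is precisely the known qubit phenomenon that motivated the discussion before \cref{thr:algorithm_bell_sampling}.
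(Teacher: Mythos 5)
Your proof is correct and is essentially the same argument as the paper's: the paper's one-line proof also rests on the identification $|\psi_1\rangle|\psi_2\rangle = \operatorname{vec}(|\psi_1\rangle\langle\psi_2^\ast|)$ together with $|\mathcal{W}_{\mathbf{x}}\rangle = p^{-n/2}\operatorname{vec}(\mathcal{W}_{\mathbf{x}})$ and the trace/inner-product identity for $\operatorname{vec}$. You have merely spelled out the conjugation bookkeeping and the specialisation $|\psi_2\rangle=|\psi^\ast\rangle$, both of which are handled correctly.
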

\begin{proof}
    We have $|\langle \mathcal{W}_{\mathbf{x}} |(|\psi_1\rangle|\psi_2\rangle)|^2 = p^{-n} |{\operatorname{Tr}}(\mathcal{W}_{\mathbf{x}}^\dagger |\psi_1\rangle\langle \psi_2^\ast| )|^2 = p^{-n}|\langle \psi_1|\mathcal{W}_{\mathbf{x}}|\psi_2^\ast\rangle|^2$, since $|\psi_1\rangle|\psi_2\rangle = \operatorname{vec}(|\psi_1\rangle\langle\psi_2^\ast|)$.
\end{proof}

\begin{theorem}\label{thr:bell_difference}
    Let $|\psi\rangle\in(\mathbb{C}^p)^{\otimes n}$ be a pure state. Bell difference sampling on $|\psi\rangle^{\otimes 4}$ corresponds to sampling from the distribution
    \begin{align*}
        b_\psi(\mathbf{x}) \triangleq \operatorname{Tr}[\Pi_{\mathbf{x}}\psi^{\otimes 4}] = p^{2n}(p_\psi\ast j_\psi)(\mathbf{x}) = \sum_{\mathbf{y}\in\mathbb{F}_p^{2n}} p_\psi(\mathbf{y})j_\psi(\mathbf{x} - \mathbf{y}).
    \end{align*}
\end{theorem}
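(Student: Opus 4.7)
The plan is to compute $\operatorname{Tr}[\Pi_{\mathbf{x}}\psi^{\otimes 4}]$ directly using the spectral form of the Bell projectors given by \cref{lem:bell_basis}, and then recognize the result as the symplectic Fourier transform of $p_\psi \cdot j_\psi$, which by \cref{lem:invariant,lem:convolution_Fourier} is exactly $p^{2n}$ times the convolution $p_\psi \ast j_\psi$.

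First I would split the measurement outcome probability into a product: since the two Bell measurements act on disjoint pairs of copies,
\begin{align*}
    \operatorname{Tr}[\Pi_{\mathbf{x}}\psi^{\otimes 4}] = \sum_{\mathbf{y}\in\mathbb{F}_p^{2n}} f(\mathbf{y})\,f(\mathbf{x}+\mathbf{y}), \qquad f(\mathbf{y}) \triangleq \operatorname{Tr}[|\mathcal{W}_{\mathbf{y}}\rangle\langle \mathcal{W}_{\mathbf{y}}|\psi^{\otimes 2}].
\end{align*}
Applying \cref{lem:bell_basis} to each factor expresses $f(\mathbf{y}) = p^{-2n}\sum_{\mathbf{z}} \omega^{[\mathbf{y},\mathbf{z}]}\alpha(\mathbf{z})\alpha(J(\mathbf{z}))$, where $\alpha(\mathbf{z})\triangleq \langle\psi|\mathcal{W}_{\mathbf{z}}|\psi\rangle$.

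Next I would substitute this expansion into the sum and perform the sum over $\mathbf{y}$. Writing $\mathbf{z}_1,\mathbf{z}_2$ for the two Fourier indices, the phase becomes $\omega^{[\mathbf{y},\mathbf{z}_1+\mathbf{z}_2]+[\mathbf{x},\mathbf{z}_2]}$, and summing over $\mathbf{y}$ using \cref{lem:sum} on the full space $\mathbb{F}_p^{2n}$ yields $p^{2n}\cdot\mathbf{1}[\mathbf{z}_2 = -\mathbf{z}_1]$. Setting $\mathbf{z}=\mathbf{z}_2$ and using $\alpha(-\mathbf{z})=\overline{\alpha(\mathbf{z})}$ (since $\mathcal{W}_{\mathbf{z}}^\dagger = \mathcal{W}_{-\mathbf{z}}$) together with the linearity of $J$, one obtains
\begin{align*}
    \operatorname{Tr}[\Pi_{\mathbf{x}}\psi^{\otimes 4}] = p^{-2n}\sum_{\mathbf{z}\in\mathbb{F}_p^{2n}}\omega^{[\mathbf{x},\mathbf{z}]}|\alpha(\mathbf{z})|^2|\alpha(J(\mathbf{z}))|^2 = \sum_{\mathbf{z}\in\mathbb{F}_p^{2n}}\omega^{[\mathbf{x},\mathbf{z}]}p_\psi(\mathbf{z})\,j_\psi(\mathbf{z}),
\end{align*}
using $|\alpha(\mathbf{z})|^2 = p^n p_\psi(\mathbf{z})$ and $p_\psi(J(\mathbf{z})) = j_\psi(\mathbf{z})$.

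Finally I would match this with the claimed convolution. By the Fourier inversion formula and \cref{lem:convolution_Fourier}, $(p_\psi\ast j_\psi)(\mathbf{x}) = \sum_{\mathbf{z}}\omega^{[\mathbf{x},\mathbf{z}]}\widehat{p_\psi}(\mathbf{z})\widehat{j_\psi}(\mathbf{z})$; invoking \cref{lem:invariant} ($\widehat{p_\psi}=p^{-n}p_\psi$ and $\widehat{j_\psi}=p^{-n}j_\psi$) gives $p^{2n}(p_\psi\ast j_\psi)(\mathbf{x}) = \sum_{\mathbf{z}}\omega^{[\mathbf{x},\mathbf{z}]}p_\psi(\mathbf{z})j_\psi(\mathbf{z})$, which coincides with the expression derived above. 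Unfolding the definition of convolution produces $\sum_{\mathbf{y}}p_\psi(\mathbf{y})j_\psi(\mathbf{x}-\mathbf{y})$, completing the proof. The only subtle point is the bookkeeping around the involution $J$ and the conjugation relation $\alpha(-\mathbf{z})=\overline{\alpha(\mathbf{z})}$; once those are kept straight, the argument is essentially a one-line computation followed by a standard Fourier identification.
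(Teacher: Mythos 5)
Your proposal is correct and follows essentially the same route as the paper: both expand the Bell projectors via \cref{lem:bell_basis}, collapse the sum over $\mathbf{y}$ using non-degeneracy of the symplectic form to arrive at $\sum_{\mathbf{z}}\omega^{[\mathbf{x},\mathbf{z}]}p_\psi(\mathbf{z})j_\psi(\mathbf{z})$, and then identify this with $p^{2n}(p_\psi\ast j_\psi)(\mathbf{x})$ using \cref{lem:invariant}. The only cosmetic differences are that you factor the trace into the two single-Bell-measurement probabilities before expanding (the paper expands $\Pi_{\mathbf{x}}$ as an operator first) and that you invoke \cref{lem:convolution_Fourier} plus Fourier inversion where the paper uses the generalised Parseval identity of \cref{lem:parseval}.
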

We call the distribution $b_\psi(\mathbf{x})$ the \emph{involuted Weyl distribution} of the pure state $|\psi\rangle$.
\begin{proof}
    First note that, according to \cref{lem:bell_basis},
    \begin{align*}
        \Pi_{\mathbf{x}} &= \frac{1}{p^{4n}}\sum_{\mathbf{y},\mathbf{z},\mathbf{z}'\in\mathbb{F}_p^{2n}} \omega^{[\mathbf{y},\mathbf{z}']}\omega^{[\mathbf{x}+\mathbf{y},\mathbf{z}]} \mathcal{W}_{\mathbf{z}'}\otimes \mathcal{W}_{J(\mathbf{z}')}\otimes \mathcal{W}_{\mathbf{z}}\otimes \mathcal{W}_{J(\mathbf{z})}\\
        &= \frac{1}{p^{2n}} \sum_{\mathbf{z},\mathbf{z}'\in\mathbb{F}_p^{2n}}\mathbf{1}[\mathbf{z}' = -\mathbf{z}] \omega^{[\mathbf{x},\mathbf{z}]} \mathcal{W}_{\mathbf{z}'}\otimes \mathcal{W}_{J(\mathbf{z}')}\otimes \mathcal{W}_{\mathbf{z}}\otimes \mathcal{W}_{J(\mathbf{z})} \\
        &= \frac{1}{p^{2n}}\sum_{\mathbf{z}\in\mathbb{F}_p^{2n}}\omega^{[\mathbf{x},\mathbf{z}]}\mathcal{W}_{\mathbf{z}}^\dagger\otimes \mathcal{W}_{J(\mathbf{z})}^\dagger \otimes \mathcal{W}_{\mathbf{z}} \otimes \mathcal{W}_{J(\mathbf{z})}.
    \end{align*}
    Therefore
    \begin{align*}
        \operatorname{Tr}[\Pi_{\mathbf{x}}\psi^{\otimes 4}] &= \frac{1}{p^{2n}}\sum_{\mathbf{y}\in\mathbb{F}_p^{2n}} \omega^{[\mathbf{x},\mathbf{y}]}\operatorname{Tr}[\mathcal{W}_{\mathbf{y}}^\dagger\psi\otimes \mathcal{W}_{J(\mathbf{y})}^\dagger \psi\otimes \mathcal{W}_{\mathbf{y}}\psi \otimes \mathcal{W}_{J(\mathbf{y})}\psi] \\
        &= \sum_{\mathbf{y}\in\mathbb{F}_p^{2n}} \omega^{[\mathbf{x},\mathbf{y}]} p_\psi(\mathbf{y})p_\psi(J(\mathbf{y}))\\
        &= p^{2n}\sum_{\mathbf{y}\in\mathbb{F}_p^{2n}}  \widehat{p_\psi}(\mathbf{x}-\mathbf{y}) \widehat{j_\psi}(\mathbf{y}) \tag{Parseval's identity, \cref{lem:parseval}}\\
        &= \sum_{\mathbf{y}\in\mathbb{F}_p^{2n}}  p_\psi(\mathbf{x}-\mathbf{y}) j_\psi(\mathbf{y}) \tag{\cref{lem:invariant}}\\
        &= p^{2n} (p_\psi\ast j_\psi)(\mathbf{x}). \hspace{8cm}\qedhere
    \end{align*}
\end{proof}

In the case of qubits ($p=2$), the involution is trivial, i.e., $J(\mathbf{x}) = \mathbf{x}$ for all $\mathbf{x}\in\mathbb{F}_p^{2n}$. Thus $j_\psi = p_\psi$ and Bell difference sampling corresponds to sampling from the distribution $4^n (p_\psi\ast p_\psi)$. \cref{thr:bell_difference} is a clear generalisation of this fact and starts to show to role of the involution on higher dimensions.

We now explore the distributions $p_\psi$ and $b_\psi$ in more details. More specifically, we start by showing the following identities regarding the mass on a subspace $\mathscr{X}\subseteq\mathbb{F}_2^{2n}$ under $p_\psi$ or $b_\psi$, which are simple generalisations of~\cite[Theorem~3.1 and Theorem~3.2]{grewal2023improved} for $p>2$ prime.
\begin{lemma}\label{lem:properties_chateristic_Weyl}
    Let $\mathscr{X}\subseteq\mathbb{F}_p^{2n}$ be a subspace. Then
    \begin{align*}
        \sum_{\mathbf{x}\in\mathscr{X}}p_\psi(\mathbf{x}) = \frac{|\mathscr{X}|}{p^n}\sum_{\mathbf{y}\in\mathscr{X}^{\independent}} p_\psi(\mathbf{y}) \qquad\quad\text{and}\qquad\quad \frac{1}{|\mathscr{X}|}\sum_{\mathbf{x}\in\mathscr{X}}b_\psi(\mathbf{x}) = \sum_{\mathbf{y}\in\mathscr{X}^{\independent}} p_\psi(\mathbf{y})j_\psi(\mathbf{y}).
    \end{align*}
\end{lemma}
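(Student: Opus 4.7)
The plan is to prove both identities by symplectic Fourier analysis, using the self-invariance of $p_\psi$ and $j_\psi$ under the symplectic Fourier transform (\cref{lem:invariant}), the orthogonality relation for characters summed over a subspace (\cref{lem:sum}), and the convolution-product identity (\cref{lem:convolution_Fourier}). Both identities are essentially manifestations of the Poisson summation formula in the symplectic setting.

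For the first identity, I would expand $p_\psi$ via its symplectic Fourier inversion, $p_\psi(\mathbf{x}) = \sum_{\mathbf{y}\in\mathbb{F}_p^{2n}} \omega^{[\mathbf{x},\mathbf{y}]}\widehat{p_\psi}(\mathbf{y})$, interchange the sums, and apply \cref{lem:sum} to the inner sum $\sum_{\mathbf{x}\in\mathscr{X}}\omega^{[\mathbf{x},\mathbf{y}]}$, which equals $|\mathscr{X}|$ if $\mathbf{y}\in\mathscr{X}^{\independent}$ and $0$ otherwise. This leaves $|\mathscr{X}|\sum_{\mathbf{y}\in\mathscr{X}^{\independent}}\widehat{p_\psi}(\mathbf{y})$, and substituting $\widehat{p_\psi}(\mathbf{y}) = p^{-n}p_\psi(\mathbf{y})$ from \cref{lem:invariant} yields the claim.

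For the second identity, I would first use \cref{thr:bell_difference} to write $b_\psi = p^{2n}(p_\psi\ast j_\psi)$, so by \cref{lem:convolution_Fourier} its symplectic Fourier transform is $\widehat{b_\psi} = p^{2n}\widehat{p_\psi}\,\widehat{j_\psi}$. By \cref{lem:invariant}, this simplifies nicely to $\widehat{b_\psi}(\mathbf{y}) = p^{2n}\cdot p^{-n} p_\psi(\mathbf{y})\cdot p^{-n} j_\psi(\mathbf{y}) = p_\psi(\mathbf{y}) j_\psi(\mathbf{y})$. Then I would repeat the same Fourier-inversion/sum-swap trick as in the first part to obtain $\sum_{\mathbf{x}\in\mathscr{X}}b_\psi(\mathbf{x}) = |\mathscr{X}|\sum_{\mathbf{y}\in\mathscr{X}^{\independent}}\widehat{b_\psi}(\mathbf{y}) = |\mathscr{X}|\sum_{\mathbf{y}\in\mathscr{X}^{\independent}} p_\psi(\mathbf{y}) j_\psi(\mathbf{y})$, and divide by $|\mathscr{X}|$.

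There is essentially no obstacle here beyond bookkeeping, since everything reduces to applying the two ingredients (invariance under Fourier transform and the character-sum lemma) in the right order. The only thing to keep an eye on is making sure the factors of $p^n$ and $p^{2n}$ balance correctly when combining \cref{lem:invariant} and \cref{lem:convolution_Fourier} in the second part — the clean cancellation $p^{2n}\cdot p^{-n}\cdot p^{-n}=1$ is what makes the right-hand side have no normalisation factor.
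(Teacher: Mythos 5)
Your proposal is correct and follows essentially the same route as the paper's proof: symplectic Fourier inversion plus the character-sum lemma (\cref{lem:sum}), the invariance $\widehat{p_\psi}=p^{-n}p_\psi$, $\widehat{j_\psi}=p^{-n}j_\psi$ (\cref{lem:invariant}), and $\widehat{b_\psi}=p^{2n}\widehat{p_\psi}\widehat{j_\psi}$ via \cref{thr:bell_difference} and \cref{lem:convolution_Fourier}. The normalisation bookkeeping you flag, $p^{2n}\cdot p^{-n}\cdot p^{-n}=1$, is exactly how the paper's calculation closes, so there is nothing to add.
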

\begin{proof}
    For the second identity,
    \begin{align*}
        \sum_{\mathbf{x}\in\mathscr{X}} b_\psi(\mathbf{x}) &= \sum_{\mathbf{x}\in\mathscr{X}}\sum_{\mathbf{y}\in\mathbb{F}_p^{2n}} \omega^{[\mathbf{x},\mathbf{y}]} \widehat{b_\psi}(\mathbf{y}) \\
        &= p^{2n} \sum_{\mathbf{x}\in\mathscr{X}}\sum_{\mathbf{y}\in\mathbb{F}_p^{2n}} \omega^{[\mathbf{x},\mathbf{y}]} \widehat{p_\psi}(\mathbf{y}) \widehat{j_\psi}(\mathbf{y}) \tag{\cref{thr:bell_difference} and \cref{lem:convolution_Fourier}}\\
        &= \sum_{\mathbf{x}\in\mathscr{X}}\sum_{\mathbf{y}\in\mathbb{F}_p^{2n}} \omega^{[\mathbf{x},\mathbf{y}]} p_\psi(\mathbf{y}) j_\psi(\mathbf{y}) \tag{\cref{lem:invariant}}\\
        &= |\mathscr{X}| \sum_{\mathbf{y}\in\mathscr{X}^{\independent}} p_\psi(\mathbf{y}) j_\psi(\mathbf{y}). \tag{\cref{lem:sum}} 
    \end{align*}
    Regarding the first identity, we similarly have
    \begin{align*}
        \sum_{\mathbf{x}\in\mathscr{X}}p_\psi(\mathbf{x}) &= \sum_{\mathbf{x}\in\mathscr{X}}\sum_{\mathbf{y}\in\mathbb{F}_p^{2n}} \omega^{[\mathbf{x},\mathbf{y}]} \widehat{p_\psi}(\mathbf{y}) 
        = \frac{1}{p^n}\sum_{\mathbf{x}\in\mathscr{X}}\sum_{\mathbf{y}\in\mathbb{F}_p^{2n}} \omega^{[\mathbf{x},\mathbf{y}]} p_\psi(\mathbf{y}) %\tag{Lemma~\ref{lem:invariant}}\\
        = \frac{|\mathscr{X}|}{p^n}\sum_{\mathbf{y}\in\mathscr{X}^{\independent}} p_\psi(\mathbf{y}), %\tag{Lemma~\ref{lem:sum}}
    \end{align*}
    where we used \cref{lem:invariant} and \cref{lem:sum}.
\end{proof}

\cref{lem:properties_chateristic_Weyl} allows one to further explore the characteristic and Weyl distributions, e.g.\ we can generalise \cref{lem:stabiliser_fidelity_inequality}.
\begin{lemma}\label{lem:stabiliser_fidelity_generalisation}
    Given a quantum state $|\psi\rangle\in(\mathbb{C}^p)^{\otimes n}$, let $|\mathscr{M},\mathbf{a}\rangle = \argmax_{|\mathcal{S}\rangle \in \EuScript{S}_p^n}|\langle \mathcal{S}|\psi\rangle|^2$ be the stabiliser state that maximises the stabiliser fidelity $F_{\mathcal{S}}(|\psi\rangle)$. Let $\mathscr{X},\mathscr{Y}$ be subspaces such that $\mathscr{X}\subseteq \mathscr{M}\subseteq\mathscr{Y}$. Then
    \begin{align*}
        \frac{p^n}{|\mathscr{Y}|}\sum_{\mathbf{x}\in \mathscr{Y}} p_\psi(\mathbf{x}) \leq F_{\mathcal{S}}(|\psi\rangle) \leq \sqrt{\frac{p^n}{|\mathscr{X}|}\sum_{\mathbf{x}\in \mathscr{X}} p_\psi(\mathbf{x})}.
    \end{align*}    
\end{lemma}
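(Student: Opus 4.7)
\begin{proofs}
The plan is to reduce to \cref{lem:stabiliser_fidelity_inequality}, which already bounds $F_{\mathcal{S}}(|\psi\rangle)$ in terms of $\sum_{\mathbf{x}\in\mathscr{M}} p_\psi(\mathbf{x})$, and then convert these bounds into statements involving the subspaces $\mathscr{X}$ and $\mathscr{Y}$ via the identity
\begin{align*}
\sum_{\mathbf{x}\in\mathscr{Z}}p_\psi(\mathbf{x}) = \frac{|\mathscr{Z}|}{p^n}\sum_{\mathbf{y}\in\mathscr{Z}^{\independent}} p_\psi(\mathbf{y})
\end{align*}
from \cref{lem:properties_chateristic_Weyl}, together with the nonnegativity of $p_\psi$ and the self-duality $\mathscr{M}^{\independent}=\mathscr{M}$ (which holds because $\mathscr{M}$ is Lagrangian, and implies $|\mathscr{M}|=p^n$).

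For the upper bound, I would start from the inclusion $\mathscr{X}\subseteq\mathscr{M}$, take symplectic complements to get $\mathscr{M}=\mathscr{M}^{\independent}\subseteq\mathscr{X}^{\independent}$, and therefore
\begin{align*}
\sum_{\mathbf{x}\in\mathscr{X}} p_\psi(\mathbf{x}) = \frac{|\mathscr{X}|}{p^n}\sum_{\mathbf{y}\in\mathscr{X}^{\independent}} p_\psi(\mathbf{y}) \geq \frac{|\mathscr{X}|}{p^n}\sum_{\mathbf{y}\in\mathscr{M}} p_\psi(\mathbf{y}).
\end{align*}
Rearranging gives $\sum_{\mathbf{y}\in\mathscr{M}} p_\psi(\mathbf{y}) \leq \frac{p^n}{|\mathscr{X}|}\sum_{\mathbf{x}\in\mathscr{X}} p_\psi(\mathbf{x})$, and plugging this into the square-root upper bound of \cref{lem:stabiliser_fidelity_inequality} yields the desired estimate.

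For the lower bound, the argument is dual: from $\mathscr{M}\subseteq\mathscr{Y}$ I would deduce $\mathscr{Y}^{\independent}\subseteq\mathscr{M}^{\independent}=\mathscr{M}$, and therefore
\begin{align*}
\sum_{\mathbf{x}\in\mathscr{Y}} p_\psi(\mathbf{x}) = \frac{|\mathscr{Y}|}{p^n}\sum_{\mathbf{y}\in\mathscr{Y}^{\independent}} p_\psi(\mathbf{y}) \leq \frac{|\mathscr{Y}|}{p^n}\sum_{\mathbf{y}\in\mathscr{M}} p_\psi(\mathbf{y}).
\end{align*}
Combined with the linear lower bound $\sum_{\mathbf{y}\in\mathscr{M}} p_\psi(\mathbf{y}) \leq F_{\mathcal{S}}(|\psi\rangle)$ of \cref{lem:stabiliser_fidelity_inequality}, this gives $\frac{p^n}{|\mathscr{Y}|}\sum_{\mathbf{x}\in\mathscr{Y}} p_\psi(\mathbf{x}) \leq F_{\mathcal{S}}(|\psi\rangle)$, completing the proof. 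There is no real obstacle here; the only subtle point is recognising that one must apply \cref{lem:properties_chateristic_Weyl} in opposite directions for the two bounds (using $\mathscr{M}\subseteq\mathscr{X}^{\independent}$ in one case and $\mathscr{Y}^{\independent}\subseteq\mathscr{M}$ in the other), which is precisely what the hypothesis $\mathscr{X}\subseteq\mathscr{M}\subseteq\mathscr{Y}$ ensures.
\end{proofs}
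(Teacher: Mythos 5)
Your proposal is correct and follows essentially the same route as the paper's proof: both reduce to \cref{lem:stabiliser_fidelity_inequality}, use the identity of \cref{lem:properties_chateristic_Weyl} together with $\mathscr{M}^{\independent}=\mathscr{M}$, and exploit the inclusions $\mathscr{M}\subseteq\mathscr{X}^{\independent}$ and $\mathscr{Y}^{\independent}\subseteq\mathscr{M}$ with the nonnegativity of $p_\psi$. The only cosmetic difference is that the paper states the upper bound as $\frac{p^n}{|\mathscr{X}|}\sum_{\mathbf{x}\in\mathscr{X}}p_\psi(\mathbf{x})\geq F_{\mathcal{S}}(|\psi\rangle)^2$ before taking the square root, which is exactly your rearrangement.
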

\begin{proof}
    Using \cref{lem:stabiliser_fidelity_inequality} and \cref{lem:properties_chateristic_Weyl},
    \begin{align*}
        \frac{p^n}{|\mathscr{Y}|}\sum_{\mathbf{x}\in \mathscr{Y}} p_\psi(\mathbf{x}) &= \sum_{\mathbf{x}\in \mathscr{Y}^{\independent}} p_\psi(\mathbf{x}) \leq \sum_{\mathbf{x}\in \mathscr{M}} p_\psi(\mathbf{x}) \leq F_{\mathcal{S}}(|\psi\rangle),\\
        \frac{p^n}{|\mathscr{X}|}\sum_{\mathbf{x}\in \mathscr{X}} p_\psi(\mathbf{x}) &= \sum_{\mathbf{x}\in \mathscr{X}^{\independent}} p_\psi(\mathbf{x}) \geq \sum_{\mathbf{x}\in \mathscr{M}} p_\psi(\mathbf{x}) \geq F_{\mathcal{S}}(|\psi\rangle)^2. \qedhere
    \end{align*}
\end{proof}

As another consequence of \cref{lem:properties_chateristic_Weyl}, the support of $p_\psi$ and $b_\psi$ are related to $\operatorname{Weyl}(|\psi\rangle)$, which is a generalisation of~\cite[Lemma~4.3 and Corollary~4.4]{grewal2023improved}.
\begin{lemma}\label{lem:support}
    Let $|\psi\rangle\in(\mathbb{C}^p)^{\otimes n}$ and let its characteristic and Weyl distributions $p_\psi$ and $b_\psi$, respectively. The support of $p_\psi$ is contained in $\operatorname{Weyl}(|\psi\rangle)^{\independent}$, while the support of $b_\psi$ is contained in $\operatorname{Weyl}(|\psi\rangle)^{\independent} + J(\operatorname{Weyl}(|\psi\rangle)^{\independent})$.
\end{lemma}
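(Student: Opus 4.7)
The plan is to handle the two containments separately, with the first being the workhorse and the second following by a short combinatorial argument from the convolution formula in \cref{thr:bell_difference}.

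For the support of $p_\psi$, I would argue by contrapositive: fix $\mathbf{x}\notin \operatorname{Weyl}(|\psi\rangle)^{\independent}$ and produce some $\mathbf{y}\in\operatorname{Weyl}(|\psi\rangle)$ with $[\mathbf{x},\mathbf{y}]\neq 0$, which exists by the very definition of symplectic complement. Since $\mathbf{y}\in\operatorname{Weyl}(|\psi\rangle)$, we have $\mathcal{W}_{\mathbf{y}}|\psi\rangle = \omega^{s(\mathbf{y})}|\psi\rangle$ for some $s(\mathbf{y})\in\mathbb{F}_p$. The key step is to use the commutation identity from \cref{lem:weyl_properties}, namely $\mathcal{W}_{\mathbf{y}}\mathcal{W}_{\mathbf{x}}\mathcal{W}_{\mathbf{y}}^\dagger = \omega^{-[\mathbf{x},\mathbf{y}]}\mathcal{W}_{\mathbf{x}}$, to rewrite
\begin{align*}
    \langle \psi|\mathcal{W}_{\mathbf{x}}|\psi\rangle = \langle \psi|\mathcal{W}_{\mathbf{y}}\mathcal{W}_{\mathbf{x}}\mathcal{W}_{\mathbf{y}}^\dagger|\psi\rangle = \omega^{-[\mathbf{x},\mathbf{y}]} \langle \psi|\mathcal{W}_{\mathbf{x}}|\psi\rangle,
\end{align*}
where the first equality uses that $\mathcal{W}_{\mathbf{y}}^\dagger|\psi\rangle = \omega^{-s(\mathbf{y})}|\psi\rangle$ and the phases cancel. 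Since $[\mathbf{x},\mathbf{y}]\neq 0$ in $\mathbb{F}_p$, we have $\omega^{-[\mathbf{x},\mathbf{y}]}\neq 1$, hence $\langle\psi|\mathcal{W}_{\mathbf{x}}|\psi\rangle = 0$ and so $p_\psi(\mathbf{x}) = p^{-n}|\langle\psi|\mathcal{W}_{\mathbf{x}}|\psi\rangle|^2 = 0$.

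For the support of $b_\psi$, I would invoke \cref{thr:bell_difference}, which says $b_\psi(\mathbf{x}) = \sum_{\mathbf{y}\in\mathbb{F}_p^{2n}} p_\psi(\mathbf{y}) j_\psi(\mathbf{x}-\mathbf{y})$. If $b_\psi(\mathbf{x})\neq 0$, then at least one summand is nonzero, i.e., there exists $\mathbf{y}$ with $p_\psi(\mathbf{y})\neq 0$ and $j_\psi(\mathbf{x}-\mathbf{y}) = p_\psi(J(\mathbf{x}-\mathbf{y}))\neq 0$. By the first part of the lemma, this gives $\mathbf{y}\in \operatorname{Weyl}(|\psi\rangle)^{\independent}$ and $J(\mathbf{x}-\mathbf{y})\in \operatorname{Weyl}(|\psi\rangle)^{\independent}$. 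Applying the involution $J$ (which satisfies $J\circ J = \operatorname{id}$) to the latter yields $\mathbf{x}-\mathbf{y}\in J(\operatorname{Weyl}(|\psi\rangle)^{\independent})$, and therefore $\mathbf{x} = \mathbf{y} + (\mathbf{x}-\mathbf{y}) \in \operatorname{Weyl}(|\psi\rangle)^{\independent} + J(\operatorname{Weyl}(|\psi\rangle)^{\independent})$.

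I do not anticipate any real obstacle here: the first statement is a standard ``Pauli-eigenstate forces vanishing off-diagonal'' argument adapted from qubits to qudits (where non-triviality of $\omega^{-[\mathbf{x},\mathbf{y}]}$ for any nonzero symplectic inner product is immediate since $p$ is prime), and the second statement is a mechanical consequence of the convolution formula combined with $J$ being an involution. The only minor subtlety worth flagging is making sure the commutation relation and the phase bookkeeping are applied cleanly, since on qudits conjugation by $\mathcal{W}_{\mathbf{y}}$ picks up the factor $\omega^{-[\mathbf{x},\mathbf{y}]}$ rather than merely $\pm 1$ as in the qubit case.
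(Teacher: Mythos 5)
Your proof is correct, but the first half takes a genuinely different route from the paper. For the containment $\supp(p_\psi)\subseteq \operatorname{Weyl}(|\psi\rangle)^{\independent}$, the paper does not argue pointwise: it invokes the symplectic Fourier duality identity of \cref{lem:properties_chateristic_Weyl} to show that the total mass of $p_\psi$ on $\operatorname{Weyl}(|\psi\rangle)^{\independent}$ equals $\frac{|\operatorname{Weyl}(|\psi\rangle)^{\independent}|}{p^n}\cdot\frac{|\operatorname{Weyl}(|\psi\rangle)|}{p^n}=1$, and since $p_\psi$ is a probability distribution this forces the support containment. Your argument instead conjugates $\mathcal{W}_{\mathbf{x}}$ by a stabilising $\mathcal{W}_{\mathbf{y}}$ with $[\mathbf{x},\mathbf{y}]\neq 0$ and deduces $\langle\psi|\mathcal{W}_{\mathbf{x}}|\psi\rangle=\omega^{-[\mathbf{x},\mathbf{y}]}\langle\psi|\mathcal{W}_{\mathbf{x}}|\psi\rangle=0$; the phase bookkeeping ($\mathcal{W}_{\mathbf{y}}^\dagger|\psi\rangle=\omega^{-s(\mathbf{y})}|\psi\rangle$, so the eigenphases cancel in the sandwich) is handled correctly, and the existence of such a $\mathbf{y}$ is exactly the definition of the symplectic complement. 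Your version is more elementary and self-contained (it needs only \cref{lem:weyl_properties}, not the Fourier machinery), and it yields the slightly sharper pointwise statement that the Weyl expectation itself vanishes; the paper's version is shorter given that \cref{lem:properties_chateristic_Weyl} has already been established and fits the symplectic-Fourier theme of the section. Your treatment of $b_\psi$ via the convolution formula of \cref{thr:bell_difference} is the contrapositive of the paper's argument and is essentially identical to it.
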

\begin{proof}
    We show that the mass of $p_\psi$ on $\operatorname{Weyl}(|\psi\rangle)^{\independent}$ equals $1$.
    \begin{align*}
        \sum_{\mathbf{x}\in\operatorname{Weyl}(|\psi\rangle)^{\independent}} p_\psi(\mathbf{x}) &= \frac{|\operatorname{Weyl}(|\psi\rangle)^{\independent}|}{p^n}\sum_{\mathbf{y}\in \operatorname{Weyl}(|\psi\rangle)} p_\psi(\mathbf{y}) \tag{\cref{lem:properties_chateristic_Weyl}}\\
        &= \frac{|\operatorname{Weyl}(|\psi\rangle)^{\independent}|}{p^n}\frac{|\operatorname{Weyl}(|\psi\rangle)|}{p^n} \tag{$p_\psi(\mathbf{x}) = p^{-n}$ iff $\mathbf{x}\in\operatorname{Weyl}(|\psi\rangle)$} = 1.
    \end{align*}
    By the same token, the involution characteristic distribution $j_\psi(\mathbf{x}) = p_\psi(J(\mathbf{x}))$ is supported on $J(\operatorname{Weyl}(|\psi\rangle)^{\independent})$ since $J(\operatorname{Weyl}(|\psi\rangle))^{\independent} = J(\operatorname{Weyl}(|\psi\rangle)^{\independent})$ and $|J(\operatorname{Weyl}(|\psi\rangle))| = |\operatorname{Weyl}(|\psi\rangle)|$. On the other hand, according to \cref{thr:bell_difference},
    \begin{align*}
        b_\psi(\mathbf{x}) = \sum_{\mathbf{y}\in\mathbb{F}_p^{2n}} p_\psi(\mathbf{y})j_\psi(\mathbf{x} - \mathbf{y}) &= \sum_{\mathbf{y}\in\operatorname{Weyl}(|\psi\rangle)^{\independent}} p_\psi(\mathbf{y}) j_\psi(\mathbf{x} - \mathbf{y}).
    \end{align*}
    %
    %Now let $\mathbf{x}\notin \operatorname{Weyl}(|\psi\rangle)^\perp + J(\operatorname{Weyl}(|\psi\rangle)^\perp)$. 
    For $\mathbf{y}\in\operatorname{Weyl}(|\psi\rangle)^{\independent}$, $\mathbf{x}\notin \operatorname{Weyl}(|\psi\rangle)^{\independent} + J(\operatorname{Weyl}(|\psi\rangle)^{\independent}) \implies \mathbf{x} - \mathbf{y}\notin J(\operatorname{Weyl}(|\psi\rangle)^{\independent})$. This means that, if $\mathbf{x}\notin \operatorname{Weyl}(|\psi\rangle)^{\independent} + J(\operatorname{Weyl}(|\psi\rangle)^{\independent})$ and $\mathbf{y}\in\operatorname{Weyl}(|\psi\rangle)^{\independent}$, then $j_\psi(\mathbf{x} - \mathbf{y}) = 0$, and so $b_\psi(\mathbf{x}) = 0$. Therefore, the support of $b_\psi$ is contained in $\operatorname{Weyl}(|\psi\rangle)^{\independent} + J(\operatorname{Weyl}(|\psi\rangle)^{\independent})$.
\end{proof}

\cref{lem:support} is quite powerful. As we have previously seen, the characteristic function of a stabiliser state $|\mathcal{S}\rangle$ is uniformly distributed on its associated Lagrangian subspace $\mathscr{M}$, i.e., $p_{\mathcal{S}}(\mathbf{x}) = p^{-n}$ if $\mathbf{x}\in\mathscr{M}$ and $0$ otherwise. This fact also follows from \cref{lem:support} since $p_{\mathcal{S}}$ is supported on $\operatorname{Weyl}(|\mathcal{S}\rangle)^{\independent} = \mathscr{M}^{\independent} = \mathscr{M}$ and $p_{\mathcal{S}}(\mathbf{x}) = p^{-n}$ for $\mathbf{x}\in\operatorname{Weyl}(|\mathcal{S}\rangle) = \mathscr{M}$. \cref{lem:support} also implies that $b_{\mathcal{S}}$ is supported on $\operatorname{Weyl}(|\mathcal{S}\rangle)^{\independent} + J(\operatorname{Weyl}(|\mathcal{S}\rangle)^{\independent}) = \mathscr{M} + J(\mathscr{M})$, where $J(\mathscr{M}) \triangleq \{J(\mathbf{x})\in\mathbb{F}_p^{2n}:\mathbf{x}\in\mathscr{M}\}$. This is formalised in the next result, which is the main result of this section.
\begin{theorem}\label{lem:bell_sampling}
    Let $\mathcal{S} = \{\omega^{[\mathbf{a},\mathbf{x}]}\mathcal{W}_{\mathbf{x}}:\mathbf{x}\in\mathscr{M}\}$ be a stabiliser group with stabiliser state $|\mathcal{S}\rangle\in(\mathbb{C}^p)^{\otimes n}$. Let $\mathbf{V},\mathbf{W}\in\mathbb{F}_p^{n\times n}$ be matrices such that $\mathscr{M} = {\operatorname{col}}\big(\big[\begin{smallmatrix} \mathbf{V} \\ \mathbf{W} \end{smallmatrix}\big]\big)$. Bell difference sampling on $|\mathcal{S}\rangle^{\otimes 4}$
    returns $\mathbf{x}\in\mathbb{F}_p^{2n}$ with probability
    \begin{align*}
        b_{\mathcal{S}}(\mathbf{x}) &= \begin{cases}
            p^{-2n}|\mathscr{M}\cap J(\mathscr{M})| &\quad\quad\text{if}~\mathbf{x}\in \mathscr{M} + J(\mathscr{M}),\\
            0 &\quad\quad\text{otherwise},
        \end{cases} \\
        &= \begin{cases}
            |\operatorname{col}(\mathbf{V})|^{-1}|\operatorname{col}(\mathbf{W})|^{-1} &\text{if}~\mathbf{x}\in\operatorname{col}(\mathbf{V})\times\operatorname{col}(\mathbf{W}),\\
            0 &\text{otherwise}.
            \end{cases}
    \end{align*}
\end{theorem}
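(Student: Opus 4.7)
The plan is to start from \cref{thr:bell_difference}, which already tells us that $b_{\mathcal{S}}(\mathbf{x}) = \sum_{\mathbf{y}\in\mathbb{F}_p^{2n}} p_{\mathcal{S}}(\mathbf{y})\, j_{\mathcal{S}}(\mathbf{x} - \mathbf{y})$, and combine it with the explicit form of $p_{\mathcal{S}}$ from \cref{eq:stabiliser_distribution}. Since $p_{\mathcal{S}}$ is $p^{-n}$ on $\mathscr{M}$ and $0$ elsewhere, and $j_{\mathcal{S}}(\mathbf{z}) = p_{\mathcal{S}}(J(\mathbf{z}))$ is $p^{-n}$ on $J(\mathscr{M})$ and $0$ elsewhere (using that $J$ is an involution, so $J(\mathbf{z})\in\mathscr{M}\iff \mathbf{z}\in J(\mathscr{M})$), the convolution collapses to
\begin{align*}
    b_{\mathcal{S}}(\mathbf{x}) = p^{-2n}\,\big|\{\mathbf{y}\in\mathscr{M} : \mathbf{x}-\mathbf{y}\in J(\mathscr{M})\}\big|.
\end{align*}

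Next I would classify the two cases. If $\mathbf{x}\notin \mathscr{M}+J(\mathscr{M})$, no $\mathbf{y}\in\mathscr{M}$ can satisfy $\mathbf{x}-\mathbf{y}\in J(\mathscr{M})$, so $b_{\mathcal{S}}(\mathbf{x}) = 0$. If $\mathbf{x}\in \mathscr{M}+J(\mathscr{M})$, fix any particular decomposition $\mathbf{x} = \mathbf{y}_0 + \mathbf{z}_0$ with $\mathbf{y}_0\in\mathscr{M}$ and $\mathbf{z}_0\in J(\mathscr{M})$. Any other valid $\mathbf{y}$ satisfies $\mathbf{y}-\mathbf{y}_0\in\mathscr{M}$ and $(\mathbf{x}-\mathbf{y})-\mathbf{z}_0 = -(\mathbf{y}-\mathbf{y}_0)\in J(\mathscr{M})$, so the solution set is the coset $\mathbf{y}_0 + (\mathscr{M}\cap J(\mathscr{M}))$ of size $|\mathscr{M}\cap J(\mathscr{M})|$. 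This gives the first form of $b_{\mathcal{S}}(\mathbf{x})$.

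For the second form, I would rewrite $\mathscr{M} = \{(\mathbf{V}\mathbf{q},\mathbf{W}\mathbf{q}) : \mathbf{q}\in\mathbb{F}_p^n\}$ and $J(\mathscr{M}) = \{(-\mathbf{V}\mathbf{q},\mathbf{W}\mathbf{q}) : \mathbf{q}\in\mathbb{F}_p^n\}$. Adding these two and allowing both parameters to vary independently immediately yields
\begin{align*}
    \mathscr{M} + J(\mathscr{M}) = \operatorname{col}(\mathbf{V}) \times \operatorname{col}(\mathbf{W}),
\end{align*}
so $\dim(\mathscr{M}+J(\mathscr{M})) = \operatorname{rank}(\mathbf{V}) + \operatorname{rank}(\mathbf{W})$. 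By the standard dimension formula together with $\dim(\mathscr{M}) = \dim(J(\mathscr{M})) = n$,
\begin{align*}
    \dim(\mathscr{M}\cap J(\mathscr{M})) = 2n - \operatorname{rank}(\mathbf{V}) - \operatorname{rank}(\mathbf{W}),
\end{align*}
and therefore $p^{-2n}|\mathscr{M}\cap J(\mathscr{M})| = |\operatorname{col}(\mathbf{V})|^{-1}|\operatorname{col}(\mathbf{W})|^{-1}$, matching the claimed expression.

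I do not expect a genuine obstacle here; the only mildly subtle point is the affine-coset counting argument for $|\{\mathbf{y}\in\mathscr{M} : \mathbf{x}-\mathbf{y}\in J(\mathscr{M})\}|$, which must be done carefully to confirm that the count is independent of the chosen representative $\mathbf{y}_0$. The passage from the first form to the second form is then purely linear-algebraic and follows from the parametrisation of $\mathscr{M}$ by the columns of $\big[\begin{smallmatrix}\mathbf{V}\\\mathbf{W}\end{smallmatrix}\big]$.
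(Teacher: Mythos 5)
Your proposal is correct and follows essentially the same route as the paper: reduce $b_{\mathcal{S}}$ to the convolution of the uniform distributions on $\mathscr{M}$ and $J(\mathscr{M})$, count solutions via a coset of $\mathscr{M}\cap J(\mathscr{M})$, and identify $\mathscr{M}+J(\mathscr{M})=\operatorname{col}(\mathbf{V})\times\operatorname{col}(\mathbf{W})$ using the invertibility of $2$ for odd $p$. The only (harmless) divergence is at the very end, where you compute $|\mathscr{M}\cap J(\mathscr{M})|$ from the Grassmann dimension formula $\dim(\mathscr{M})+\dim(J(\mathscr{M}))-\dim(\mathscr{M}+J(\mathscr{M}))$, whereas the paper uses the symplectic identity $\mathscr{M}\cap J(\mathscr{M})=(\mathscr{M}+J(\mathscr{M}))^{\independent}$; both give $p^{2n}/(|\operatorname{col}(\mathbf{V})||\operatorname{col}(\mathbf{W})|)$.
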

\begin{proof}
    According to \cref{thr:bell_difference} and \cref{eq:stabiliser_distribution}, Bell difference sampling on $|\mathcal{S}\rangle^{\otimes 4}$ returns $\mathbf{x}\in\mathbb{F}_p^{2n}$ with probability
    \begin{align*}
        b_{\mathcal{S}}(\mathbf{x}) = \sum_{\mathbf{y}\in\mathbb{F}_p^{2n}} p_{\mathcal{S}}(\mathbf{y})j_{\mathcal{S}}(\mathbf{x} - \mathbf{y}) &= \frac{1}{p^n}\sum_{\mathbf{y}\in\mathscr{M}}p_{\mathcal{S}}(J(\mathbf{x} - \mathbf{y})).
    \end{align*}
    Since $p_{\mathcal{S}}(J(\mathbf{x} - \mathbf{y})) = p^{-n}$ if $J(\mathbf{x}-\mathbf{y})\in\mathscr{M} \iff \mathbf{x} \in \mathbf{y}+J(\mathscr{M})$ and $\mathscr{M} + J(\mathscr{M})$ is a subspace, we conclude that $p_{\mathcal{S}}(J(\mathbf{x} - \mathbf{y})) = 0$ if $\mathbf{x}\notin \mathscr{M} + J(\mathscr{M})$. Let us compute then $b_{\mathcal{S}}(J(\mathbf{x}))$ for $\mathbf{x}\in \mathscr{M} + J(\mathscr{M})$. Write $\mathbf{x} = \mathbf{x}_1 + \mathbf{x}_2$, where $\mathbf{x}_1 \in \mathscr{M}$ and $\mathbf{x}_2\in J(\mathscr{M})$. Then
    \begin{align*}
        b_{\mathcal{S}}(J(\mathbf{x})) = \frac{1}{p^n}\sum_{\mathbf{y}\in J(\mathscr{M})}p_{\mathcal{S}}(\mathbf{x}_1 + \mathbf{x}_2 - \mathbf{y}) \overset{\mathbf{y} = \mathbf{y} + \mathbf{x}_2}{=} \frac{1}{p^n}\sum_{\mathbf{y}\in J(\mathscr{M})}p_{\mathcal{S}}(\mathbf{x}_1 - \mathbf{y}).
    \end{align*}
    If $\mathbf{y}\notin \mathscr{M}$, then $\mathbf{x}_1 - \mathbf{y}\notin \mathscr{M}$ and so $p_{\mathcal{S}}(\mathbf{x}_1 - \mathbf{y}) = 0$. Therefore, $p_{\mathcal{S}}(\mathbf{x}_1 - \mathbf{y}) = p^{-n}$ if $\mathbf{y}\in \mathscr{M}\cap J(\mathscr{M})$ and $0$ otherwise. All in all, we conclude that 
    \begin{align*}
        b_{\mathcal{S}}(\mathbf{x}) = \begin{cases}
            p^{-2n}|\mathscr{M}\cap J(\mathscr{M})| &\text{if}~\mathbf{x}\in \mathscr{M} + J(\mathscr{M}),\\
            0 &\text{otherwise}.
        \end{cases}
    \end{align*}
    Now, notice that  $J(\mathscr{M}) = J\big({\operatorname{col}}\big(\big[\begin{smallmatrix} \mathbf{V} \\ \mathbf{W} \end{smallmatrix}\big]\big)\big) ={\operatorname{col}}\big(J\big(\big[\begin{smallmatrix} \mathbf{V} \\ \mathbf{W} \end{smallmatrix}\big]\big)\big) = {\operatorname{col}}\big(\big[\begin{smallmatrix} -\mathbf{V} \\ \mathbf{W} \end{smallmatrix}\big]\big)$. Then 
    \begin{align*}
        \mathscr{M} + J(\mathscr{M}) = {\operatorname{col}}\big(\big[\begin{smallmatrix} \mathbf{V} \\ \mathbf{W} \end{smallmatrix}\big]\big) + {\operatorname{col}}\big(\big[\begin{smallmatrix} -\mathbf{V} \\ \mathbf{W} \end{smallmatrix}\big]\big) = \operatorname{col}(\mathbf{V})\times\operatorname{col}(\mathbf{W}),
    \end{align*}
    since $\big[\begin{smallmatrix} \mathbf{V} \\ \mathbf{W} \end{smallmatrix}\big]\frac{\mathbf{z}_1 + \mathbf{z}_2}{2} + \big[\begin{smallmatrix} -\mathbf{V} \\ \mathbf{W} \end{smallmatrix}\big]\frac{\mathbf{z}_1 - \mathbf{z}_2}{2} = \big[\begin{smallmatrix} \mathbf{V}\mathbf{z}_2 \\ \mathbf{W}\mathbf{z}_1 \end{smallmatrix}\big]$ for any $\mathbf{z}_1,\mathbf{z}_2\in\mathbb{F}_p^n$. Finally, $\mathscr{M} \cap J(\mathscr{M}) = (\mathscr{M} + J(\mathscr{M}))^{\independent}$ implies $|\mathscr{M} \cap J(\mathscr{M})| = p^{2n}/|\mathscr{M} + J(\mathscr{M})| = p^{2n}/|\operatorname{col}(\mathbf{V})\times\operatorname{col}(\mathbf{W})| = p^{2n}/(|\operatorname{col}(\mathbf{V})||\operatorname{col}(\mathbf{W})|)$.
\end{proof}

The above result reveals highlights the limitations of Bell difference sampling on qudits, which stems from the involution function. More precisely, it shows that Bell difference sampling on $|\mathcal{S}\rangle^{\otimes 4}$ allows one to learn $\mathbf{V}$ and $\mathbf{W}$ separately up to some change of basis, but not $\big[\begin{smallmatrix}\mathbf{V}\\\mathbf{W}\end{smallmatrix}\big]$ and thus $\mathscr{M}$. Equivalently, we only learn the subspace $\mathscr{M} + J(\mathscr{M})$. This is in stark contrast to the case of qubits ($p=2$), where Bell difference sampling on $|\mathcal{S}\rangle^{\otimes 4}$ returns $\mathbf{x}\in\mathscr{M}$ with probability $b_{\mathcal{S}}(\mathbf{x}) = q_{\mathcal{S}}(\mathbf{x}) = 2^{-n}$~\cite{gross2021schur}. Indeed, $4^n (p_\mathcal{S}\ast p_\mathcal{S}) = p_{\mathcal{S}}$ due to \cref{thr:bell_difference} and \cref{eq:stabiliser_distribution}.

\section{Learning stabiliser states}
\label{sec:learning_stabiliser_states}

After exploring the limitations of Bell difference sampling, we propose alternative quantum algorithms to tackle important problems regarding stabiliser states. In this section, we look at the problem of identifying an unknown stabiliser state $|\mathcal{S}\rangle\in(\mathbb{C}^p)^{\otimes n}$ given access to copies of $|\mathcal{S}\rangle$. We start in \cref{sec:method_1} with the simple observation that, if one also has access to the complex conjugate $|\mathcal{S}^\ast\rangle$ of $|\mathcal{S}\rangle$, then it is possible to apply Bell sampling, since, according to \cref{lem:lem2}, Bell sampling on $|\mathcal{S}\rangle|\mathcal{S}^\ast\rangle$ is equivalent to sampling from the characteristic distribution $p_{\mathcal{S}}$. In \cref{sec:method_2}, we develop our alternative quantum algorithm that only uses  copies of $|\mathcal{S}\rangle$.

\subsection{Method 1: Bell sampling}
\label{sec:method_1}

Consider a stabiliser group $\mathcal{S} = \{\omega^{[\mathbf{a},\mathbf{x}]}\mathcal{W}_{\mathbf{x}}:\mathbf{x}\in \mathscr{M}\}$. According to \cref{eq:stabiliser_distribution}, Bell sampling on $|\mathcal{S}\rangle|\mathcal{S}^\ast\rangle$ returns $\mathbf{x} \in \mathscr{M}$. Therefore, it is possible to learn the stabiliser state of a stabiliser group by several applications of Bell sampling on $|\mathcal{S}\rangle|\mathcal{S}^\ast\rangle$, as shown next.

\begin{algorithm}[h]
\caption{Learning $\mathcal{S}$ via Bell sampling}
\DontPrintSemicolon
\label{alg:method1}

    \KwIn{$3n$ copies of a stabiliser state $|\mathcal{S}\rangle\in(\mathbb{C}^p)^{\otimes n}$ and $2n$ copies of its complex conjugate $|\mathcal{S}^\ast\rangle$.}
    \KwOut{a succinct description of the stabiliser group $\mathcal{S}$, with probability $\geq 1 - p^{-n}$.}

   \For{$i\gets 1$ \KwTo $2n$}
   {    
        Bell sample on $|\mathcal{S}\rangle|\mathcal{S}^\ast\rangle$ to obtain $\mathbf{x}_i\in\mathbb{F}_p^{2n}$.
    }

    Determine a basis $\mathscr{B}$ for $\operatorname{span}(\{\mathbf{x}_i\}_{i\in[2n]})$.

    For each $\mathbf{x}\in \mathscr{B}$, measure a copy of $|\mathcal{S}\rangle$ in the eigenbasis of $\mathcal{W}_{\mathbf{x}}$ to determine $s(\mathbf{x})\in\mathbb{F}_p$ such that $\mathcal{W}_{\mathbf{x}}|\mathcal{S}\rangle = \omega^{-s(\mathbf{x})}|\mathcal{S}\rangle$.

    \Return $\mathcal{S} = \langle \{\omega^{s(\mathbf{x})}\mathcal{W}_{\mathbf{x}}\}_{\mathbf{x}\in\mathscr{B}}\rangle$.
\end{algorithm}

\begin{theorem}\label{thr:algorithm_bell_sampling}
    Let $\mathcal{S} = \{\omega^{[\mathbf{a},\mathbf{x}]}\mathcal{W}_{\mathbf{x}}:\mathbf{x}\in \mathscr{M}\}$ be an unknown stabiliser group with stabiliser state $|\mathcal{S}\rangle\in(\mathbb{C}^{p})^{\otimes n}$. There is a quantum algorithm {\rm (\cref{alg:method1})} that identifies $\mathcal{S}$ using $3n$ copies of $|\mathcal{S}\rangle$ and $2n$ copies of $|\mathcal{S}^\ast\rangle$. Its runtime is $O(n^3)$ and failure probability is at most $p^{-n}$.
\end{theorem}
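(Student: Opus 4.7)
The plan is to verify the three main components of \cref{alg:method1}: (i) that Bell sampling on $|\mathcal{S}\rangle|\mathcal{S}^\ast\rangle$ produces uniformly random labels from the Lagrangian $\mathscr{M}$, (ii) that $2n$ such samples span $\mathscr{M}$ except with probability $\leq p^{-n}$, and (iii) that the eigenbasis measurements in the final step recover the signs $s(\mathbf{x})$ exactly and can be implemented in the claimed time.

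For (i), I would invoke \cref{lem:lem2} with $|\psi_1\rangle = |\mathcal{S}\rangle$ and $|\psi_2\rangle = |\mathcal{S}^\ast\rangle$ (so $|\psi_2^\ast\rangle = |\mathcal{S}\rangle$), which shows that each Bell sample outputs $\mathbf{x}\in\mathbb{F}_p^{2n}$ with probability $p^{-n}|\langle\mathcal{S}|\mathcal{W}_{\mathbf{x}}|\mathcal{S}\rangle|^2 = p_{\mathcal{S}}(\mathbf{x})$. By \cref{eq:stabiliser_distribution}, $p_{\mathcal{S}}$ is the uniform distribution on $\mathscr{M}$, which has dimension $n$ (\cref{lem:stabiliser_group_form}). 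So the 2n samples $\mathbf{x}_1,\ldots,\mathbf{x}_{2n}$ are i.i.d.\ uniform on $\mathscr{M}\cong \mathbb{F}_p^n$.

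For (ii), the failure event is that the first $2n$ uniform samples from an $n$-dimensional $\mathbb{F}_p$-subspace fail to span it. Conditioning on the current span having dimension $i < n$, the next sample lies in it with probability $p^{i-n}$. A union bound over the events ``the running span is stuck at dimension $i$ after all $2n$ draws'' gives failure probability at most $\sum_{i=0}^{n-1} p^{i-2n} = p^{-2n}(p^n-1)/(p-1) \leq p^{-n}/(p-1) \leq p^{-n}$, using $p \geq 3$.

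For (iii), once a basis $\mathscr{B}\subseteq \mathscr{M}$ is extracted (via Gaussian elimination on the $2n$ vectors in $O(n^3)$ time), each $\mathbf{x}\in \mathscr{B}$ satisfies $\mathcal{W}_{\mathbf{x}}|\mathcal{S}\rangle = \omega^{-s(\mathbf{x})}|\mathcal{S}\rangle$ for a unique $s(\mathbf{x})\in\mathbb{F}_p$, since $|\mathcal{S}\rangle$ is a joint eigenvector of $\{\mathcal{W}_{\mathbf{y}}:\mathbf{y}\in\mathscr{M}\}$ (\cref{lem:stabiliser_group_form,lem:stabiliser_dimension}). Measuring $|\mathcal{S}\rangle$ in the eigenbasis of $\mathcal{W}_{\mathbf{x}}$ therefore returns $s(\mathbf{x})$ deterministically; concretely, one applies a Clifford circuit of size $O(n)$ mapping $\mathcal{W}_{\mathbf{x}}$ to $\mathsf{Z}\otimes\mathbb{I}^{\otimes(n-1)}$ and measures the first qudit in the computational basis. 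The resulting set $\{\omega^{s(\mathbf{x})}\mathcal{W}_{\mathbf{x}}\}_{\mathbf{x}\in\mathscr{B}}$ consists of $n$ independent elements of $\mathcal{S}$ (by \cref{lem:linearly_independency}) and therefore generates all of $\mathcal{S}$, identifying the stabiliser group.

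The main obstacle is essentially bookkeeping rather than a conceptual difficulty: combining the probabilistic span analysis cleanly with the deterministic phase-extraction step. For the runtime, the $2n$ Bell samples cost $O(n)$ each, the basis extraction costs $O(n^3)$, and the $n$ eigenbasis measurements cost $O(n^2)$ total, yielding overall $O(n^3)$. I would emphasize that nothing in this argument requires $p$ to be prime, which justifies the remark following the theorem that the procedure extends to arbitrary qudit dimension $d$.
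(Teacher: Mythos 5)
Your proposal is correct and takes essentially the same route as the paper: \cref{lem:lem2} together with \cref{eq:stabiliser_distribution} gives i.i.d.\ uniform samples from $\mathscr{M}$, a union bound over the $(p^n-1)/(p-1)$ hyperplanes of $\mathscr{M}$ (your sum $\sum_{i=0}^{n-1}p^{i-2n}$ is exactly that count times $p^{-2n}$, though the ``stuck at dimension $i$'' phrasing obscures that the bound is really obtained by fixing a hyperplane containing all $2n$ samples), and deterministic eigenbasis measurements for the phases. The only cosmetic difference is that you spell out the Clifford conjugation used to realise the eigenbasis measurement, which the paper leaves implicit.
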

\begin{proof}
    According to \cref{lem:lem2}, Bell sampling on $|\mathcal{S}\rangle|\mathcal{S}^\ast\rangle$ returns $\mathbf{x}\in\mathbb{F}_p^{2n}$ with probability $p_{\mathcal{S}}(\mathbf{x})$, i.e., $\mathbf{x}$ is uniformly distributed on the subspace $\mathscr{M}$. Thus, Bell sample $|\mathcal{S}\rangle|\mathcal{S}^\ast\rangle$ a number of $2n$ times to obtain $\mathbf{x}_1,\dots,\mathbf{x}_{2n}\in \mathscr{M}$. If $\operatorname{dim}(\operatorname{span}(\{\mathbf{x}_i\}_{i\in[2n]})) = n$, then any basis $\mathscr{B}$ of $\operatorname{span}(\{\mathbf{x}_i\}_{i\in[2n]})$ is a basis for $\mathscr{M}$, which can be identified by using Gaussian elimination in time $O(n^3)$. Finally, for each element $\mathbf{x}\in \mathscr{B}$, measure a copy of $|\mathcal{S}\rangle$ in the eigenbasis of $\mathcal{W}_{\mathbf{x}}$ to determine $s(\mathbf{x})\in\mathbb{F}_p$ such that $\mathcal{W}_{\mathbf{x}}|\mathcal{S}\rangle = \omega^{-s(\mathbf{x})}|\mathcal{S}\rangle$, which requires $O(n^2)$ time. The algorithm fails if the $2n$ samples $\mathbf{x}_1,\dots,\mathbf{x}_{2n}$ are contained in a subspace of dimension $n-1$. The probability that the samples are all contained in any such subspace is $p^{-2n}$. By a union bound over all $(p^n-1)/(p-1)$ subspaces of dimension $n-1$, the failure probability is at most $p^{-n}$.
\end{proof}

\subsection{Method 2: learning quadratic functions}
\label{sec:method_2}

In many situations, one might not have access to copies of $|\mathcal{S}^\ast\rangle$. This raises the question of whether it is possible learn a stabiliser state $|\mathcal{S}\rangle$ using only copies of $|\mathcal{S}\rangle$. It is well known that this is possible when $p=2$~\cite{montanaro2017learning}, since $|\mathcal{S}^\ast\rangle = \sigma|\mathcal{S}\rangle$ for some $\sigma\in\mathscr{P}^n_p$. Indeed, when $p=2$, Bell difference sampling on $|\mathcal{S}\rangle^{\otimes 4}$ returns $\mathbf{x}\in \mathscr{M}$. However, for $p>2$, \cref{lem:bell_sampling} shows that Bell difference sampling on $|\mathcal{S}\rangle^{\otimes 4}$ cannot be used to learn $|\mathcal{S}\rangle$. Nonetheless, we now provide an algorithm, based on a completely different technique, that learns $\mathcal{S}$ using only copies of $|\mathcal{S}\rangle$.

\begin{algorithm}[h]
\caption{Learning $\mathcal{S}$ by learning quadratic functions}
\DontPrintSemicolon
\label{alg:method2}

    \KwIn{$3(m + n) + 4$ copies of a stabiliser state $|\mathcal{S}\rangle\in(\mathbb{C}^p)^{\otimes n}$, where $m \triangleq 2n + \lceil\log_p{r}\rceil$, and $\delta_1, \delta_2, \delta_3\in\mathbb{F}_p$, not all $0$, such that $\delta_1^2 + \delta_2^2 + \delta_3^2 = 0$.}
    \KwOut{a succinct description of the stabiliser group $\mathcal{S}$, with probability $\geq 1 - 2p^{-n}$.}

    \For{$i\gets 0$ \KwTo $2n$}
    {    
        Measure $|\mathcal{S}\rangle$ to obtain $\mathbf{b}_i\in\mathbb{F}_p^n$.

        \If{$i\neq 0$}
        {
            $\mathbf{b}_i \gets \mathbf{b}_i-\mathbf{b}_0$.
        }
    }
    
    $r \gets \operatorname{dim}(\operatorname{span}(\{\mathbf{b}_i\}_{i\in[2n]}))$.
    
    Take $\mathbf{w}_i = \mathbf{e}_i$ for $i\in[r]$ and $\mathbf{w}_i = \mathbf{0}$ for $i\in\{r+1,\dots,n\}$.

    Take $\mathbf{v}_i = \mathbf{e}_i$ for $i\in\{r+1,\dots,n\}$.

    \For{$\ell\gets 0$ \KwTo $m$}
    {    
        Use $3$ copies of $|\mathcal{S}\rangle$ to create the state $p^{-n}\sqrt{|\operatorname{col}(\mathbf{W})|}\sum_{\mathbf{t}\in\mathbb{F}_p^n}|\mathbf{W}\mathbf{t}\rangle|\mathcal{S}\rangle^{\otimes 3}$. 

        For $i\in [3]$, apply $U_{\delta_i}:|\mathbf{t},\mathbf{q}\rangle \mapsto |\mathbf{t},\mathbf{q} - \delta_i \mathbf{t}\rangle$, $\forall \mathbf{t},\mathbf{q}\in\mathbb{F}_p^n$, onto the $i$-th copy of $|\mathcal{S}\rangle$ controlled on the register $|\mathbf{W}\mathbf{t}\rangle$.

        Apply an inverse quantum Fourier transform onto the register $|\mathbf{W}\mathbf{t}\rangle$.

        Measure the $4$ registers to obtain a tuple $(\mathbf{c}^{(\ell)},\mathbf{q}_1^{(\ell)},\mathbf{q}_2^{(\ell)},\mathbf{q}_3^{(\ell)})\in\mathbb{F}_p^{4n}$.

        \If{$\ell\neq 0$}
        {
            $(\mathbf{c}^{(\ell)},\mathbf{q}_1^{(\ell)},\mathbf{q}_2^{(\ell)},\mathbf{q}_3^{(\ell)}) \gets (\mathbf{c}^{(\ell)}-\mathbf{c}^{(0)},\mathbf{q}_1^{(\ell)}-\mathbf{q}_1^{(0)},\mathbf{q}_2^{(\ell)}-\mathbf{q}_2^{(0)},\mathbf{q}_3^{(\ell)}-\mathbf{q}_3^{(0)})$.
        }
    }

    \For{$k\gets 1$ \KwTo $r$}
    {
        Solve the linear system of $m$ equations $\sum_{j=1}^n V_{jk}(\sum_{i=1}^3 \delta_i \mathbf{q}_i^{(\ell)})_j  = (\mathbf{W}^\top \mathbf{c}^{(\ell)})_k$, $\ell\in[m]$, with unknown variables $V_{1k},\dots, V_{nk}$.
    }

    \For{$i\gets 1$ \KwTo $n$}
    {
        Measure a copy of $|\mathcal{S}\rangle$ in the eigenbasis of $\mathcal{W}_{\mathbf{v}_i,\mathbf{w}_i}$ to determine $s_i\in\mathbb{F}_p$ such that $\mathcal{W}_{\mathbf{v}_i,\mathbf{w}_i}|\mathcal{S}\rangle = \omega^{-s_i}|\mathcal{S}\rangle$.
    }

    \Return $\mathcal{S} = \langle \{\omega^{s_i}\mathcal{W}_{\mathbf{v}_i,\mathbf{w}_i}\}_{i\in[n]}\rangle$.
\end{algorithm}

\begin{theorem}\label{thr:algorithm_method_2}
    Let $\mathcal{S}=\langle\{\omega^{s_i}\mathcal{W}_{\mathbf{v}_i,\mathbf{w}_i}\}_{i\in[n]}\rangle$ be an unknown stabiliser group with stabiliser state $|\mathcal{S}\rangle\in(\mathbb{C}^{p})^{\otimes n}$. Let $\mathbf{W}\in\mathbb{F}_p^{n\times n}$ be the matrix with column vectors $\mathbf{w}_1,\dots,\mathbf{w}_n$. There is a quantum algorithm {\rm (\cref{alg:method2})} that identifies $\mathcal{S}$ by using $9n + 3\lceil\log_p\operatorname{rank}(\mathbf{W})\rceil + 4$ copies of $|\mathcal{S}\rangle$. Its runtime is $O(n^3\operatorname{rank}(\mathbf{W}))$ and failure probability is at most $2p^{-n}$.
\end{theorem}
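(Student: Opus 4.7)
The proof verifies \cref{alg:method2} in three stages, learning $\operatorname{col}(\mathbf{W})$, then the nontrivial columns of $\mathbf{V}$, then the phase vector $\mathbf{s}$. Throughout, by first applying a classically-controlled Clifford (permutations and CNOT-type gates determined by Stage 1's output) we may assume without loss of generality that $\mathbf{W} = [\mathbf{I}_r \,|\, \mathbf{0}]$; the commutation relation $\mathbf{V}^\top \mathbf{W} = \mathbf{W}^\top \mathbf{V}$ from \cref{lem:properties} then forces $\mathbf{V}$ into a block-diagonal form with a symmetric $r \times r$ top-left block, and after a further basis choice among the last $n-r$ generators we may take $\mathbf{v}_i = \mathbf{e}_i$ for $i > r$.

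Stage 1 is a direct application of \cref{lem:normalisation}: each computational-basis measurement of $|\mathcal{S}\rangle$ yields $\mathbf{u} + \mathbf{W}\mathbf{q}$ for uniform $\mathbf{q} \in \mathbb{F}_p^n$, so subtracting a reference outcome $\mathbf{b}_0$ produces $2n$ i.i.d.\ samples uniform on $\operatorname{col}(\mathbf{W})$, which span it except with probability at most $p^{-n}$ by a union bound over the at most $p^{r-1}$ proper subspaces. Stage 3 is also quick: once $\mathbf{V}$ and $\mathbf{W}$ are determined, a measurement of $|\mathcal{S}\rangle$ in the eigenbasis of each $\mathcal{W}_{\mathbf{v}_i,\mathbf{w}_i}$ returns $-s_i$ deterministically.

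The crux of the argument is Stage 2, where the ``$\delta$-trick'' extracts $\mathbf{V}$. After preparing the ancilla $p^{-n}\sqrt{|\operatorname{col}(\mathbf{W})|}\sum_{\mathbf{t}}|\mathbf{W}\mathbf{t}\rangle$ and applying the controlled shifts $U_{\delta_i}$ to three copies of $|\mathcal{S}\rangle$, the reindexing $\mathbf{q}_i \mapsto \mathbf{r}_i + \delta_i \mathbf{t}$ in the phase of \cref{lem:normalisation}, combined with $\mathbf{V}^\top\mathbf{W} = \mathbf{W}^\top\mathbf{V}$ to symmetrise cross-terms, yields total phase
\begin{align*}
    \sum_{i=1}^3\left[\mathbf{s}^\top\mathbf{r}_i + \mathbf{u}^\top\mathbf{V}\mathbf{r}_i + 2^{-1}\mathbf{r}_i^\top\mathbf{V}^\top\mathbf{W}\mathbf{r}_i\right] + \mathbf{t}^\top\left[\textstyle(\sum_i\delta_i)(\mathbf{s}+\mathbf{V}^\top\mathbf{u}) + \mathbf{V}^\top\mathbf{W}\sum_i\delta_i\mathbf{r}_i\right] + 2^{-1}\left(\textstyle\sum_i\delta_i^2\right)\mathbf{t}^\top\mathbf{V}^\top\mathbf{W}\mathbf{t}.
\end{align*}
The hypothesis $\sum_i\delta_i^2 = 0$ annihilates the quadratic-in-$\mathbf{t}$ term, leaving a phase linear in $\mathbf{t}$. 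An inverse quantum Fourier transform on the ancilla followed by computational measurement then returns $(\mathbf{c}, \mathbf{q}_1, \mathbf{q}_2, \mathbf{q}_3)$ supported on the constraint $\mathbf{V}^\top\sum_i\delta_i\mathbf{q}_i + (\sum_i\delta_i)\mathbf{s} = \mathbf{W}^\top\mathbf{c}$ (after substituting $\mathbf{W}\mathbf{r}_i = \mathbf{q}_i - \mathbf{u}$ to absorb the $\mathbf{V}^\top\mathbf{u}$ terms). Centering against an $\ell = 0$ reference eliminates the constant $\mathbf{s}$-dependence and yields, for each $\ell \in [m]$, the homogeneous equation $\mathbf{V}^\top\sum_i\delta_i\mathbf{q}_i^{(\ell)} = \mathbf{W}^\top\mathbf{c}^{(\ell)}$; a subspace-union-bound argument analogous to Stage 1 shows that $m = 2n + \lceil\log_p r\rceil$ such samples span $\mathbb{F}_p^n$ except with probability at most $p^{-n}$, enabling the $k$-th column of $\mathbf{V}$ ($k \in [r]$) to be solved by classical Gaussian elimination. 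The main obstacle is in this stage: carefully tracking the Gaussian phase after the three controlled shifts, verifying the quadratic cancellation, and checking that the post-centering samples are generic enough to determine $\mathbf{V}$. Combining the three stages gives sample complexity $O(n)$, failure probability $\leq 2p^{-n}$ by a final union bound over Stages 1 and 2, and runtime $O(n^3 r)$ dominated by the $r$ linear solves of Stage 2.
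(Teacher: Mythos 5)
Your proposal is correct and follows essentially the same route as the paper's proof: computational-basis sampling plus a subspace union bound for $\mathbf{W}$, the three-copy $\delta$-trick with $\sum_i\delta_i^2=0$ killing the quadratic-in-$\mathbf{t}$ phase followed by an inverse Fourier transform and centering to get the homogeneous constraints on $\mathbf{V}$, and eigenbasis measurements for $\mathbf{s}$. The only cosmetic difference is that you propose physically applying a Clifford to normalise $\mathbf{W}$ to $[\mathbf{I}_r\,|\,\mathbf{0}]$, whereas the paper simply makes the corresponding basis choice of generators without touching the state.
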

\begin{proof}
    By \cref{lem:normalisation}, the stabiliser state of $\mathcal{S}=\langle\{\omega^{s_i}\mathcal{W}_{\mathbf{v}_i,\mathbf{w}_i}\}_{i\in[n]}\rangle$ is given by
    \begin{align*}
        |\mathcal{S}\rangle = \frac{\sqrt{|\operatorname{col}(\mathbf{W})|}}{p^n}\sum_{\mathbf{q}\in \mathbb{F}_p^n} \omega^{f(\mathbf{q})}|\mathbf{u}+\mathbf{W}\mathbf{q}\rangle,
    \end{align*}
    where $f(\mathbf{q}) = \mathbf{s}^\top \mathbf{q} + \mathbf{u}^\top \mathbf{V}\mathbf{q} + 2^{-1}\mathbf{q}^\top \mathbf{V}^\top \mathbf{W} \mathbf{q}$ and $\mathbf{u}\in\mathbb{F}_p^n$ is such that $\mathbf{V}^\top\mathbf{u} + \mathbf{s}\in\operatorname{row}(\mathbf{W})$. Here $\mathbf{V},\mathbf{W}\in\mathbb{F}_p^{n\times n}$ are the matrices with column vectors $\mathbf{v}_1,\dots,\mathbf{v}_n$ and $\mathbf{w}_1,\dots,\mathbf{w}_n$, respectively, and $\mathbf{s}=(s_1,\dots,s_n)\in\mathbb{F}_p^n$. Let $r = \operatorname{rank}(\mathbf{W})$. Without loss of generality, we can assume that $\mathbf{w}_{r+1} = \cdots = \mathbf{w}_n = \mathbf{0}$. In order to identify $\mathcal{S}$, we must learn $\mathbf{s}$ and $\big[\begin{smallmatrix} \mathbf{V} \\ \mathbf{W} \end{smallmatrix}\big]$.

    The first step is to learn $\mathbf{W}$. To do so, measure $|\mathcal{S}\rangle$ $2n+1$ times, in time complexity $O(n^2)$. The outcome vectors are $\mathbf{b}_0,\dots,\mathbf{b}_{2n}\in\mathbb{F}_p^n$, where $\mathbf{b}_i = \mathbf{u} + \mathbf{W}\mathbf{q}_i$ for some $\mathbf{q}_i\in\mathbb{F}_p^n$, $i\in\{0,\dots,2n\}$. By subtracting the vector $\mathbf{b}_0$ from all the others, we obtain $\mathbf{b}_i - \mathbf{b}_0 = \mathbf{W}(\mathbf{q}_i - \mathbf{q}_0)\in\operatorname{col}(\mathbf{W})$, $i\in[2n]$. We find a basis for $\operatorname{span}(\{\mathbf{b}_i-\mathbf{b}_0\}_{i\in[2n]})$ using Gaussian elimination in time $O(n^3)$. If $\operatorname{dim}(\operatorname{span}(\{\mathbf{b}_i-\mathbf{b}_0\}_{i\in[2n]})) = r$, then the vectors from the basis span $\mathbf{W}$. Without loss of generality, we can pick $\mathbf{w}_i = \mathbf{e}_i$ for $i\in[r]$. Learning $\mathbf{W}$ fails if $\operatorname{dim}(\operatorname{span}(\{\mathbf{b}_i-\mathbf{b}_0\}_{i\in[2n]})) < r$. The probability that all vectors $\{\mathbf{b}_i-\mathbf{b}_0\}_{i\in[2n]}$ are contained in any subspace in $\operatorname{col}(\mathbf{W})$ of dimension $r-1$ is $p^{-2n}$. By a union bound over all $(p^r - 1)/(p-1)$ subspaces in $\operatorname{col}(\mathbf{W})$ of dimension $r-1$, the failure probability is at most $p^{r-2n} \leq p^{-n}$.
    
    In possession of $\mathbf{W}$, we now learn the corresponding $\mathbf{V}$, starting with $\mathbf{v}_{r+1},\dots,\mathbf{v}_n$. Since $\langle \mathbf{v}_i, \mathbf{w}_j\rangle = \langle \mathbf{v}_j, \mathbf{w}_i\rangle$ for all $i,j\in[n]$, then $\mathbf{v}_{r+1},\dots,\mathbf{v}_n\in\operatorname{col}(\mathbf{W})^\perp$, i.e., $\mathbf{v}_{r+1},\dots,\mathbf{v}_n$ are orthogonal to $\operatorname{col}(\mathbf{W})$. Since $\{\omega^{s_i}\mathcal{W}_{\mathbf{v}_i,\mathbf{w}_i}\}_{i\in[n]}$ are independent, $\operatorname{dim}(\operatorname{span}(\{\mathbf{v}_i\}_{i=r+1}^n)) = n-r$. We conclude that $\operatorname{span}(\{\mathbf{v}_i\}_{i=r+1}^n) = \operatorname{col}(\mathbf{W})^\perp$. We can thus pick any basis for $\operatorname{col}(\mathbf{W})^\perp$ as the vectors $\mathbf{v}_{r+1},\dots,\mathbf{v}_n$, which can be done in time $O(n^3)$ using Gaussian elimination. As a possibility, we can simply take $\mathbf{v}_i = \mathbf{e}_i$ for $i\in\{r+1,\dots,n\}$.
    
    We now show how to learn the remaining corresponding vectors $\mathbf{v}_1,\dots,\mathbf{v}_r\in\mathbb{F}_p^n$ given the obtained $\mathbf{W}$. Let $m\triangleq 2n + \lceil \log_p{r}\rceil$. Let $\delta_1,\delta_2,\delta_3\in\mathbb{F}_p$, not all $0$, be such that $\delta_1^2 + \delta_2^2 + \delta_3^2 = 0$. These can be found in time $O(\operatorname{poly}\log{p})$ by using the Las Vegas method of~\cite{ivanyos2007efficient} or the deterministic method of~\cite{van2005deterministic}. We shall generate $m+1$ vector constraints of the form
    \begin{align}\label{eq:eq2}
        \mathbf{V}^\top \sum_{i=1}^3 \delta_i \mathbf{q}_i = \mathbf{W}^\top \mathbf{c} - \sum_{i=1}^3\delta_i\mathbf{s},
    \end{align}
    for random $\mathbf{q}_1,\mathbf{q}_2,\mathbf{q}_3\in \mathbf{u}+\operatorname{col}(\mathbf{W})$ and some $\mathbf{c}\in\mathbb{F}_p^n$. This is done by repeating the following procedure $m+1$ times. Start with $3$ copies of $|\mathcal{S}\rangle$ and an auxiliary register as
    \begin{align*}
        \left(\frac{p^{r/2}}{p^n}\sum_{\mathbf{t}\in\mathbb{F}_p^n}|\mathbf{W}\mathbf{t}\rangle\right)|\mathcal{S}\rangle^{\otimes 3} = \frac{p^{2r}}{p^{4n}}\sum_{\mathbf{t}\in\mathbb{F}_p^n}|\mathbf{W}\mathbf{t}\rangle\bigotimes_{i=1}^3 \sum_{\mathbf{q}_i\in\mathbb{F}_p^n}\omega^{f(\mathbf{q}_i)}|\mathbf{u}+\mathbf{W}\mathbf{q}_i\rangle.
    \end{align*}
    Let $U_\delta$ be the shift unitary such that $U_\delta|\mathbf{t}\rangle|\mathbf{q}\rangle = |\mathbf{t}\rangle|\mathbf{q}-\delta \mathbf{t}\rangle$ for all $\mathbf{t},\mathbf{q}\in\mathbb{F}_p^n$. For $i\in[3]$, apply $U_{\delta_i}$ onto the register $|\mathbf{u}+\mathbf{W}\mathbf{q}_i\rangle$ controlled on the register $|\mathbf{W}\mathbf{t}\rangle$ to obtain
    \begin{align*}
        \frac{p^{2r}}{p^{4n}}\sum_{\mathbf{t}\in\mathbb{F}_p^n}|\mathbf{W}\mathbf{t}\rangle \bigotimes_{i=1}^3 \sum_{\mathbf{q}_i\in\mathbb{F}_p^n}\omega^{f(\mathbf{q}_i)}|\mathbf{u}+\mathbf{W}(\mathbf{q}_i-\delta_i \mathbf{t})\rangle
        = \frac{p^{2r}}{p^{4n}}\sum_{\mathbf{t}\in\mathbb{F}_p^n}|\mathbf{W}\mathbf{t}\rangle \bigotimes_{i=1}^3 \sum_{\mathbf{q}_i\in\mathbb{F}_p^n}\omega^{f(\mathbf{q}_i+\delta_i \mathbf{t}) }|\mathbf{u}+\mathbf{W}\mathbf{q}_i\rangle.
    \end{align*}
    Applying $U_{\delta_i}$ requires $O(n)$ time, since it is composed of $n$ two-qudit shifts. We now notice that
    \begin{align*}
        f(\mathbf{q} + \delta \mathbf{t}) &= \mathbf{s}^\top (\mathbf{q}+\delta \mathbf{t}) + \mathbf{u}^\top \mathbf{V} (\mathbf{q}+\delta \mathbf{t}) + 2^{-1}(\mathbf{q} + \delta \mathbf{t})^\top \mathbf{V}^\top \mathbf{W} (\mathbf{q} + \delta \mathbf{t})\\
        &= f(\mathbf{q}) + \delta \mathbf{t}^\top \mathbf{s} + \delta \mathbf{t}^\top \mathbf{V}^\top \mathbf{u} + \delta \mathbf{t}^\top \mathbf{V}^\top \mathbf{W}\mathbf{q} + 2^{-1}\delta^2 \mathbf{t}^\top \mathbf{V}^\top \mathbf{W} \mathbf{t},
    \end{align*}
    where we used that $\mathbf{W}^\top \mathbf{V} = \mathbf{V}^\top \mathbf{W}$. Therefore
    \begin{align*}
        \sum_{i=1}^3 f(\mathbf{q}_i + \delta_i \mathbf{t}) - f(\mathbf{q}_i) &=  \mathbf{t}^\top \mathbf{V}^\top \mathbf{W}\left(\sum_{i=1}^3 \delta_i \mathbf{q}_i\right) + \left(\sum_{i=1}^3 \delta_i\right) \mathbf{t}^\top (\mathbf{s} + \mathbf{V}^\top \mathbf{u}) + 2^{-1}\left(\sum_{i=1}^3 \delta_i^2\right) \mathbf{t}^\top \mathbf{V}^\top \mathbf{W} \mathbf{t}\\
        &=  \mathbf{t}^\top \mathbf{W}^\top \mathbf{V} \left(\sum_{i=1}^3 \delta_i \mathbf{q}_i\right) + \left(\sum_{i=1}^3 \delta_i\right) \mathbf{t}^\top (\mathbf{s} + \mathbf{V}^\top \mathbf{u}).
    \end{align*}
    The resulting quantum state is
    \begin{align*}
        \frac{p^{2r}}{p^{4n}}\sum_{\mathbf{t}\in\mathbb{F}_p^n}|\mathbf{W}\mathbf{t}\rangle \bigotimes_{i=1}^3 \sum_{\mathbf{q}_i\in\mathbb{F}_p^n} \omega^{f(\mathbf{q}_i) + \delta_i\mathbf{t}^\top (\mathbf{W}^\top \mathbf{V}\mathbf{q}_i + \mathbf{s} + \mathbf{V}^\top\mathbf{u})}|\mathbf{u}+\mathbf{W}\mathbf{q}_i\rangle.
    \end{align*}
    Note that $\mathbf{s} + \mathbf{V}^\top \mathbf{u}\in\operatorname{row}(\mathbf{W}) \implies \mathbf{s} + \mathbf{V}^\top \mathbf{u} = \mathbf{W}^\top \mathbf{a}$ for some $\mathbf{a}\in\mathbb{F}_p^n$. Measure the registers $\bigotimes_{i=1}^3|\mathbf{u}+\mathbf{W}\mathbf{q}_i\rangle$ in time $O(n)$ to obtain $\mathbf{u}+\mathbf{W}\mathbf{q}_i'$ for some $\mathbf{q}'_i\in\mathbb{F}_p^n$, $i\in[3]$. Consider the inverse quantum Fourier transform given by
    \begin{align*}
        |\mathbf{t}\rangle = |t_1,\dots,t_n\rangle \mapsto \bigotimes_{i=1}^n\sum_{c_i\in\mathbb{F}_p}\omega^{-t_i c_i}|c_i\rangle = \sum_{\mathbf{c}\in\mathbb{F}_p^n}\omega^{-\mathbf{t}^\top\mathbf{c}}|\mathbf{c}\rangle, \qquad \forall \mathbf{t}\in\mathbb{F}_p^n.
    \end{align*}
    The inverse quantum Fourier transform requires $O(n)$ time, as it is composed of $n$ single-qudit rotations. Perform then an inverse quantum Fourier transform onto register $|\mathbf{W}\mathbf{t}\rangle$ to map (ignore the registers $\bigotimes_{i=1}^3|\mathbf{u}+\mathbf{W}\mathbf{q}_i'\rangle$)
    \begin{align*}
        \frac{p^{r/2}}{p^n}\sum_{\mathbf{t}\in\mathbb{F}_p^n} \omega^{\mathbf{t}^\top \mathbf{W}^\top \sum_{i=1}^3 \delta_i(\mathbf{V} \mathbf{q}_i' + \mathbf{a})} |\mathbf{W}\mathbf{t}\rangle &\mapsto \frac{p^{r/2}}{p^{3n/2}}\sum_{\mathbf{t},\mathbf{c}\in\mathbb{F}_p^n} \omega^{\mathbf{t}^\top \mathbf{W}^\top(\sum_{i=1}^3 \delta_i (\mathbf{V} \mathbf{q}_i' + \mathbf{a}) - \mathbf{c})} |\mathbf{c}\rangle \\
        &= \sqrt{\frac{p^r}{p^n}}\sum_{\substack{\mathbf{c}\in\mathbb{F}_p^n \\ \mathbf{W}^\top \mathbf{c} = \mathbf{W}^\top \sum_{i=1}^3 \delta_i (\mathbf{V} \mathbf{q}_i' + \mathbf{a})}}|\mathbf{c}\rangle,
    \end{align*}
    where we used that $\sum_{\mathbf{t}\in\mathbb{F}_p^n}\omega^{\mathbf{t}^\top \mathbf{z}}$ equals $0$ if $\mathbf{z} \neq \mathbf{0}\in\mathbb{F}_p^n$ and equals $p^n$ if $\mathbf{z} = \mathbf{0}\in\mathbb{F}_p^n$. By measuring the register $|\mathbf{c}\rangle$ (again in time $O(n)$), we get $\mathbf{c}'\in\mathbb{F}_p^n$ such that 
    \begin{align*}
        \mathbf{W}^\top \mathbf{c}' &= \mathbf{W}^\top \sum_{i=1}^3 \delta_i (\mathbf{V}\mathbf{q}_i' + \mathbf{a}) = \sum_{i=1}^3\delta_i (\mathbf{s} + \mathbf{V}^\top(\mathbf{u}+\mathbf{W}\mathbf{q}_i')),
    \end{align*}
    which is a vector constraint as in \cref{eq:eq2}.

    Repeat the above procedure $m+1$ times to get $m+1$ tuples $(\mathbf{c}^{(\ell)},\mathbf{q}_1^{(\ell)},\mathbf{q}_2^{(\ell)},\mathbf{q}_3^{(\ell)})\in\mathbb{F}_p^{4n}$, $\ell\in\{0,\dots,m\}$, satisfying \cref{eq:eq2}. Subtract the tuple $(\mathbf{c}^{(0)},\mathbf{q}_1^{(0)},\mathbf{q}_2^{(0)},\mathbf{q}_3^{(0)})$ from the other tuples to obtain the vector constraints
    \begin{align*}
        \mathbf{V}^\top \sum_{i=1}^3 \delta_i (\mathbf{q}_i^{(\ell)} - \mathbf{q}_i^{(0)}) = \mathbf{W}^\top (\mathbf{c}^{(\ell)}-\mathbf{c}^{(0)}), \quad \ell\in[m],
    \end{align*}
    with $rn$ unknown variables $V_{jk}$, $j\in[n]$, $k\in[r]$, which make up the $r$ vectors $\mathbf{v}_1,\dots,\mathbf{v}_r$. We notice that each vector constraint is made up of $r$ linear constraints that can be treated as independent of each other. In other words, we can solve each of the $r$ linear systems
    \begin{align*}
        \sum_{j=1}^n V_{jk}\left(\sum_{i=1}^3 \delta_i (\mathbf{q}_i^{(\ell)} - \mathbf{q}_i^{(0)})\right)_j  = (\mathbf{W}^\top (\mathbf{c}^{(\ell)}-\mathbf{c}^{(0)}))_k, \quad \ell\in[m],
    \end{align*}
    with unknown variables $V_{1k},\dots,V_{nk}$ independently from one another, where $k\in[r]$. The time complexity for solving each linear system with $n$ unknown variables is $O(n^3)$ using Gaussian elimination, for a total of $O(rn^3)$ runtime. Learning $\mathbf{v}_1,\dots,\mathbf{v}_r$ fails if, for some linear system, its $m$ rows lie in a subspace of $\mathbb{F}_p^n$ of dimension at most $n-1$. The probability that $m$ rows are contained in any one subspace of dimension $n-1$ is $p^{-m}$. By a union bound over all $(p^n - 1)/(p-1)$ subspaces of dimension $n-1$ and over all $r$ linear systems, the algorithm fails with probability at most $r p^{n-m} \leq p^{-n}$.  
    
    Finally, for $i\in[n]$, we measure a copy of $|\mathcal{S}\rangle$ in the eigenbasis of $\mathcal{W}_{\mathbf{v}_i,\mathbf{w}_i}$ to determine $s_i\in\mathbb{F}_p$ such that $\mathcal{W}_{\mathbf{v}_i,\mathbf{w}_i}|\mathcal{S}\rangle = \omega^{-s_i}|\mathcal{S}\rangle$. This completes the learning of $\mathbf{s}$ and $\big[\begin{smallmatrix} \mathbf{V} \\ \mathbf{W} \end{smallmatrix}\big]$, which identifies~$\mathcal{S}$. The algorithm uses $9n + 3\lceil\log_p{r}\rceil + 4$ copies of $|\mathcal{S}\rangle$. The most expensive step is performing $r$ Gaussian eliminations on $n$ variables, thus the total time complexity is $O(rn^3)$. The failure probability is at most $2p^{-n}$.
\end{proof}

\section{Pseudorandomness lower bounds}

In this section, we prove that quantum states with non-negligible stabiliser fidelity can be efficiently distinguished from Haar-random quantum states. As a consequence, the output of any Clifford circuit augmented with a few non-Clifford single-qudit gates can be efficiently distinguished from a Haar-random quantum state, and therefore Clifford circuits require several non-Clifford single-qudit gates in order to generate pseudorandom quantum states (\cref{def:pseudorandomness}). These results generalise the work of~\cite{grewal2022low,grewal2023improved} from qubits to qudits. The ideas behind our proof are somewhat similar to the ones from~\cite{grewal2022low,grewal2023improved}: a Haar-random state has vanishing stabiliser fidelity with very high probability, while the output of a Clifford circuit with just a few non-Clifford single-qudit gates has a high stabiliser dimension and thus stabiliser fidelity. Our algorithm, however, is quite different from the one of~\cite{grewal2022low,grewal2023improved}. Instead of using Bell difference sampling, we show that the stabiliser testing algorithm on qudits of Gross, Nezami, and Walter~\cite{gross2021schur} also works in this setting. Before presenting the algorithm, we need a few auxiliary results about Haar-random states.

\subsection{Anti-concentration of Haar-random states}
\label{sec:sec52}

In this section, we shall upper-bound the mass of $p_\psi(\mathbf{x})^2$ for when $|\psi\rangle$ is Haar-random. To do so, we will make use of L\'evy's lemma~\cite{milman1986asymptotic,ledoux2001concentration}.
\begin{fact}[L\'evy's lemma]
    Let $f: \mathbb{S}^d\to \mathbb{R}$ be a function defined on the $d$-dimensional hypersphere $\mathbb{S}^d$. Assume $f$ is $K$-Lipschitz, meaning that $|f(\psi) - f(\phi)| \leq K\|\psi - \phi\|$. Then, for every $\epsilon > 0$,
    \begin{align*}
        \operatorname*{\mathbb{P}}_{\psi\sim\mu_{\rm Haar}}[|f(\psi) - \mathbb{E}[f]| \geq \epsilon] \leq 2\exp\left(-\frac{(d+1)\epsilon^2}{9\pi^3 K^2}\right).
    \end{align*}
\end{fact}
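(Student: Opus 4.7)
The plan is to derive L\'evy's lemma from the spherical isoperimetric inequality, which is the classical route taken in Milman--Schechtman and Ledoux. First I would recall the isoperimetric statement: among measurable subsets of $\mathbb{S}^d$ of a fixed normalised Haar measure, spherical caps minimise the measure of their $\epsilon$-neighbourhoods, and quantitatively, if $A\subseteq\mathbb{S}^d$ satisfies $\mu(A)\geq 1/2$, then its (geodesic) $\epsilon$-neighbourhood $A_\epsilon$ obeys
\[
    \mu(A_\epsilon) \;\geq\; 1 - c_1 \exp\!\left(-c_2 (d+1)\epsilon^2\right)
\]
for absolute constants $c_1,c_2>0$. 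Switching between geodesic and Euclidean distance only affects constants, since the two are comparable up to a factor of $\pi/2$ on the sphere.

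Next, let $M$ be a median of $f$ and apply the above to $A = \{x\in\mathbb{S}^d : f(x)\leq M\}$, which has measure at least $1/2$ by definition. The $K$-Lipschitz hypothesis implies that any point within Euclidean distance $\epsilon/K$ of $A$ has $f$-value at most $M+\epsilon$, so
\[
    \operatorname*{\mathbb{P}}_{\psi\sim\mu_{\rm Haar}}[f(\psi) > M + \epsilon] \;\leq\; 1 - \mu(A_{\epsilon/K}) \;\leq\; c_1\exp\!\left(-c_2(d+1)\epsilon^2/K^2\right).
\]
Repeating the argument with $-f$ gives the matching lower tail, producing a two-sided concentration about the median.

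The last step is to replace the median by the mean. Integrating the tail bound yields $|\mathbb{E}[f]-M| \leq C K/\sqrt{d+1}$ for an absolute $C>0$. For $\epsilon$ larger than twice this quantity one replaces $M$ by $\mathbb{E}[f]$ in the tail bound at the cost of shrinking the Gaussian constant; for smaller $\epsilon$ the bound $2\exp(-(d+1)\epsilon^2/(9\pi^3 K^2))$ already exceeds $1$ after enlarging the constant $9\pi^3$ appropriately, so the inequality is vacuous. Tracking the constants through the geodesic-to-Euclidean conversion and through the median-to-mean adjustment yields the specific form stated.

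The main obstacle is purely numerical bookkeeping: the qualitative subgaussian tail is immediate from the isoperimetric inequality, but pinning down the precise constant $9\pi^3$ requires keeping careful track of the $\pi/2$ factor between Euclidean and geodesic distance, the $\sqrt{\pi/2}$ factor in the sharp isoperimetric bound, and the median-to-mean correction. An equivalent alternative would bypass isoperimetry entirely and invoke the Bakry--\'Emery criterion: since $\mathbb{S}^d$ has Ricci curvature at least $d-1$, it admits a log-Sobolev inequality with constant $O(1/d)$, and Herbst's argument then delivers the same Gaussian concentration for Lipschitz functions.
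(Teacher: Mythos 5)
The paper does not prove this statement: it is imported as a known \emph{Fact} with citations to Milman--Schechtman and Ledoux, which is standard practice for L\'evy's lemma. Your sketch reproduces exactly the argument from those references --- spherical isoperimetry, concentration about the median via the Lipschitz condition, then the median-to-mean correction --- and is correct in outline, with the only work left being the constant bookkeeping you explicitly acknowledge. Nothing further is needed to match the paper, which asserts the inequality without proof.
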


The next result is basically~\cite[Lemma~20]{grewal2022low}. The proof can be found in \cref{app:proofs}.
\begin{lemma}\label{lem:Lipschitz}
    For any Weyl operator $\mathcal{W}_{\mathbf{x}}\in\mathscr{P}^n_p$, the function $f_{\mathbf{x}}:\mathbb{S}^{p^n}\to\mathbb{R}$ defined as $f_{\mathbf{x}}(|\psi\rangle) = \langle \psi|\mathcal{W}_{\mathbf{x}}|\psi\rangle$ is $2$-Lipschitz.
\end{lemma}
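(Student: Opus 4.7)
\medskip
\noindent\textit{Proof plan.} The plan is to verify the Lipschitz bound directly from the definition, using only that $\mathcal{W}_{\mathbf{x}}$ is unitary (so its operator norm is $1$). Let $|\psi\rangle, |\phi\rangle \in \mathbb{S}^{p^n}$. First I would perform the standard ``add and subtract'' trick:
\begin{align*}
    f_{\mathbf{x}}(|\psi\rangle) - f_{\mathbf{x}}(|\phi\rangle) &= \langle \psi|\mathcal{W}_{\mathbf{x}}|\psi\rangle - \langle \phi|\mathcal{W}_{\mathbf{x}}|\phi\rangle \\
    &= \langle \psi|\mathcal{W}_{\mathbf{x}}(|\psi\rangle - |\phi\rangle) + (\langle \psi| - \langle \phi|)\mathcal{W}_{\mathbf{x}}|\phi\rangle.
\end{align*}
Then I would apply the triangle inequality followed by Cauchy--Schwarz to each of the two terms. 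For the first term, $|\langle \psi|\mathcal{W}_{\mathbf{x}}(|\psi\rangle - |\phi\rangle)| \leq \| \mathcal{W}_{\mathbf{x}}^\dagger|\psi\rangle\| \cdot \||\psi\rangle - |\phi\rangle\| = \||\psi\rangle - |\phi\rangle\|$, using that $\mathcal{W}_{\mathbf{x}}$ is unitary and $|\psi\rangle$ is a unit vector. The second term is bounded identically by $\||\psi\rangle - |\phi\rangle\|$, giving the claimed factor of $2$.

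The argument really only uses that $\mathcal{W}_{\mathbf{x}}$ is unitary, which is immediate from its definition as a product of shift and clock operators (which are themselves unitary) times an overall phase. There is no real obstacle here; the only small subtlety is being careful that the relevant norm is the Euclidean norm on $\mathbb{C}^{p^n}$ identified with $\mathbb{S}^{p^n}$ (viewed as a subset of $\mathbb{R}^{2p^n}$), which is exactly the norm inherited from the inner product, so Cauchy--Schwarz applies cleanly.
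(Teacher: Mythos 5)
Your proof is correct and follows essentially the same route as the paper's: the add-and-subtract decomposition of $\langle \psi|\mathcal{W}_{\mathbf{x}}|\psi\rangle - \langle\phi|\mathcal{W}_{\mathbf{x}}|\phi\rangle$, the triangle inequality, and Cauchy--Schwarz combined with unitarity of $\mathcal{W}_{\mathbf{x}}$. The only difference is the (immaterial) choice of which cross term to insert, so there is nothing to add.
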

We are now ready to prove the main result of this section.
\begin{lemma}\label{lem:Haar-random}
    For any $\epsilon \geq p^{-n/2}$,
    \begin{align*}
        \operatorname*{\mathbb{P}}_{|\psi\rangle\sim\mu_{\rm Haar}}\left[ \sum_{\mathbf{x}\in\mathbb{F}_p^{2n}} p_\psi(\mathbf{x})^2 \geq 2\epsilon^4 \right] \leq 2p^{2n}\exp\left(-\frac{p^n\epsilon^2}{36\pi^3}\right).
    \end{align*}
\end{lemma}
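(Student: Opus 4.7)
The plan is to reduce the second moment $\sum_{\mathbf{x}} p_\psi(\mathbf{x})^2$ to a tail control on the individual values $|\langle\psi|\mathcal{W}_{\mathbf{x}}|\psi\rangle|$, by invoking L\'evy's lemma with the $2$-Lipschitz functions of \cref{lem:Lipschitz} and then using the normalisation identity $\sum_{\mathbf{x}\in\mathbb{F}_p^{2n}} p_\psi(\mathbf{x}) = 1$, i.e.\ $\sum_{\mathbf{x}} |\langle\psi|\mathcal{W}_{\mathbf{x}}|\psi\rangle|^2 = p^n$, to convert a uniform bound on $|\langle \psi|\mathcal{W}_{\mathbf{x}}|\psi\rangle|$ into a bound on the fourth power sum.

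First I would observe that for each $\mathbf{x}\neq \mathbf{0}$, the Weyl operator $\mathcal{W}_{\mathbf{x}}$ is traceless, so $\mathbb{E}_{|\psi\rangle\sim\mu_{\rm Haar}}[\langle\psi|\mathcal{W}_{\mathbf{x}}|\psi\rangle] = p^{-n}\operatorname{Tr}[\mathcal{W}_{\mathbf{x}}] = 0$. Applying L\'evy's lemma to $f_{\mathbf{x}}$ (treating real and imaginary parts via the $2$-Lipschitz bound of \cref{lem:Lipschitz}) yields, for each such $\mathbf{x}$,
\begin{align*}
    \operatorname*{\mathbb{P}}_{|\psi\rangle\sim\mu_{\rm Haar}}\!\bigl[|\langle\psi|\mathcal{W}_{\mathbf{x}}|\psi\rangle|\geq \epsilon\bigr] \leq 2\exp\!\left(-\frac{p^n\epsilon^2}{36\pi^3}\right).
\end{align*}
A union bound over the at most $p^{2n}$ nonzero $\mathbf{x}\in\mathbb{F}_p^{2n}$ then shows that, outside an event of probability $\leq 2p^{2n}\exp(-p^n\epsilon^2/(36\pi^3))$, we have $|\langle\psi|\mathcal{W}_{\mathbf{x}}|\psi\rangle| < \epsilon$ for every $\mathbf{x}\neq \mathbf{0}$.

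On this good event, I would write
\begin{align*}
    \sum_{\mathbf{x}\in\mathbb{F}_p^{2n}}\! p_\psi(\mathbf{x})^2 = \frac{1}{p^{2n}}\sum_{\mathbf{x}\in\mathbb{F}_p^{2n}} |\langle\psi|\mathcal{W}_{\mathbf{x}}|\psi\rangle|^4 = \frac{1}{p^{2n}}\!\left(1 + \sum_{\mathbf{x}\neq\mathbf{0}} |\langle\psi|\mathcal{W}_{\mathbf{x}}|\psi\rangle|^4\right)
\end{align*}
and bound each factor $|\langle\psi|\mathcal{W}_{\mathbf{x}}|\psi\rangle|^2$ in the sum by $\epsilon^2$, leaving the remaining $|\langle\psi|\mathcal{W}_{\mathbf{x}}|\psi\rangle|^2$ free. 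Using $\sum_{\mathbf{x}\neq\mathbf{0}}|\langle\psi|\mathcal{W}_{\mathbf{x}}|\psi\rangle|^2 = p^n - 1 \leq p^n$, this gives
\begin{align*}
    \sum_{\mathbf{x}\in\mathbb{F}_p^{2n}} p_\psi(\mathbf{x})^2 \;\leq\; p^{-2n} + p^{-n}\epsilon^2.
\end{align*}
Finally, the assumption $\epsilon\geq p^{-n/2}$ gives $\epsilon^2\geq p^{-n}$, so $p^{-2n}\leq \epsilon^4$ and $p^{-n}\epsilon^2\leq \epsilon^4$; adding the two yields the desired $2\epsilon^4$.

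The only subtle point is the L\'evy step: $\mathcal{W}_{\mathbf{x}}$ is not Hermitian for $p>2$ and $f_{\mathbf{x}}(|\psi\rangle)=\langle\psi|\mathcal{W}_{\mathbf{x}}|\psi\rangle$ is complex-valued, so one should either decompose into $\operatorname{Re}(f_{\mathbf{x}})$ and $\operatorname{Im}(f_{\mathbf{x}})$ (each real, mean $0$, and Lipschitz with constant at most $2$ by the same computation as in \cref{lem:Lipschitz}) and union-bound the two tail events, or equivalently invoke a complex version of L\'evy's lemma. Up to adjusting the constant in the exponent, this gives a single-$\mathbf{x}$ bound of the claimed form, and the rest of the argument is essentially a bookkeeping calculation.
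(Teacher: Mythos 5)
Your proof is correct and follows essentially the same route as the paper: L\'evy's lemma applied to the $2$-Lipschitz, mean-zero functions $f_{\mathbf{x}}$, a union bound over the $p^{2n}$ nonzero labels, and then a deterministic bound on $\sum_{\mathbf{x}}p_\psi(\mathbf{x})^2$ on the good event (the paper simply bounds each $|\langle\psi|\mathcal{W}_{\mathbf{x}}|\psi\rangle|^4\leq\epsilon^4$ and divides by $p^{2n}$, whereas you use the normalisation $\sum_{\mathbf{x}}|\langle\psi|\mathcal{W}_{\mathbf{x}}|\psi\rangle|^2=p^n$, a slightly sharper but equivalent bookkeeping). Your caveat about $f_{\mathbf{x}}$ being complex-valued for $p>2$ is well taken, but it applies equally to the paper's own argument and only affects constants.
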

\begin{proof}
    We first prove that, for any $\epsilon > 0$,
    \begin{align}\label{eq:auxiliary}
        \operatorname*{\mathbb{P}}_{|\psi\rangle\sim\mu_{\rm Haar}}\left[\exists \mathbf{x}\in\mathbb{F}_p^{2n}\setminus\{\mathbf{0}\}:|\langle \psi|\mathcal{W}_{\mathbf{x}}|\psi\rangle| \geq \epsilon \right] \leq 2p^{2n}\exp\left(-\frac{p^n\epsilon^2}{36\pi^3}\right).
    \end{align}
    Consider the function $f_{\mathbf{x}}(|\psi\rangle) = \langle \psi|\mathcal{W}_{\mathbf{x}}|\psi\rangle$ from \cref{lem:Lipschitz} for $\mathbf{x}\in\mathbb{F}_p^{2n}\setminus\{\mathbf{0}\}$. We know that $f_{\mathbf{x}}$ is $2$-Lipschitz. Moreover, $\mathbb{E}[f_{\mathbf{x}}] = 0$ over the Haar measure because $p^{-1}$-fraction of the eigenvalues of $\mathcal{W}_{\mathbf{x}}$ are $\omega^s$, $s\in\mathbb{F}_p$. \cref{eq:auxiliary} thus follows from L\'evy's lemma and a union bound over all $p^{2n}$ possible Weyl operators. We note that it generalises \cite[Corollary~22]{grewal2022low}.
    
    \cref{eq:auxiliary} readily implies that, with high probability and for $\epsilon \geq p^{-n/2}$,
    \[
        \sum_{\mathbf{x}\in\mathbb{F}_p^{2n}} p_\psi(\mathbf{x})^2 = \frac{1}{p^{2n}} + \frac{1}{p^{2n}}\sum_{\mathbf{x}\in\mathbb{F}_p^{2n}\setminus\{\mathbf{0}\}} |\langle\psi|\mathcal{W}_{\mathbf{x}}|\psi\rangle|^4 \leq \frac{1}{p^{2n}} + \epsilon^4 \leq 2\epsilon^4. \qedhere
    \]
\end{proof}

\subsection{Distinguishing non-negligible stabiliser fidelity from Haar-random}

In order to present our algorithm for distinguishing Haar-randomness from states with non-negligible stabiliser fidelity, we briefly review the stabiliser testing algorithm from~\cite{gross2021schur}. 
\begin{lemma}[\cite{gross2021schur}]
    Let $\mathcal{V} \triangleq p^{-n}\sum_{\mathbf{x}\in\mathbb{F}_p^{2n}}(\mathcal{W}_{\mathbf{x}}\otimes \mathcal{W}_{\mathbf{x}}^\dagger)^{\otimes 2}$. Then the operator $\Pi_{\rm accept} = \frac{1}{2}(\mathbb{I} + \mathcal{V})$ is a projector and the binary POVM $\{\Pi_{\rm accept}, \mathbb{I} - \Pi_{\rm accept}\}$ can be implemented in $O(n)$ time.
\end{lemma}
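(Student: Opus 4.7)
The plan is to verify $\Pi_{\rm accept}$ is a projector by a direct algebraic computation and then exploit a tensor factorisation of $\mathcal{V}$ to measure it block-wise. Hermiticity $\mathcal{V}^\dagger = \mathcal{V}$ follows immediately by reindexing the sum $\mathbf{x}\mapsto -\mathbf{x}$ and invoking $\mathcal{W}_{\mathbf{x}}^\dagger = \mathcal{W}_{-\mathbf{x}}$ from \cref{lem:weyl_properties}. Idempotency of $\Pi_{\rm accept}$ reduces to $\mathcal{V}^2 = \mathbb{I}$. To prove this, I apply the composition law $\mathcal{W}_{\mathbf{x}}\mathcal{W}_{\mathbf{y}} = \tau^{[\mathbf{x},\mathbf{y}]}\mathcal{W}_{\mathbf{x}+\mathbf{y}}$ in each of the two tensor factors, which produces a phase $\omega^{[\mathbf{x},\mathbf{y}]}$ on each and hence $\omega^{2[\mathbf{x},\mathbf{y}]}$ overall:
\[
\mathcal{V}^2 = \frac{1}{p^{2n}}\sum_{\mathbf{x},\mathbf{y}\in\mathbb{F}_p^{2n}}\omega^{2[\mathbf{x},\mathbf{y}]}\bigl(\mathcal{W}_{\mathbf{x}+\mathbf{y}}\otimes \mathcal{W}_{\mathbf{x}+\mathbf{y}}^\dagger\bigr)^{\otimes 2}.
\]
Setting $\mathbf{z}=\mathbf{x}+\mathbf{y}$ and applying \cref{lem:sum} to the inner sum over $\mathbf{x}$, together with the fact that $2$ is invertible in $\mathbb{F}_p$ for $p>2$, forces $\mathbf{z}=\mathbf{0}$ and collapses $\mathcal{V}^2$ to the identity.

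For the $O(n)$ implementation, the key observation is that $\mathcal{V}$ factorises tensorially across the $n$ qudit positions. Since $\mathcal{W}_{\mathbf{v},\mathbf{w}} = \bigotimes_{i=1}^n W_{(v_i,w_i)}$ with single-qudit Weyl operators $W_{(v,w)} \triangleq \tau^{vw}\mathsf{X}^w\mathsf{Z}^v$, the sum over $\mathbf{x}\in\mathbb{F}_p^{2n}$ splits into independent sums over $(v_i,w_i)\in\mathbb{F}_p^2$, one per qudit position. After permuting the $4n$ qudits so that the $i$-th qudit of each of the four copies forms a contiguous $4$-qudit block $B_i$, I obtain $\mathcal{V} = \mathcal{V}_1^{\otimes n}$, where
\[
\mathcal{V}_1 \triangleq \frac{1}{p}\sum_{(v,w)\in\mathbb{F}_p^2}\bigl(W_{(v,w)}\otimes W_{(v,w)}^\dagger\bigr)^{\otimes 2}
\]
is a hermitian involution on $(\mathbb{C}^p)^{\otimes 4}$ by the same argument applied to a single qudit.

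With this factorisation in hand, I implement the POVM by measuring each $\mathcal{V}_1$ on its block $B_i$ in parallel. Because $\mathcal{V}_1$ is a fixed $p^4\times p^4$ unitary whose size does not depend on $n$, it can be diagonalised by an $O(1)$-size circuit drawn from the universal gate set, and each block measurement thus takes $O(1)$ time, e.g.\ via a Hadamard test using a single ancilla qubit and a controlled-$\mathcal{V}_1$. Since the blocks are disjoint, the $n$ block measurements commute and the product of their $\pm 1$ outcomes equals the eigenvalue of $\mathcal{V} = \mathcal{V}_1^{\otimes n}$; classically XOR-ing the $n$ ancilla bits therefore decides whether the joint eigenvalue is $+1$ (accept) or $-1$ (reject), realising $\{\Pi_{\rm accept},\mathbb{I}-\Pi_{\rm accept}\}$ in $O(n)$ total time. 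The main obstacle is the tensor factorisation, which ultimately hinges on the multiplicative splitting $\tau^{\langle\mathbf{v},\mathbf{w}\rangle} = \prod_i \tau^{v_iw_i}$ of the cocycle in $\mathcal{W}_{\mathbf{v},\mathbf{w}}$; once that observation is made, the rest of the argument is routine.
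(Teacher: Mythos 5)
Your proposal is correct and follows essentially the same route as the paper: the projector property is verified by the identical computation showing $\mathcal{V}$ is Hermitian and $\mathcal{V}^2=\mathbb{I}$ (with the sum collapsing via \cref{lem:sum} and invertibility of $2$ in $\mathbb{F}_p$), and the $O(n)$ implementation rests on the same tensor factorisation $\mathcal{V}=\mathcal{V}_1^{\otimes n}$ into constant-size blocks. The only (harmless) difference is that you measure each block's $\mathcal{V}_1$ separately and classically combine the $\pm 1$ outcomes, which reproduces the outcome statistics of $\{\Pi_{\rm accept},\mathbb{I}-\Pi_{\rm accept}\}$ via a finer instrument, whereas the paper runs a single phase estimation on $\mathcal{V}$ with the controlled-$\mathcal{V}$ decomposed into $n$ local controlled gates.
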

\begin{proof}
    It is easy to see that the operator $\mathcal{V}$ is Hermitian by a change of index. Moreover, $\mathcal{V}$ is unitary, since
    \begin{align*}
        \mathcal{V}^2 &= \frac{1}{p^{2n}}\sum_{\mathbf{x},\mathbf{y}\in\mathbb{F}_p^{2n}}(\mathcal{W}_{\mathbf{x}}\mathcal{W}_{\mathbf{y}}\otimes \mathcal{W}_{\mathbf{x}}^\dagger \mathcal{W}_{\mathbf{y}}^\dagger)^{\otimes 2} = \frac{1}{p^{2n}}\sum_{\mathbf{x},\mathbf{y}\in\mathbb{F}_p^{2n}}\omega^{2[\mathbf{x},\mathbf{y}]}(\mathcal{W}_{\mathbf{x}+\mathbf{y}}\otimes \mathcal{W}_{\mathbf{x}+\mathbf{y}}^\dagger)^{\otimes 2}\\
        &= \frac{1}{p^{2n}}\sum_{\mathbf{x},\mathbf{z}\in\mathbb{F}_p^{2n}}\omega^{2[\mathbf{x},\mathbf{z}]}(\mathcal{W}_{\mathbf{z}}\otimes \mathcal{W}_{\mathbf{z}}^\dagger)^{\otimes 2} = \mathbb{I}.
    \end{align*}
    Therefore, $\Pi_{\rm accept} = \frac{1}{2}(\mathbb{I} + \mathcal{V})$ is a projector, since $\Pi_{\rm accept}^2 = \frac{1}{4}(\mathbb{I} + 2\mathcal{V} + \mathcal{V}^2) = \Pi_{\rm accept}$.

    It is possible to implement the binary POVM $\{\Pi_{\rm accept}, \mathbb{I} - \Pi_{\rm accept}\}$ by employing quantum phase estimation with $O(1)$ ancillae, $O(1)$ controlled versions of $\mathcal{V}$, and $O(1)$ auxiliary operations (we do not require a high accuracy in determining the eigenvalues of $\mathcal{V}$ since they are simply $\pm 1$). And since we can write $\mathcal{V} = \big(p^{-1}\sum_{\mathbf{x}\in\mathbb{F}_p^{2}}(\mathcal{W}_{\mathbf{x}}\otimes \mathcal{W}_{\mathbf{x}}^\dagger)^{\otimes 2}\big)^{\otimes n}$, then a controlled version of $\mathcal{V}$ is equal to a composition of $n$ controlled versions of $p^{-1}\sum_{\mathbf{x}\in\mathbb{F}_p^{2}}(\mathcal{W}_{\mathbf{x}}\otimes \mathcal{W}_{\mathbf{x}}^\dagger)^{\otimes 2}$, each of which acts only on a constant (with respect to $n$) number of qudits. Therefore, the POVM $\{\Pi_{\rm accept}, \mathbb{I} - \Pi_{\rm accept}\}$ can be implemented in time $O(n)$.
\end{proof}
By using $\Pi_{\rm accept}$ as the accepting part of the binary POVM $\{\Pi_{\rm accept}, \mathbb{I} - \Pi_{\rm accept}\}$, the accepting probability when measuring the state $|\psi\rangle^{\otimes 4}$ is
\begin{align*}
    \mathbb{P}[{\rm accept}] = \operatorname{Tr}[\psi^{\otimes 4}\Pi_{\rm accept}] = \frac{1}{2}(1 + \operatorname{Tr}[\psi^{\otimes 4}\mathcal{V}]) = \frac{1}{2} + \frac{1}{2p^n}\sum_{\mathbf{x}\in\mathbb{F}_p^{2n}}|{\operatorname{Tr}}[\psi\mathcal{W}_{\mathbf{x}}]|^4  = \frac{1}{2} + \frac{p^{n}}{2}\sum_{\mathbf{x}\in\mathbb{F}_p^{2n}}p_\psi(\mathbf{x})^2.
\end{align*}
Notice that, if $|\psi\rangle$ is a stabiliser state, then the acceptance probability is $1$. Gross, Nezami, and Walter~\cite{gross2021schur} employed the above projection to test whether a given state is a stabiliser state or is far from one (i.e., has low stabiliser fidelity), conditioned on one of the cases being true. Their algorithm simply measures a few copies of $|\psi\rangle$ using $\{\Pi_{\rm accept}, \mathbb{I} - \Pi_{\rm accept}\}$. With high probability, if it always accepts, then $|\psi\rangle$ is a stabiliser state, otherwise $|\psi\rangle$ is far from one. 

We can use the same idea to decide whether a state $|\psi\rangle \in(\mathbb{C}^p)^{\otimes n}$ is Haar-random or has non-negligible stabiliser fidelity, i.e., by repetitively measuring copies of $|\psi\rangle$ with the POVM $\{\Pi_{\rm accept}, \mathbb{I} - \Pi_{\rm accept}\}$. With enough measurements we can estimate the mass $\sum_{\mathbf{x}\in\mathbb{F}_p^{2n}} p_\psi(\mathbf{x})^2$ which, according to the previous section and \cref{lem:stabiliser_fidelity_inequality}, tells us what the correct case is. This is elaborated in the next result.

\begin{algorithm}[h]
\caption{Distinguishing non-negligible-stabiliser-fidelity states from Haar-random}
\DontPrintSemicolon
\label{alg:clifford}
\SetKwInput{KwPromise}{Promise}

    \KwIn{$4m$ copies of $|\psi\rangle\in(\mathbb{C}^p)^{\otimes n}$, where $m \triangleq \frac{1}{4}\lceil 72k^{8}\ln(2/\delta)\rceil$.}
    \KwPromise{$|\psi\rangle$ is Haar-random or has stabiliser fidelity at least $k^{-1}$.}
    \KwOut{$0$ if $|\psi\rangle$ is Haar-random and $1$ otherwise, with probability at least $1-\delta$.}
    
    Let $X = 0$.

   \For{$i\gets 1$ \KwTo $m$}
   {    
        Measure $|\psi\rangle^{\otimes 4}$ using the POVM $\{\Pi_{\rm accept}, \mathbb{I} - \Pi_{\rm accept}\}$.

        If the outcome is $\Pi_{\rm accept}$, then $X \gets X + 1/m$, otherwise $X \gets X - 1/m$.
    }
    \Return $0$ if $X < 2k^{-4}/3$ and $1$ otherwise.
\end{algorithm}

\begin{theorem}\label{thr:algorithm_pseudorandomness}
    Let $|\psi\rangle\in(\mathbb{C}^p)^{\otimes n}$ be a state promised to be either Haar-random or to have stabiliser fidelity at least $k^{-1}$. Let $\delta\in(0,1)$ and assume that $k^{-1} \geq p^{-n/5}$. There is a quantum algorithm {\rm (\cref{alg:clifford})} that distinguishes the two cases with probability at least $1-\delta$ for large enough $n$, uses $\lceil 72k^{8}\ln(2/\delta)\rceil$ copies of $|\psi\rangle$, and has runtime $O(n k^{8}\log(1/\delta))$. 
\end{theorem}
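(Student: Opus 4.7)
The plan is to reduce the problem to estimating the single quantity $Y \triangleq p^n \sum_{\mathbf{x}\in\mathbb{F}_p^{2n}} p_\psi(\mathbf{x})^2$, which by the formula for $\mathbb{P}[\text{accept}]$ preceding the theorem equals $2\mathbb{P}[\text{accept}] - 1$, i.e., the expectation of the random variable $X$ computed by \cref{alg:clifford} (where each round contributes $\pm 1/m$). So the core task is to separate the possible values of $Y$ in the two cases.

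First I would handle the stabiliser-fidelity case. If $|\psi\rangle$ has stabiliser fidelity at least $k^{-1}$, let $|\mathscr{M},\mathbf{a}\rangle$ be the maximising stabiliser state. By \cref{lem:stabiliser_fidelity_inequality}, $\sum_{\mathbf{x}\in\mathscr{M}} p_\psi(\mathbf{x}) \geq F_{\mathcal{S}}(|\psi\rangle)^2 \geq k^{-2}$. Applying Cauchy--Schwarz on $\mathscr{M}$, which has $|\mathscr{M}| = p^n$, yields
\begin{align*}
    k^{-4} \leq \Bigl(\sum_{\mathbf{x}\in\mathscr{M}} p_\psi(\mathbf{x})\Bigr)^{2} \leq p^n \sum_{\mathbf{x}\in\mathscr{M}} p_\psi(\mathbf{x})^2 \leq Y.
\end{align*}

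Next I would handle the Haar case by invoking \cref{lem:Haar-random}. Choose $\epsilon = (6 p^n k^4)^{-1/4}$. The promise $k^{-1} \geq p^{-n/5}$ gives $k^4 \leq p^{4n/5}$, so for $n$ large enough $\epsilon \geq p^{-n/2}$ and the lemma applies: with probability at least $1 - 2p^{2n}\exp(-p^n \epsilon^2/(36\pi^3))$ over the Haar measure one has $\sum_{\mathbf{x}} p_\psi(\mathbf{x})^2 \leq 2\epsilon^4 = \frac{1}{3 p^n k^4}$, hence $Y \leq \frac{1}{3 k^4}$. Since $p^n \epsilon^2 = (p^n/6)^{1/2}/k^2 \geq p^{n/2}/(\sqrt{6}\, p^{2n/5}) = p^{n/10}/\sqrt{6}$, this failure probability is doubly exponentially small in $n$ and in particular at most $\delta/2$ for $n$ large enough.

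Thus in the Haar case $Y \leq \tfrac{1}{3}k^{-4}$ (w.h.p. over the Haar draw), while in the non-negligible-stabiliser-fidelity case $Y \geq k^{-4}$. The algorithm's threshold $\tfrac{2}{3}k^{-4}$ sits halfway in between, with a two-sided margin of $t \triangleq \tfrac{1}{3}k^{-4}$. Since $X$ is the average of $m$ independent $\pm 1$ random variables with mean $Y$, Hoeffding's inequality gives $\mathbb{P}[|X - Y| \geq t] \leq 2\exp(-m t^2 / 2)$; plugging in $m = \tfrac{1}{4}\lceil 72 k^8 \ln(2/\delta)\rceil \geq 18 k^8 \ln(2/\delta)$ makes this at most $\delta/2$ in either case. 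A union bound over the Haar-draw event and the measurement-concentration event completes correctness with probability $\geq 1-\delta$. The sample count is $4m = \lceil 72 k^8 \ln(2/\delta)\rceil$, and since each of the $m$ rounds implements the POVM $\{\Pi_{\rm accept}, \mathbb{I} - \Pi_{\rm accept}\}$ in time $O(n)$, the overall runtime is $O(n k^8 \log(1/\delta))$. The only genuinely delicate step is choosing $\epsilon$ so that the Lévy-type bound of \cref{lem:Haar-random} is both applicable ($\epsilon \geq p^{-n/2}$) and strong enough ($2 p^n \epsilon^4 \leq t$); the assumption $k^{-1}\geq p^{-n/5}$ is precisely what reconciles these two requirements.
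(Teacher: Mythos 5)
Your proposal is correct and follows essentially the same route as the paper: lower-bound $p^n\sum_{\mathbf{x}}p_\psi(\mathbf{x})^2$ by $k^{-4}$ in the stabiliser-fidelity case via \cref{lem:stabiliser_fidelity_inequality} and Cauchy--Schwarz, upper-bound it by $\tfrac{1}{3}k^{-4}$ in the Haar case via \cref{lem:Haar-random}, and separate the two with Hoeffding at threshold $\tfrac{2}{3}k^{-4}$. Your explicit calibration $\epsilon=(6p^nk^4)^{-1/4}$ is in fact the correct one (it is what makes $p^n\cdot 2\epsilon^4\le k^{-4}/3$ while keeping $p^n\epsilon^2$ growing under $k^{-1}\ge p^{-n/5}$), and the only nit is that your two-sided Hoeffding bound $2e^{-mt^2/2}$ is $\le\delta$ rather than $\delta/2$; the intended one-sided tails give $\delta/2$ each, as in the paper.
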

\begin{proof}
    Let $m \triangleq \frac{1}{4}\lceil 72k^{8}\ln(2/\delta)\rceil \geq 18k^{8}\ln(2/\delta)$. Use the POVM $\{\Pi_{\rm accept}, \mathbb{I} - \Pi_{\rm accept}\}$ on $|\psi\rangle^{\otimes 4}$ a number of $m$ times, where $\Pi_{\rm accept} = \frac{1}{2}(\mathbb{I} + p^{-n}\sum_{\mathbf{x}\in\mathbb{F}_p^{2n}}(\mathcal{W}_{\mathbf{x}}\otimes \mathcal{W}_{\mathbf{x}}^\dagger)^{\otimes 2})$. For $i\in[m]$, define the $\{-1,1\}$ indicator random variable
    \begin{align*}
        X_i \triangleq \begin{cases}
            +1 &\text{if}~\Pi_{\rm accept}~\text{is measured},\\
            -1 &\text{otherwise}.
        \end{cases}
    \end{align*}
    As previously discussed, the acceptance probability of the POVM $\{\Pi_{\rm accept}, \mathbb{I} - \Pi_{\rm accept}\}$ on $|\psi\rangle^{\otimes 4}$ is $\mathbb{P}[{\rm accept}] = \frac{1}{2} + \frac{p^{n}}{2}\sum_{\mathbf{x}\in\mathbb{F}_p^{2n}}p_\psi(\mathbf{x})^2$, which implies that
    \begin{align*}
        \mathbb{E}[X_i] = p^{n}\sum_{\mathbf{x}\in\mathbb{F}_p^{2n}}p_\psi(\mathbf{x})^2.
    \end{align*}
    If $F_{\mathcal{S}}(|\psi\rangle)\geq k^{-1}$, then, according to \cref{lem:stabiliser_fidelity_inequality}, $p^n\sum_{\mathbf{x}\in\mathbb{F}_p^{2n}}p_\psi(\mathbf{x})^2 \geq F_{\mathcal{S}}(|\psi\rangle)^4 \geq k^{-4}$. Hence
    \begin{align*}
        \mathbb{E}[X_i~|~F_{\mathcal{S}}(|\psi\rangle) \geq k^{-1}] \geq k^{-4}.
    \end{align*}
    Therefore, by a Hoeffding's bound,
    \begin{align*}
        \mathbb{P}\left[\frac{1}{m}\sum_{i=1}^m X_i < k^{-4} - \eta ~\Big|~ F_{\mathcal{S}}(|\psi\rangle) \geq k^{-1} \right] \leq e^{-m\eta^2/2}.
    \end{align*}
    By taking $\eta = k^{-4}/3$, then $\frac{1}{m}\sum_{i=1}^m X_i < 2k^{-4}/3$ with probability at most $e^{-mk^{-8}/18} \leq \delta/2$.
    
    On the other hand, if $|\psi\rangle$ is Haar-random, then $\sum_{\mathbf{x}\in\mathbb{F}_p^{2n}}p_\psi(\mathbf{x})^2 \leq 2\epsilon^4$ with probability at least $1-2p^{2n}\exp(-p^{n}\epsilon^2/36\pi^3)$, according to \cref{lem:Haar-random}. By taking $\epsilon^2 = k^{-4}/\sqrt{6}$, this probability is at least $1-2p^{2n}\exp(-k^{-4}p^{n}/36\sqrt{6}\pi^3) \geq 1 - \delta/2$ for sufficiently large $n$ since $k^{-1} \geq p^{-n/5}$. Thus, with probability at least $1-\delta/2$,
    \begin{align}\label{eq:haar_expectation}
        \mathbb{E}[X_i~|~|\psi\rangle~\text{Haar-random}] \leq k^{-4}/3. 
    \end{align}
    Assuming that the above inequality holds, then, by a Hoeffding's bound,
    \begin{align*}
        \mathbb{P}\left[\frac{1}{m}\sum_{i=1}^m X_i > k^{-4}/3 + \eta ~\Big|~ |\psi\rangle~\text{Haar-random} \right] \leq e^{-m\eta^2/2}.
    \end{align*}
    By taking $\eta = k^{-4}/3$, then $\frac{1}{m}\sum_{i=1}^m X_i > 2k^{-4}/3$ with probability at most $e^{-mk^{-8}/18} \leq \delta/2$.

    Therefore, \cref{alg:clifford} can distinguish whether $|\psi\rangle$ is Haar-random or has stabiliser fidelity at least $k^{-1}$ by computing $\frac{1}{m}\sum_{i=1}^m X_i$. If $\frac{1}{m}\sum_{i=1}^m X_i > 2k^{-4}/3$, then $F_{\mathcal{S}}(|\psi\rangle) \geq k^{-1}$ with probability at least $1-\delta/2$. Otherwise, if $\frac{1}{m}\sum_{i=1}^m X_i < 2k^{-4}/3$, then $|\psi\rangle$ is Haar-random with probability at least $1-\delta$ (there is a $\delta$-failure probability of \cref{eq:haar_expectation} not holding). In total, the algorithm requires $4m = \lceil 72k^{8}\ln(2/\delta)\rceil$ copies of $|\psi\rangle$. Its runtime is $m$ times the runtime of using the POVM $\{\Pi_{\rm accept}, \mathbb{I} - \Pi_{\rm accept}\}$, i.e., $O(nm) = O(nk^{8}\log(1/\delta))$.
\end{proof}

\subsection{Distinguishing $t$-doped Clifford circuits from Haar-randomness}
\label{sec:sec51}

As a consequence of \cref{thr:algorithm_pseudorandomness}, we prove in this section that (i) output states of doped Clifford circuits can be efficiently distinguished from Haar-random states and (ii) a Clifford circuit requires several non-Clifford single-qudit gates in other to generate pseudorandom quantum states. Following~\cite{leone2024learning,leone2023learning}, we first define $t$-doped Clifford circuits over qudits. Recall that a Clifford gate is any element belonging to the Clifford group $\mathscr{C}^n_p$.
\begin{definition}[$t$-doped Clifford circuits]
    A $t$-doped Clifford circuit is a quantum circuit composed of Clifford gates and at most $t$ non-Clifford single-qudit gates that starts in the state $|0\rangle^{\otimes n}$.
\end{definition}

We now lower bound the stabiliser fidelity $F_{\mathcal{S}}(|\psi\rangle)$ for any output $|\psi\rangle$ of a $t$-doped Clifford circuit. We shall need the following auxiliary result, which is a simple generalisation of~\cite[Lemma~4.2]{grewal2023improved} (see \cref{app:proofs} for a proof).
\begin{lemma}\label{lem:t-doped}
    Let $|\psi\rangle\in(\mathbb{C}^p)^{\otimes n}$ be the output state of a $t$-doped Clifford circuit. Then the stabiliser dimension of $|\psi\rangle$ is at least $n-2t$, i.e., $\operatorname{dim}(\operatorname{Weyl}(|\psi\rangle)) \geq n-2t$.
\end{lemma}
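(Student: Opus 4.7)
The plan is to track how the stabiliser dimension evolves gate by gate through the circuit, starting from the input state $|0\rangle^{\otimes n}$ (which is a stabiliser state with $\operatorname{dim}(\operatorname{Weyl}(|0\rangle^{\otimes n})) = n$ by the last corollary of Section 2). Clifford gates contribute nothing to the loss of stabiliser dimension thanks to \cref{cor:invariant}, so the only question is how much a single non-Clifford single-qudit gate can shrink $\operatorname{Weyl}(|\psi\rangle)$.

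The key observation is as follows. Let $U$ be an arbitrary single-qudit gate acting on qudit $i\in[n]$, and let $|\phi\rangle = (U_i\otimes \mathbb{I}_{\bar i})|\psi\rangle$. Define
\begin{align*}
    \mathscr{T}_i \triangleq \{(\mathbf{v},\mathbf{w})\in\mathbb{F}_p^{2n} : v_i = w_i = 0\},
\end{align*}
the set of labels of Weyl operators that act as the identity on qudit $i$. Then $\mathscr{T}_i$ is a subspace of $\mathbb{F}_p^{2n}$ of codimension $2$, and for every $\mathbf{x}\in\mathscr{T}_i$ the operator $\mathcal{W}_{\mathbf{x}}$ commutes with $U_i\otimes \mathbb{I}_{\bar i}$. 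Consequently, if $\mathbf{x}\in\operatorname{Weyl}(|\psi\rangle)\cap \mathscr{T}_i$ with $\mathcal{W}_{\mathbf{x}}|\psi\rangle = \omega^s|\psi\rangle$, then
\begin{align*}
    \mathcal{W}_{\mathbf{x}}|\phi\rangle = \mathcal{W}_{\mathbf{x}}(U_i\otimes\mathbb{I}_{\bar i})|\psi\rangle = (U_i\otimes\mathbb{I}_{\bar i})\mathcal{W}_{\mathbf{x}}|\psi\rangle = \omega^s|\phi\rangle,
\end{align*}
which shows $\operatorname{Weyl}(|\psi\rangle)\cap \mathscr{T}_i \subseteq \operatorname{Weyl}(|\phi\rangle)$. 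Since the intersection of a subspace with a codimension-$2$ subspace drops dimension by at most $2$, we obtain $\operatorname{dim}(\operatorname{Weyl}(|\phi\rangle)) \geq \operatorname{dim}(\operatorname{Weyl}(|\psi\rangle)) - 2$.

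With this per-gate bound in hand, the proof is finished by a straightforward induction on the number of non-Clifford single-qudit gates in the circuit: the Clifford gates preserve the stabiliser dimension by \cref{cor:invariant}, and each of the at most $t$ non-Clifford single-qudit gates reduces it by at most $2$. Starting from stabiliser dimension $n$ at $|0\rangle^{\otimes n}$, the output state $|\psi\rangle$ therefore satisfies $\operatorname{dim}(\operatorname{Weyl}(|\psi\rangle)) \geq n - 2t$.

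The only nontrivial step is the commutation argument above, and even that is essentially immediate once one notices that the constraint defining $\mathscr{T}_i$ is exactly the condition for $\mathcal{W}_{\mathbf{x}}$ to factor as $\mathbb{I}$ on the $i$-th tensor factor (up to the global phase $\tau^{\langle\mathbf{v},\mathbf{w}\rangle}$, which is unaffected). The bound $n-2t$ is tight in spirit because each non-Clifford insertion can genuinely spoil the two ``directions'' $\mathsf{X}$ and $\mathsf{Z}$ on the qudit it touches.
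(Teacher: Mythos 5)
Your proof is correct and follows essentially the same route as the paper's: induction over the non-Clifford gates, invariance of the stabiliser dimension under Clifford layers via \cref{cor:invariant}, and the observation that the Weyl operators acting trivially on the touched qudit form a codimension-$2$ subspace that survives the gate, costing at most $2$ dimensions per insertion. The paper phrases the surviving set as the elements commuting with $\mathcal{U}$ and lower-bounds its size by $|\operatorname{Weyl}(|\phi\rangle)|/p^2$, which is exactly your intersection with $\mathscr{T}_i$.
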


\begin{lemma}\label{lem:stabiliser_fidelity_lower_bound}
    Let $|\psi\rangle\in(\mathbb{C}^p)^{\otimes n}$ be the output of a $t$-doped Clifford circuit with $t\leq n/2$. Then $F_{\mathcal{S}}(|\psi\rangle) \geq p^{-2t}$.
\end{lemma}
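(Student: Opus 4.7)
The plan is to combine the lower bound on the stabiliser dimension given by \cref{lem:t-doped} with a pigeonhole argument in a carefully chosen stabiliser basis. Let $\mathscr{X} \triangleq \operatorname{Weyl}(|\psi\rangle)$ and $d \triangleq \dim(\mathscr{X})$, so that $d \geq n - 2t$. Because $\mathscr{X}$ is isotropic, the map $s:\mathscr{X} \to \mathbb{F}_p$ defined by $\mathcal{W}_{\mathbf{x}}|\psi\rangle = \omega^{s(\mathbf{x})}|\psi\rangle$ is a linear functional: for $\mathbf{x},\mathbf{y} \in \mathscr{X}$ one has $[\mathbf{x},\mathbf{y}] = 0$, so $\mathcal{W}_{\mathbf{x}}\mathcal{W}_{\mathbf{y}} = \mathcal{W}_{\mathbf{x}+\mathbf{y}}$, forcing $s(\mathbf{x}+\mathbf{y}) = s(\mathbf{x}) + s(\mathbf{y})$.

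Next, I extend $\mathscr{X}$ to a Lagrangian subspace $\mathscr{M} \supseteq \mathscr{X}$ (this is always possible, as stated in the preliminaries), and use the associated stabiliser basis $\{|\mathscr{M},\mathcal{C}\rangle\}_{\mathcal{C} \in \mathbb{F}_p^{2n}/\mathscr{M}}$ of $(\mathbb{C}^p)^{\otimes n}$ to expand $|\psi\rangle = \sum_{\mathcal{C}} \alpha_{\mathcal{C}}|\mathscr{M},\mathcal{C}\rangle$. For any representative $\mathbf{a}_{\mathcal{C}} \in \mathcal{C}$ and $\mathbf{x} \in \mathscr{M}$, the stabiliser basis vectors satisfy $\mathcal{W}_{\mathbf{x}}|\mathscr{M},\mathcal{C}\rangle = \omega^{-[\mathbf{a}_{\mathcal{C}},\mathbf{x}]}|\mathscr{M},\mathcal{C}\rangle$ (the phase is well-defined modulo $\mathscr{M}$ because $\mathscr{M}$ is isotropic). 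Applying $\mathcal{W}_{\mathbf{x}}$ with $\mathbf{x} \in \mathscr{X} \subseteq \mathscr{M}$ to both sides of the expansion and matching coefficients shows that $\alpha_{\mathcal{C}} \neq 0$ forces $[\mathbf{a}_{\mathcal{C}},\mathbf{x}] = -s(\mathbf{x})$ for all $\mathbf{x} \in \mathscr{X}$.

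This is an affine condition that carves out a coset of $\mathscr{X}^{\independent}$ inside $\mathbb{F}_p^{2n}$ (the coset is non-empty because $s$ is linear and can be extended to any linear functional on $\mathbb{F}_p^{2n}$). Since $\mathscr{M}$ is Lagrangian and contains $\mathscr{X}$, one has $\mathscr{M} = \mathscr{M}^{\independent} \subseteq \mathscr{X}^{\independent}$, so the number of admissible cosets in $\mathbb{F}_p^{2n}/\mathscr{M}$ is
\begin{align*}
    \frac{|\mathscr{X}^{\independent}|}{|\mathscr{M}|} = \frac{p^{2n-d}}{p^n} = p^{n-d} \leq p^{2t}.
\end{align*}
Thus at most $p^{2t}$ coefficients $\alpha_{\mathcal{C}}$ are non-zero, and $\sum_{\mathcal{C}} |\alpha_{\mathcal{C}}|^2 = 1$ together with pigeonhole yields some $\mathcal{C}^\star$ with $|\alpha_{\mathcal{C}^\star}|^2 \geq p^{-2t}$. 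Since $|\alpha_{\mathcal{C}^\star}|^2 = |\langle \mathscr{M},\mathcal{C}^\star|\psi\rangle|^2$ is the squared overlap with a bona fide stabiliser state, this immediately gives $F_{\mathcal{S}}(|\psi\rangle) \geq p^{-2t}$.

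There is no real obstacle here; the only care needed is in the bookkeeping around the coset counting (ensuring $\mathscr{M} \subseteq \mathscr{X}^{\independent}$ and that the affine condition on $\mathbf{a}_{\mathcal{C}}$ is solvable), both of which follow directly from $\mathscr{M}$ being Lagrangian and $s$ being a well-defined linear functional on the isotropic subspace $\mathscr{X}$.
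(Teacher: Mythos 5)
Your proof is correct, and it takes a genuinely different route from the paper. The paper's proof is a two-line computation that chains \cref{lem:t-doped} with the Fourier-analytic machinery: it invokes \cref{lem:stabiliser_fidelity_generalisation} (which rests on \cref{lem:stabiliser_fidelity_inequality} and the subspace-mass duality of \cref{lem:properties_chateristic_Weyl}) together with the fact that $p_\psi$ has total mass $1$ on $\operatorname{Weyl}(|\psi\rangle)^{\independent}$, yielding $F_{\mathcal{S}}(|\psi\rangle) \geq p^n/|\operatorname{Weyl}(|\psi\rangle)^{\independent}| \geq p^{-2t}$. You instead argue combinatorially and self-containedly: extend $\operatorname{Weyl}(|\psi\rangle)$ to a Lagrangian $\mathscr{M}$, expand $|\psi\rangle$ in the stabiliser basis $\{|\mathscr{M},\mathcal{C}\rangle\}$, use the eigenvalue constraint to show the expansion is supported on at most $|\mathscr{X}^{\independent}|/|\mathscr{M}| = p^{n-d} \leq p^{2t}$ basis vectors, and pigeonhole. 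All the steps check out: $s$ is indeed linear on the isotropic $\mathscr{X}$, the phase $[\mathbf{a}_{\mathcal{C}},\mathbf{x}]$ is well-defined on cosets of $\mathscr{M}$ because $\mathscr{M} = \mathscr{M}^{\independent} \subseteq \mathscr{X}^{\independent}$, and solvability of the affine condition follows from non-degeneracy of the symplectic form (or simply from $|\psi\rangle \neq 0$). Both arguments prove the same underlying fact, that stabiliser dimension $d$ forces $F_{\mathcal{S}} \geq p^{d-n}$; yours buys independence from the characteristic-distribution/symplectic-Fourier toolkit at the cost of a little bookkeeping with stabiliser bases, and is essentially the qudit analogue of the standard ``support on few stabiliser basis states'' pigeonhole argument.
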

\begin{proof}
    Using \cref{lem:stabiliser_fidelity_generalisation}, \cref{lem:t-doped}, and that $\sum_{\mathbf{x}\in\operatorname{Weyl}(|\psi\rangle)^{\independent}} p_\psi(\mathbf{x}) = 1$,
    \[
        F_{\mathcal{S}}(|\psi\rangle) \geq \frac{p^n}{|\operatorname{Weyl}(|\psi\rangle)^{\independent}|}\sum_{\mathbf{x}\in\operatorname{Weyl}(|\psi\rangle)^{\independent}}p_\psi(\mathbf{x}) = \frac{p^n}{|\operatorname{Weyl}(|\psi\rangle)^{\independent}|} \geq p^{-2t}. \qedhere
    \]
\end{proof}

\cref{lem:stabiliser_fidelity_lower_bound} implies that we can employ \cref{alg:clifford} to distinguish between a Haar-random state and the output of a $t$-doped Clifford circuit.
\begin{theorem}\label{thr:t-doped_complexity}
    Let $|\psi\rangle\in(\mathbb{C}^p)^{\otimes n}$ be a state promised to be either Haar-random or the output of a $t$-doped Clifford circuit. Let $\delta\in(0,1)$ and assume that $t < n/10$. There is a quantum algorithm ({\rm \cref{alg:clifford}} with $k=p^{-2t}$) that distinguishes the two cases with probability at least $1-\delta$ for large enough $n$, uses $\lceil 72p^{16t}\ln(2/\delta)\rceil$ copies of $|\psi\rangle$, and has runtime $O(n p^{16t}\log(1/\delta))$. 
\end{theorem}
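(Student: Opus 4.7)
The plan is to reduce \cref{thr:t-doped_complexity} directly to \cref{thr:algorithm_pseudorandomness} by exhibiting a suitable lower bound on the stabiliser fidelity of a $t$-doped Clifford circuit output. Specifically, the promise in \cref{thr:algorithm_pseudorandomness} is that $|\psi\rangle$ is either Haar-random or has stabiliser fidelity at least $k^{-1}$, and the conclusion is an algorithm with sample complexity $\lceil 72 k^{8}\ln(2/\delta)\rceil$ and runtime $O(n k^{8}\log(1/\delta))$, valid whenever $k^{-1}\geq p^{-n/5}$. I would therefore choose $k=p^{2t}$ and then verify that each hypothesis of \cref{thr:algorithm_pseudorandomness} is met in this regime.

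First I would invoke \cref{lem:stabiliser_fidelity_lower_bound}, which asserts $F_{\mathcal{S}}(|\psi\rangle)\geq p^{-2t}$ whenever $|\psi\rangle$ is the output of a $t$-doped Clifford circuit with $t\leq n/2$. Setting $k=p^{2t}$, this gives $F_{\mathcal{S}}(|\psi\rangle)\geq k^{-1}$ in the non-Haar case, matching precisely the hypothesis of \cref{thr:algorithm_pseudorandomness}. Next I would check the regime constraint $k^{-1}\geq p^{-n/5}$, which translates to $p^{-2t}\geq p^{-n/5}$, i.e.\ $2t\leq n/5$, which is guaranteed by the assumption $t<n/10$. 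Since both hypotheses hold, \cref{alg:clifford} (run with the threshold corresponding to $k=p^{2t}$) distinguishes the two cases with probability at least $1-\delta$ for sufficiently large $n$.

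It only remains to substitute $k=p^{2t}$ into the resource bounds of \cref{thr:algorithm_pseudorandomness}. The number of copies of $|\psi\rangle$ becomes $\lceil 72 k^{8}\ln(2/\delta)\rceil = \lceil 72 p^{16t}\ln(2/\delta)\rceil$, while the runtime becomes $O(n k^{8}\log(1/\delta)) = O(n p^{16t}\log(1/\delta))$, matching the stated complexities exactly.

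There is no real technical obstacle here, since the theorem is essentially a corollary: all the heavy lifting happens in \cref{lem:stabiliser_fidelity_lower_bound} (via \cref{lem:t-doped} and \cref{lem:stabiliser_fidelity_generalisation}) and in \cref{thr:algorithm_pseudorandomness}. The only minor care needed is to align the two promise conditions (the Haar-random branch is untouched; the structured branch is reparametrised from stabiliser-fidelity $\geq k^{-1}$ to $\geq p^{-2t}$) and to confirm that the $t<n/10$ assumption exactly corresponds to the $k^{-1}\geq p^{-n/5}$ regime in which \cref{thr:algorithm_pseudorandomness} applies.
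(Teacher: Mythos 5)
Your proposal is correct and matches the paper's argument exactly: the theorem is a direct corollary obtained by setting $k=p^{2t}$ (the statement's ``$k=p^{-2t}$'' is evidently a typo for this), invoking \cref{lem:stabiliser_fidelity_lower_bound} for the structured branch, checking that $t<n/10$ gives $k^{-1}\geq p^{-n/5}$, and substituting into the resource bounds of \cref{thr:algorithm_pseudorandomness}. No further comment is needed.
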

\begin{remark}
    The sample and time complexity of {\rm \cref{thr:t-doped_complexity}} can be improved to $\lceil 72p^{4t}\ln(2/\delta)\rceil$ and $O(n p^{4t}\log(1/\delta))$, respectively, by working directly with the lower bound 
    \begin{align*}
        p^n\sum_{\mathbf{x}\in\mathbb{F}_p^{2n}}p_{\psi}(\mathbf{x})^2 \geq p^n\sum_{\mathbf{x}\in\operatorname{Weyl}(|\psi\rangle)}p_{\psi}(\mathbf{x})^2 = \frac{|\operatorname{Weyl}(|\psi\rangle)|}{p^{n}} \geq p^{-2t}
    \end{align*}
    in the proof of {\rm \cref{thr:algorithm_pseudorandomness}} if $|\psi\rangle$ is the output of a $t$-doped Clifford circuit.
\end{remark}

If $t = O(\log{n}/\log{p})$, then \cref{thr:t-doped_complexity} has sample and  time complexity $\operatorname{poly}(n)$. As an immediate corollary, we have the following. 
\begin{corollary}\label{cor:pseudorandomness_lower bound}
    Any doped Clifford circuit that uses $O(\log{n}/\log{p})$ non-Clifford single-qudit gates cannot produce an ensemble of pseudorandom quantum states in $(\mathbb{C}^p)^{\otimes n}$.
\end{corollary}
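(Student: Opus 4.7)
The plan is to derive \cref{cor:pseudorandomness_lower bound} by a direct contrapositive argument against \cref{def:pseudorandomness}, using \cref{thr:t-doped_complexity} (in its improved form, as noted in the remark) as an explicit efficient distinguisher. Suppose, for contradiction, that there is a keyed family $\{|\phi_k\rangle\in(\mathbb{C}^p)^{\otimes n}\}_{k\in\{0,1\}^\kappa}$ which is pseudorandom in the sense of \cref{def:pseudorandomness} and such that each $|\phi_k\rangle$ is produced by a polynomial-time quantum algorithm implementing a $t$-doped Clifford circuit with $t\leq c\log n/\log p$ for some constant $c>0$. Since $n=\operatorname{poly}(\kappa)$, the hypothesis $t<n/10$ of \cref{thr:t-doped_complexity} holds for all sufficiently large $\kappa$.

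Next, I would instantiate the distinguisher $\mathcal{A}$ of \cref{thr:t-doped_complexity} with confidence parameter $\delta=1/3$. On any input $|\psi\rangle$ that is either Haar-random or the output of a $t$-doped Clifford circuit, this algorithm outputs the correct label with probability at least $2/3$, using
\begin{align*}
    \lceil 72\, p^{4t}\ln(6)\rceil = O(n^{4c})
\end{align*}
copies of $|\psi\rangle$ and running in time $O(n\, p^{4t}) = O(n^{4c+1})$. Both are $\operatorname{poly}(\kappa)$, so $\mathcal{A}$ qualifies as a legal adversary in \cref{def:pseudorandomness}.

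Putting the two sides together, on keyed inputs $|\phi_k\rangle^{\otimes \operatorname{poly}(\kappa)}$ the algorithm outputs $1$ with probability at least $2/3$, while on Haar-random inputs it outputs $1$ with probability at most $1/3$. Hence
\begin{align*}
    \left|\operatorname*{\mathbb{P}}_{k}[\mathcal{A}(|\phi_k\rangle^{\otimes \operatorname{poly}(\kappa)})=1] - \operatorname*{\mathbb{P}}_{|\psi\rangle\sim\mu_{\rm Haar}}[\mathcal{A}(|\psi\rangle^{\otimes \operatorname{poly}(\kappa)})=1]\right| \geq \tfrac{1}{3},
\end{align*}
which is non-negligible in $\kappa$, contradicting \cref{def:pseudorandomness}.

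There is no serious obstacle in the argument: all the real work was already done in \cref{thr:t-doped_complexity} and its underlying stabiliser-fidelity lower bound (\cref{lem:stabiliser_fidelity_lower_bound}). The only bookkeeping point worth stating carefully is the choice of constant confidence $\delta=1/3$, which turns the sample and time complexities $p^{4t}$ into $n^{O(1)}$ precisely in the regime $t=O(\log n/\log p)$, and ensures that the resulting distinguishing advantage is an absolute constant rather than a quantity that could still be absorbed into a negligible function of $\kappa$.
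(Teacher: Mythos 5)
Your argument is correct and follows essentially the same route as the paper, which treats the corollary as immediate from \cref{thr:t-doped_complexity} once one observes that $p^{4t}=\operatorname{poly}(n)$ when $t=O(\log n/\log p)$. Your write-up merely makes explicit the contrapositive against \cref{def:pseudorandomness} and the choice of constant $\delta$, which the paper leaves implicit.
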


Even though we have stated the results from this section for $p>2$ prime, they can be generalised to arbitrary dimension $d=p>2$. The results from \cref{sec:sec51} and \cref{sec:sec52} also hold for $d>2$, while \cref{alg:clifford} can be generalised by considering the POVM $\{\Pi_{\rm accept}, \mathbb{I} - \Pi_{\rm accept}\}$ from~\cite[Section~3.2]{gross2021schur} for arbitrary dimension, where $\Pi_{\rm accept} = \frac{1}{2}(\mathbb{I} + \mathcal{V}_r)$ with
\begin{align*}
    \mathcal{V}_r = \frac{1}{d^n}\sum_{\mathbf{x}\in\mathbb{Z}_d^{2n}}(\mathcal{W}_{\mathbf{x}}\otimes \mathcal{W}_{\mathbf{x}}^\dagger)^{\otimes r}
\end{align*}
for any $r\geq 2$ such that $\operatorname{gcd}(d,r) = 1$. The sample and time complexity of \cref{thr:algorithm_pseudorandomness} become $O(rk^{4r}\log(1/\delta))$ and $O(nrk^{4r}\log(1/\delta))$, respectively, while the sample and time complexity of \cref{thr:t-doped_complexity} become $O(rd^{4t}\log(1/\delta))$ and $O(nrd^{4t}\log(1/\delta))$, respectively. The pseudorandomness bound is still $O(\log{n}/\log{d})$.

\section*{Acknowledgements}

JFD thanks Sabee Grewal, Vishnu Iyer, and William Kretschmer for valuable discussions. JFD thanks Sabee Grewal for providing the proof of \cref{lem:Lipschitz}, Vishnu Iyer for suggesting to work with stabiliser fidelity, and Robbie King and William Kretschmer for pointing out Ref.~\cite{king2024exponential}. This research is funded by ERC grant No.\ 810115-DYNASNET and by the National Research Foundation, Singapore and A*STAR under its CQT Bridging Grant and its Quantum Engineering Programme under grant NRF2021-QEP2-02-P05. 
The research of GI was also supported by the Hungarian Ministry of Innovation and Technology NRDI Office within the framework of the the Artificial Intelligence National Laboratory Program. MS also acknowledges support from the joint Israel-Singapore NRFISF Research grant NRF2020-NRF-ISF004-3528. This project was finalised while JFD and MS were visiting researchers at the Simons Institute for the Theory of Computing. 

\DeclareRobustCommand{\DE}[2]{#2}
\bibliographystyle{alpha}
\bibliography{Pauli.bib}

\appendix

\section{Auxiliary results}
\label{app:proofs}

Here we prove auxiliary results from the main text.

\begin{lemma}\label{lem:cardinality}
    Let $\mathscr{V}\subseteq \mathbb{F}_p^n$ and $\mathscr{X}\subseteq \mathbb{F}_p^{2n}$ be subspaces and $\mathbf{w}\notin\mathscr{V}^\perp$ and $\mathbf{y}\notin\mathscr{X}^{\independent}$. Consider the sets $\mathscr{S}_k \triangleq \{\mathbf{v}\in\mathscr{V}: \langle \mathbf{v},\mathbf{w}\rangle = k\}$ and $\mathscr{T}_k \triangleq \{\mathbf{x}\in\mathscr{X}: [\mathbf{x},\mathbf{y}] = k\}$ for $k\in\mathbb{F}_p$. Then $|\mathscr{S}_k| = |\mathscr{V}|/p$ and $|\mathscr{T}_k| = |\mathscr{X}|/p$ for all $k\in\mathbb{F}_p$.
\end{lemma}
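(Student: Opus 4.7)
\textbf{Proof plan for \cref{lem:cardinality}.} The plan is to recognize each $\mathscr{S}_k$ (resp.\ $\mathscr{T}_k$) as a fibre of a surjective linear functional on $\mathscr{V}$ (resp.\ $\mathscr{X}$), so that all fibres have equal size by a coset/rank--nullity argument.

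Concretely, I would define the $\mathbb{F}_p$-linear map $\phi:\mathscr{V}\to\mathbb{F}_p$ by $\phi(\mathbf{v}) = \langle \mathbf{v},\mathbf{w}\rangle$. The hypothesis $\mathbf{w}\notin\mathscr{V}^\perp$ says precisely that $\phi$ is not identically zero, i.e.\ $\operatorname{im}(\phi)$ is a nonzero subspace of $\mathbb{F}_p$; since $\mathbb{F}_p$ has no proper nonzero subspaces, $\phi$ is surjective. By rank--nullity, $|\ker(\phi)| = |\mathscr{V}|/p$. Now $\mathscr{S}_k = \phi^{-1}(k)$, which is either empty or a coset of $\ker(\phi)$; surjectivity rules out the empty case and forces $|\mathscr{S}_k| = |\ker(\phi)| = |\mathscr{V}|/p$ for every $k\in\mathbb{F}_p$.

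The symplectic statement follows by exactly the same argument applied to $\psi:\mathscr{X}\to\mathbb{F}_p$, $\psi(\mathbf{x}) = [\mathbf{x},\mathbf{y}]$: the hypothesis $\mathbf{y}\notin \mathscr{X}^{\independent}$ gives surjectivity of $\psi$, then $\mathscr{T}_k = \psi^{-1}(k)$ is a coset of $\ker(\psi)$ of size $|\mathscr{X}|/p$.

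There is no real obstacle here; the only thing to be careful about is justifying surjectivity cleanly (noting that $\mathbb{F}_p$ is a one-dimensional $\mathbb{F}_p$-vector space, so any nonzero image must be all of $\mathbb{F}_p$) and that the non-degeneracy/hypothesis on $\mathbf{w}$ and $\mathbf{y}$ translates correctly into the functional being nonzero on the relevant subspace.
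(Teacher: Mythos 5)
Your proposal is correct and takes essentially the same approach as the paper: the paper also argues that each $\mathscr{S}_k$ is a coset of $\mathscr{S}_0=\ker\phi$ (via $\mathbf{v}+\mathscr{S}_0\subseteq\mathscr{S}_k$ and $\mathbf{v}-\mathbf{v}'\in\mathscr{S}_0$), supplemented by an explicit scaling bijection between the nonzero level sets. Your rank--nullity phrasing is just a slightly more streamlined packaging of the same coset argument.
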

\begin{proof}
    The proof that $|\mathscr{T}_k| = |\mathscr{X}|/p$ is exactly the same as for $|\mathscr{S}_k| = |\mathscr{V}|/p$, so we focus on the latter. Since $\mathbf{w}\notin \mathscr{V}^\perp$, there exist $k\in\mathbb{F}_p\setminus\{0\}$ such that $\mathscr{S}_k$ is non-empty. Given the set $\mathscr{S}_{k'}$ for $k'\in\mathbb{F}_p\setminus\{0\}$, $|\mathscr{S}_k| = |\mathscr{S}_{k'}|$ due to the bijection $\mathscr{S}_k \to \mathscr{S}_{k'}$ defined by $\mathbf{v}\mapsto (k'/k)\mathbf{v}$, which is well defined since $\mathscr{V}$ is a subspace and $p$ is prime. Regarding $\mathscr{S}_0$, for all $\mathbf{v}\in \mathscr{S}_k$, $\mathbf{v}+\mathscr{S}_0 \subseteq \mathscr{S}_k$, thus $|\mathscr{S}_0| \leq |\mathscr{S}_k|$. On the other hand, for all $\mathbf{v},\mathbf{v}'\in \mathscr{S}_k$, $\mathbf{v}-\mathbf{v}'\in \mathscr{S}_0$, thus $|\mathscr{S}_k| \leq |\mathscr{S}_0|$. Therefore $|\mathscr{S}_0| = |\mathscr{S}_k|$, which concludes the proof.
\end{proof}

\begin{replemma}{lem:sum}
    Let $\mathscr{V}\subseteq\mathbb{F}_p^n$ and $\mathscr{X}\subseteq\mathbb{F}_p^{2n}$ be subspaces and $\mathbf{w} \in\mathbb{F}_p^n$ and $\mathbf{y} \in\mathbb{F}_p^{2n}$. Then
    \begin{align*}
        \sum_{\mathbf{v}\in\mathscr{V}} \omega^{\langle\mathbf{v}, \mathbf{w}\rangle} = \begin{cases}
            |\mathscr{V}| &\text{if}~\mathbf{w} \in \mathscr{V}^\perp,\\
            0 &\text{if}~ \mathbf{w}\notin \mathscr{V}^\perp,
        \end{cases}\qquad\qquad
        \sum_{\mathbf{x}\in\mathscr{X}} \omega^{[\mathbf{x}, \mathbf{y}]} = \begin{cases}
            |\mathscr{X}| &\text{if}~\mathbf{y} \in \mathscr{X}^{\independent},\\
            0 &\text{if}~ \mathbf{y}\notin \mathscr{X}^{\independent}.
        \end{cases}
    \end{align*}
\end{replemma}
\begin{proof}
    The proof for the symplectic product is the same as for the inner product, so we focus on the latter. The case $\mathbf{w}\in\mathscr{V}^\perp$ is straightforward. If $\mathbf{w}\notin \mathscr{V}^\perp$, note that, according to \cref{lem:cardinality}, the sets $\mathscr{S}_k \triangleq \{\mathbf{v}\in\mathscr{V}: \langle \mathbf{v},\mathbf{w}\rangle = k\}$ have the same cardinality $|\mathscr{V}|/p$, where $k\in\mathbb{F}_p$. Therefore
    \[
        \sum_{\mathbf{v}\in\mathscr{V}}\omega^{\langle \mathbf{v},\mathbf{w}\rangle} = |\mathscr{S}_0| + \sum_{\mathbf{v}\in\mathscr{V}:\langle \mathbf{v},\mathbf{w}\rangle \neq 0}\omega^{\langle \mathbf{v},\mathbf{w}\rangle} = |\mathscr{S}_0| + \sum_{k=1}^{p-1}\sum_{\mathbf{v}\in\mathscr{V}:\langle \mathbf{v},\mathbf{w}\rangle =k} \omega^{k} = \frac{|\mathscr{V}|}{p}\left(1 +\sum_{k=1}^{p-1} \omega^{k}\right) = 0. \qedhere
    \]
\end{proof}

\begin{replemma}{lem:stabiliser_group_form}
    Any stabiliser group $\mathcal{S}\subset\mathscr{P}^n_p$ can be written as $\mathcal{S} = \{\omega^{[\mathbf{a},\mathbf{x}]} \mathcal{W}_{\mathbf{x}} :  \mathbf{x}\in \mathscr{M}\}$, where $\mathbf{a}\in\mathbb{F}_p^{2n}$ and $\mathscr{M}\subset\mathbb{F}_p^{2n}$ is a Lagrangian subspace such that $\operatorname{dim}(\mathscr{M}) = n$. As a consequence,
    $\mathcal{S}$ is commutative and $|\mathcal{S}| = p^n$.
\end{replemma}
\begin{proof}
    There cannot be two operators $\omega^{s}\mathcal{W}_{\mathbf{x}},\omega^{s'}\mathcal{W}_{\mathbf{x}}\in\mathcal{S}$ with $s\neq s'$ since $\mathcal{S}\cap\mathrm{Z}(\mathscr{P}_p^n) = \{\mathbb{I}\}$. Thus $\mathcal{S} = \{\omega^{s(\mathbf{x})} \mathcal{W}_{\mathbf{x}} :  \mathbf{x}\in \mathscr{M}\}$ for some set $\mathscr{M}\subset\mathbb{F}_p^{2n}$ and $s:\mathscr{M}\to\mathbb{F}_p$. That $\mathscr{M}$ is a subspace follows from $\mathcal{S}$ being a group. It is isotropic because, given $\omega^{s(\mathbf{x})}\mathcal{W}_{\mathbf{x}}, \omega^{s(\mathbf{y})}\mathcal{W}_{\mathbf{y}}\in\mathcal{S}$, then $(\omega^{s(\mathbf{x})}\mathcal{W}_{\mathbf{x}})^\dagger (\omega^{s(\mathbf{y})}\mathcal{W}_{\mathbf{y}})^\dagger (\omega^{s(\mathbf{x})}\mathcal{W}_{\mathbf{x}})(\omega^{s(\mathbf{y})}\mathcal{W}_{\mathbf{y}}) = \omega^{[\mathbf{x},\mathbf{y}]}\cdot\mathbb{I}$ is an element of $\mathcal{S}$, which implies that $[\mathbf{x},\mathbf{y}] = 0$ since $\mathcal{S}\cap \mathrm{Z}(\mathscr{P}_p^n) = \{\mathbb{I}\}$. The function $s$ must be linear since $(\omega^{s(\mathbf{x})}\mathcal{W}_{\mathbf{x}})(\omega^{s(\mathbf{y})}\mathcal{W}_{\mathbf{y}}) = \omega^{s(\mathbf{x}) + s(\mathbf{y})}\mathcal{W}_{\mathbf{x}+\mathbf{y}}$ (due to commutativity), and therefore $s(\mathbf{x}) + s(\mathbf{y}) = s(\mathbf{x} + \mathbf{y})$ (by the uniqueness of phase of each Weyl operator). Moreover, $s(\mathbf{0}) = 0$ again due to $\mathcal{S}\cap\mathrm{Z}(\mathscr{P}_p^n) = \{\mathbb{I}\}$. Therefore, $\omega^{s(\cdot)}$ is a character and we can write $s(\mathbf{x}) = [\mathbf{a},\mathbf{x}]$ for some $\mathbf{a}\in\mathbb{F}_p^{2n}$.
    Let us now prove that $\operatorname{dim}(\mathscr{M}) = n$. Write $m \triangleq \operatorname{dim}(\mathscr{M})$. Thus there are $m$ linearly independent elements $\mathbf{x}_1,\dots,\mathbf{x}_m\in\mathbb{F}_p^{2n}$ such that $[\mathbf{x}_i,\mathbf{x}_j] = 0$ for all $i,j\in[m]$. Write $\mathbf{x}_i = (\mathbf{v}_i,\mathbf{w}_i)$ for all $i\in[m]$ and let $\mathbf{V},\mathbf{W}\in\mathbb{F}_p^{n\times m}$ be the matrices with column vectors $\mathbf{v}_1,\dots,\mathbf{v}_,$ and $\mathbf{w}_1,\dots,\mathbf{w}_,$, respectively. Then $\mathbf{V}^\top \mathbf{W} = \mathbf{W}^\top \mathbf{V}$ and ${\operatorname{rank}}\big(\bigl[\begin{smallmatrix} \mathbf{V} \\ \mathbf{W} \end{smallmatrix} \bigr]\big) = {\operatorname{rank}}\big(\bigl[\begin{smallmatrix} \mathbf{W} \\ -\mathbf{V} \end{smallmatrix} \bigr]\big) = m$. By the rank-nullity theorem, we have that ${\operatorname{rank}}([\begin{smallmatrix} \mathbf{V}^\top & \mathbf{W}^\top \end{smallmatrix}]) + \operatorname{dim}(\operatorname{null}([\begin{smallmatrix} \mathbf{V}^\top & \mathbf{W}^\top \end{smallmatrix}])) = 2n \implies \operatorname{dim}(\operatorname{null}([\begin{smallmatrix} \mathbf{V}^\top & \mathbf{W}^\top \end{smallmatrix}])) = 2n-m$. Moreover, ${\operatorname{col}}\big(\big[\begin{smallmatrix} \mathbf{W} \\ -\mathbf{V} \end{smallmatrix}\big]\big)\subseteq \operatorname{null}([\begin{smallmatrix} \mathbf{V}^\top & \mathbf{W}^\top \end{smallmatrix} ])$ since $[\begin{smallmatrix} \mathbf{V}^\top & \mathbf{W}^\top \end{smallmatrix} ]\big[\begin{smallmatrix} \mathbf{W} \\ -\mathbf{V} \end{smallmatrix}\big]\mathbf{q} = \mathbf{V}^\top \mathbf{W}\mathbf{q} - \mathbf{W}^\top \mathbf{V}\mathbf{q} = 0$ for all $\mathbf{q}\in\mathbb{F}_p^n$. Therefore $\operatorname{dim}(\operatorname{null}([\begin{smallmatrix} \mathbf{V}^\top & \mathbf{W}^\top \end{smallmatrix}])) \geq {\operatorname{rank}}\big(\bigl[\begin{smallmatrix} \mathbf{W} \\ -\mathbf{V} \end{smallmatrix} \bigr]\big) \implies 2n-m \geq m \implies m \leq n$. If $m < n$, then $\operatorname{dim}(\mathscr{M} + \mathbf{x}) > \operatorname{dim}(\mathscr{M})$ for any non-zero vector $\mathbf{x}\in{\operatorname{col}}\big(\big[\begin{smallmatrix} \mathbf{W} \\ -\mathbf{V} \end{smallmatrix}\big]\big)\setminus \operatorname{null}([\begin{smallmatrix} \mathbf{V}^\top & \mathbf{W}^\top \end{smallmatrix}])$, which means that the stabiliser group $\mathcal{S}$ is not maximal, a contradiction. Thus $\operatorname{dim}(\mathscr{M}) = n$.  Finally, $\mathscr{M}$ is Lagrangian as a consequence of being isotropic and maximal.
\end{proof}

\begin{replemma}{lem:properties}
    Let $\mathcal{S} = \{\omega^{[\mathbf{a},\mathbf{x}]}\mathcal{W}_{\mathbf{x}}:\mathbf{x}\in\mathscr{M}\}$ be a stabiliser group. Let $\mathbf{V},\mathbf{W}\in\mathbb{F}_p^{n\times n}$ be matrices such that $\mathscr{M} = {\operatorname{col}}\big(\big[\begin{smallmatrix} \mathbf{V} \\ \mathbf{W} \end{smallmatrix}\big]\big)$. Then
    \begin{multicols}{2}
    \begin{enumerate}[label=\alph*.]
        \item $\mathbf{V}^\top \mathbf{W} = \mathbf{W}^\top \mathbf{V}$;
        \item ${\operatorname{rank}}\big(\bigl[\begin{smallmatrix} \mathbf{V} \\ \mathbf{W} \end{smallmatrix} \bigr]\big) = \operatorname{dim}(\mathscr{M}) = n$;
        \item $\operatorname{row}(\mathbf{V}) + \operatorname{row}(\mathbf{W}) = \mathbb{F}_p^n$;
        \item $\operatorname{null}(\mathbf{V})\cap\operatorname{null}(\mathbf{W}) = \{\mathbf{0}\}$;
        \item $\operatorname{null}([\begin{smallmatrix} \mathbf{V}^\top & \mathbf{W}^\top \end{smallmatrix} ]) = {\operatorname{col}}\big(\big[\begin{smallmatrix} \mathbf{W} \\ -\mathbf{V} \end{smallmatrix}\big]\big)$;
        \item $\operatorname{null}(\mathbf{V}^\top \mathbf{W}) = \operatorname{null}(\mathbf{V}) + \operatorname{null}(\mathbf{W})$;
        \item $\operatorname{null}(\mathbf{W}^\top) \subseteq \operatorname{col}(\mathbf{V})$;
        \item $\operatorname{null}(\mathbf{V}^\top) \subseteq \operatorname{col}(\mathbf{W})$.
    \end{enumerate}
    \end{multicols}
\end{replemma}
\begin{proof}
    The first fact follows from the commutativity of the stabiliser group, while ${\operatorname{rank}}\big(\bigl[\begin{smallmatrix} \mathbf{V} \\ \mathbf{W} \end{smallmatrix} \bigr]\big) = \operatorname{dim}(\mathscr{M}) = n$ follows from \cref{lem:stabiliser_group_form}. Note that $[\begin{smallmatrix} \mathbf{V}^\top & \mathbf{W}^\top \end{smallmatrix} ]\big[\begin{smallmatrix} \mathbf{v} \\ \mathbf{w} \end{smallmatrix}\big] = \mathbf{V}^\top \mathbf{v} + \mathbf{W}^\top \mathbf{w}$ for $\mathbf{v},\mathbf{w}\in\mathbb{F}_p^n$, hence ${\operatorname{row}}\big(\bigl[\begin{smallmatrix} \mathbf{V} \\ \mathbf{W} \end{smallmatrix} \bigr]\big) = \operatorname{row}(\mathbf{V}) + \operatorname{row}(\mathbf{W})$. Thus ${\operatorname{rank}}\big(\bigl[\begin{smallmatrix} \mathbf{V} \\ \mathbf{W} \end{smallmatrix} \bigr]\big) = n \implies \operatorname{row}(\mathbf{V}) + \operatorname{row}(\mathbf{W}) = \mathbb{F}_p^n$ and so $\operatorname{null}(\mathbf{V})\cap\operatorname{null}(\mathbf{W}) = (\operatorname{row}(\mathbf{V}) + \operatorname{row}(\mathbf{W}))^\perp = (\mathbb{F}_p^n)^\perp = \{\mathbf{0}\}$. 

    Now note that $\operatorname{rank}([\begin{smallmatrix} \mathbf{V}^\top & \mathbf{W}^\top \end{smallmatrix}]) + \operatorname{dim}(\operatorname{null}([\begin{smallmatrix} \mathbf{V}^\top & \mathbf{W}^\top \end{smallmatrix}])) = 2n \implies \operatorname{dim}(\operatorname{null}([\begin{smallmatrix} \mathbf{V}^\top & \mathbf{W}^\top \end{smallmatrix}])) = n$ by the rank–nullity theorem. Therefore, $\operatorname{null}([\begin{smallmatrix} \mathbf{V}^\top & \mathbf{W}^\top \end{smallmatrix} ]) = {\operatorname{col}}\big(\big[\begin{smallmatrix} \mathbf{W} \\ -\mathbf{V} \end{smallmatrix}\big]\big)$ since $[\begin{smallmatrix} \mathbf{V}^\top & \mathbf{W}^\top \end{smallmatrix} ]\big[\begin{smallmatrix} \mathbf{W} \\ -\mathbf{V} \end{smallmatrix}\big]\mathbf{q} = \mathbf{V}^\top \mathbf{W}\mathbf{q} - \mathbf{W}^\top \mathbf{V}\mathbf{q} = \mathbf{0}$ for all $\mathbf{q}\in\mathbb{F}_p^n$ and $\operatorname{dim}(\operatorname{null}([\begin{smallmatrix} \mathbf{V}^\top & \mathbf{W}^\top \end{smallmatrix}])) = {\operatorname{rank}}\big(\big[\begin{smallmatrix} \mathbf{W} \\ -\mathbf{V} \end{smallmatrix}\big]\big) = n$. Moreover, $\operatorname{null}(\mathbf{V}^\top)\times\operatorname{null}(\mathbf{W}^\top) \subseteq \operatorname{null}([\begin{smallmatrix} \mathbf{V}^\top & \mathbf{W}^\top \end{smallmatrix}]) = {\operatorname{col}}\big(\big[\begin{smallmatrix} \mathbf{W} \\ -\mathbf{V} \end{smallmatrix}\big]\big)$, which means that $\operatorname{null}(\mathbf{V}^\top) \subseteq \operatorname{col}(\mathbf{W})$ and $\operatorname{null}(\mathbf{W}^\top) \subseteq \operatorname{col}(\mathbf{V})$. Finally, it is clear that $\operatorname{null}(\mathbf{V}) + \operatorname{null}(\mathbf{W})\subseteq\operatorname{null}(\mathbf{V}^\top \mathbf{W})$. On the other direction, given $\mathbf{q}\in\operatorname{null}(\mathbf{V}^\top \mathbf{W})$, then $\mathbf{W}\mathbf{q}\in\operatorname{null}(\mathbf{V}^\top)$ and  $\mathbf{V}\mathbf{q}\in\operatorname{null}(\mathbf{W}^\top)$. By $\operatorname{null}(\mathbf{V}^\top)\times\operatorname{null}(\mathbf{W}^\top) \subseteq  {\operatorname{col}}\big(\big[\begin{smallmatrix} \mathbf{W} \\ -\mathbf{V} \end{smallmatrix}\big]\big)$, this means that $\mathbf{W}\mathbf{q} = \mathbf{W}\mathbf{u}$ and $\mathbf{V}\mathbf{q} = -\mathbf{V}\mathbf{u}$ for some $\mathbf{u}\in\mathbb{F}_p^n$. Thus $\mathbf{q} \in \mathbf{u} + \operatorname{null}(\mathbf{W})$ and $\mathbf{q}\in -\mathbf{u}+\operatorname{null}(\mathbf{V})$, from which $\mathbf{q}\in\operatorname{null}(\mathbf{V}) + \operatorname{null}(\mathbf{W})$. Thus $\operatorname{null}(\mathbf{V}^\top \mathbf{W}) = \operatorname{null}(\mathbf{V}) + \operatorname{null}(\mathbf{W})$.
\end{proof}

\begin{replemma}{lem:uniqueness}
    Let $\mathcal{G} = \{\omega^{[\mathbf{a},\mathbf{x}]}\mathcal{W}_{\mathbf{x}}:\mathbf{x}\in\mathscr{X}\}$ be a group where $\mathscr{X}\subset\mathbb{F}_p^{2n}$ is an isotropic subspace. Then $\operatorname{dim}(V_{\mathcal{G}}) = p^{n-\operatorname{dim}(\mathscr{X})}$, where $V_{\mathcal{G}} \triangleq \{|\Psi\rangle\in(\mathbb{C}^{p})^{\otimes n}: \omega^{[\mathbf{a},\mathbf{x}]}\mathcal{W}_{\mathbf{x}}|\Psi\rangle = |\Psi\rangle, \forall \mathbf{x}\in\mathscr{X}\}$. It then follows that any stabiliser group $\mathcal{S}$ has a unique stabiliser state, i.e., $\operatorname{dim}(V_{\mathcal{S}}) = 1$.
\end{replemma}
\begin{proof}
    According to \cref{lem:projection}, the operator $\mathcal{P}_{\mathbf{a}} = |\mathscr{X}|^{-1}\sum_{\mathbf{x}\in\mathscr{X}}\omega^{[\mathbf{a},\mathbf{x}]}\mathcal{W}_{\mathbf{x}}$ is the projector onto $V_{\mathcal{G}}$. Now, due to the Pontryagin duality theorem (see~\cite[Theorem~1.7.2]{rudin2017fourier}) and $\mathscr{X}$ being a finite group, there are $p^{\operatorname{dim}(\mathscr{X})}$ characters of $\mathscr{X}$ (since the set of characters of $\mathscr{X}$ is isomorphic to $\mathscr{X}$). Each character gives rise to a distinct projector $\mathcal{P}_{\mathbf{b}}$ with equal rank. In other words, each left coset $\mathcal{C} \in \mathbb{F}_p^{2n}/\mathscr{X}^{\independent}$ leads to a distinct projector $\mathcal{P}_{\mathbf{b}}$ for any $\mathbf{b}\in\mathcal{C}$. Two different projectors are orthogonal since they belong to different eigenvalues of at least one Weyl operator or, alternatively,
    \begin{align*}
        \mathcal{P}_{\mathbf{a}}\mathcal{P}_{\mathbf{b}} = \frac{1}{|\mathscr{X}|^2}\sum_{\mathbf{x},\mathbf{y}\in\mathscr{X}}\omega^{[\mathbf{a},\mathbf{x}]+[\mathbf{b},\mathbf{y}]}\mathcal{W}_{\mathbf{x}}\mathcal{W}_{\mathbf{y}} = \frac{1}{|\mathscr{X}|^2}\sum_{\mathbf{x},\mathbf{z}\in\mathscr{X}}\omega^{[\mathbf{a},\mathbf{x}]+[\mathbf{b},\mathbf{z}-\mathbf{x}]}\mathcal{W}_{\mathbf{z}} = 0,
    \end{align*}
    using that $\sum_{\mathbf{x}\in\mathscr{X}}\omega^{[\mathbf{a}-\mathbf{b},\mathbf{x}]} = 0$ since $\mathbf{a} - \mathbf{b} \neq \mathscr{X}^{\independent}$ (\cref{lem:sum}). This means that the rank of each projector must be a $p^{\operatorname{dim}(\mathscr{X})}$-fraction of the Hilbert space dimension $p^n$, i.e., $p^{n-\operatorname{dim}(\mathscr{X})}$. This implies that $\operatorname{dim}(V_{\mathcal{G}}) = \operatorname{rank}(\mathcal{P}_{\mathbf{a}}) = p^{n-\operatorname{dim}(\mathscr{X})}$.
\end{proof}

\begin{replemma}{lem:stabiliser_dimension}
    Let $\mathcal{G} = \{\omega^{[\mathbf{a},\mathbf{x}]}\mathcal{W}_{\mathbf{x}}:\mathbf{x}\in\mathscr{X}\}$ be a group where $\mathscr{X}\subset\mathbb{F}_p^{2n}$ is an isotropic subspace. Let $V_{\mathcal{G}} \triangleq \{|\Psi\rangle\in(\mathbb{C}^{p})^{\otimes n}: \omega^{[\mathbf{a},\mathbf{x}]}\mathcal{W}_{\mathbf{x}}|\Psi\rangle = |\Psi\rangle, \forall \mathbf{x}\in\mathscr{X}\}$. Then, for any $|\Psi\rangle\in V_{\mathcal{G}}$, $\mathscr{X} \subseteq \operatorname{Weyl}(|\Psi\rangle) \subseteq \operatorname{Weyl}(|\Psi\rangle)^{\independent} \subseteq \mathscr{X}^{\independent}$. %As a consequence, if $\mathcal{S} = \{\omega^{[\mathbf{a},\mathbf{x}]}\mathcal{W}_{\mathbf{x}}:\mathbf{x}\in\mathscr{M}\}$ is a stabiliser group with stabiliser state $|\mathcal{S}\rangle$, then $\operatorname{Weyl}(|\mathcal{S}\rangle) = \mathscr{M}$. 
    Moreover, $\bigcap_{|\Psi\rangle\in V_{\mathcal{G}}}\operatorname{Weyl}(|\Psi\rangle) = \bigcap_{|\Psi\rangle\in V_{\mathcal{G}}}\operatorname{Weyl}(|\Psi\rangle)^{\independent} = \mathscr{X}$. 
\end{replemma}
\begin{proof}
    For any $|\Psi\rangle\in V_{\mathcal{G}}$, the inclusion $\mathscr{X} \subseteq \operatorname{Weyl}(|\Psi\rangle)$ follows from the definition of $\operatorname{Weyl}(|\Psi\rangle)$, while the inclusion $\operatorname{Weyl}(|\Psi\rangle) \subseteq \operatorname{Weyl}(|\Psi\rangle)^{\independent}$ follows from $\operatorname{Weyl}(|\Psi\rangle)$ being isotropic. Now, it is straightforward to see that $\mathscr{X} \subseteq \bigcap_{|\Psi\rangle\in V_{\mathcal{G}}}\operatorname{Weyl}(|\Psi\rangle) \subseteq \bigcap_{|\Psi\rangle\in V_{\mathcal{G}}}\operatorname{Weyl}(|\Psi\rangle)^{\independent}$. In the other direction, consider two extensions of $\mathscr{X}$ to Lagrangian subspaces $\mathscr{M}_1$ and $\mathscr{M}_2$ such that $\mathscr{M}_1 \cap \mathscr{M}_2 = \mathscr{X}$. Let $|\mathcal{S}_1\rangle = |\mathscr{M}_1,\mathbf{a}_1\rangle$ and $|\mathcal{S}_2\rangle = |\mathscr{M}_2,\mathbf{a}_2\rangle$ be stabiliser states of stabiliser groups $\mathcal{S}_1$ and $\mathcal{S}_2$ determined by $\mathscr{M}_1$ and $\mathscr{M}_2$, respectively ($\mathcal{G}$ is a subgroup of both $\mathcal{S}_1$ and $\mathcal{S}_2$). We must then have that $|\mathcal{S}_1\rangle,|\mathcal{S}_2\rangle\in V_{\mathcal{G}}$, and also that $\operatorname{Weyl}(|\mathcal{S}_1\rangle) = \operatorname{Weyl}(|\mathcal{S}_1\rangle)^{\independent} = \mathscr{M}_1$ and $\operatorname{Weyl}(|\mathcal{S}_2\rangle) = \operatorname{Weyl}(|\mathcal{S}_2\rangle)^{\independent} = \mathscr{M}_2$. Therefore, $\bigcap_{|\Psi\rangle\in V_{\mathcal{G}}}\operatorname{Weyl}(|\Psi\rangle) \subseteq \bigcap_{|\Psi\rangle\in V_{\mathcal{G}}}\operatorname{Weyl}(|\Psi\rangle)^{\independent} \subseteq \operatorname{Weyl}(|\mathcal{S}_1\rangle)^{\independent} \cap \operatorname{Weyl}(|\mathcal{S}_2\rangle)^{\independent} = \mathscr{M}_1 \cap \mathscr{M}_2 = \mathscr{X}$.
\end{proof}

\begin{replemma}{lem:t-doped}
    Let $|\psi\rangle\in(\mathbb{C}^p)^{\otimes n}$ be the output state of a $t$-doped Clifford circuit. Then the stabiliser dimension of $|\psi\rangle$ is at least $n-2t$.
\end{replemma}
\begin{proof}
    The proof is by induction on $t$. For $t=0$, the output state is a stabiliser state, which has stabiliser dimension $n$. Assume then the induction hypothesis for $t-1$. Write $|\psi\rangle = \mathcal{C}\mathcal{U}|\phi\rangle$, where $|\phi\rangle$ is the output of a $(t-1)$-doped Clifford circuit, $\mathcal{U}$ is a single-qudit gate, and $\mathcal{C}$ is a Clifford circuit. Since the stabiliser dimension is unchanged by Clifford circuits (\cref{cor:invariant}), we just need to show that the stabiliser dimension of $\mathcal{U}|\phi\rangle$ is at least $n-2t$. For such, consider $\operatorname{Weyl}(|\phi\rangle)$ with $\operatorname{dim}(\operatorname{Weyl}(|\phi\rangle)) \geq n - 2(t-1)$ by the induction assumption. Note that, for any $\mathbf{x}\in\operatorname{Weyl}(|\phi\rangle)$, if $\mathcal{W}_{\mathbf{x}}$ commutes with $\mathcal{U}$, then
    \begin{align*}
        \langle \phi|\mathcal{U}^\dagger \mathcal{W}_{\mathbf{x}}\mathcal{U}|\phi\rangle = \langle \phi| \mathcal{W}_{\mathbf{x}}|\phi\rangle \in \{1,\omega,\dots,\omega^{p-1}\}.
    \end{align*}
    Consider the set of commuting elements with $\mathcal{U}$, $\mathscr{E} \triangleq \{\mathbf{x}\in\operatorname{Weyl}(|\phi\rangle) : \mathcal{U}^\dagger \mathcal{W}_{\mathbf{x}}\mathcal{U} = \mathcal{W}_{\mathbf{x}}\}$. Then the stabiliser dimension of $\mathcal{U}|\phi\rangle$ is at least the dimension of $\mathscr{E}$, but $|\mathscr{E}| \geq |\operatorname{Weyl}(|\phi\rangle)|/p^2$, because $\mathscr{E}$ contains all elements $\mathbf{x}\in\operatorname{Weyl}(|\phi\rangle)$ for which $\mathcal{W}_{\mathbf{x}}$ restricts to the identity on the qudit to which $\mathcal{U}$ applies, i.e., $x_i = (v_i,w_i) = (0,0) \in \mathbb{F}_p^{2}$ if $\mathcal{U}$ acts on the $i$-th qudit. Therefore, the stabiliser dimension of $\mathcal{U}|\phi\rangle$ is at least $n-2t$ as required.
\end{proof}

\begin{replemma}{lem:Lipschitz}
    For any Weyl operator $\mathcal{W}_{\mathbf{x}}\in\mathscr{P}^n_p$, the function $f_{\mathbf{x}}:\mathbb{S}^{p^n}\to\mathbb{R}$ defined as $f_{\mathbf{x}}(|\psi\rangle) = \langle \psi|\mathcal{W}_{\mathbf{x}}|\psi\rangle$ is $2$-Lipschitz.
\end{replemma}
\begin{proof}
    $\begin{aligned}[t]
        |\langle\psi|\mathcal{W}_{\mathbf{x}}|\psi\rangle - \langle\phi|\mathcal{W}_{\mathbf{x}}|\phi\rangle| &= |\langle\psi|\mathcal{W}_{\mathbf{x}}|\psi\rangle - \langle\phi|\mathcal{W}_{\mathbf{x}}|\psi\rangle + \langle\phi|\mathcal{W}_{\mathbf{x}}|\psi\rangle -\langle\phi|\mathcal{W}_{\mathbf{x}}|\phi\rangle| \\ 
        &\leq |(\langle\psi| - \langle\phi|)\mathcal{W}_{\mathbf{x}}|\psi\rangle| + |\langle\phi|\mathcal{W}_{\mathbf{x}}(|\psi\rangle -|\phi\rangle)| \\
        &\leq \|\mathcal{W}_{\mathbf{x}} |\psi\rangle\|\||\psi\rangle - |\phi\rangle\| + \|\mathcal{W}_{\mathbf{x}} |\phi\rangle\|\||\psi\rangle - |\phi\rangle\| \\
        &= 2\||\psi\rangle - |\phi\rangle\|. \hspace{7.6cm}\qedhere
    \end{aligned}$
\end{proof}

\end{document}